\theoremstyle{plain}
\DeclareMathOperator*{\argmax}{arg\,max}
\DeclareMathOperator*{\argmin}{arg\,min}
\let\polylog\relax
\DeclareMathOperator*{\polylog}{\mathrm{polylog}}
\newtheorem*{rep@theorem}{\rep@title}
\newcommand{\newreptheorem}[2]{%
\newenvironment{rep#1}[1]{%
    \def\rep@title{#2 \ref{##1}}%
    \begin{rep@theorem}}%
    {\end{rep@theorem}}}
\newcommand{\undersim}[1]{\mathrel{\mathpalette\@undersim{#1}}}
\newcommand{\@undersim}[2]{%
\vcenter{%
\ialign{%
  ##\cr
  $\m@th#1#2$\cr
  \noalign{\nointerlineskip\kern.2ex}
  $\m@th#1\sim$\cr
  \noalign{\kern-.4ex}
}%
}%
}
\newcommand{\lsim}{\undersim{<}}
\newcommand{\wh}{\widehat}
\newcommand{\wt}{\widetilde}
\let\R\relax
\newcommand{\R}{\mathbb{R}}
\let\R\relax
\newcommand{\R}{\mathbb{R}}
\let\norm\relax
\newcommand{\norm}[1]{\|#1\|}
\newcommand\numberthis{\addtocounter{equation}{1}\tag{\theequation}}
\newtheorem{theorem}{Theorem}
\newtheorem{lemma}{Lemma}[section]
\newtheorem{claim}[lemma]{Claim}
\newtheorem{definition}[lemma]{Definition}
\newtheorem*{claim*}{Claim}
\newtheorem*{proposition*}{Proposition}
\newtheorem*{lemma*}{Lemma}
\newtheorem*{problem*}{Problem}
	\gdef\xxxmark{%
		\expandafter\ifx\csname @mpargs\endcsname\relax 
		\expandafter\ifx\csname @captype\endcsname\relax 
		\marginpar{xxx}
		\else
		xxx 
		\fi
		\else
		xxx 
		\fi}
	\gdef\xxx{\@ifnextchar[\xxx@lab\xxx@nolab}
	\long\gdef\xxx@lab[#1]#2{{\bf [\xxxmark #2 ---{\sc #1}]}}
	\long\gdef\xxx@nolab#1{{\bf [\xxxmark #1]}}
\title{Sublinear Time Low-Rank Approximation of Hankel Matrices}
\author[1]{Michael Kapralov}
\author[2]{Cameron Musco}
\author[1]{Kshiteej Sheth}
\affil[1]{EPFL}
\affil[2]{University of Massachusetts Amherst}
\date{}
\begin{document}

\maketitle
\thispagestyle{empty}
\begin{abstract}
Hankel matrices are an important class of highly-structured matrices, arising across computational mathematics, engineering, and theoretical computer science. It is well-known that positive semidefinite (PSD) Hankel matrices are always approximately low-rank. In particular, a celebrated result of Beckermann and Townsend shows that, for any PSD Hankel matrix $H \in \mathbb{R}^{n \times n}$ and any $\epsilon > 0$, letting $H_k$ be the best rank-$k$ approximation of $H$ (obtained via truncated singular value decomposition),  $\|H-H_k\|_F \leq \epsilon \|H\|_F$ for $k = O(\log n \log(1/\epsilon))$. I.e., the optimal low-rank approximation error decays exponentially in the rank-$k$. As such, PSD Hankel matrices are natural targets for low-rank approximation algorithms. We give the first such algorithm that runs in \emph{sublinear time}. In particular, we show how to compute, in $\polylog(n, 1/\epsilon)$ time, a factored representation of a rank-$O(\log n \log(1/\epsilon))$ Hankel matrix $\widehat{H}$ matching the error guarantee of Beckermann and Townsend up to constant factors. We further show that our algorithm is \emph{robust} -- given input $H+E$ where $E \in \mathbb{R}^{n \times n}$ is an arbitrary non-Hankel noise matrix, we obtain error $\|H - \widehat{H}\|_F \leq O(\|E\|_F) + \epsilon \|H\|_F$. Towards this algorithmic result, our first contribution is a \emph{structure-preserving} existence result - we show that there exists a  rank-$k$ \emph{Hankel} approximation to $H$ matching the error bound of Beckermann and Townsend. Our result can be interpreted as a finite-dimensional analog of the widely applicable AAK theorem, which shows that the optimal low-rank approximation of an infinite Hankel operator is itself Hankel. Armed with our existence result, and leveraging the well-known Vandermonde structure of Hankel matrices, we achieve our sublinear time algorithm using a sampling-based approach that relies on universal ridge leverage score bounds for Vandermonde matrices.
\end{abstract}

\newpage
\tableofcontents
\thispagestyle{empty}

\newpage
\pagenumbering{arabic} 
\setcounter{page}{1}

\section{Introduction}

A Hankel matrix $H\in \mathbb{R}^{n\times n}$ is constant along each of its anti-diagonals, that is $H_{i,j} = H_{k,l}$ for all $i,j,k,l\in [n]$ such that $i+j=k+l$. These highly structured matrices arise in a wide range of applications in computational mathematics, signal processing, control theory, and beyond. For example, they arise in system identification and model order reduction for linear dynamical systems \cite{Glover:1984,Fazel:2013vw,liu2010interior,MarkovskyUsevich:2012,sontag2013mathematical,Megretski:2024}. They are used in singular spectrum analysis \cite{Hassani:2007}, dynamic mode decomposition \cite{ArbabiMezic:2017}, and many other approaches in signal processing. They are also used by algorithms for approximating functions via sums of exponentials \cite{BeylkinMonzon:2005}, recovering probability measures from their polynomial moments \cite{Schmudgenothers:2017}, and efficiently converting between polynomial bases \cite{TownsendWebbOlver:2018}. In theoretical computer science, (block) Hankel matrices have recently found applications to fast algorithms for solving sparse linear systems \cite{eberly2006solving,eberly2007faster,peng2021solving,casacuberta2021faster,nie2022matrix,ghadiri2023symmetric} using Krylov subspace methods.

Row-reversing a Hankel matrix yields a Toeplitz matrix -- these matrices arise in many applications as well 
\cite{Gray:2006ta}. Hankel matrices are also closely related to other `displacement structured' matrices, such as Vandermonde and Cauchy matrices
\cite{KailathSayed:1995}.

Due to their wide applications, fast algorithms for basic linear algebraic problems on Hankel matrices have been heavily studied. A Hankel matrix $H \in \R^{n \times n}$ can be multiplied by a vector in $O(n\log n)$ time using the fast Fourier transform. Hankel linear systems can be solved exactly in $O(n^2)$ time via Levinson recursion \cite{golub2013matrix}, and to high precision in $O(n\cdot \polylog (n))$ time \cite{XiaXiGu:2012,XiXiaCauley:2014}. The full eigendecomposition of a Hankel matrix can be computed in $ O(n^2\cdot \polylog (n))$ time \cite{pan1999complexity}. In this work, we focus on algorithms with runtime scaling \emph{sublinearly in $n$}, i.e., the number of parameters required to describe $H$. Several recent papers give sublinear time algorithms for the closely related class of Toeplitz matrices \cite{EldarLiMusco:2020,KapralovLawrenceMakarov:2023,MuscoSheth:2024}. However, to the best of our knowledge, sublinear time methods for Hankel matrices have remained largely unexplored.

\subsection{Hankel Low-Rank Approximation.}

It is well-known that Hankel matrices tend to have rapidly decaying eigenvalues and be poorly conditioned \cite{Tyrtyshnikov:1994,Beckermann:2000,beckermann2017singular}. Often, they are well-approximated by a lower rank matrix, and in fact, the low-rank approximation of Hankel matrices drives their use in many of the previously mentioned applications. This includes applications to system identification and model-order reduction \cite{Glover:1984,MarkovskyUsevich:2012,Fazel:2013vw,Megretski:2024}, signal processing \cite{Hassani:2007,ArbabiMezic:2017,GillardUsevich:2022}, function approximation \cite{BeylkinMonzon:2005}, and fast basis transformations \cite{TownsendWebbOlver:2018}.

In a celebrated result, Beckermann and Townsend \cite{beckermann2017singular} precisely quantify the eigenvalue decay of \emph{positive semidefinite (PSD)} Hankel matrices, which arise in many applications. Formally they show that for any PSD Hankel matrix $H\in \mathbb{R}^{n\times n}$, if we let $H_k$ be the best rank-$k$ approximation to $H$, computed by projecting $H$ onto its top $k$ eigenvectors, then, for $k=O(\log n \log(1/\epsilon))$,
\begin{align}\label{eq:townsendBound} 
  \|H-H_k\|_2\leq \epsilon \|H\|_2 \text{ and } \|H-H_k\|_F\leq \epsilon \|H\|_F,
\end{align}
where $\norm{\cdot }_2$ and $\norm{\cdot}_F$ denote the matrix spectral and Frobenius norms, respectively.\footnote{\cite{beckermann2017singular} only claims the bound for the spectral norm, but the Frobenius norm bound follows as a simple consequence -- see \Cref{lem:townsend_corrolary}.} That is, the optimal rank-$k$ approximation to $H$ has error decaying exponentially in $k$. 
Beckermann and Townsend prove their result via a connection to an extremal problem for rational functions and using the displacement structure of Hankel matrices.

Since Hankel matrices admit near-linear time matrix-vector multiplication via fast Fourier transform, for $k = O(\log n \log(1/\epsilon))$ we can compute the factors of a near-optimal rank-$k$ approximation to a given  Hankel matrix in $ O(n \cdot \polylog(n,1/\epsilon))$ time, using an iterative method like a block power or Krylov method \cite{HalkoMartinssonTropp:2011,MuscoMusco:2015}. I.e., by \eqref{eq:townsendBound} we can compute $\wh{H}$ satisfying $\|H - \wh{H}\|_2\leq \epsilon \|H\|_2$ and $\|H-\wh{H}\|_F\leq \epsilon \|H\|_F$ in near-linear time. While this is significantly faster than what would be possible if $H$ were unstructured, given recent progress on \emph{subinear time} low-rank approximation algorithms for the closely related class of PSD Toeplitz matrices \cite{EldarLiMusco:2020,KapralovLawrenceMakarov:2023,MuscoSheth:2024}, it is natural to ask if sublinear runtime is also achievable for PSD Hankel matrices. Since row-reversing a PSD Toeplitz matrix does not in general yield a PSD Hankel matrix, these results do not directly apply to our setting. Moreover, PSD Toeplitz matrices can have a flat spectrum (consider, e.g., the identity matrix), thus behaving very differently from PSD Hankel matrices, which always have exponentially decaying eigenvalues.

\subsection{Our Contributions.}

Our main result is to show that accurate low-rank approximation of PSD Hankel matrices can in fact be performed in sublinear time. In particular, we give an algorithm with $\polylog(n,1/\epsilon)$ runtime that matches the Frobenius norm bound of Beckermann and Townsend from  \eqref{eq:townsendBound}. Further, our algorithm is robust -- it can handle inputs that differ from a Hankel matrix by an arbitrary noise matrix, paying additional error proportional to this noise. Formally, we prove:

\begin{theorem}[Sublinear Time Hankel Low-Rank Approximation]\label{thm:const_factor_approx_main}
Let $H \in \mathbb{R}^{n\times n}$ be a PSD Hankel matrix, $E\in \mathbb{R}^{n\times n}$ be an arbitrary noise matrix, and $\epsilon > 0 $ be an error parameter. Given entrywise access to $H+E$, \Cref{alg:noisy_hankel_recovery} runs in  $\polylog( n,1/\epsilon)$ time and returns a rank-$O(\log n \log(1/\epsilon))$ Hankel matrix $\wh{H}$ (in a factored representation) such that, with probability  $\ge 9/10$, for a fixed constant $C$, $$\|H-\wh{H}\|_F \le  C\|E\|_F + \epsilon \|H\|_F.$$
  \end{theorem}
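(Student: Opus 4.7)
The plan is to combine the structure-preserving existence result (stated earlier as a contribution) with a data-oblivious sampling procedure that exploits the Vandermonde structure of PSD Hankel matrices. First, I invoke the existence result: there is a rank-$k$ PSD Hankel matrix $H^\star$ with $k = O(\log n \log(1/\epsilon))$ satisfying $\|H-H^\star\|_F \le O(\epsilon)\|H\|_F$. Since $H^\star$ is PSD Hankel of rank $k$, it admits a Vandermonde decomposition $H^\star = V^\star D^\star (V^\star)^T$ with $V^\star \in \mathbb{R}^{n\times k}$ Vandermonde on $k$ (unknown) real nodes and $D^\star$ a positive diagonal. So it suffices to recover a factored rank-$k$ Hankel $\widehat{H}$ within $O(\epsilon)\|H\|_F + O(\|E\|_F)$ of $H^\star$ in Frobenius norm, using only polylogarithmically many entry queries.

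To do this in sublinear time without knowing the nodes of $V^\star$, I would use the universal ridge leverage score upper bound for Vandermonde matrices: a distribution $p$ over $[n]$, computable in $\polylog(n,1/\epsilon)$ time and independent of $H$, that simultaneously upper-bounds the $\lambda$-ridge leverage scores of \emph{every} Vandermonde $V$ with $k$ real nodes, for $\lambda \asymp (\epsilon^2/k)\|H\|_F^2$. I sample $s = \polylog(n,1/\epsilon)$ indices $S \subseteq [n]$ i.i.d.\ from $p$ and query the noisy entries $(H+E)_{S,S}$ (equivalently, the noisy generating-sequence entries indexed by $S+S$). Then I solve a small structured fitting problem on the sampled block — searching over $k$ real nodes and $k$ positive weights for $\widehat{V},\widehat{D}$ minimizing the $p$-weighted discrepancy between $\widehat{V}_S \widehat{D}\widehat{V}_S^T$ and $(H+E)_{S,S}$, via a matrix-pencil/Prony-type eigenproblem on $O(s)$ parameters. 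The output is the factored rank-$k$ Hankel matrix $\widehat{H} = \widehat{V}\widehat{D}\widehat{V}^T$, and every step is $\polylog(n,1/\epsilon)$.

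The error bound is assembled from two triangle inequalities: $\|H-\widehat{H}\|_F \le \|H-H^\star\|_F + \|H^\star-\widehat{H}\|_F$, where the first term is controlled by the existence result. For the second, universality of the leverage-score bound makes $p$ a valid sampling distribution for $V^\star$ specifically, so standard ridge leverage score subspace-embedding / approximate regression guarantees imply that a near-optimal fit on $S$ extends to a near-optimal fit globally, giving $\|H^\star - \widehat{H}\|_F \le O(\epsilon)\|H\|_F$ in the noise-free case. Noise is folded in by observing that the sampling weights are bounded and $s$ is polylogarithmic, so replacing $H_{S,S}$ by $(H+E)_{S,S}$ propagates to at most $O(\|E\|_F)$ additional Frobenius error after appropriate re-weighting. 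Rescaling $\epsilon$ by a constant delivers the claimed $C\|E\|_F + \epsilon\|H\|_F$ bound with probability $\ge 9/10$ by a union bound over the sampling step and the subspace embedding event.

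The main obstacle I expect is the noise-robustness of the small structured fit: a Prony/pencil reconstruction of nodes and weights is generically ill-conditioned when $V^\star$ has nearly coincident nodes or widely varying weights, and naively this could amplify $E$ by factors depending on $n$ or on the spectral gaps of $H$. The key to avoiding such a blowup is to work from the start at the ridge level $\lambda$: the ridge regularization implicit in the sampling — together with the fact that the universal leverage-score distribution covers every admissible Vandermonde configuration — should let one prove a stability statement in which the output error decomposes cleanly into $O(\epsilon)\|H\|_F$ from subspace embedding and $O(\|E\|_F)$ from a weighted matrix-multiplication-type bound on the noise, with no hidden polynomial dependence on $n$ or on conditioning of $V^\star$.
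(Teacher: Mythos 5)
There is a genuine gap, and it sits exactly where you flag your ``main obstacle.'' Your plan treats the nodes of the Vandermonde factor as unknowns to be recovered from the sampled block by a Prony/matrix-pencil step, and then hopes that ridge regularization controls the resulting instability. Two problems. First, the ridge leverage score machinery you invoke (subspace embedding $\Rightarrow$ near-optimal fit extends globally) applies to \emph{linear} regression over a \emph{known} column span; it says nothing about a nonlinear fit in which the span itself ($\widehat V$) is being estimated. Second, recovering unknown real exponential nodes from noisy samples is a super-resolution-type problem whose error genuinely blows up with node clustering (the paper's own ``simple approach and why it fails'' example, two tight clusters of nodes, is exactly the hard instance), so the claim that noise propagates to only $O(\|E\|_F)$ ``with no hidden dependence on conditioning of $V^\star$'' is not something you can get for free. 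The paper avoids this entirely: its existence proof is constructive with a \emph{fixed, data-independent} node set $T$ (exponentiated Chebyshev nodes, one set per logarithmic bucket, plus their reciprocals via row reversal), so the only unknowns are the diagonal weights $D$ and a sparse Hankel correction $H^*$ supported on the first and last $O(\log(1/\epsilon))$ anti-diagonals. The recovery problem $\min_{D,H^*}\|B-(V D V^T+H^*)\|_F$ is then linear, and the two-sided leverage-score-sampled ridge regression analysis goes through with constant-factor noise blowup.

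A secondary issue: your reduction assumes every rank-$k$ PSD Hankel matrix admits an exact Vandermonde decomposition on $k$ real nodes. The Fiedler factorization holds for positive definite Hankel matrices; for the rank-deficient approximant, and for contributions from very rapidly decaying nodes (which concentrate in the top-left corner) and nodes of magnitude greater than one, the paper needs the separate corner matrix $H^*$ (learned by anti-diagonal averaging) and the row-reversed block $RV_T$ in the design matrix. Your proposal has no analogue of either, so even setting aside the Prony step, the model class you fit over does not contain a provably $\epsilon\|H\|_F$-accurate approximant with the claimed rank.
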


  Note that the output $\wh{H}$, of 
\Cref{thm:const_factor_approx_main} is \emph{structure-preserving} -- i.e., it is itself Hankel. This is critical to achieving sublinear runtime, as it allows $\wh{H} $ to be represented using just $O(\log n \log(1/\epsilon))$ parameters (see \Cref{sec:tech_overview} for details) and output in $\polylog(n,1/\epsilon)$ time. If $\wh H$ were a general rank $O(\log n \log(1/\epsilon))$ matrix, it would require $\Omega(n \log n \log(1/\epsilon))$ time even to write down in factored form. We note that in many applications \cite{BeylkinMonzon:2005,golyandina2010choice,liu2010interior,MarkovskyUsevich:2012,Fazel:2013vw}, $\wh H$ being Hankel is a desirable property in its own right, and the problem of {structure-preserving} low-rank Hankel approximation has been studied heavily studied \cite{fazel2003log,chu2003structured,MarkovskyUsevich:2012,Ishteva:2014wm,Knirsch:2021ve,GillardUsevich:2022}. See \Cref{sec:related} for further discussion. 

In any case, towards proving \Cref{thm:const_factor_approx_main}, our first step is to prove a result analogous to that of Beckermann and Townsend \cite{beckermann2017singular}, but for the existence of a sequence of \emph{Hankel} low-rank approximations with exponentially decaying error bounds. This existence result may be of independent interest.
\begin{theorem}[Existence of Accurate Hankel Low-Rank Approximations]\label{thm:main_thm}
    For any PSD Hankel matrix $H\in \mathbb{R}^{n\times n}$ and any $\epsilon>0$, there exists a rank $O(\log n \log(1/\epsilon))$ Hankel matrix $\wh{H}$ such that $$\|H-\wh{H}\|_2\leq \epsilon \|H\|_2\text{ and }\|H-\wh{H}\|_F\leq \epsilon \|H\|_F.$$
\end{theorem}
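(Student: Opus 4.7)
The plan is to combine the Beckermann-Townsend singular value decay (already cited in the paper) with the Vandermonde / Hamburger moment representation of PSD Hankel matrices in order to realize the low-rank approximation as a Hankel matrix itself. The Frobenius-norm consequence of Beckermann-Townsend (noted in \Cref{lem:townsend_corrolary}) already yields an \emph{unstructured} rank-$k$ approximation $H_k$ with $k = O(\log n \log(1/\epsilon))$; the task of this theorem is to upgrade $H_k$ to a Hankel approximator while paying only a constant factor in $k$.

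The natural route is via the moment representation: since $H$ is PSD Hankel, Sylvester's theorem gives $H = V D V^T$ with $V$ a real Vandermonde with nodes $x_1,\dots,x_r$ ($r\leq n$) and $D=\diag(d_1,\dots,d_r)$, $d_j>0$. Equivalently, $H_{i,j}=\int x^{i+j-2}\,d\mu(x)$ for the positive measure $\mu = \sum_j d_j \delta_{x_j}$. Rank-$k$ PSD Hankel matrices correspond \emph{exactly} to $k$-atomic positive measures, so the problem becomes: approximate $\mu$ by a $k$-atomic measure $\widehat{\mu}$ so that the associated moment matrices are close in Frobenius norm.

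My first attempt would be a finite analog of AAK via Gauss-type quadrature on $\mu$ with $k = O(\log n \log(1/\epsilon))$ atoms. Writing $\|H-\widehat{H}\|_F^2=\sum_{s=0}^{2n-2} c_s\,(m_s(\mu)-m_s(\widehat{\mu}))^2$, where $c_s \leq n$ counts anti-diagonal multiplicities, Gaussian quadrature matches the first $2k$ moments of $\mu$ \emph{exactly}, so only the tail $s \geq 2k$ contributes; this tail can then be controlled using the Beckermann-Townsend rational-approximation machinery. An arguably cleaner alternative is to first extend $H$ to an infinite PSD Hankel operator $\widetilde{H}$ (always feasible by the truncated Hamburger moment problem, since $H$ being PSD guarantees that the moment sequence extends), apply the classical AAK theorem to produce an infinite rank-$k$ Hankel approximator $\widehat{\widetilde{H}}$ with $\|\widetilde{H} - \widehat{\widetilde{H}}\|_{\mathrm{op}}=\sigma_{k+1}(\widetilde{H})$, and restrict to the top-left $n\times n$ block, which is automatically both Hankel and of rank at most $k$.

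The main obstacle in either route is to absorb a polynomial-in-$n$ loss against the geometric singular-value decay. In the quadrature route, the weights $c_s \leq n$ risk amplifying tail moment errors, so one must use the superexponential decay hidden in Beckermann-Townsend (the Zolotarev number decay $\rho^{k}$ with $\rho = \exp(-\Theta(1/\log n))$) to dominate the polynomial growth. In the AAK-extension route, the $\sqrt{n}$ loss in passing from $\|\cdot\|_{\mathrm{op}}$ to $\|\cdot\|_F$ is absorbed either by running Beckermann-Townsend with a slightly smaller target error $\epsilon/\mathrm{poly}(n)$, or by bounding the Frobenius error directly through $\sum_{j>k}\sigma_j^2(\widetilde{H})$. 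In both cases, the key structural input remains the exponential decay of the singular values of PSD Hankel matrices established by Beckermann-Townsend.
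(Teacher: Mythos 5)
Your high-level framing (PSD Hankel $\leftrightarrow$ positive atomic measure, rank-$k$ PSD Hankel $\leftrightarrow$ $k$-atomic measure, so sparsify the measure) is exactly the paper's starting point, but both of your concrete routes have gaps at the step where all the work actually happens. In the quadrature route, a $k$-point Gauss rule matches only the first $2k = O(\log n \log(1/\epsilon))$ moments, while $H$ is determined by $2n-1$ moments; you give no argument for the error in moments $s \in [2k, 2n-2]$. Saying the tail "can be controlled using the Beckermann--Townsend machinery" is not an argument: that machinery bounds singular values of $H$ via Zolotarev numbers, not quadrature errors of $\mu$, and the standard quadrature error bound $|m_s(\mu)-m_s(\widehat\mu)| \le 2\mu(\mathbb{R})\, E_{2k-1}(x^s)$ is useless here since $E_{2k-1}(x^s) = \Theta(1)$ for $s \gg k$ when $\mathrm{supp}(\mu)$ reaches $\pm 1$. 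The AAK route fails even earlier: the infinite PSD Hankel extension of $H$ is generically an \emph{unbounded} operator --- e.g.\ for $H = v_n(1)v_n(1)^T$ (the all-ones matrix), positivity of the extended moment sequence forces $m_{2s} \gtrsim 1$ for all $s$, so the extension has infinite trace and AAK does not apply; and even for bounded extensions, $\sigma_{k+1}(\widetilde H)$ is not controlled by Beckermann--Townsend (whose bound is intrinsically finite-dimensional, with the $\log n$ in the exponent) and can vastly exceed $\sigma_{k+1}(H)$. You also do not address nodes with $|x|>1$, which the paper handles by a row-reversal reduction.

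The deeper issue is the one you flag but do not resolve: absorbing the $\mathrm{poly}(n)$ loss. Your fallback of rerunning the argument at accuracy $\epsilon/\mathrm{poly}(n)$ yields rank $O(\log n \log(n/\epsilon))$, which for constant $\epsilon$ is $O(\log^2 n)$ --- quadratically worse than the claimed $O(\log n \log(1/\epsilon))$; the paper explicitly notes that this shortcut does not prove the theorem. To get the stated bound, the paper partitions the nodes into $O(\log n)$ buckets by decay rate, sparsifies each bucket to $O(\log(1/\epsilon))$ exponentiated Chebyshev nodes with entrywise error $\epsilon w_r$ valid on \emph{all} $2n-1$ anti-diagonals, proves matching lower bounds $\|H\|_2 = \Omega(\max_r w_r n/(2^r\lambda))$ and an analogous Frobenius bound, and then beats the triangle inequality across buckets via a correlation-decay argument on the rank-one envelopes of the per-bucket error matrices. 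Nothing in your proposal plays the role of the norm lower bound or the correlation-decay step, and without them neither route can reach relative error $\epsilon\|H\|_2$ (or $\epsilon\|H\|_F$) at the claimed rank.
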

Note that the optimal low-rank approximation $H_k$ of a Hankel matrix $H$ is not Hankel in general. Thus, the result of Beckermann and Townsend does not imply anything about the accuracy of structure-preserving Hankel low-rank approximations as considered in \Cref{thm:main_thm}. For PSD Hankel matrices, our result can be viewed as a finite-dimensional analog of the widely applied theory of Adamyan, Arov, and Krein (AAK theory) \cite{AdamyanArovKrein:1971,Pellerothers:2003,YuTownsend:2024,Megretski:2024}, which establishes that for infinite-dimensional Hankel operators, $H_k$ is itself Hankel. For the case for finite $H$, we instead show that there exists rank $k$ Hankel $\wh H$ for which the error $\|H-\wh{H}\|_2$ and $\|H-\wh{H}\|_F$ still decays exponentially in the rank with the same asymptotic rate as $\|H-H_k\|_2$ and $\|H-H_k\|_F$, respectively. Thus, requiring $\wh H$ to be Hankel incurs no significant loss.

We complement \Cref{thm:main_thm} with a lower bound showing that the dependencies on $\log n$ and $\log(1/\epsilon)$ in the rank bound are both required, even when $\wh H$ is not required to be Hankel.


\begin{theorem}[Lower Bound on the Approximate Rank of PSD Hankel Matrices]\label{thm:epsilon_rank_lower_bound}
For any $n$ sufficiently large and $\epsilon \in (0,1-c)$ for any constant $c$, there exists a PSD Hankel matrix $H\in \mathbb{R}^{n\times n}$ and a constant $C$ such that for $k\leq C(\log n +\log(1/\epsilon))$, $\|H-H_k\|_2\geq \epsilon \|H\|_2$ and $\|H-H_k\|_F\geq \epsilon \|H\|_F$.
\end{theorem}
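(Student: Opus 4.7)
The plan is to exhibit, for each pair $(n,\epsilon)$, a single PSD Hankel matrix whose $(k{+}1)$-st singular value exceeds $\epsilon\sigma_1(H)$ for every $k$ in the claimed range. Because the bound $C(\log n + \log(1/\epsilon))$ is additive, I would use two families of examples, each certifying one of the two terms, and for a given $(n,\epsilon)$ pick whichever yields the stronger bound; this only costs a factor of $2$ in $C$ since $\max(a,b) \geq (a+b)/2$.

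For the $\Omega(\log(1/\epsilon))$ term, I would take $H$ to be the $n\times n$ Hilbert matrix $H_{ij} = 1/(i+j-1)$, which is PSD and Hankel. Its spectrum admits a matching geometric lower bound $\lambda_k(H)/\lambda_1(H) \geq c_1\rho^{-k}$ for universal constants $c_1 > 0$ and $\rho > 1$, valid for $k$ up to nearly $n$ (standard in the asymptotic theory of the Hilbert matrix, and the exact quantitative converse of the Beckermann--Townsend upper bound). Rearranging yields $\|H - H_k\|_2 = \lambda_{k+1}(H) \geq \epsilon\|H\|_2$ for every $k \leq \Omega(\log(1/\epsilon))$, and the geometric tail $\sum_j \rho^{-2j} = O(1)$ ensures $\|H\|_F = O(\lambda_1(H))$, so the Frobenius version reduces to the spectral one up to constants.

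For the $\Omega(\log n)$ term, I would construct a rank-$r$ PSD Hankel matrix with $r = \Theta(\log n)$ whose $r$ nonzero eigenvalues all lie within a constant factor of $\sigma_1$. The natural family is $H = \sum_{k=1}^{r} w_k v(\alpha_k) v(\alpha_k)^T$ with $v(\alpha) = (1,\alpha,\ldots,\alpha^{n-1})^T$, choosing nodes $\alpha_k$ and positive weights $w_k$ so that the reduced $r\times r$ matrix with entries $\sqrt{w_k w_\ell}\langle v(\alpha_k), v(\alpha_\ell)\rangle$ (whose spectrum equals the nonzero spectrum of $H$) has $O(1)$ condition number. The concrete target is Chebyshev-type nodes on a sub-interval of $(-1,1)$ bounded away from $\pm 1$ together with the associated positive quadrature weights, producing $r = \Omega(\log n)$ eigenvalues all within a constant factor of $\sigma_1$. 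This yields $\sigma_{k+1}/\sigma_1 \geq c_2$ for every $k < r$ and some universal $c_2 > 0$, hence the claim whenever $\epsilon \leq c_2$; for $\epsilon \in (c_2, 1-c)$ the prescribed range $C(\log n + \log(1/\epsilon))$ is made trivially small by taking the constant $C = C(c)$ small enough. The Frobenius version follows from $\|H\|_F^2 \leq r\lambda_1^2$ and $\|H - H_k\|_F^2 \geq (r-k)c_2^2\lambda_1^2 = \Omega(\|H\|_F^2)$ for $k \leq r/2$.

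The main obstacle I expect is the spectral analysis in the second construction. A tempting naive attempt is the dyadic choice $\alpha_k = 1 - 2^{-k}$, $w_k = 2^{-k}$, which produces a reduced Gram with entries $\Theta(2^{-|k-\ell|/2})$ mimicking a symmetric Toeplitz matrix whose symbol $f(\theta) = 1/(3 - 2\sqrt{2}\cos\theta)$ is bounded in a positive constant interval. However, a direct computation shows the normalized vectors $v(\alpha_k)/\|v(\alpha_k)\|$ have pairwise cosines tending to $1$, so this family is in fact ill-conditioned already at modest $r$; entrywise $\Theta$-approximation does not give spectral equivalence when the diagonal scales differ. I would therefore commit to Chebyshev nodes on a sub-interval bounded away from the endpoints, invoking classical condition-number estimates for the associated weighted Vandermonde systems, and verifying that these estimates remain uniform in $n$ when the number of columns grows like $\log n$ is the key technical step.
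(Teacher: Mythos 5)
Your overall architecture (two families, one certifying each additive term, combined via a max) and your first construction both match the paper. For the $\Omega(\log(1/\epsilon))$ term the paper also uses the Hilbert matrix, deriving the geometric lower bound $\lambda_k(H_n)\gtrsim \rho^{-k}$ from the known bound on the smallest eigenvalue of the $k\times k$ Hilbert matrix together with eigenvalue interlacing. One small inaccuracy there: the Hilbert matrix does \emph{not} satisfy $\|H\|_F=O(\lambda_1(H))$ with an $O(1)$ geometric tail --- it has $\Theta(\log n)$ eigenvalues of order $1$, so $\|H\|_F=\Theta(\sqrt{\log n})\,\|H\|_2$. The Frobenius claim still goes through (the paper loses exactly this $\sqrt{\log n}$ and absorbs it into the choice of $k$), but your stated justification is wrong as written.

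The genuine gap is in your second construction. You correctly diagnose that the dyadic choice $\alpha_k=1-2^{-k}$ fails (adjacent normalized cosines are $\approx 2\sqrt{2}/3$, and the off-diagonal row sums of the normalized Gram are $\gg 1$), but the remedy you commit to --- Chebyshev nodes on a sub-interval $[a,b]\subset(-1,1)$ bounded away from $\pm 1$ --- provably cannot produce $\Omega(\log n)$ eigenvalues within a constant factor of $\lambda_1$. If $\max(|a|,|b|)=\beta<1$ is a constant, every column $v_n(\alpha)$ is within relative error $O(\beta^{m})$ of its truncation to the first $m$ coordinates, so $H=\sum_k w_k v(\alpha_k)v(\alpha_k)^T$ is within spectral norm $O\bigl(\beta^{m}\sum_k w_k\bigr)$ of a rank-$m$ matrix; since $\lambda_1(H)\geq \mathrm{tr}(H)/r=\Omega\bigl(\sum_k w_k/r\bigr)$, one gets $\lambda_{m+1}(H)/\lambda_1(H)\leq O(r\beta^{m})$, which drops below any constant once $m=O(\log\log n)$. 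Equivalently, the kernel $(x,y)\mapsto 1/(1-xy)$ is analytic on $[a,b]^2$, so the $r\times r$ weighted Gram matrix has condition number $e^{\Omega(r)}$ for \emph{any} positive weights; no quadrature rule rescues this. The nodes must accumulate at $1$ as $n$ grows. The paper's fix keeps the dyadic structure but widens the spacing: it takes $\alpha_i=e^{-2^{Ci}/n}$ for $i=1,\dots,c'\log n$ with $C$ a large constant and weights $w_i=\|v_n(\alpha_i)\|_2^{-2}$, so the normalized pairwise cosines are $\approx 2\cdot 2^{-C|i-j|/2}$ and the normalized Gram matrix is within $2^{-\Omega(C)}$ of the identity in spectral norm (via row-sum bounds and Weyl), yielding $\Omega(\log n)$ eigenvalues all equal to $1\pm 2^{-C/10}$. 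In short: the cure for the conditioning problem you found is to increase the ratio of the geometric progression in $\ln(1/\alpha)$, not to move the nodes away from $1$.
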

The lower bound of \Cref{thm:epsilon_rank_lower_bound} does not quite match the $O(\log n \log(1/\epsilon))$ bound of \Cref{thm:main_thm}, or the prior of Beckermann and Townsend \cite{beckermann2017singular}. However, we believe that $O(\log n \log(1/\epsilon))$ is in fact tight, and proving a matching lower bound would be an interesting direction for future work.
\smallskip

\noindent\textbf{From existence to sublinear time approximation.}
With \Cref{thm:main_thm} in hand, to prove  \Cref{thm:const_factor_approx_main} we must show how to actually find $\wh H$ in sublinear time. Roughly speaking, our approach, which is detailed in \Cref{sec:tech_overview}, argues that $\wh H$ achieving the error bound of \Cref{thm:main_thm} can be further restricted to be a sum of two Hankel matrices $\wh H=\wh H_1+\wh H_2$. $\wh H_1$ is supported only on its top left and bottom right corners. In particular,  $(\wh H_1)_{i,j}$ is nonzero only for $i,j\in [n]$ such that $i+j\leq  O(\log n \log(1/\epsilon))$ or $ i+j \geq 2n-  O(\log n \log(1/\epsilon))$. $\wh H_2$ is low-rank with a \emph{fixed row/column span}, given by the column span of a  Vandermonde matrix whose parameters depend only on $n$ and $\epsilon$, not $H$.

With this restriction, we can frame identifying a minimizer $\wh H = \wh H_1 + \wh H_2$ of $\norm{H - \wh H}_F$ as a least squares regression problem, which we can approximately solve in sublinear time. We first learn $
\wh H_1$ simply by averaging the first and last $O(\log n \log(1/\epsilon))$ anti-diagonals of $H$. We then learn $\wh H_2$ using prior leverage-score-based random sampling techniques for linear regression \cite{EldarLiMusco:2020,KapralovLawrenceMakarov:2023,MuscoSheth:2024}. To do so, we prove new universal (ridge) leverage score bounds for Vandermonde matrices -- i.e., for the row/column span of $\wh H_2$. 

Note that, while the existence result of \Cref{thm:main_thm} holds for both the spectral and Frobenius norm, our sampling methods apply to Frobenius norm approximation only. \Cref{thm:const_factor_approx_main} implies a spectral norm error bound of $\|H - \wh H|_2 \le \epsilon\|H\|_2$ when $\wh H$ has rank $O(\log n \log(n/\epsilon))$ simply by using that $\|H \|_F \le \sqrt{n} \cdot \|H\|_2$. However, obtaining a sublinear time algorithm with tight spectral norm error bounds -- i.e., where $\wh H$ has rank $O(\log n \log(1/\epsilon))$ -- is an interesting open problem.

\smallskip
\noindent\textbf{Applications of our results.}
Our work has various applications, which we summarize here and detail in \Cref{sec:applications}.  The first is to speeding up the fast polynomial basis transform algorithm of \cite{TownsendWebbOlver:2018}. This algorithm transforms the coefficients of a degree-$n$ polynomial in one orthogonal polynomial basis (e.g., Chebyshev) to another (e.g., Legendre) in $O(n \log^4 n)$ time. They use PSD Hankel low-rank approximation is a key subroutine, and plugging in the sublinear time algorithm of Theorem \ref{thm:const_factor_approx_main} improves their overall runtime to $O(n \log^3 n)$.  

A second application is to sample-efficient covariance estimation for high-dimensional distributions with Hankel covariance matrices. 
Significant recent work focuses on Toeplitz covariance matrices \cite{EldarLiMusco:2020,MuscoSheth:2024}, which arise from stationary random processes. Hankel covariance matrices are similarly important, arising e.g., from signals generated by certain 1D autoregressive processes \cite{Fazel:2013vw,aoki2013state}. Similar to the approach of Theorem 3 in \cite{MuscoSheth:2024}, the algorithm of Theorem \ref{thm:const_factor_approx_main}
can be leveraged to estimate a Hankel covariance matrix $H$ to error $\epsilon \cdot \norm{H}_2$ using $\poly(\log n,1/\epsilon)$ samples from the distribution and $\polylog(n,1/\epsilon)$ entries read per sample.  
To the best of our knowledge, the prior best known bound here was  $\Omega(n/\epsilon^2)$  samples and $n$ entries read per sample, which follows from standard results on generic unstructured covariance estimation.

Finally, factorizations of PSD Hankel matrices can be used to compute Sum-of-Squares (SoS) decompositions of polynomials -- see Section 3.3 of \cite{ghadiri2023symmetric}. In particular, a Hankel matrix $H$ can be constructed from any polynomial $p$  and if $H$ is PSD, then $p$ has an SoS decomposition. A fast PSD low-rank factorization algorithm for $H$ with small error can be used to compute such a decomposition with few terms, up to small error over a bounded domain. 
Our algorithm does not quite apply here, since its output $\wh H$ may not be exactly PSD (although it will be near-PSD since it is close to $H$, which is PSD). Obtaining the result of \Cref{thm:const_factor_approx_main} where $\wh H$ is further guaranteed to be exactly PSD is an interesting open question.

\subsection{Related Work.}\label{sec:related}
We now discuss several areas of prior work closely related to our own.

\smallskip

\noindent\textbf{Sublinear time low-rank approximation.} Most closely related to our work is a line of work that studies sublinear query and sublinear time algorithms for low-rank approximation of PSD Toeplitz matrices \cite{Abramovich:1999vs,Chen:2015wz,Qiao:2017tp,Lawrence:2020ut,EldarLiMusco:2020,KapralovLawrenceMakarov:2023,MuscoSheth:2024}, which arise commonly as covariance matrices of stationary processes -- i.e., when the covariance structure is shift invariant. \cite{KapralovLawrenceMakarov:2023} in particular also shows the existence of a good Toeplitz low-rank approximation for PSD Toeplitz matrices.

Beyond work on sublinear time algorithms for PSD Toeplitz matrices, significant recent work has focused on sublinear time low-rank approximation algorithms for other structured matrix classes, such as general PSD matrices \cite{musco2017recursive,Bakshi:2020tl,ChenEpperlyTropp:2025}, distance matrices \cite{Bakshi:2018ul,Indyk:2019vy}, and kernel matrices \cite{musco2017recursive,Yasuda:2019vf,Ahle:2020vj}. Typically, the goal is to achieve runtime scaling linearly in $n$ for an $n \times n$ matrix, which is generally optimal for the above structured classes.  

\smallskip

\noindent\textbf{Structure-preserving low-rank approximation.}  As discussed, significant prior work in numerical linear algebra and signal processing has studied structured-preserving low-rank approximation algorithms for Hankel, and Toeplitz matrices -- see e.g., \cite{Luk:1996ur,Krim:1996wm,Park:1999ta,fazel2003log,chu2003structured,MarkovskyUsevich:2012,Ishteva:2014wm,Cai:2016wh,Ongie:2017us,Knirsch:2021ve,GillardUsevich:2022}. This work typically frames the problem somewhat differently than we do - given a fixed rank parameter $k$, they aim to find a rank-$k$ Hankel matrix $\wh H$ which minimizes or approximately minimizes $\norm{H-\wh H}_F$ over all possible rank-$k$ Hankel matrices. Unlike unconstrained low-rank approximation, this constrained low-rank approximation problem is difficult - no simple characterization of the optimal solution is known, and polynomial time computation of the optimum remains open outside a few special cases \cite{chu2003structured,Knirsch:2021ve}. Thus, prior work often tackles the problem with heuristics such as convex relaxation \cite{Fazel:2013vw,Cai:2016wh,Ongie:2017us} and alternating minimization \cite{chu2003structured,Wen:2020ub}. We note that, in practice, since the eigenvalues of PSD Hankel matrices decay so quickly, an optimal, or even near-optimal rank-$k$ Hankel approximation is typically unnecessary. Thus, we target any algorithm achieving the exponential decay bound of \Cref{eq:townsendBound}, allowing us to avoid some of the difficulties of prior work, and achieve sublinear runtime.

\smallskip

\noindent\textbf{Finite-dimensional AAK theory.} Finally, recall that \Cref{thm:main_thm} can be viewed as a finite-dimensional analog of the celebrated AAK theory \cite{AdamyanArovKrein:1971}, which establishes that infinite-dimensional Hankel operators admit optimal low-rank approximations that are themselves Hankel. Steps towards a finite-dimensional analog of this theory have been made in prior work \cite{BeylkinMonzon:2005,AnderssonCarlssonHoop:2011}. However,  while numerical experiments support the existence of highly accurate structure-preserving Hankel low-rank approximations, to the best of our knowledge, no theoretical error bound similar to \Cref{thm:main_thm} has previously been established, even in the special case of $H$ being PSD.

\subsection{Paper Organization.}
This paper is organized as follows. \Cref{sec:tech_overview} contains a technical overview of our results. \Cref{sec:prelim} introduces notation and preliminaries. \Cref{sec:existence_main} contains the proof of our existence result for an accurate Hankel low-rank approximation (\Cref{thm:main_thm}). \Cref{sec:lower_bd} contains the proof of our corresponding lower bound (\Cref{thm:epsilon_rank_lower_bound}). \Cref{sec:alg_main} contains the proof of our main algorithmic result (\Cref{thm:const_factor_approx_main}). \Cref{sec:applications} details a few applications of our results. Finally, \Cref{sec:conclusion} concludes and discusses open problems.

\section{Technical Overview.}\label{sec:tech_overview}

In this section, we give an overview of the main techniques and ideas behind our results. From classic work on Hankel matrices \cite{fiedler86,beckermann2017singular}, it is known 
that every positive definite Hankel matrix $H$ admits a \emph{Fiedler factorization} into diagonal and real Vandermonde matrices. Before introducing this factorization, we introduce notation for moment vectors and Vandermonde matrices. 
\begin{definition}[Moment vector]\label{def:moment_vector}
    For any $x\in \mathbb{R}$ and positive integer $n$, let $v_n(x)\in \mathbb{R}^n$ be the moment vector defined as the column vector $v_n(x)=[1,x,\ldots,x^{n-1}]^T$.
\end{definition}
\begin{definition}[Real Vandermonde matrix]\label{def:vandermonde_matrix}
For a set $X\subset \mathbb{R}$ let $V_X\in \mathbb{R}^{n\times |X|}$denote the Vandermonde matrix whose columns are the moment vectors $v_n(x)\in \mathbb{R}^n$ for each $x\in X$. The values in $X$ are referred to as the ``nodes'' of $V_X$.
\end{definition}
Equipped with the above, we can give the Fielder decomposition, which will be key to our approach, allowing us to view Hankel matrices as outer products of Vandermonde matrices.
\begin{claim}\label{clm:ij_entry}
For any Vandermonde matrix $V_X\in \mathbb{R}^{n\times |X|}$ and diagonal matrix $D_X=diag(\{a_x\}_{x\in X})$, the matrix $V_XD_XV_X^T$ has entries $(V_XD_XV_X^T)_{i,j}=\sum_{x\in X}a_x x^{i+j}$ that only depend on $i+j$ for all $i,j\in [n]$. Thus, $V_XD_XV_X^T$ is Hankel. Further, $V_XD_XV_X^T$ has rank $\le |X|$.
\end{claim}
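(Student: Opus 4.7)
\medskip

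\noindent\textbf{Proof proposal.} The plan is to verify the claim by a single direct computation of the $(i,j)$-entry of $V_X D_X V_X^T$, and then read off both the Hankel property and the rank bound as immediate corollaries. Concretely, I would first expand the matrix product entrywise: by definition of matrix multiplication,
\begin{equation*}
  (V_X D_X V_X^T)_{i,j} = \sum_{x \in X} (V_X)_{i,x}\, a_x\, (V_X^T)_{x,j} = \sum_{x \in X} (V_X)_{i,x}\, a_x\, (V_X)_{j,x}.
\end{equation*}
Plugging in the Vandermonde formula from \Cref{def:vandermonde_matrix}, so that $(V_X)_{i,x}$ is the appropriate power of $x$ coming from the moment vector $v_n(x)$ of \Cref{def:moment_vector}, the product $(V_X)_{i,x} (V_X)_{j,x}$ collapses to a power of $x$ whose exponent depends only on $i+j$ (up to the indexing convention). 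This yields the claimed closed form $\sum_{x \in X} a_x x^{i+j}$.

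Next, I would observe that since the resulting expression depends on the index pair $(i,j)$ only through the sum $i+j$, we have $(V_X D_X V_X^T)_{i,j} = (V_X D_X V_X^T)_{k,\ell}$ whenever $i+j = k+\ell$. This is exactly the definition of a Hankel matrix given in the introduction, so $V_X D_X V_X^T$ is Hankel. For the rank bound, I would simply note that $V_X \in \mathbb{R}^{n \times |X|}$ has at most $|X|$ columns, hence $\operatorname{rank}(V_X D_X V_X^T) \le \operatorname{rank}(V_X) \le |X|$.

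The only point that requires care is the indexing convention for the moment vector: with $v_n(x) = [1, x, \ldots, x^{n-1}]^T$ and $i,j \in [n]$, the exponent in the sum is $i+j-2$ rather than literally $i+j$, so the formula in the claim should be read up to this harmless shift. Otherwise the argument is entirely mechanical, and I do not expect any genuine obstacle — the statement is essentially just the observation that Vandermonde-sandwich products produce Hankel matrices because $x^{i-1} \cdot x^{j-1}$ depends only on $i+j$.
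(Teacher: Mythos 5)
Your proposal is correct and is exactly the paper's argument: the paper dispenses with this claim in one line (``the proof trivially follows by expanding $V_XD_XV_X^T$''), and your expansion of the $(i,j)$ entry, the observation that the result depends only on $i+j$, and the rank bound via $\rank(V_XD_XV_X^T)\le \rank(V_X)\le |X|$ are precisely what is intended. The only correction is to your indexing caveat: the paper defines $[n]=\{0,\ldots,n-1\}$ (see \Cref{sec:notation}), so with $(v_n(x))_i=x^i$ for $i\in[n]$ the exponent is literally $i+j$ and no shift is needed.
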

The proof of the above trivially follows by expanding $V_XD_XV_X^T$. The Fiedler Factorization statement below gives a converse: all positive definite Hankel matrices admit such a factorization. 
\begin{lemma}[Fiedler Factorization\footnote{For reference, this version of the statement is stated in the discussion below Lemma 5.4 in 
\cite{beckermann2017singular}.}]\label{lem:vandermonde_decomp}
For every real positive definite Hankel matrix $H\in \mathbb{R}^{n\times n}$ there exists a set $X\subset \mathbb{R}$ of size $|X| = n$ and corresponding positive weights $a_x>0$ for all $x\in X$ such that $H=V_XD_XV_X^T$, for Vandermonde $V_X\in \mathbb{R}^{n\times n}$ and diagonal $D_X=diag(\{a_x\}_{x\in X})\in \mathbb{R}^{n\times n}$.
\end{lemma}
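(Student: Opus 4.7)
The plan is to reduce the claim to a Gaussian quadrature argument applied to the Jacobi matrix naturally associated with $H$. Write $H_{i,j}=m_{i+j-2}$ for moments $m_0,\ldots,m_{2n-2}\in\mathbb{R}$, and define the linear functional $L$ on polynomials of degree at most $2n-2$ by $L(x^k)=m_k$. For polynomials $p,q$ of degree at most $n-1$ with coefficient vectors $\mathbf{p},\mathbf{q}\in\mathbb{R}^n$ one has $L(pq)=\mathbf{p}^T H\mathbf{q}$, so positive definiteness of $H$ makes $\langle p,q\rangle_L := L(pq)$ a genuine inner product on the space of polynomials of degree at most $n-1$.

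Run Gram-Schmidt on $1,x,\ldots,x^{n-1}$ under this inner product to obtain an orthonormal basis $p_0,\ldots,p_{n-1}$ with $p_0=1/\sqrt{m_0}$ and $\deg p_k=k$. The identity $\langle xp,q\rangle_L=\langle p,xq\rangle_L$ forces the standard three-term recurrence $xp_k=\alpha_{k+1}p_{k+1}+\beta_k p_k+\alpha_k p_{k-1}$ for $k\le n-2$, with $\alpha_1,\ldots,\alpha_{n-1}>0$ and $\beta_0,\ldots,\beta_{n-1}\in\mathbb{R}$ all computable from $m_0,\ldots,m_{2n-2}$. Assemble these into the $n\times n$ symmetric tridiagonal Jacobi matrix $J$ with diagonal $(\beta_k)$ and off-diagonal $(\alpha_k)$. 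Since the off-diagonals are strictly positive, classical spectral theory for unreduced Jacobi matrices yields $n$ distinct real eigenvalues $x_1,\ldots,x_n$ and an orthogonal eigenbasis $J=Q\,\diag(x_1,\ldots,x_n)\,Q^T$ in which $Q_{1,i}\neq 0$ for every $i$.

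Take $X:=\{x_1,\ldots,x_n\}$ and define the strictly positive weights $a_{x_i}:=m_0\cdot Q_{1,i}^2$. To verify the moment identity $m_k=\sum_i a_{x_i}x_i^k$ for every $k\in\{0,\ldots,2n-2\}$, split $x^k=x^a\cdot x^b$ with $a,b\le n-1$ and use
\[ m_k=L(x^ax^b)=\langle x^a,x^b\rangle_L=\Phi(x^a)^T\Phi(x^b), \]
where $\Phi$ denotes the coordinate map into the orthonormal basis $\{p_j\}$. A short induction on the three-term recurrence (which stays inside the span of $\{p_0,\ldots,p_{n-1}\}$ for exponents up to $n-1$) shows that $\Phi(x^j)=\sqrt{m_0}\,J^j e_1$ for $0\le j\le n-1$, hence $m_k=m_0\,e_1^T J^{a+b}e_1=m_0\sum_i Q_{1,i}^2 x_i^k=\sum_i a_{x_i} x_i^k$. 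Therefore $H_{i,j}=m_{i+j-2}=\sum_i a_{x_i}x_i^{i-1}x_i^{j-1}=(V_X D_X V_X^T)_{i,j}$, which is the desired factorization.

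The main obstacle -- which the Jacobi-matrix formulation is engineered to avoid -- is that the ``degree-$n$ monic orthogonal polynomial'' whose roots one would classically declare the nodes is not determined by $H$ alone: its coefficients involve the extra moment $m_{2n-1}$ that is not present in the data. Using the eigenvalues of the truncated $n\times n$ matrix $J$ side-steps this issue, since $J$ depends only on $\alpha_1,\ldots,\alpha_{n-1},\beta_0,\ldots,\beta_{n-1}$, and these quantities are determined by $m_0,\ldots,m_{2n-2}$ alone. The remaining technicality $Q_{1,i}\neq 0$ follows from the standard observation that a solution of $(J-x_i I)q=0$ with first coordinate $0$ cascades via $\alpha_1,\ldots,\alpha_{n-1}>0$ to $q\equiv 0$, contradicting $\|q\|=1$.
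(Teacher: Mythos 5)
The paper does not prove \Cref{lem:vandermonde_decomp} at all --- it imports it as a black box from the discussion below Lemma 5.4 of \cite{beckermann2017singular} --- so there is no in-paper argument to compare against; what you have supplied is a self-contained proof of the cited result, via the classical Gauss--Jacobi quadrature route (moment functional $\to$ orthonormal polynomials $\to$ truncated Jacobi matrix $\to$ eigenvalues as nodes, squared first eigenvector components as weights). The argument is correct: positive definiteness of $H$ does make $\langle p,q\rangle_L=\mathbf{p}^TH\mathbf{q}$ an inner product on degree-$(n-1)$ polynomials, the unreduced symmetric tridiagonal $J$ does have $n$ simple real eigenvalues with $Q_{1,i}\neq 0$, the identity $\Phi(x^j)=\sqrt{m_0}\,J^je_1$ holds for $j\le n-1$, and splitting $k=a+b$ with $a,b\le n-1$ correctly recovers every moment $m_0,\dots,m_{2n-2}$, giving $H=V_XD_XV_X^T$ with $n$ distinct nodes and strictly positive weights. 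One imprecision worth fixing: $\beta_{n-1}=\langle xp_{n-1},p_{n-1}\rangle_L=L(xp_{n-1}^2)$ involves a polynomial of degree $2n-1$, so it is \emph{not} determined by (nor even defined from) $m_0,\dots,m_{2n-2}$ as you claim --- exactly the same boundary issue you flag for the degree-$n$ orthogonal polynomial. This is harmless for existence: the recurrence is only invoked for $k\le n-2$, the last coordinate of $\Phi(x^j)$ vanishes for $j\le n-2$ so $J_{n,n}$ never enters the computation of $\Phi(x^j)$ for $j\le n-1$, and hence any choice of the $(n,n)$ entry of $J$ (e.g., extending $L$ arbitrarily to degree $2n-1$) yields a valid factorization. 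You should state this explicitly rather than asserting $\beta_{n-1}$ is computable from the data.
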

Since any PSD Hankel matrix can be approximated to arbitrary accuracy by a positive definite Hankel matrix, w.l.o.g., we will assume that $H$ in Theorems \ref{thm:const_factor_approx_main} and \ref{thm:main_thm} is positive definite, so that we can apply \Cref{lem:vandermonde_decomp}. 

With the above preliminaries in place, we are ready to overview our proof approach.
In \Cref{sec:existence_overview} we present the main ideas behind the existence result of Theorem \ref{thm:main_thm}. We then give the ideas behind the approximate rank lower bound of Theorem \ref{thm:epsilon_rank_lower_bound} in \Cref{sec:lower_bound_overview}. Finally, in \Cref{sec:alg_overview} we give the ideas behind the sublinear time algorithm of Theorem \ref{thm:const_factor_approx_main}.

\subsection{Existence of a Good Hankel Low Rank Approximation.}\label{sec:existence_overview}
We start by explaining the key ideas behind  Theorem \ref{thm:main_thm}.  \Cref{sec:existence_main} is dedicated to the full proof. Let $H=V_X D_X V_X^T$ be the Fiedler factorization of the given positive definite Hankel $H$, and assume $|x|\leq 1$ for all $x\in X$. We will later show that the general case reduces to this one. Letting $k = O(\log n \log(1/\epsilon))$, our goal is to find a rank-$k$ {Hankel} approximation $\wh H$ to $H$ with small error. At a very high level, we achieve this via \emph{sparsifying} $X$ to a set $T\subset \mathbb{R}$ of size $k$ and finding diagonal $D_{T}$, such that $\wh H = V_{T}D_{T}V_{T}^T$ (which is rank-$k$ Hankel by \Cref{clm:ij_entry}) serves as our approximation. We will first prove entrywise error bounds for our approximation, and then convert them into spectral and Frobenius norm bounds, as required by \Cref{thm:main_thm}. 

\smallskip

\noindent\textbf{A simple approach and why it fails.} A natural choice is to let $T$ contain the top  $k$ nodes in $X$, as ordered by the corresponding entries in $D_X$. However, this approach fails due to the ill-conditioning of real Vandermonde matrices \cite{Beckermann:2000}. In particular, consider the example $X=\{1/2+i\delta \}_{i=1}^{n/2}\cup \{2+i\delta \}_{i=1}^{n/2}$ for some $\delta \rightarrow 0$. That is, $X$ has two tight clusters of nodes, one around $1/2$ and the other around $2$. Thus $V_X$, and in turn $V_XD_XV_X^T$, tends to a rank-$2$ matrix. Moreover, suppose the diagonal entries in $D_X$ are 1 for the nodes around $1/2$ and $0.99$ for those around $2$.  Choosing the top $k\ll n/2$ nodes will miss all nodes from the cluster around $2$, thus failing to be a good low rank approximation. This example suggests a natural bucketing strategy, i.e., to bucket $X$ into sets with close nodes, and approximate the contribution of each bucket with a low-rank matrix. Ultimately, this is the approach that our proof will take.

\smallskip

\noindent\textbf{Log-scale bucketing.} Consider some  $x\in X$ whose corresponding column in $V_x$ is given by the moment vector $v_n(x)=[1,x,\ldots,x^{n-1}]^T$ (Definition \ref{def:moment_vector}). Observe that since, by our simplifying assumption, $|x| \le 1$, the entries of $v_n(x)$ are decaying exponentially. In particular, letting $y_x= \ln(1/|x|)$ denote the rate of decay and $\lambda =2\ln(1/\epsilon)$, all entries of $v_n(x)$ beyond index $i^*=\lambda /y_x$ are at most $|x|^{i^*} = e^{-y_xi^*} < \epsilon$. We can interpret $\lambda/y_x$ as a \emph{cutoff} index, beyond which all entries in $v_n(x)$ are at most $\epsilon$. 

We partition $X$ into $R\approx \log n$ buckets of nodes with similar rates of decay (formally, see \Cref{def:bucketing}):
\begin{align*}
B_0 = \left\{x\in X:  y_x\in \left[0,\frac{\lambda}{n}\right]\right\}&,\quad
B_r = \left\{x\in X:  y_x\in \left[\frac{2^{r-1}\lambda}{n},\frac{2^{r}\lambda}{n}\right]\right\} \text{ for } r \in \{1,\ldots, R\},\numberthis\label{eq:informal_bucketing}\\ B_{R+1} = \left \{x \in X: y_x > \frac{2^{R} \lambda}{n} \right \}\\
\text{ weight  } w_r = \sum_{x\in B_r}a_x,&\quad \text{ cutoff  } t_r=n/2^{r-1}.
\end{align*}
For every $r \in [R+2]$ (recall that throughout, $[n]$ denotes the set $\{0,\ldots,n-1\}$), we define the function $H_{B_r}(t)$ and the corresponding Hankel matrix $H_{B_r}$ as:
\begin{equation}\label{eq:informal_entrywise_bucket_def}
    H_{B_r}(t) = \sum_{x\in B_r}a_x x^t \quad \forall t\in [2n-1], \quad (H_{B_r})_{i,j}=H_{B_r}(i+j)\quad \forall  i,j\in [n].
\end{equation}
Since the buckets partition $X$, using Claim \ref{clm:ij_entry}, it is easy to see that $H=\sum_{r\in [R+2]} H_{B_r}$. Thus, if we can find low-rank entrywise approximations to each $H_{B_r}$ via sparsification, we can add them up to obtain an entrywise approximation to $H$, and ultimately a good approximation in the spectral and Frobenius norms.

\smallskip

\noindent\textbf{Entrywise approximation via bucket sparsification.} We will use node sparsification  to entrywise approximate each $H_{B_r}$ with a low-rank Hankel matrix. We first show how to approximate the underlying function $H_{B_r}(t)$ using a \emph{fixed} set of $O(\log(1/\epsilon))$ exponentials, with rates of decay given by scaled Chebyshev nodes.  
\begin{lemma}[Simplified, full version in  \Cref{sec:bucket_sparsification}]\label{lem:bucket_sparsification}
For $l=O(\log(1/\epsilon))$ and each $r\in [R+1]$, let $\{t_{i,r}\}_{i\in [l+1]}$ be the degree $l$ Chebyshev nodes on $[2^{r-1}\lambda/n,2^{r}\lambda/n]$ (see Definition \ref{def:chebyshev_nodes}). Let $T_r$ be the set of exponentiated Chebyshev nodes $\{e^{-t_{i,r}},-e^{-t_{i,r}}\}_{i\in [l+1]}$. Then there exists a set of coefficients $\{a_x\}_{x \in T_r}$ so that $H_{T_r}(t)=\sum_{x\in T_r}a_x x^t$ satisfies:
\begin{equation*}
    \left|H_{B_r}(t) - H_{T_r}(t)\right|\leq \epsilon w_r\quad \forall t\in [2n-1].
\end{equation*}
Recall that $w_r$ is the total weight of bucket $B_r$, as in \eqref{eq:informal_bucketing}.
\end{lemma}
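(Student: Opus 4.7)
The plan is to approximate $H_{B_r}(t) = \sum_{x \in B_r} a_x x^t$ by Lagrange interpolation of the one-parameter family $g_t(y) := e^{-ty}$ on the interval $I_r := [2^{r-1}\lambda/n,\, 2^r \lambda/n]$, evaluated at the Chebyshev nodes $\{t_{i,r}\}_{i \in [l+1]}$. The crucial structural fact is that for every $x \in B_r$, the decay rate $y_x := -\ln|x|$ lies in $I_r$, so a pointwise approximation of $g_t$ that is uniform over $y \in I_r$ will, by linearity in the $a_x$, yield an approximation of $H_{B_r}(t)$ that is uniform over $t$.

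First I would split $B_r = B_r^+ \sqcup B_r^-$ into its positive and negative nodes. Using $x^t = \operatorname{sign}(x)^t\, e^{-t y_x}$ for integer $t$, this lets me write
\begin{align*}
H_{B_r}(t) \;=\; \sum_{x \in B_r^+} a_x\, e^{-t y_x} \;+\; (-1)^t \sum_{x \in B_r^-} a_x\, e^{-t y_x}.
\end{align*}
For each fixed $t$, let $p_t(y) = \sum_i e^{-t\, t_{i,r}}\, L_i(y)$ be the degree-$l$ Lagrange interpolant of $g_t$ at the nodes $\{t_{i,r}\}$, where $L_i$ is the $i$-th Lagrange basis polynomial. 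Swapping summation order and setting $\beta_i^{\pm} := \sum_{x \in B_r^{\pm}} a_x\, L_i(y_x)$ gives the candidate
\begin{align*}
H_{T_r}(t) \;=\; \sum_i \beta_i^+ \,(e^{-t_{i,r}})^t \;+\; \sum_i \beta_i^- \,(-e^{-t_{i,r}})^t,
\end{align*}
where I used $(-1)^t e^{-t\,t_{i,r}} = (-e^{-t_{i,r}})^t$ to fold the sign factor for $B_r^-$ into the base. The coefficients $\beta_i^{\pm}$ are $t$-independent, so $H_{T_r}$ takes the required form $\sum_{x \in T_r} a_x x^t$ with $T_r = \{\pm e^{-t_{i,r}}\}_{i \in [l+1]}$.

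The main technical step is establishing the uniform-in-$t$ bound $\sup_{y \in I_r} |g_t(y) - p_t(y)| \le \epsilon$ for every integer $t \in \{0, \ldots, 2n-1\}$ with $l = O(\log(1/\epsilon))$. The classical Chebyshev interpolation error inequality reduces this to controlling $|g_t^{(l+1)}(y)| = t^{l+1} e^{-ty}$ on $I_r$, and the latter is maximized at the left endpoint. After substituting $u := t \cdot 2^{r-1}\lambda/n$ (so that both $t \cdot |I_r|$ and $t \cdot 2^{r-1}\lambda/n$ collapse to $u$), the error collapses to the clean form $\frac{u^{l+1} e^{-u}}{(l+1)!\, 2^{2l+1}}$. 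Since $u \mapsto u^{l+1} e^{-u}$ peaks at $u = l+1$ with value $((l+1)/e)^{l+1}$, Stirling's approximation caps the error uniformly in $t$ by $O(4^{-l}/\sqrt{l+1})$, which falls below $\epsilon$ once $l = O(\log(1/\epsilon))$. Summing the per-$x$ errors with weights $a_x \geq 0$ via the triangle inequality then yields $|H_{B_r}(t) - H_{T_r}(t)| \le \epsilon (w_r^+ + w_r^-) = \epsilon\, w_r$, as required.

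The hard part is precisely this uniform-in-$t$ control: for small $t$ the exponential $g_t$ is almost constant and trivially approximated, but for large $t$ it ranges over enormously different magnitudes on $I_r$, so naively one might expect the interpolation quality to degrade with $t$. The bucket-width calibration $|I_r| = 2^{r-1}\lambda/n$ is what saves us -- it guarantees that $u = t \cdot |I_r|$ is only an $O(l)$ quantity at its worst value, so the blow-up of $t^{l+1}$ is exactly absorbed by the factorial in the denominator. A minor caveat is that for $r = 0$ the bucket $B_0$ extends down to $y_x = 0$, slightly below the left endpoint of the stated interpolation interval; this is easily handled by interpolating on $[0, \lambda/n]$ in this case (the derivative bound is unchanged) or by treating $B_0$ separately.
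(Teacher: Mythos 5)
Your proof is correct, but it takes a genuinely different route from the paper's. The paper argues in two stages: it first truncates the Taylor series of $e^{-y_x t}$ (valid only up to the cutoff $t \le t_r$, where $|y_x t| \le 2\lambda$), rewrites the truncated series as an inner product $v_l(y_x)^T v(t)$ of moment vectors, and then invokes the sparsification result of \Cref{lem:chebyshev_sparsification} (from \cite{liu2022robust}) to place $v_l(y_x)$ in the convex hull of $\{\pm 2l\, v_l(t_{i,r})\}$; the regime $t > t_r$ is handled by a separate argument showing $H_{B_r}(t)$ and $H_{T_r}(t)$ are each individually $O(l\epsilon w_r)$ there. You instead interpolate $y \mapsto e^{-ty}$ directly at the Chebyshev nodes and control the error with the classical nodal-polynomial bound; the calibration $|I_r| = 2^{r-1}\lambda/n$ turns the bound into $u^{l+1}e^{-u}/((l+1)!\,2^{2l+1})$ with $u = t\cdot 2^{r-1}\lambda/n$, which by Stirling is $O(4^{-l})$ uniformly over all $u \ge 0$, so you cover the entire range $t \in [2n-1]$ in one shot, with no cutoff case-split and no external sparsification lemma. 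This is more self-contained and arguably cleaner. Two remarks. First, the full version of the lemma also asserts the coefficient bound $\sum_i |a^+_{i,r}| + |a^-_{i,r}| = O(l\cdot w_r)$, which the paper needs downstream (in bounding $E^3$ and in the regression step of \Cref{lem:existence_of_vandermonde_good_low_rank_approximation}); your coefficients $\beta_i^{\pm} = \sum_x a_x L_i(y_x)$ satisfy $\sum_i |\beta_i^{\pm}| \le \Lambda_l\, w_r^{\pm}$ where $\Lambda_l = O(\log l)$ is the Lebesgue constant of Chebyshev interpolation, so you actually get a stronger coefficient bound for free --- but you should state it if this proof were to replace the paper's. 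Second, your $r=0$ caveat is handled correctly: there the left endpoint is $a=0$ so $e^{-ta}=1$, but $t|I_0| \le 2\lambda \ll l$ still keeps $(t|I_0|)^{l+1}/((l+1)!\,4^l)$ below $\epsilon$.
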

\Cref{lem:bucket_sparsification} allows us to replace the nodes in each bucket $B_r$ with those in $T_r$, for all $r\in [R+1]$. See Figure \ref{fig:chebyshev_node_sparsification} for a visualization. Letting $H_{T_r}$ be the Hankel matrix given by $(H_{T_r})_{i,j}=H_{T_r} (i+j)$, the approximation bound of \Cref{lem:bucket_sparsification} can be restated as an entrywise matrix approximation bound:
\begin{equation*}
    |(H_{B_r}-H_{T_r})_{i,j}|\leq \epsilon w_r \quad \forall i,j\in [n], r\in [R+1].\numberthis\label{eq:entrywise_informal}
\end{equation*}
Furthermore, since by \Cref{clm:ij_entry}, $H_{T_R}$ has a Fiedler factorization with just $2(l+1)$ nodes, $H_{T_r}$ has rank $2(l+1)=O(\log (1/\epsilon))$. Thus, the sum of these approximations $\wh H = \sum_{r = 0}^{R+1} H_{T_r}$ is an $O(R \log(1/\epsilon)) = O(\log n \log(1/\epsilon))$ rank Hankel matrix, which will ultimately serve as our approximation to $H$ (note that $r = R+1$ is a corner case not covered by \Cref{lem:bucket_sparsification}, which we will handle later).

We now sketch the ideas behind Lemma \ref{lem:bucket_sparsification}. See  \Cref{sec:bucket_sparsification} for a full proof. First, fix $r\in [R+1]$. As discussed, the contribution of each $x\in B_r$ beyond its cutoff is small. In particular, from the definition in  \eqref{eq:informal_bucketing}, $|H_{B_r}(t)|\leq \epsilon w_r$ for all $t$ beyond the cutoff $t_r$. On the other hand, before the cutoff $t_r$, we will show that $H_{B_r}(t)$ is well approximated by a low-degree polynomial. More precisely, consider any $x\in B_r$ with $x\geq 0$. We will handle $x < 0$ with a separate but analogous argument. Let the corresponding exponential be $x^t = e^{-y_x t}$. We can see that the exponent $-y_x t$ lies in $[-2\lambda, 0]$ for $t\in [0,t_r]$ by \eqref{eq:informal_bucketing}.
This is a narrow range, since $\lambda = O(\log(1/\epsilon))$. Hence, $e^{-y_x t}$ can be well-approximated by a low-degree polynomial via Taylor expansion. In particular, for $l = O(\lambda)=O(\log(1/\epsilon))$ we can show:
\begin{align}\label{eq:cam1}
x^t = e^{-y_x t} &=\sum_{m=0}^{l-1} \frac{(-t)^m}{m!} \cdot y_x^m \pm \epsilon \implies \sum_{x\in B_r:x\geq 0}a_x e^{-y_x t} = \left ( \sum_{x\in B_r:x\geq 0}a_x \sum_{m=0}^{l-1} \frac{(-t)^m}{m!} \cdot y_x^m \right ) \pm \epsilon w_r.
\end{align}
We would like to sparsify the outer sum over $x \in B_r$. To do this, we observe that the inner sum can be written as a low-dimensional inner product. In particular, letting $v(t) = [1, -t, \ldots, (-t)^{l-1} / (l-1)!]^T$ and recalling the definition of the moment vector (\Cref{def:moment_vector}) $v_l(y_x) = [1,y_x,\ldots,y_x^{l-1}]^T$, we have $\sum_{m=0}^{l-1} \frac{(-t)^m}{m!} \cdot y_x^m = v_l(y_x)^T v(t)$. We can thus rewrite \eqref{eq:cam1} as:
\begin{align*}
    \sum_{x\in B_r:x\geq 0}a_x e^{-y_x t}  = \left( \sum_{x\in B_r:x\geq 0}a_xv_l(y_x)\right)^T v(t)\pm \epsilon w_r.
\end{align*}
%
%
%
%
The sum $\sum_{x \in B_r:x\geq 0} a_x v_l(y_x)$ consists of up to $n$ vectors in $\mathbb{R}^{l}$. Since $l \ll n$ and $a_x > 0$ for all $x$, this sum can be sparsified, e.g., using Carath\'{e}odory's theorem. We will apply a related sparsification lemma of \cite{liu2022robust} that specifically applies to moment vectors (see Lemma \ref{lem:chebyshev_sparsification}), and was originally used in the context of sparse Fourier functions. This lemma lets us to replace $\sum_{x\in B_r:x\geq 0}a_xv_l(y_x)$ with a sum of just $l+1$ moment vectors corresponding to the degree $l$ Chebyshev nodes $\{t_{i,r}\}_{i\in [l+1]}$. Furthermore, each exponential $e^{-t_{i,r}t}$ has the same cutoff $t_r$, and thus is bounded by $\epsilon$ beyond the cutoff and approximable with a low-degree polynomial before the cutoff. This allows us to convert the sparse sum of low-dimensional moment vectors back into a sparse sum of exponentials $\{e^{-t_{r,i}t}\}_{i\in [l+1]}$, which sparsifies our original sum $\sum_{x\in B_r: x \ge 0}a_x e^{-y_it} = \sum_{x\in B_r: x \ge 0}a_x x^t$. Handling negative $x$ similarly completes the proof of Lemma \ref{lem:bucket_sparsification}.  

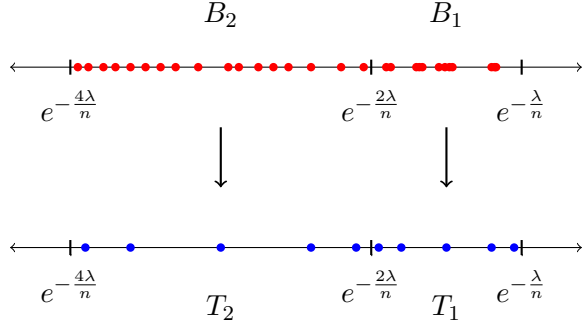
\begin{figure}\label{fig:chebyshev_node_sparsification}
\centering
\begin{tikzpicture}[scale=2]

\draw[<-]  (-0.4,1.2) -- (3,1.2) ;
\draw[<-]  (-0.4,0) -- (3,0) ;
\draw[->]  (3,1.2) -- (3.4,1.2) ;
\draw[->]  (3,0) -- (3.4,0) ;
\foreach \x/\label in {0/$e^{-\frac{4\lambda}{n}}$, 2/{$e^{-\frac{2\lambda}{n}}$}, 3/{$e^{-\frac{\lambda}{n}}$}} {
  \draw[thick] (\x,1.15) -- (\x,1.25);
  \node[below=2pt] at (\x,1.15) {\label};
}

\foreach \x in {0,2,3} {
  \draw[thick] (\x,-0.05) -- (\x,0.05);
}
\foreach \x/\label in {0/$e^{-\frac{4\lambda}{n}}$, 2/{$e^{-\frac{2\lambda}{n}}$}, 3/{$e^{-\frac{\lambda}{n}}$}} {
  \draw[thick] (\x,-0.05) -- (\x,-0.05);
  \node[below=2pt] at (\x,-0.05) {\label};
}

\node at (1,1.55) {$B_2$};
\node at (2.5,1.55) {$B_1$};
\node at (1,-0.4) {$T_2$};
\node at (2.5,-0.4) {$T_1$};

\foreach \x in {0.05, 0.12, 0.22, 0.3, 0.4, 0.5, 0.6, 0.7, 0.85, 1.05, 1.12, 1.25, 1.35, 1.45, 1.6, 1.8, 1.95,2.1, 2.13, 2.3, 2.32,2.34, 2.45, 2.49,2.52,2.54,2.8, 2.82,2.83} {
  \fill[red] (\x,1.2) circle (0.8pt);
}

\foreach \x in {0.05, 0.2, 0.5, 0.8, 0.95} {
  \fill[blue] (2+\x,0) circle (0.8pt);
}

\foreach \x in {1.1, 1.4, 2, 2.6, 2.9} {
  \fill[blue] (\x-1,0) circle (0.8pt);
}

\draw[->, thick] (1,0.8) -- (1,0.4);
\draw[->, thick] (2.5,0.8) -- (2.5,0.4);

\end{tikzpicture}
\caption{Sparsification of buckets $B_1,B_2$ to exponentiated Chebyshev nodes $T_1,T_2$ in respective intervals.}
\end{figure}

\smallskip

\noindent\textbf{From bucket approximation to low-rank approximation.} So far, we have written $H = \sum_{r = 0}^{R+1} H_{B_r}$ as a sum of node buckets, and, for $r \le R$, have shown that $H_{B_r}$ can be approximated by a rank $O(\log (1/\epsilon))$ Hankel matrix $H_{T_r}$ to small entrywise error (see \eqref{eq:entrywise_informal}). For $H_{B_{R+1}}$, which corresponds to nodes with high rates of decay, we can form the approximation $H_{T_{R+1}}$ simply by rounding any entry $(H_{B_{R+1}})_{i,j}$ with $ i + j$ larger than $O(\log(1/\epsilon))$ to $0$, yielding a sparse Hankel matrix with rank at most $O(\log(1/\epsilon))$ and small entrywise error. Our final approximation to $H$ will simply be the sum of the approximations to each $H_{B_r}$ -- i.e., $\wh H = \sum_{r = 0}^{R+1} H_{T_r}$. Recall that $\wh H$ is Hankel and has rank $O(R\cdot \log (1/\epsilon))= O(\log n\log (1/\epsilon))$ as desired.

We next need to use our entrywise approximation bounds to argue that $\|H - \wh H\|_2$ and $\norm{H - \wh H}_F$ are small. Suppose for simplicity that we only have one bucket -- i.e., $H=H_{B_0}$. Equation \eqref{eq:entrywise_informal} guarantees that for the error matrix $E_0=H_{B_0}-H_{T_0}$, we have $|(E_0)_{i,j}|\leq \epsilon w_0$ for all $i,j\in [n]$. We thus have $\norm{E_0}_2 \le \norm{E_0}_F \le \epsilon w_0 \cdot n$.
To prove an error guarantee of $\|E_0\|_2\leq \epsilon \|H_{B_0}\|_2$ as required by Theorem \ref{thm:main_thm}, we thus need to show a lower bound of $\|H_{B_0}\|_2 = \Omega( w_0\cdot n)$ (and analogous argument will yield the Frobenius norm error bound as well).

\smallskip

\noindent\textbf{Lower bounding the norm of $H$.} Continuing to focus on the single bucket case, we describe how to prove that $\norm{H_{B_0}}_2 = \Omega( w_0\cdot n)$. We then discuss how our arguments extend to the multiple bucket case. We will use repeated invocations of the following lemma, which we prove in \Cref{sec:prelim_hankel}. Recall that for two matrices $M,N \in \R^{n  \times n}$, $M \succeq N$ denotes that $M$ is larger than $N$ in the Loewner order -- i.e., that $M-N$ is PSD.
\begin{lemma}[Simplified, full version in  \Cref{sec:prelim_hankel}]\label{lem:partition_x}
Consider Hankel $H \in \R^{n \times n}$ with $H_{i,j}=\sum_{x\in X}a_x x^{i+j}$ with $a_x\geq 0$ for all $x\in X$. Consider any partition $X=X_0\cup\ldots \cup X_{l-1}$ and for $m \in [l]$, let $H_{X_m}$ be defined entrywise as $(H_{X_m})_{i,j} = \sum_{x\in X_m}a_x x^{i+j}$. We have $H=\sum_{m\in [l]}H_{X_m}$ and $H\succeq H_{X_m}$ for all $m\in [l]$.
\end{lemma}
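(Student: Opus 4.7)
The plan is to reduce the lemma to the observation that each single-node contribution is a rank-one PSD matrix, and that partitioning the nodes corresponds to partitioning a sum of PSD matrices.

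First, I would introduce, for each $x \in X$, the rank-one matrix $M_x \in \R^{n \times n}$ defined by $(M_x)_{i,j} = a_x x^{i+j}$. Using the moment vector notation of \Cref{def:moment_vector}, this is exactly $M_x = a_x\, v_n(x)\, v_n(x)^T$. Because $a_x \ge 0$, each $M_x$ is PSD (indeed, PSD of rank at most $1$).

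Next, by expanding the sums that define $H$ and each $H_{X_m}$ entrywise, we immediately get
\begin{equation*}
H \;=\; \sum_{x \in X} M_x, \qquad H_{X_m} \;=\; \sum_{x \in X_m} M_x \quad \text{for every } m \in [l].
\end{equation*}
Since $X_0,\ldots,X_{l-1}$ form a partition of $X$, summing the second identity over $m$ gives $\sum_{m \in [l]} H_{X_m} = \sum_{x \in X} M_x = H$, which is the first claimed equality.

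Finally, for the Loewner inequality, fix any $m \in [l]$ and write
\begin{equation*}
H - H_{X_m} \;=\; \sum_{x \in X \setminus X_m} M_x.
\end{equation*}
This is a sum of PSD matrices (since $a_x \ge 0$), and hence PSD, which gives $H \succeq H_{X_m}$. There is no real obstacle here: the only subtlety is remembering that the $a_x$'s are assumed nonnegative, which is exactly what guarantees each $M_x$ is PSD and lets rank-one outer products combine with the Loewner order. The nontrivial use of this lemma will come later in the norm-lower-bound argument, not in the lemma itself.
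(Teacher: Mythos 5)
Your proposal is correct and takes essentially the same approach as the paper: both decompose $H$ into the rank-one PSD terms $a_x v_n(x)v_n(x)^T$, use the partition of $X$ to split the sum, and conclude $H \succeq H_{X_m}$ because the leftover terms are PSD. The only cosmetic difference is that the paper phrases the decomposition via the factorization $(V_X\sqrt{D_X})(V_X\sqrt{D_X})^T$ rather than introducing the matrices $M_x$ explicitly.
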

Applying the above to $H_{B_0}$ for the partition $B_0= B^+_0\cup B^-_0$ containing positive and negative elements of bucket $B_0$ respectively, we have that $H_{B_0}\succeq H_{B^+_0}$ and $H_{B_0}\succeq H_{B^-_0}$. Thus,
\begin{equation}\label{eq:spectral_lb_informal}
    \|H_{B_0}\|_2\geq \max\{\|H_{B^+_0}\|_2,\|H_{B^-_0}\|_2\} \geq (\|H_{B^+_0}\|_2+\|H_{B^-_0}\|_2)/2,
\end{equation}
We will first lower bound $\|H_{B^+_0}\|_2$ by constructing a vector $v$ so that $\|H_{B^+_0}v\|_2$ is large. Recall that $H_{B_0} = V_{B_0} D_{B_0} V_{B_0}^T$, where all columns of the Vandermonde matrix $V_{B_0}$ are similar to the moment vector $v_n(e^{-\lambda/n})$ by our bucketing strategy (equation \eqref{eq:informal_bucketing}). Thus, we will simply choose 
%
%
%
$v$ to be this moment vector. 
We  can argue that $\|H_{B^+_0}v\|_2/\|v\|_2 \approx (\sum_{x\in B^+_0}a_x)\cdot \|v_n(e^{-\lambda/n})\|_2^2 \approx n\cdot (\sum_{x\in B^+_0}a_x)$.

We next handle $H_{B^-_0}$, which can have negative entries since $B^-_0$ has negative elements. However, observe that for any indices $i,j\in [n]$ such that $i$ is even, the sign of the entry $(H_{B^-_0})_{i,j}=\sum_{x\in B_0:x<0}a_x x^{i+j}$ only depends on $j$ as $x^{i+j}=(-1)^j|x|^{i+j}$ due to $i$ being even. This observation implies that all even rows $i\in [n]$ of $H_{B^-_0}$ have the same \emph{consistent sign pattern} not depending on $i$. Moreover, vectors $v$ whose signs correlate with the signs of even rows essentially \emph{correct for the negative signs}, avoiding cancellations in $H_{B^-_0}v$. This allows us to similarly obtain a $v$ with $\|H_{B^-_0}v\|_2/\|v\|_2 \approx n\cdot(\sum_{x\in B^-_0}a_x)$. Thus by \eqref{eq:spectral_lb_informal} we get that $\|H_{B_0}\|_2$ is approximately lower bounded by $ (\sum_{x\in B^+_0}a_x+\sum_{x\in B^-_0}a_x)\cdot n= w_0\cdot n$, as required to prove the error guarantee of Theorem \ref{thm:main_thm}. 

\smallskip

\noindent\textbf{Handling multiple buckets.}
We need to extend the above lower bound on $\|H\|_2$ to the more complex setting when the nodes of $H$ are contained in multiple buckets. Since the buckets partition $X$, applying Lemma \ref{lem:partition_x} allows us to lower bound $\|H\|_2 \ge \max_{r\in [R+2]}\|H_{B_r}\|_2$. Using an analogous argument for the single bucket case, we can obtain a lower bound for each $\|H_{B_r}\|_2$, by choosing $v$ to be the moment vector $v_n(e^{-2^r\lambda/n})$. We obtain the following bound, which is proven in \Cref{sec:hankel_spectral_lower_bd}, and strengthened to give stronger lower bounds on $\|H\|_F$.
%
%
%
\begin{lemma}[Simplified, full version in  \Cref{sec:hankel_spectral_lower_bd}]\label{lem:hankel_spectral_lower_bd}
For a PSD Hankel matrix $H = V_X D_X V_X^T\in \mathbb{R}^{n\times n}$ with all $x\in X$ satisfying $|x|\leq 1$, we have
\begin{equation*}
\|H\|_2 =\Omega\left(\max_{r\in [R+2]}w_r \cdot n/(\lambda 2^r)\right).
\end{equation*}
\end{lemma}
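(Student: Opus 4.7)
The plan is to execute the two-step reduction sketched in the discussion above. First, I apply \Cref{lem:partition_x} to the bucket partition $X = \bigsqcup_{r \in [R+2]} B_r$ from \eqref{eq:informal_bucketing}, yielding $H \succeq H_{B_r}$ and hence $\|H\|_2 \ge \max_{r \in [R+2]} \|H_{B_r}\|_2$. For each $r$ I then apply \Cref{lem:partition_x} again to the sub-partition $B_r = B_r^+ \sqcup B_r^-$ into nonnegative and negative nodes, giving $\|H_{B_r}\|_2 \ge \max\{\|H_{B_r^+}\|_2,\|H_{B_r^-}\|_2\} \ge \tfrac12\bigl(\|H_{B_r^+}\|_2+\|H_{B_r^-}\|_2\bigr)$. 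Writing $w_r^\sigma = \sum_{x \in B_r^\sigma} a_x$ and using $w_r = w_r^+ + w_r^-$, the lemma reduces to the one-sided bound
\[
\|H_{B_r^{\sigma}}\|_2 \;=\; \Omega\!\left(w_r^{\sigma} \cdot \frac{n}{2^r \lambda}\right) \qquad \text{for each } \sigma \in \{+,-\},\ r \in [R+2].
\]

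For $\sigma = +$ and $1 \le r \le R$, the proposed test vector is the moment vector $v = v_n(s_r)$ with $s_r = e^{-2^r\lambda/n}$, i.e., anchored at the upper endpoint of the bucket interval. Since $H_{B_r^+} = V_{B_r^+} D_{B_r^+} V_{B_r^+}^T$ with $D_{B_r^+}$ positive diagonal, the Rayleigh quotient factors as
\[
\frac{v^T H_{B_r^+} v}{\|v\|_2^2} \;=\; \frac{\sum_{x \in B_r^+} a_x \bigl(v^T v_n(x)\bigr)^2}{\|v\|_2^2}.
\]
Both $\|v\|_2^2 = \sum_{i=0}^{n-1} e^{-2 \cdot 2^r\lambda i/n}$ and, for $x = e^{-y_x} \in B_r^+$, $v^T v_n(x) = \sum_{i=0}^{n-1} e^{-(y_x + 2^r\lambda/n)\,i}$ are geometric sums. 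The bucketing forces the rate $y_x + 2^r\lambda/n$ to lie in $[3 \cdot 2^{r-1}\lambda/n,\, 2^{r+1}\lambda/n]$, so both sums are in the saturated regime ($n\beta = \Theta(2^r\lambda) \gtrsim 1$) and evaluate to $\Theta(n/(2^r\lambda))$. Substituting yields $\|H_{B_r^+}\|_2 \ge \Omega(n/(2^r\lambda)) \cdot w_r^+$. The case $r = 0$ is handled identically by taking $s_0 = e^{-\lambda/n}$, since $[0,\lambda/n]$ agrees with $[\lambda/(2n),\lambda/n]$ up to constants in the exponent.

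The $\sigma = -$ case is the delicate one, since entries of $H_{B_r^-}$ can carry negative signs and a naively chosen $v$ would suffer cancellations as flagged near \eqref{eq:spectral_lb_informal}. I would use the sign-flipped test vector $v = v_n(-s_r)$: for any $x \in B_r^-$, writing $x = -|x|$, the inner product telescopes as $v^T v_n(x) = \sum_i (-s_r)^i(-|x|)^i = \sum_i (s_r|x|)^i$, a positive geometric sum of the same order $\Theta(n/(2^r\lambda))$ as in the positive case, while $\|v\|_2^2$ is unchanged. The corner bucket $B_{R+1}$ is easiest: taking $v = e_0$ gives $v^T H_{B_{R+1}^\sigma} v = w_{R+1}^\sigma$ and $\|v\|_2^2 = 1$, which matches the target because for the natural choice $2^R = \Theta(n/\lambda)$ we have $n/(2^{R+1}\lambda) = O(1)$. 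Combining the $+$ and $-$ bounds within each bucket and taking $\max_r$ then gives the lemma.

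The main obstacle I anticipate is the constant-tracking across the three regimes $r = 0$, $1 \le r \le R$, and $r = R+1$: the elementary estimate $\sum_{i=0}^{n-1} e^{-\beta i} = \Theta(\min(n, 1/\beta))$ behaves qualitatively differently depending on whether $n\beta \lesssim 1$ or $n\beta \gtrsim 1$, and I will need to verify that the bucket definitions together with $\lambda = 2\ln(1/\epsilon) \ge \Omega(1)$ place every $r$ into the saturated regime so the uniform estimate $\Theta(n/(2^r\lambda))$ holds. A secondary care-point is confirming that the sign-flipped construction really eliminates cancellation in $v^T H_{B_r^-} v$ rather than displacing it; this should follow because after the flip every $(v^T v_n(x))^2$ is individually positive and of the correct order.
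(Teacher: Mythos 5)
Your proposal is correct and follows the same skeleton as the paper's proof: partition $X$ into buckets via \Cref{lem:partition_x}, split each bucket into positive and negative nodes, and test against the moment vector anchored at the bucket's upper endpoint (sign-alternated for the negative part), with the corner bucket $B_{R+1}$ handled by $e_0$. The one place you genuinely deviate is the final estimate: the paper lower-bounds $\|H_{B_r^{\pm}}v\|_2$ directly, which forces it to argue about the maximizing $v$ — invoking Perron–Frobenius for $H_{B_r^+}$ and a sign-consistency argument restricted to even rows for $H_{B_r^-}$ — whereas you use the Rayleigh quotient $v^T H_{B_r^{\pm}} v/\|v\|_2^2$, which lower-bounds the spectral norm of a PSD matrix for \emph{any} fixed test vector and factors cleanly as $\sum_x a_x (v^T v_n(x))^2/\|v\|_2^2$. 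This sidesteps both delicate points in the paper's argument and reduces everything to two geometric-sum estimates; your anticipated care-points (that $2^r\lambda \gtrsim \lambda = \Omega(1)$ keeps every sum in the saturated regime, and that the flipped vector makes each $(v^Tv_n(x))^2$ individually positive) are exactly the right ones and both check out, so the proof goes through.
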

Equipped with the above lower bound in the presence of multiple buckets, what remains now is to prove a matching upper bound on the error, using the entrywise error bounds of \Cref{lem:bucket_sparsification}.

\smallskip

\noindent\textbf{Entrywise to matrix norm error guarantees using correlation decay.}
Let $E_r = H_{B_r}-H_{T_r}$ for all $r\in [R+1]$ be the errors in approximating each of buckets. 
Recall that, by \Cref{lem:bucket_sparsification}, equation \eqref{eq:entrywise_informal}, each $E_r$ satisfies $|(E_r)_{i,j}|\leq \epsilon w_r$ for all $i,j\in [n]$. This trivially implies a bound of $\norm{E_r}_2 \le \epsilon w_r \cdot n$ and thus $\norm{\sum_{r\in [R+2]}E_r}_2 \le \epsilon n \cdot \sum_{r \in [R+2]} w_r$. However, we will need a much stronger bound, which we state below:
\begin{lemma}[Simplified, full version in  \Cref{sec:error_correlation_analysis}]\label{lem:x_0_1_errs}
The errors $E_r=H_{B_r}-H_{T_r}$ for all $r\in [R+2]$ satisfy:
\begin{equation*}
    \left\|\sum_{r\in [R+2]}E_r\right\|_2\leq \epsilon\cdot \max_{r\in [R+2]}w_r \cdot n/2^r.
\end{equation*}
\end{lemma}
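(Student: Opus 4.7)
The plan is to combine two complementary pointwise bounds on each error generating sequence $e_r(t) := H_{B_r}(t) - H_{T_r}(t)$, convert them into a spectral-norm bound using a classical anti-diagonal decomposition of Hankel matrices, and then aggregate across buckets by exploiting the geometric structure of the cutoffs $t_r = n/2^{r-1}$.

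\textbf{Step 1: two pointwise bounds on $e_r(t)$.} \Cref{lem:bucket_sparsification} already provides the uniform bound $|e_r(t)| \le \epsilon w_r$, which is sharp near the top-left corner but wasteful for large $t$. To recover the missing decay I would observe that every node of $B_r$ has absolute value at most $e^{-2^{r-1}\lambda/n}$, and that the Chebyshev nodes generating $T_r$ lie in the same interval; hence $|H_{B_r}(t)| \le w_r e^{-2^{r-1}\lambda t/n}$, and an analogous bound holds for $|H_{T_r}(t)|$ provided one controls $\sum_{y \in T_r}|b_y|$, which the Chebyshev sparsification delivers up to an $O(1)$ factor. Combined, $|e_r(t)| \le O(w_r)\, e^{-2^{r-1}\lambda t/n}$, which becomes much smaller than $\epsilon w_r$ once $t \gtrsim n/2^r$.

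\textbf{Step 2: spectral norm from an anti-diagonal $\ell^1$ bound.} Writing any Hankel $M$ with generating sequence $m(t)$ as $M = \sum_t m(t) J_t$, where $J_t$ is the $n \times n$ anti-diagonal indicator with $\|J_t\|_2 = 1$, yields $\|M\|_2 \le \sum_t |m(t)|$. Applied to each $E_r$ and splitting the sum at $t \asymp n/2^r$ — using the sparsification bound below the threshold and the exponential-decay bound above it — this gives $\sum_t |e_r(t)| \le O(\epsilon w_r n/2^r)$, and so $\|E_r\|_2 \le O(\epsilon w_r n/2^r)$.

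\textbf{Step 3: from per-bucket to combined bound.} The naive triangle inequality $\|\sum_r E_r\|_2 \le \sum_r \|E_r\|_2$ gives $O(\epsilon \sum_r w_r n/2^r)$, which loses a factor $R = O(\log n)$. To upgrade to the stated max-bound, I would apply the $\ell^1$ inequality from Step~2 to $e(t) = \sum_r e_r(t)$ directly. For each anti-diagonal $t$, let $r^\star(t) \approx \log_2(n/t)$ be the crossover index: for $r \le r^\star(t)$, use $|e_r(t)| \le \epsilon w_r$; for $r > r^\star(t)$ the exponential bound gives $|e_r(t)| \lesssim w_r \cdot \epsilon^{2^{r-r^\star(t)}}$, which sums over $r$ to $O(\epsilon \max_r w_r)$. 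Re-summing over $t$ and using that the sparsification regime for bucket $B_r$ spans only the $n/2^{r-1}$ anti-diagonals $t \le t_r$, the total collapses to a geometric series dominated by a single bucket, giving $\|\sum_r E_r\|_2 \le O(\epsilon \max_r w_r n/2^r)$.

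\textbf{Main obstacle.} The crux is avoiding the $\log n$ loss in Step~3: bounding each $\|E_r\|_2$ separately and then summing is not tight. One must argue that on any fixed anti-diagonal $t$ only a geometrically-weighted collection of buckets actually contributes, so that after integrating over $t$ the total reduces to a single max, not a sum. This requires using the sparsification bound and the bucket-specific exponential decay in tandem, and carefully exploiting the dyadic scaling of the cutoffs $t_r$. Edge cases — in particular $B_{R+1}$, which is handled by truncation rather than by \Cref{lem:bucket_sparsification} — need a separate but analogous treatment using that its entries are already $O(\epsilon w_{R+1})$ beyond a small corner.
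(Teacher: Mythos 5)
Your Steps 1 and 2 are sound: the two pointwise bounds on $e_r(t)$ match what the paper establishes (\Cref{lem:bucket_sparsification} for the uniform bound, and the decay of the nodes for the tail), the anti-diagonal decomposition $\|M\|_2 \le \sum_t |m(t)|$ is valid since each $J_t$ is a partial permutation matrix, and the resulting per-bucket bound $\|E_r\|_2 = O(\epsilon w_r n/2^r)$ agrees with the paper's. The treatment of $B_{R+1}$ by truncation is also consistent with \Cref{lem:err_last_bucket}.

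The gap is in Step 3, and it is exactly the obstacle you flag as the crux. Your aggregation computes $\sum_t \sum_r |e_r(t)|$, but by Fubini this equals $\sum_r \sum_t |e_r(t)| = \sum_r O(\epsilon w_r t_r)$ --- the very sum of per-bucket bounds you were trying to avoid; reorganizing the double sum by $t$ first cannot change its value. Concretely, for each $t$ the dominant contribution $\epsilon \sum_{r:\, t\le t_r} w_r$ does not collapse to a single bucket, because the weights $w_r$ are arbitrary; summing over $t$ gives $\epsilon \sum_r w_r (n/2^{r-1})$, which exceeds $\epsilon \max_r w_r n/2^r$ by a factor of $R=\Theta(\log n)$ whenever, say, $w_r \propto 2^r$ (so that all the products $w_r t_r$ are equal). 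The ``geometric series'' is only in the cutoffs $t_r$, not in the products $w_r t_r$, so no single bucket dominates. More fundamentally, any $\ell^1$-type aggregation of nonnegative per-anti-diagonal contributions is additive across buckets and provably cannot yield a maximum. The paper instead keeps the errors as matrices: it bounds the restriction of $E_r$ to its $t_r/2\times t_r/2$ corner entrywise by the rank-one matrix $\epsilon w_r v_rv_r^T$, where $v_r$ is the indicator of the first $t_r/2$ coordinates (\Cref{clm:entrywise_upper_bound}), observes that the normalized vectors satisfy $\langle v_{r_1},v_{r_2}\rangle/(\|v_{r_1}\|\|v_{r_2}\|)=2^{-|r_1-r_2|/2}$, and uses this near-orthogonality to show the Gram matrix of the $v_r/\|v_r\|$ has $O(1)$ spectral norm, so that $\|\sum_r \epsilon w_r v_rv_r^T\|_2 = O(\max_r \epsilon w_r\|v_r\|_2^2)$ (\Cref{lem:sparsification_errs}); an analogous Gram-matrix argument with moment vectors handles the tails beyond the cutoffs. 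Some such exploitation of the near-orthogonality of the supports is essential; your proposal as written proves only the weaker bound $\epsilon\sum_r w_r n/2^r$.
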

The full version of \Cref{lem:x_0_1_errs} stated in  \Cref{sec:error_correlation_analysis} also contains an analogous Frobenius norm bound. Observe that \Cref{lem:x_0_1_errs} matches the lower bound on $\|H\|_2$ given by Lemma \ref{lem:hankel_spectral_lower_bd}, thus implying the $\epsilon \norm{H}_2$ error bound of Theorem \ref{thm:main_thm}. 
Also note that if we used the trivial bound of $\norm{\sum_{r\in [R+2]}E_r}_2 \le \epsilon n \cdot \sum_{r \in [R+2]} w_r$ in place of \Cref{lem:x_0_1_errs}, our error bound would be worse by a $\poly(n)$ factor. Adjusting $\epsilon$ to account for this, we would obtain \Cref{thm:main_thm} but with rank $O(\log n \log(n/\epsilon))$ as compared to $O(\log n \log(1/\epsilon))$, quadratically worse in $\log n$.

To prove \Cref{lem:x_0_1_errs}, we drop the last term $E_{R+1}$ in $\sum_{r\in [R+2]}E_r$ for now: it is a corner case handled separately in Lemma \ref{lem:err_last_bucket}.
For each $r\in [R+1]$, we define $E_{r,1}$ to be the restriction of error $E_r$ to its top-left $t_r/2 \times t_r/2$ block (zero elsewhere), capturing the error from approximating $H_{B_r}$ up to cutoff $t_r$. We first focus on bounding the norms of these restricted error matrices -- we will separately bound the contribution of the error after the cutoff later. We use the following claim, which translates entrywise bounds into norm bounds:
\begin{claim}\label{clm:entrywise_upper_bound}
For any $A,B\in\mathbb{R}^{n\times n}$ with $|A_{i,j}|\leq B_{i,j}$ for all $i,j\in [n]$, we have: $$\|A\|_2\leq \|B\|_2 \hspace{1em}\text{ and } \hspace{1em}\|A\|_F\leq \|B\|_F.\footnote{The proof follows trivially by definition for the Frobenius norm. For the spectral norm, it follows by observing that for any vector $v\in \mathbb{R}^{n}$ and index $i\in [n]$, $(Av)_i\leq (B|v|)_i$ where $|v|$ is obtained by taking entrywise absolute values of $v$. Thus $\|Av\|_2\leq \|B|v|\|_2$, and since $v$ was arbitrary $\|A\|_2\leq \|B\|_2$.}$$
\end{claim}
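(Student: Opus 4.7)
The plan is to handle the two norm inequalities separately, since each has a short and direct argument; neither requires any of the Hankel/Vandermonde machinery developed earlier.

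For the Frobenius norm, I would simply expand the definition entrywise. Since $|A_{i,j}| \le B_{i,j}$ implies $A_{i,j}^2 \le B_{i,j}^2$ (and in particular $B_{i,j} \ge 0$), summing over all $i,j \in [n]$ yields $\|A\|_F^2 = \sum_{i,j} A_{i,j}^2 \le \sum_{i,j} B_{i,j}^2 = \|B\|_F^2$, and taking square roots finishes this case.

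For the spectral norm, the plan is to reduce to the Frobenius-style argument at the level of matrix-vector products. Fix an arbitrary vector $v \in \R^n$ and let $|v|$ denote its entrywise absolute value, so $\||v|\|_2 = \|v\|_2$. For each row $i$, the triangle inequality gives
\[
|(Av)_i| \;=\; \Big|\sum_j A_{i,j} v_j\Big| \;\le\; \sum_j |A_{i,j}|\,|v_j| \;\le\; \sum_j B_{i,j}\,|v_j| \;=\; (B|v|)_i,
\]
where the last inequality uses the hypothesis. Squaring and summing over $i$ yields $\|Av\|_2 \le \|B|v|\|_2 \le \|B\|_2\,\|v\|_2$, and taking the supremum over unit vectors $v$ gives $\|A\|_2 \le \|B\|_2$.

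There is essentially no obstacle here: both claims are routine, and the only mildly non-trivial point is remembering to pass through $|v|$ in the spectral-norm argument so that the entrywise bound on $B$'s rows (which need not be monotone in sign) can actually be invoked. I would keep the write-up to a few lines in the final paper.
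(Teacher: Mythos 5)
Your proof is correct and follows essentially the same route as the paper's footnote: entrywise comparison for the Frobenius norm, and the bound $|(Av)_i|\le (B|v|)_i$ followed by $\|B|v|\|_2\le\|B\|_2\|v\|_2$ for the spectral norm. If anything, your write-up is slightly more careful than the paper's, since you bound $|(Av)_i|$ rather than $(Av)_i$, which is what one actually needs before squaring and summing.
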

Note that each $E_{r,1} $ is entrywise bounded by $\epsilon w_r \cdot v_r v_r^\top$, where $ v_r \in \mathbb{R}^n$ has ones on its first $t_r/2$ entries and is zero everywhere else. See Figure \ref{fig:error_correlation} for a visualization. Thus Claim \ref{clm:entrywise_upper_bound} allows us to bound $
\|\sum_{r \in [R+1]} E_{r,1}\|_2 \le \|\sum_{r \in [R+1]} \epsilon w_r \cdot v_r v_r^T \|_2$, a sum of rank-one terms. Crucially, the vectors $v_r$ spanning each term satisfy,
\begin{equation}\label{eq:informal_correlation_decay}
\frac{\langle v_{r_1}, v_{r_2} \rangle}{\|v_{r_1}\|\|v_{r_2}\|} \leq 2^{-\Omega(|r_1 - r_2|)}\quad \forall r_1\neq r_2.
\end{equation}
This exponential \emph{correlation decay} across bucket errors allows us to beat the triangle inequality and prove:
\begin{equation*}
   \left\|\sum_{r\in [R+1]}\epsilon w_r \cdot v_r v_r^T\right\|_2 \approx  \max_{r\in [R+1]}\|\epsilon w_r \cdot v_r v_r^T\|_2 = \epsilon \max_{r\in [R+1]} w_r\cdot n/2^r.
\end{equation*}
Our formal proof considers the Gram matrix of the normalized vectors $v_r$ for $r\in [R+1]$ and uses the correlation decay to bound the contribution of off-diagonal terms. See Lemma \ref{lem:sparsification_errs} in  \Cref{sec:error_correlation_analysis}.

For the remaining error after the cutoffs for each bucket: $\sum_{r\in [R+1]} (E_r - E_{r,1})$, we leverage the decay of the entries of both $H_{B_r}$ and $H_{T_r}$ beyond the cutoffs to give strong error bounds. In particular, for all $r\in [R+1]$ and $i,j\in [n]$ such that $i\geq t_r/2$ or $j\geq t_r/2$, 
\begin{equation*}
    (E_r - E_{r,1})_{i,j} \approx \epsilon w_r\cdot e^{-2^{r-1}(\lambda/2n)(i+j)}.
\end{equation*}
By the above, we can entrywise upper bound $E_r - E_{r,1}$ by the rank-$1$ matrix \begin{align*}
    \epsilon w_r\cdot  v_n(e^{-2^{r-1}\lambda/2n})v_n(e^{-2^{r-1}(\lambda/2n)})^T.
\end{align*} 
We show that the moment vectors $v_n(e^{-2^{r-1}(\lambda/2n)})$ also exhibit correlation decay as in equation \eqref{eq:informal_correlation_decay}, ultimately letting us show:
\begin{align*}
    \left \|\sum_{r\in [R+1]} (E_r - E_{r,1}) \right \|_2 &\le \left\|\sum_{r\in [R+1]} \epsilon w_r\cdot  v_n(e^{-2^{r-1}(\lambda/2n)})v_n(e^{-2^{r-1}(\lambda/2n)})^T\right\|_2\\ &\approx  \max_{r\in [R+1]}\|\epsilon w_r v_n(e^{-2^{r-1}\lambda/2n})v_n(e^{-2^{r-1}(\lambda/2n)})^T\|_2\\&\approx \epsilon \max_{r\in [R+1]} w_r\cdot n/2^r.
\end{align*} 
The formal proof appears in  Lemma \ref{lem:small_entries_errs} in  \Cref{sec:error_correlation_analysis}. Combined with our bound for $\sum_{r \in [R+1]} E_r$ (\Cref{lem:sparsification_errs}), this bound yields Lemma \ref{lem:x_0_1_errs}. Finally, recall that the error upper bound of Lemma \ref{lem:x_0_1_errs}, combined with the spectral norm lower bound of \Cref{lem:hankel_spectral_lower_bd}, gives the $\epsilon \|H\|_2$ error bound of Theorem \ref{thm:main_thm}, as required. The full argument is presented in Lemma \ref{lem:x_0_1}.

\begin{figure}
\centering
\begin{tikzpicture}[scale=1.2]

\begin{scope}[shift={(0,0)}]
  \draw[pattern=north east lines, pattern color=black!60] (0,0) rectangle (2,2); 
  \node at (-0.6,1.15) { $\epsilon w_{0}\cdot$};

  \draw[<->] (2.1,0) -- (2.1,2);
  \node at (2.3,1) {\small $n$};
\end{scope}

\node at (2.7,1.2) {\large $+$};

\begin{scope}[shift={(3.8,0)}]
  \draw (0,0) rectangle (2,2); 
  \draw[pattern=north east lines, pattern color=black!60] (0,2) rectangle (1,1); 
  \node at (-0.6,1.15) { $\epsilon w_{1}\cdot$};

  \draw[<->] (1.1,1) -- (1.1,2);
  \node at (1.3,1.5) {\small $\frac{n}{2}$};
\end{scope}

\node at (6.3,1.2) {\large $+$};

\begin{scope}[shift={(7.4,0)}]
  \draw (0,0) rectangle (2,2); 
  \draw[pattern=north east lines, pattern color=black!60] (0,2) rectangle (0.5,1.5); 
  \node at (-0.6,1.15) { $\epsilon w_{2}\cdot$};

  \draw[<->] (0.6,1.5) -- (0.6,2);
  \node at (0.8,1.75) {\small $\frac{n}{4}$};
\end{scope}

\node at (9.9,1.2) {\large $+$};
\node at (10.4,1.2) {\large $\cdots$};

\end{tikzpicture}
\caption{Entrywise upper bounds on $E_{0,1}+E_{1,1}+E_{2,1}+\ldots$, the shaded and unshaded region in each term are entrywise $1$ and $0$ respectively. }
\label{fig:error_correlation}
\end{figure}
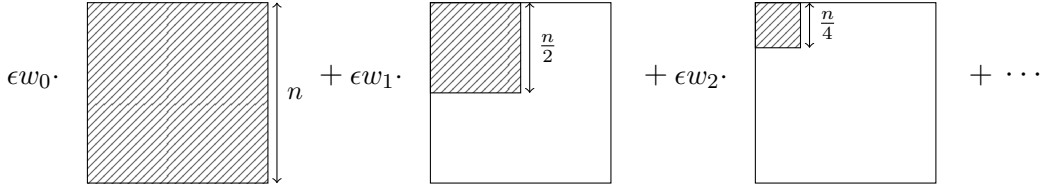

\smallskip

\noindent\textbf{Handling nodes with magnitude $> 1$.} 
Our previous discussion outlines the proof of \Cref{thm:main_thm} in the special case that all nodes $x \in X$ in the Fielder decomposition $H = V_X D_X V_X^T$ have magnitude at most $1$. \Cref{lem:x_0_1} states the theorem in this special case. We next outline how to handle $x \in X$ with $|x| > 1$.

We will use
Lemma \ref{lem:partition_x} to partition $X= X_1\cup X_2$, where $X_1$ contains all $x \in X$ with $|x| \le 1$ and $X_2$ contains all $x \in X$ with $|x| > 1$. To obtain a good Hankel low-rank approximation to $H$, it suffices to obtain good low-rank approximations to $H_{X_1}$ and $H_{X_2}$ and add them together. $H_{X_1}$ is handled by \Cref{lem:x_0_1}. For $H_{X_2}$ -- consider $x\in X_2$ (so $|x|>1$) and the corresponding moment vector $v_n(x) = [1,x,\ldots, x^{n-1}]$. Pulling out a factor of $x^{n-1}$, this moment vector can be written as $v_n(x)=x^{n-1}[1/x^{n-1},1/x^{n-2},\ldots,1] = x^{n-1} \cdot R v_n(1/x)$, where $R \in \R^{n \times n}$ is the matrix that just reverses the order of indices of any vector (see Definition \ref{def:row_reversal}). Moreover, the node $1/x$ satisfies $|1/x|\leq 1$. This insight allows us to reverse rows and columns of $H_{X_2}$, maintain invariance in spectral and Frobenius norms, and apply all previous ideas. In \Cref{sec:x_greater_than_1_reduction}, this reduction is presented in \Cref{lem:x_greater_than_1_reduction} followed by the full proof of \Cref{thm:main_thm}.

\subsection{Approximate Rank Lower Bound.}\label{sec:lower_bound_overview}


We now sketch the main ideas behind the proof of our approximate rank lower bound for PSD Hankel matrices, Theorem \ref{thm:epsilon_rank_lower_bound}. The full proof can be found in  \Cref{sec:lower_bd}. For the spectral norm lower bound, we will construct a PSD Hankel matrix $H$ with at least $\Omega(\log n +\log (1/\epsilon))$ eigenvalues larger than $\epsilon\|H\|_2$. Our lower bound for the Frobenius norm will follow similarly, using the known eigenvalue decay of PSD Hankel matrices. Depending on the relation between $\epsilon$ and $n$ we will have two different constructions. The first case is when $\epsilon \geq 1/n$, where it suffices to prove that there are $\Omega(\log n)$ large eigenvalues. The second is when $\epsilon < 1/n$, where it suffices to prove $\Omega(\log(1/\epsilon))$ large eigenvalues.

\smallskip

\noindent\textbf{Construction when $\epsilon \geq 1/n$.} We will construct PSD Hankel $H = V_X D_X V_X^T$ with $O(\log n)$ nodes $X\subset[0,1]$ and thus rank $O(\log n)$. We will show that $\norm{H}_2 = O(1)$ and that $H$ has $\Omega(\log n)$ eigenvalues that are $\Omega(1)$, proving the desired lower bound. The key idea is that if we choose sufficiently spaced out nodes, the columns of $V_X$ will have their inner products decay rapidly as the nodes get further apart. Roughly speaking, this ensures that each column of $V_X$ contributes to a separate eigenvalue, leading to $\Omega(\log n)$ eigenvalues $\approx 1$.

\smallskip

\noindent\textbf{Construction when $\epsilon \leq 1/n$.} In this case, we need to construct $H$ with $\Omega(\log 1/\epsilon)$ eigenvalues larger than $\epsilon\|H\|_2$. We show that the Hilbert matrix $H_n\in \mathbb{R}^{n\times n}$ defined as $(H_n)_{i,j}=1/(i+j+1)$ suffices. In particular, we can use a known bound, which shows that the minimum eigenvalue of $H_n$ is $\Omega(1/\exp(n))$ \cite{Beckermann:2000}. Via eigenvalue interlacing, we can extend this lower bound to show that the $k^{th}$ eigenvalue of $H$ is $\Omega(1/\exp(k))$. Setting $k = \Theta(\log(1/\epsilon))$ and observing that $\norm{H_n}_2 = O(1)$ then gives our lower bound. 
\subsection{Sublinear Time Robust Low Rank Approximation.}\label{sec:alg_overview}
Finally, we sketch the main ideas behind the sublinear time algorithm of \Cref{thm:const_factor_approx_main}. The full algorithm (Algorithm \ref{alg:noisy_hankel_recovery}), and the full proof are given in  \Cref{sec:alg_main}. Recall that we are given entrywise access to $B=H+E$ for a PSD Hankel matrix $H$ and arbitrary noise matrix $E$, and our goal is to recover a {Hankel} low rank approximation to $H$ with error $\epsilon \|H\|_F + O(\norm{E}_F)$. As discussed previously, using an arbitrarily small perturbation, we can assume $H$ to be positive definite, and thus assume it has a Fielder factorization $H = V_X D_X V_X^T$. For simplicity of exposition, we assume for now that all nodes $x \in X$ have $|x| \le 1$.

Our constructive proof of \Cref{thm:main_thm}, described in \Cref{sec:existence_overview}, implies that there is a fixed set of nodes $T$ (containing the exponentiated Chebyshev nodes for each bucket) of size $|T|=O(\log n\log(1/\epsilon))$ satisfying the following: for any positive definite Hankel $H$, there exists a diagonal matrix $D_T$ and Hankel matrix $H^*$ with only its first $O(\log(1/\epsilon))$ anti-diagonals nonzero, such that $\norm{H - (H^*+V_TD_TV_T^T)}_F \le \epsilon \norm{H}_F$. For a formal statement of this claim covering all cases, see Lemma \ref{lem:existence_of_vandermonde_good_low_rank_approximation}. Since the nodes $T$ are known, the only unknowns are the diagonal matrix $D_T$ and the sparse Hankel matrix $H^*$. Thus, our approach will be to solve the matrix regression problem $\min_{H^*, D_T}\|B-(H^*+V_TD_TV_T)\|_F$ in sublinear time. We will do this via a hybrid ridge leverage score sampling \cite{el2014fast,cohen2017input} and anti-diagonal averaging approach. 

Since the $H^*$ only contributes to the first $O(\log(1/\epsilon))$ anti-diagonals of $B$, our algorithm can simply choose $H^*$ so that the value of our Hankel low-rank approximation on each of these anti-diagonals equals the average value on the corresponding anti-diagonal in $B$. This is the optimal error solution under the Frobenius norm. 

We estimate the diagonal matrix $D_T$ by solving a two-sided sampled ridge regression problem, similar to \cite{EldarLiMusco:2020}. We first derive universal upper bounds on the ridge leverage scores of any Vandermonde matrix $V_X$, independent of $X$ (see Lemma \ref{lem:vandermonde_ridge_lev_score_bd} in  \Cref{sec:vandermonde_ridge_lev_score_bd}). To do so, we apply techniques similar to those described in \Cref{sec:existence_overview} to approximate the column span of $V_X$ using low-degree polynomials. We then apply known leverage score bounds for polynomials \cite{meyer2023near}. With our ridge leverage score bounds in hand, we can subsample and approximately solve the regression problem for $D_T$ in sublinear time. The complete algorithm is described in Algorithm \ref{alg:noisy_hankel_recovery}, and the proof of its guarantees, yielding Theorem \ref{thm:const_factor_approx_main}, is given in \Cref{sec:sublinear_time_recovery}.

\section{Preliminaries and Notation.}\label{sec:prelim}
We first introduce notation and preliminaries used throughout the paper.
\subsection{Notation}\label{sec:notation}
For any non-negative integer $n$, we let $[n]=\{0,\ldots, n-1\}$. For integers $a < b$, we let $[a:b]=\{a,a+1,\ldots, b\}$. For real numbers $a\leq b$, $[a,b]$ denotes the continuous interval from $a$ to $b$. For functions $f,g:\mathbb{R}\rightarrow \mathbb{R}$, we use $f(x)\lsim g(x)$ to indicate that there exists a constant $C>0$ such that $f(x)\leq Cg(x)$ for all $x\in \mathbb{R}$.
For any $x>0$ we use $\log(x) = \log_{2}(x)$ and $\ln(x) = \log_{e}(x)$. 

For any two matrices $A,B$ with matching dimensions, we let $A\circ B$ denote their Hadamard (i.e., entrywise) product. For integers $0\leq a_1<a_2\leq n$ and $0\leq b_1<b_2\leq n$ we let $A[a_1:a_2,b_1:b_2]$ denote the submatrix of $A$ containing rows from $a_1$ to $a_2$ and columns from $b_1$ to $b_2$. We let $\|A\|_2 = \max_{x\in \mathbb{R}^n} \|Ax\|_2/\|x\|_2$ denote the spectral norm and $\|A\|_F=\sqrt{\sum_{i \in [n]} \sum_{j \in [n]} A_{i, j}^2}$ denote the Frobenius norm. 

A symmetric matrix $A\in \mathbb{R}^{n\times n}$ is positive semidefinite (PSD) if for all $x\in \mathbb{R}^n$, $x^TAx \geq 0$, and positive definite if for all $x \in \R^{n}$, $x^T A x > 0$. Let $\lambda_{1}(A)\geq \ldots \geq \lambda_{n}(A)\geq 0$ denote its eigenvalues. Let $\preceq$ denote the Loewner ordering: $A\preceq B$ if and only if $B-A$ is PSD. Let $A= U\Sigma V^T$ denote the compact singular value decomposition of $A$, and when $A$ is symmetric PSD, note that $U \Sigma U^T$ is its eigendecomposition. Let $A^{1/2} = U\Sigma^{1/2}$ denote its matrix square root, where $\Sigma^{1/2}$ is obtained by taking the elementwise square root of $\Sigma$. Let $A_k = U_k \Sigma_k V_k^{T}$ denote the projection of $A$ onto its top $k$ singular vectors. Here, $\Sigma_k\in \mathbb{R}^{k\times k}$ is the diagonal matrix containing the $k$ largest singular values of $A$, and $U_k, V_k \in \mathbb{R}^{n\times k}$ denote the corresponding $k$ left and right singular vectors. Note that $A_k$ is the optimal rank $k$ approximation to $A$ in the spectral and Frobenius norms. That is, $A_{k} = \argmin_{\text{rank $k$ }\wh{A}}\|A-\wh{A}\|_2$ and $A_{k} = \argmin_{\text{rank $k$ }\wh{A}}\|A-\wh{A}\|_F$. 

\subsection{Basic results regarding Hankel matrices.}\label{sec:prelim_hankel}
We now present two basic lemmas regarding Hankel matrices. We begin by first stating a simple consequence of the eigenvalue decay bounds for positive definite Hankel matrices of \cite{beckermann2017singular}.
\begin{lemma}[Implication 1 of Corollary 5.5 of \cite{beckermann2017singular}]\label{lem:townsend_corrolary}
For any positive definite Hankel matrix $H\in \mathbb{R}^{n\times n}$ and $\epsilon>0$, for $k=O(\log n\log(1/\epsilon))$ the best rank $k$ approximation $H_k$ satisfies $\|H-H_k\|_F\leq \epsilon \|H\|_F$.     
\end{lemma}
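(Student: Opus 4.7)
The plan is to derive the Frobenius-norm bound directly from the spectral-norm decay of Beckermann--Townsend (Corollary 5.5 of \cite{beckermann2017singular}). The key identity is $\|H-H_k\|_F^2=\sum_{i>k}\sigma_i(H)^2$, so I just need to control the tail of squared singular values using their per-eigenvalue spectral bound.

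First, I would restate the BT bound in the following per-eigenvalue form: there is a universal constant $c$ such that for every $\delta\in(0,1)$ and $k_\delta:=\lceil c\log n\,\log(1/\delta)\rceil$, we have $\sigma_{k_\delta+1}(H)\leq \delta\cdot\|H\|_2$. Applying this at the dyadic scale $\delta_j:=2^{-j}\delta_0$ for $j=0,1,2,\dots$ gives $\sigma_{k_{\delta_j}+1}(H)\leq 2^{-j}\delta_0\|H\|_2$, while $k_{\delta_{j+1}}-k_{\delta_j}\leq c\log n$.

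Next, I chunk the tail according to this dyadic scale:
\begin{align*}
\sum_{i>k_{\delta_0}}\sigma_i(H)^2
&= \sum_{j\geq 0}\sum_{i=k_{\delta_j}+1}^{k_{\delta_{j+1}}}\sigma_i(H)^2
\;\leq\; \sum_{j\geq 0}\bigl(k_{\delta_{j+1}}-k_{\delta_j}\bigr)\cdot\sigma_{k_{\delta_j}+1}(H)^2 \\
&\leq\; c\log n\cdot\|H\|_2^2\cdot\sum_{j\geq 0}4^{-j}\delta_0^2
\;=\; O(\log n)\cdot\delta_0^2\cdot\|H\|_2^2.
\end{align*}
Choosing $\delta_0=\epsilon/\sqrt{C\log n}$ for a sufficiently large constant $C$, the right-hand side is at most $\epsilon^2\|H\|_2^2\leq \epsilon^2\|H\|_F^2$. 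The required rank is $k_{\delta_0}=O(\log n\,\log(\sqrt{\log n}/\epsilon))=O(\log n\,\log(1/\epsilon))$, matching the claim (the residual $\log\log n$ factor is absorbed into the constant).

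There is no real obstacle: the whole argument is a standard geometric-series bookkeeping, and the only subtlety is picking $\delta_0$ small enough to absorb the $\log n$ factor that naturally appears when summing the chunks. As an even simpler (but slightly looser) alternative, one can avoid the chunking entirely by using $\|H-H_k\|_F\leq\sqrt{n}\,\|H-H_k\|_2$ together with $\|H\|_2\leq\|H\|_F$, and applying the BT spectral bound with accuracy $\epsilon/\sqrt{n}$; this yields the same conclusion at the mild cost of a rank of the form $O(\log n\,\log(\sqrt{n}/\epsilon))$.
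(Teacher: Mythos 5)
Your chunking mechanics are fine, but the final accounting does not establish the stated bound. After summing the dyadic chunks you get $\|H-H_{k_{\delta_0}}\|_F^2 = O(\log n)\,\delta_0^2\,\|H\|_2^2$, which forces $\delta_0 = \epsilon/\sqrt{C\log n}$ and hence a rank of $O\bigl(\log n\,\log(1/\epsilon) + \log n\,\log\log n\bigr)$. The claim that ``the residual $\log\log n$ factor is absorbed into the constant'' is wrong: $\log\log n$ is not a constant, and for, say, $\epsilon = 1/2$ your argument yields rank $O(\log n\,\log\log n)$ rather than the claimed $O(\log n)$. The same objection applies with more force to your ``simpler alternative,'' which gives $O(\log n\,\log(\sqrt{n}/\epsilon)) = O(\log^2 n + \log n\log(1/\epsilon))$. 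The root cause is that you only invoke Corollary 5.5 of Beckermann--Townsend in its weakest instance, $\sigma_{k+1}(H)\le \delta\|H\|_2$, i.e.\ relative to the \emph{top} singular value; comparing the Frobenius tail to $\|H\|_2^2$ then unavoidably costs a factor of the effective spectral-decay length, which is $\Theta(\log n)$.

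The fix --- and what the paper actually does --- is to use the corollary in its full per-index relative form: there is a constant $C>1$ with $\lambda_{j+k}(H)\le C^{-k/(2\log n)}\lambda_j(H)$ for \emph{every} $j$. Then the shifted tail is dominated termwise by the unshifted spectrum,
\begin{equation*}
\|H-H_k\|_F^2=\sum_{j\ge 1}\lambda_{k+j}^2(H)\le C^{-k/\log n}\sum_{j\ge 1}\lambda_j^2(H)=C^{-k/\log n}\|H\|_F^2,
\end{equation*}
which is already relative to $\|H\|_F^2$ with no stray $\log n$ factor to absorb, and $k=O(\log n\log(1/\epsilon))$ suffices exactly as claimed. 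Your dyadic bookkeeping is essentially a proof of the paper's Lemma~\ref{lem:hankel_spectral_to_frob} ($\|H-H_k\|_F=O(\sqrt{\log n}\,\|H-H_k\|_2)$), which is useful elsewhere, but it is the wrong tool for this particular lemma.
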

\begin{proof}
For any $j\in [n]$, Corollary 5.5 of \cite{beckermann2017singular} implies that there is a constant $C$ such that $\lambda_{j+k}(H)\leq C^{-k/2\log n}\lambda_{j}(H)$ for even $k$. This implies:
\begin{align*}
    \|H-H_k\|_F ^2 &= \lambda_{k+1}^2(H)+\sum_{j=2}^{n-k}\lambda_{k+j}^2(H)\\
    &\leq C^{-k/\log n}\lambda_{1}^2(H) + C^{-k/\log n}\cdot \sum_{j=2}^{n-k}\lambda_{j}^2(H)\\
    & \leq C^{-k/\log n}\|H\|_F^2=\epsilon^2\|H\|_F^2, 
\end{align*}
where in the last line we used $k=O(\log n\log(1/\epsilon))$.
\end{proof}
We will also need the following lemma, which again follows easily from the eigenvalue decay bounds of \cite{beckermann2017singular}.
\begin{lemma}[Implication 2 of Corollary 5.5 of \cite{beckermann2017singular}]\label{lem:hankel_spectral_to_frob}
For any positive definite Hankel matrix $H\in \mathbb{R}^{n\times n}$, we have $\|H\|_F =O(\sqrt{\log n}\cdot \|H\|_2)$. Further, for any rank $k \le n$, the best rank-$k$ approximation $H_k$ to $H$ satisfies $\|H-H_k\|_F = O(\sqrt{\log n}\cdot \|H-H_k\|_2)$
\end{lemma}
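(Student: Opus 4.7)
The plan is to use the same eigenvalue decay bound from Corollary 5.5 of \cite{beckermann2017singular} that was used to prove \Cref{lem:townsend_corrolary}, namely that there is a constant $C > 1$ such that for every $j$ and every even $k$,
\begin{equation*}
    \lambda_{j+k}(H) \le C^{-k/(2\log n)}\,\lambda_j(H).
\end{equation*}
Setting $k_0 = 2\log n$, this says that after every block of $k_0$ consecutive positions, the eigenvalue drops by at least a constant factor $1/\sqrt{C}$. So the eigenvalues of $H$ can be partitioned into $O(n/\log n)$ blocks of length $k_0$, with the $m$\textsuperscript{th} block entrywise bounded by $C^{-m/2}\cdot\lambda_1(H)$.

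For the first claim, I would then bound
\begin{align*}
\|H\|_F^2 \;=\; \sum_{j=1}^{n}\lambda_j^2(H)
\;\le\; \sum_{m=0}^{\infty} k_0 \cdot \bigl(C^{-m/2}\lambda_1(H)\bigr)^2
\;=\; k_0\,\|H\|_2^2\sum_{m=0}^{\infty} C^{-m}
\;=\; O(\log n)\,\|H\|_2^2,
\end{align*}
which is the desired bound $\|H\|_F = O(\sqrt{\log n})\,\|H\|_2$. For the second claim I would apply the decay bound starting from index $k+1$ rather than $1$. Namely, for each $m\ge 0$,
\begin{equation*}
    \lambda_{k+1+m k_0}(H)\;\le\;C^{-m/2}\,\lambda_{k+1}(H)\;=\;C^{-m/2}\,\|H-H_k\|_2,
\end{equation*}
so bundling the tail eigenvalues into blocks of length $k_0$ again gives
\begin{equation*}
\|H-H_k\|_F^2 \;=\; \sum_{j>k}\lambda_j^2(H)
\;\le\; k_0\,\|H-H_k\|_2^2\sum_{m=0}^{\infty}C^{-m}
\;=\; O(\log n)\,\|H-H_k\|_2^2.
\end{equation*}

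I do not anticipate a real obstacle: the argument is essentially the same geometric-series-over-dyadic-blocks computation as in the proof of \Cref{lem:townsend_corrolary}, only now summing $\lambda_j^2$ instead of invoking the bound once at $j=k+1$. The only minor care needed is that Corollary 5.5 of \cite{beckermann2017singular} is stated for even $k$; this can be handled by taking the block length $k_0 = 2\lceil\log n\rceil$ (already even) and absorbing a constant if $n/k_0$ is not an integer, which does not affect the $O(\sqrt{\log n})$ factor.
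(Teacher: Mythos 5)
Your proposal is correct and follows essentially the same route as the paper: both invoke the eigenvalue decay bound from Corollary 5.5 of \cite{beckermann2017singular} starting at index $k+1$ and then sum a geometric series, with the $O(\log n)$ factor arising either from your block length $k_0 = 2\log n$ times an $O(1)$ geometric sum or, equivalently, from the paper's direct summation $\sum_{j\ge 1} C^{-j/\log n} = O(\log n)$. The only cosmetic difference is this blocking versus term-by-term organization, and your slightly loose $C^{-m/2}$ (rather than $C^{-m}$) block bound is still a valid upper bound, so nothing breaks.
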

\begin{proof}
For any $j\in [n]$, Corollary 5.5 of \cite{beckermann2017singular} implies that there is a constant $C$ such that $\lambda_{k+2j}(H)\leq C^{-j/\log n}\lambda_{k}(H)$. This implies:
\begin{align*}
    \|H-H_k\|_F ^2 &= \lambda_{k+1}^2(H)+\sum_{j=k+2}^n\lambda_{j}^2(H)\\
    &\leq 2\lambda_{k+1}^2(H) + 2\sum_{j=1}^{(n-k-1)/2}\lambda_{k+1+2j}^2(H)\\
    &\leq 2\lambda_{k+1}^2(H) + 2\lambda_{k+1}^2(H)\sum_{j=1}^{\infty} C^{-j/\log n}\\
    & = O(\log n \cdot \lambda_{k+1}^2(H)) = O(\log n \cdot \|H-H_k\|_2^2). 
\end{align*}
Repeating the above for $\|H\|_F^2$ completes the lemma.
\end{proof}
Next, we present the proof of Lemma \ref{lem:partition_x}, which we repeatedly use throughout the paper.
\begin{replemma}{lem:partition_x}
Consider $H$ defined entrywise as $H_{i,j}=\sum_{x\in X}a_x x^{i+j}$ with $a_x\geq 0$ for all $x\in X$. Then we can write $H=V_{X}D_XV_{X}^T$ with positive diagonal $D_X=diag(\{a_x\}_{x\in X})$. Furthermore for any partition $X=X_0\cup\ldots \cup X_{l-1}$, let $H_{X_m}$ be defined entrywise as $(H_{X_m})_{i,j} = \sum_{x\in X_m}a_x x^{i+j}$ for all $m\in [l]$. Then we have that $H_{X_m} = V_{X_m}D_{X_m}V_{X_m}^T$ and $D_m = diag(\{a_x\}_{x\in X_m})$, $H\succeq H_{X_m}$ for all $m\in [l]$, and $H=\sum_{m\in [l]}H_{X_m}$.
\end{replemma}
\begin{proof}
Since $D_X\succeq 0$, $V_X D_X V_X^T = (V_X \sqrt{D_X})(V_X \sqrt{D_X})^T$ which is an outer product of $V_X \sqrt{D_X}$ with itself, thus it equals $\sum_{x\in X}a_x v_n(x)v_n(x)^T = \sum_{m\in [l]}\sum_{x\in X_m}a_x v_n(x)v_n(x)^T$ where $v_n(x)$ is the moment vector of Definition \ref{def:moment_vector}. We also have that $\sum_{x\in X_m}a_x v_n(x)v_n(x)^T =  (V_{X_m} \sqrt{D_{X_m}})(V_{X_m} \sqrt{D_{X_m}})^T = V_{X_m}D_{X_m}V_{X_m}^T=H_{X_m}$ for all $m\in [l]$. Furthermore, $H\succeq H_{X_m} \succeq 0$ since $H$ and $H_{X_m}$ are PSD for all $m\in [l]$ . And finally $H=\sum_{m\in [l]}H_{X_m}$ follows easily by this discussion.
\end{proof}

\subsection{Linear algebra and polynomial sensitivity tools}\label{sec:prelim_tools}
We next present various preliminary linear algebraic results and tools to handle low-degree polynomials. To prove spectral norm upper bounds on various error matrices encountered in this paper, we will need the following inequality:
\begin{lemma}[Corollary 2.3.2 in \cite{golub2013matrix}]\label{lem:spectral_norm_bound}
For any $A\in \mathbb{R}^{m\times n}$ we have $\|A\|_2\leq \sqrt{\|A\|_1\|A\|_{\infty}}$ where $\|A\|_1,\|A\|_{\infty}$ are the max $\ell_1$ norms of any columns or row of $A$, respectively.
\end{lemma}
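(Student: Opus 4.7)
The plan is to prove the equivalent inequality $\|A\|_2^2 \le \|A\|_1 \|A\|_\infty$ directly by a row-by-row application of Cauchy--Schwarz, rather than going through any interpolation theorem or spectral identity. The key trick is to split each product $A_{ij} x_j$ inside $\|Ax\|_2^2$ asymmetrically so that one factor carries a ``column-norm'' weight and the other carries a ``row-norm'' weight.

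More concretely, I would fix an arbitrary $x\in \mathbb{R}^n$ and expand
\[
\|Ax\|_2^2 \;=\; \sum_{i\in [m]} \Bigl(\sum_{j\in [n]} A_{ij}\,x_j\Bigr)^{\!2}.
\]
For each fixed row $i$, write $A_{ij} x_j = \sqrt{|A_{ij}|}\cdot \bigl(\mathrm{sign}(A_{ij})\sqrt{|A_{ij}|}\,x_j\bigr)$ and apply Cauchy--Schwarz to the inner sum to get
\[
\Bigl|\sum_{j} A_{ij} x_j\Bigr|^2 \;\le\; \Bigl(\sum_{j} |A_{ij}|\Bigr)\Bigl(\sum_{j} |A_{ij}|\,x_j^2\Bigr) \;\le\; \|A\|_\infty \cdot \sum_j |A_{ij}|\,x_j^2,
\]
using the definition of $\|A\|_\infty$ as the maximum absolute row sum to bound the first factor uniformly in $i$.

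Summing over $i$ and swapping the order of summation yields
\[
\|Ax\|_2^2 \;\le\; \|A\|_\infty \sum_{j} x_j^2 \Bigl(\sum_{i} |A_{ij}|\Bigr) \;\le\; \|A\|_\infty\,\|A\|_1\,\|x\|_2^2,
\]
where now I use that each column sum $\sum_i |A_{ij}|$ is at most $\|A\|_1$. Taking the supremum over $x \ne 0$ and then a square root gives the claim. I do not anticipate any real obstacle; the only delicate point is choosing the Cauchy--Schwarz split so that the ``row budget'' $\|A\|_\infty$ comes out of the inner sum and the ``column budget'' $\|A\|_1$ comes out after interchanging summations. The proof is short enough that I would likely just include the three lines above in place of invoking the Golub--Van Loan reference.
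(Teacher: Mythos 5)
Your proof is correct. Note that the paper does not actually prove this lemma at all --- it simply cites Corollary 2.3.2 of Golub and Van Loan, whose proof runs differently: one takes a top eigenvector $z$ of $A^TA$ (so $A^TAz = \|A\|_2^2 z$) and bounds $\|A\|_2^2\,\|z\|_1 = \|A^TAz\|_1 \le \|A^T\|_1\|A\|_1\|z\|_1 = \|A\|_\infty\|A\|_1\|z\|_1$ using submultiplicativity of the induced $\ell_1$ norm. Your argument is the standard Schur-test route: splitting $A_{ij}x_j$ as $\sqrt{|A_{ij}|}\cdot\bigl(\mathrm{sign}(A_{ij})\sqrt{|A_{ij}|}\,x_j\bigr)$, applying Cauchy--Schwarz row by row so the maximum row sum $\|A\|_\infty$ factors out, and then interchanging the order of summation so the maximum column sum $\|A\|_1$ bounds what remains. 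Both are sound; yours has the advantage of being fully self-contained and elementary (no appeal to existence of an eigenvector of $A^TA$ or to norm submultiplicativity), at the cost of a slightly more delicate choice of how to split the terms. Either would serve as a drop-in replacement for the citation.
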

We next define the ridge leverage scores, which have been used in prior work on fast approximate kernel ridge regression \cite{el2014fast}, low-rank approximation \cite{cohen2017input}, and standard leverage score computation \cite{kapralov2017single,li2013iterative}.
\begin{definition}[$\gamma$-Ridge Leverage Scores]\label{def:ridge_lev_score}
For any $A\in \mathbb{R}^{n\times d}$ and ridge parameter $\gamma >0$, let $\tau_{i,\gamma}(A)$ be the $i^{th}$ $\gamma$-ridge leverage score defined as $\displaystyle \tau_{i,\gamma}(A) = \max_{y\in \mathbb{R}^{d}} \frac{(a_i^T y)^2}{(\|Ay\|_2^2 + \gamma\|y\|_2^2)},$ where $a_i$ is the $i^{th}$ row of $A$.
\end{definition}
We use the following ridge leverage score sampling based matrix approximation guarantee:
\begin{lemma}[Theorem 5 of \cite{cohen2017input}]\label{lem:ridge_lev_score_sampling_guarantee}
For any $\gamma>0$ and $\delta \in (0,1/8)$, given ridge leverage score approximations $\wt{\tau}_{i,\gamma}\geq \tau_{i,\gamma}(A)$ for all $i\in [n]$. Let $p_i = \wt{\tau}_{i,\gamma}/(\sum_{i\in [n]}\wt{\tau}_{i,\gamma})$ and $s= C\log(n/\delta)\sum_{i\in [n]}\wt{\tau}_{i,\gamma}$ for some sufficiently large constant $C$. Let $S\in \mathbb{R}^{s\times n}$ be the sampling matrix corresponding to sampling $s$ rows of $A$, where in each sample each row of $A$ is sampled with probability $p_i$ and rescaled by $1/\sqrt{sp_i}$. Then, with probability at least $1-\delta$, 
\begin{equation*}
    \|SAy\|_2^2+\gamma \|y\|_2^2 =\left(1\pm \frac{1}{4}\right)\left(\|Ay\|_2^2 +\gamma \|y\|_2^2\right).
\end{equation*}
\end{lemma}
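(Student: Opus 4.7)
My plan is to reformulate the guarantee as a spectral approximation of $M := A^T A + \gamma I$ and then establish it via a matrix concentration inequality. The requirement that $\|SAy\|_2^2 + \gamma\|y\|_2^2 = (1 \pm 1/4)(\|Ay\|_2^2 + \gamma\|y\|_2^2)$ hold simultaneously for all $y \in \mathbb{R}^d$ is equivalent to the single operator-norm bound
\[
\|M^{-1/2}(A^T S^T S A - A^T A) M^{-1/2}\|_2 \leq 1/4,
\]
since the $\gamma I$ contributions on the two sides cancel exactly (the sampling does not touch the identity part). Proving this bound with probability at least $1-\delta$ becomes the main task.

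I would decompose $A^T S^T S A - A^T A = \sum_{j=1}^{s} (X_j - \mathbb{E}[X_j])$, where $X_j := \frac{1}{s\, p_i} a_i a_i^T$ whenever row $i$ is drawn in trial $j$; by construction $\mathbb{E}[X_j] = \frac{1}{s} A^T A$ and the $1/\sqrt{s p_i}$ rescaling in the definition of $S$ ensures $\sum_j X_j = A^T S^T S A$. Defining $Y_j := M^{-1/2}(X_j - \mathbb{E}[X_j]) M^{-1/2}$ yields independent mean-zero random matrices to which I would apply the matrix Bernstein inequality.

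The key technical step, and the heart of the argument, is to bound the per-term operator norm and the matrix variance by the ridge leverage score totals, which is what sets the sample complexity. I would first invoke the standard identity $\tau_{i,\gamma}(A) = a_i^T M^{-1} a_i$, which follows by directly optimizing over $y$ in Definition~\ref{def:ridge_lev_score} (the optimum is $y^* \propto M^{-1} a_i$). Then for any $j$, conditioning on row $i$ being drawn,
\[
\|M^{-1/2} X_j M^{-1/2}\|_2 = \frac{a_i^T M^{-1} a_i}{s\, p_i} = \frac{\tau_{i,\gamma}(A)}{s\, p_i} \leq \frac{\wt{\tau}_{i,\gamma}}{s\, p_i} = \frac{1}{s}\sum_{\ell \in [n]} \wt{\tau}_{\ell,\gamma},
\]
using the ridge leverage score overestimate and the definition $p_i = \wt{\tau}_{i,\gamma}/\sum_\ell \wt{\tau}_{\ell,\gamma}$. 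A nearly identical computation, noting that $X_j^2 = \frac{1}{s p_i} \cdot (a_i^T M^{-1} a_i) \cdot X_j$, bounds the matrix variance $\sigma^2 := \|\sum_j \mathbb{E}[Y_j^2]\|_2$ by the same quantity $\frac{1}{s}\sum_\ell \wt{\tau}_{\ell,\gamma}$.

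Applying matrix Bernstein to the $Y_j$ then gives a failure probability at most $2n \cdot \exp(-\Omega(s/\sum_\ell \wt{\tau}_{\ell,\gamma}))$ for the event $\|\sum_j Y_j\|_2 > 1/4$ (the ambient dimension can be reduced to $\mathrm{rank}(A) \leq n$ by working in the column span of $A$ together with the $M^{-1/2} I M^{-1/2}$ direction). Choosing $s = C \log(n/\delta) \sum_\ell \wt{\tau}_{\ell,\gamma}$ for a sufficiently large constant $C$ drives the failure probability below $\delta$, completing the proof. The main obstacle I anticipate is purely bookkeeping: carrying the $M^{-1/2}$ preconditioning through consistently so that the ridge leverage scores emerge cleanly from the per-sample operator norm bound, and ensuring the $\gamma I$ term is correctly handled as a deterministic additive (rather than being sampled).
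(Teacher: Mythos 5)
The paper does not prove this lemma; it is imported verbatim as Theorem 5 of \cite{cohen2017input}, so there is no internal proof to compare against. Your argument is the standard (and correct) proof of that external result: reducing the uniform-in-$y$ guarantee to the single operator-norm bound $\|M^{-1/2}(A^TS^TSA - A^TA)M^{-1/2}\|_2\le 1/4$ with $M=A^TA+\gamma I$ is exactly right (the $\gamma I$ terms cancel since $S$ acts only on $A$), the identity $\tau_{i,\gamma}(A)=a_i^TM^{-1}a_i$ is the correct closed form of Definition~\ref{def:ridge_lev_score}, and the per-sample bound $\tau_{i,\gamma}(A)/(s p_i)\le \frac{1}{s}\sum_{\ell}\wt{\tau}_{\ell,\gamma}$ is precisely where the overestimates and the choice of $p_i$ enter. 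Two small points to tighten. First, the identity you invoke for the variance, $X_j^2=\frac{1}{s p_i}(a_i^TM^{-1}a_i)X_j$, is false as written (it holds with $\|a_i\|_2^2$ in place of $a_i^TM^{-1}a_i$); the correct and clearly intended identity is $(M^{-1/2}X_jM^{-1/2})^2=\frac{a_i^TM^{-1}a_i}{s p_i}\,M^{-1/2}X_jM^{-1/2}$, which together with $M^{-1/2}A^TAM^{-1/2}\preceq I$ yields $\bigl\|\sum_j\mathbb{E}[Y_j^2]\bigr\|_2\le\frac{1}{s}\sum_{\ell}\wt{\tau}_{\ell,\gamma}$ as you claim. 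Second, for $A\in\mathbb{R}^{n\times d}$ the matrices $Y_j$ are $d\times d$, so the dimension factor in matrix Bernstein must be controlled by $\mathrm{rank}(A)\le\min(n,d)$ by restricting to the span of $\{M^{-1/2}a_i\}_{i\in[n]}$; this is what justifies $\log(n/\delta)$ rather than $\log(d/\delta)$, and your parenthetical gestures at it correctly. With these repairs the argument is complete.
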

Next, we use the Markov brother's inequality, which can be used to bound the slope of bounded polynomials, and thus their leverage scores.
\begin{lemma}[Markov brother's inequality, Theorem 2.1 in \cite{govil1999markov}\footnote{This version of the statement is taken from Theorem 3.1 in \cite{meyer2023near}}]\label{lem:markov_brother}
Suppose $q(t)$ is a polynomial of degree at most $d$ such that $\max_{t\in [-1,1]}|q(t)|\leq 1$. Then $\max_{t\in [-1,1]}|q'(t)|\leq d^2$, where $q'(t)$ denotes the derivative.
\end{lemma}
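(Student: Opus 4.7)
The plan is to prove Markov's inequality by combining Bernstein's inequality for trigonometric polynomials (which handles the interior of $[-1,1]$) with a boundary-layer argument based on Chebyshev interpolation at the endpoints. The extremal polynomial is the Chebyshev polynomial of the first kind $T_d$, satisfying $T_d(\cos\theta) = \cos(d\theta)$ and $T_d'(\pm 1) = \pm d^2$, so the target constant is sharp and attained there.

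First, make the substitution $t = \cos\theta$ for $\theta \in [0,\pi]$. Then $P(\theta) := q(\cos\theta)$ is a trigonometric polynomial of degree at most $d$ with $\|P\|_\infty \le 1$. Bernstein's inequality for trigonometric polynomials yields $|P'(\theta)| \le d$, and the chain rule gives $|q'(\cos\theta)\sin\theta| \le d$, i.e.
\[
|q'(t)| \le \frac{d}{\sqrt{1-t^2}} \qquad \text{for } t \in (-1,1).
\]
This already delivers $|q'(t)| \le d^2$ whenever $1 - t^2 \ge 1/d^2$, i.e., on the bulk of the interval. The remaining region is the thin boundary layer $|t| \in [\sqrt{1-1/d^2},\,1]$, where the Bernstein bound degenerates.

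To handle the boundary, by symmetry it suffices to bound $|q'(t)|$ near $t=1$. I would use Lagrange interpolation at the Chebyshev extrema $x_k = \cos(k\pi/d)$, $k = 0,\ldots,d$: since $q$ has degree at most $d$, we have $q(t) = \sum_{k=0}^{d} q(x_k)\, L_k(t)$ for the corresponding Lagrange basis $L_k$, so $q'(t) = \sum_k q(x_k)\, L_k'(t)$. The hypothesis $|q(x_k)| \le 1$ reduces the problem to the deterministic estimate $\sum_{k=0}^{d} |L_k'(t)| \le d^2$ on the boundary layer, which follows from the explicit formulas for $L_k$ in terms of $T_d'$ and the product structure of the Chebyshev nodes. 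The sharpness is then seen by plugging in $q = T_d$, which achieves equality at $t=\pm1$.

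The main obstacle is precisely this endpoint analysis: the trigonometric estimate is tight in the interior (attained, up to a constant, by $T_d$ there), so no purely trigonometric argument can yield a uniform $d^2$ bound. One must exploit the fact that algebraic polynomials are strictly more rigid than trigonometric polynomials near $t = \pm 1$ under the parametrization $t = \cos\theta$, and this extra rigidity surfaces only through an endpoint-specific argument like the Chebyshev--Lagrange interpolation above (or, alternatively, via an equioscillation/alternation argument identifying $\pm T_d$ as the unique extremal). Obtaining the \emph{exact} constant $d^2$ rather than a loose $O(d^2)$ is what makes this step delicate.
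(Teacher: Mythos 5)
The paper does not prove this lemma at all: it is imported verbatim as a classical result (cited to Govil--Mohapatra, in the form stated in \cite{meyer2023near}) and used as a black box, so there is no ``paper proof'' to compare against. Your outline is the standard classical proof of the Markov brothers' inequality, and it is structurally sound: Bernstein's inequality gives $|q'(t)|\le d/\sqrt{1-t^2}$, which settles the bulk region $1-t^2\ge 1/d^2$, and the remaining boundary layer is handled by Lagrange interpolation at the Chebyshev extrema $x_k=\cos(k\pi/d)$, reducing to $\sum_{k=0}^d|L_k'(t)|\le d^2$ there. The regimes do overlap correctly, since $\cos(\pi/(2d))\le\sqrt{1-1/d^2}$, so the endpoint lemma's usual domain of validity covers your boundary layer. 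The one place where your write-up leaves genuine work undone is exactly the step you flag: the inequality $\sum_k|L_k'(t)|\le d^2$ on the boundary layer is asserted to ``follow from the explicit formulas,'' but establishing it --- typically by showing that the $L_k'(t)$ alternate in sign as $(-1)^k$ for $|t|\ge\cos(\pi/(2d))$, so that $\sum_k|L_k'(t)|=|\sum_k(-1)^kL_k'(t)|=|T_d'(t)|\le T_d'(1)=d^2$ --- is where essentially all of the difficulty of V.~A.~Markov's argument lives. As a proof plan it is correct and identifies the right obstruction; as a complete proof it would need that sign-alternation lemma carried out. Given that the lemma is a textbook result the paper deliberately cites rather than reproves, there is no need to include any of this in the paper.
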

We will also make heavy use of the Chebyshev nodes, defined below:
\begin{definition}[Chebyshev Nodes]\label{def:chebyshev_nodes}
For any integer $l>0$ let the degree $l$ Chebyshev nodes $t_0,\ldots, t_l$ in $[-1,1]$ be defined as $t_i = \cos(i\pi/l)$. For any finite interval $[a,b]$, the Chebyshev nodes in this interval are obtained by scaling and shifting the range $[-1,1]$ to $[a,b]$.
\end{definition}
We will need the following result from \cite{liu2022robust}, which expresses arbitrary moment vectors as a sum over those evaluated at Chebyshev nodes, to sparsify sums of moment vectors.
\begin{lemma}[Claim F.5 and Corollary F.7, \cite{liu2022robust}]\label{lem:chebyshev_sparsification}
Let $p(x)$ be a degree-$l$ polynomial such that $|p(t_i)| \leq 1$ on all degree-$l$ Chebyshev nodes $t_i \in [a,b]$ (see Definition \ref{def:chebyshev_nodes}).\footnote{\cite{liu2022robust} present this claim for interval $[-1,1]$, but one can see that by shifting and scaling, it applies to any finite interval $[a,b]$.} Then $|p(x)| \leq 2l$ for all $x \in [a,b]$. Moreover, for any $x \in [a,b]$, $v_l(x)$ (\Cref{def:moment_vector}) lies in the convex hull of $\{\pm 2l\cdot v_l(t_i)\}_{i=0}^l$.
\end{lemma}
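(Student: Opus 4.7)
The plan is to reduce to the canonical interval $[-1,1]$ by affine reparameterization, prove the pointwise bound in Part 1 by expanding $p$ in the Chebyshev basis via discrete orthogonality at the Chebyshev--Lobatto nodes, and then deduce the convex-hull statement in Part 2 from Part 1 by a hyperplane-separation argument. The affine map $s \mapsto \tfrac{a+b}{2} + \tfrac{b-a}{2}s$ sends the degree-$l$ Chebyshev nodes on $[-1,1]$ to those on $[a,b]$ and conjugates $p$ into a degree-$l$ polynomial on $[-1,1]$; the moment vectors on either side are related by an invertible linear map, so the convex-hull conclusion transfers back. I would therefore assume $[a,b] = [-1,1]$ throughout.

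For Part 1, I would expand $p$ in the Chebyshev basis as $p(x) = \sum_{k=0}^{l} a_k T_k(x)$. Discrete orthogonality of the $T_k$ at the Chebyshev--Lobatto nodes $t_i = \cos(i\pi/l)$ furnishes an explicit inversion: $a_k = \tfrac{2}{l}\sum_{i=0}^{l} w_i\, p(t_i) T_k(t_i)$ for $0 < k < l$ and $a_k = \tfrac{1}{l}\sum_{i=0}^{l} w_i\, p(t_i) T_k(t_i)$ for $k \in \{0, l\}$, where the boundary weights are $w_0 = w_l = \tfrac{1}{2}$ and $w_i = 1$ otherwise (so that $\sum_i w_i = l$). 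Since $|p(t_i)| \le 1$ and $|T_k(t_i)| \le 1$, this yields $|a_k| \le 2$ for $0 < k < l$ and $|a_0|, |a_l| \le 1$. Using $|T_k(x)| \le 1$ on $[-1,1]$ then gives $|p(x)| \le \sum_{k=0}^{l} |a_k| \le 1 + 2(l-1) + 1 = 2l$.

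For Part 2, I would apply the separating-hyperplane theorem. Let $K = \mathrm{conv}\{\pm 2l \cdot v_l(t_i) : i = 0, \ldots, l\} \subseteq \mathbb{R}^l$, which is closed, convex, and centrally symmetric. If $v_l(x^*) \notin K$ for some $x^* \in [-1,1]$, there exists $c \in \mathbb{R}^l$ with $\langle c, v_l(x^*)\rangle > \max_{y \in K}\langle c, y\rangle = 2l \cdot \max_i |\langle c, v_l(t_i)\rangle|$. Interpreting $c$ as the coefficient vector of a polynomial $\tilde p(y) = \sum_{k=0}^{l-1} c_k y^k$ (of degree at most $l$) and rescaling so that $\max_i |\tilde p(t_i)| = 1$ would produce a polynomial satisfying the hypothesis of Part 1 but with $|\tilde p(x^*)| > 2l$, a contradiction. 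Hence $v_l(x) \in K$ for every $x \in [-1,1]$, which is the claim.

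The main obstacle I anticipate is pinning down the discrete-orthogonality identity with the correct normalization to land exactly on the constant $2l$ rather than $2(l+1)$ or $Cl\log l$; the half-weighting of the two boundary nodes $t_0, t_l$ in $\sum_i w_i = l$ is precisely what saves the factor-of-two slack. Beyond this bookkeeping, Part 2 is a mechanical application of convex duality to a centrally symmetric convex body and should present no additional difficulty.
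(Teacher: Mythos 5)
The paper does not prove this lemma at all --- it is imported verbatim from \cite{liu2022robust} (Claim F.5 and Corollary F.7), so there is no internal proof to compare against. Your argument is a correct, self-contained derivation. Part 1 is sound: the discrete orthogonality relation $\sum_{i=0}^{l}{}'' T_j(t_i)T_k(t_i)=\gamma_k\delta_{jk}$ at the Chebyshev--Lobatto points $t_i=\cos(i\pi/l)$, with $\gamma_0=\gamma_l=l$ and $\gamma_k=l/2$ for $0<k<l$, does give $|a_k|\le 2$ for interior indices and $|a_0|,|a_l|\le 1$, hence $\sum_k|a_k|\le 1+2(l-1)+1=2l$; your half-weighting of the endpoints is exactly what makes the normalizing sum equal $l$ rather than $l+1$. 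Part 2 is the standard dual argument and is also correct; note that $v_l(\cdot)\in\R^{l}$ so the test polynomial $\tilde p(y)=\sum_{k=0}^{l-1}c_ky^k$ has degree at most $l-1\le l$ and Part 1 applies. The only pinhole is the degenerate case $\max_i|\tilde p(t_i)|=0$, where your rescaling is undefined; but then $\tilde p$ vanishes at $l+1$ points while having degree at most $l-1$, so $\tilde p\equiv 0$, contradicting the strict separation $\langle c, v_l(x^*)\rangle>0$. With that one-line addition the proof is complete, and the reduction from $[a,b]$ to $[-1,1]$ via an invertible affine change of variable (which maps moment vectors by an invertible linear map and therefore preserves convex hulls) is handled correctly.
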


\section{Existence of a Good Hankel Low Rank Approximation.}\label{sec:existence_main}
In this section, we prove the existence result of Theorem \ref{thm:main_thm}, which we restate below.
\begin{reptheorem}{thm:main_thm}[Existence of Hankel low rank approximation]
    For any PSD Hankel matrix $H\in \mathbb{R}^{n\times n}$ and any $\epsilon>0$, there exists a rank $O(\log n \log(1/\epsilon))$ Hankel matrix $\wh{H}$ such that $$\|H-\wh{H}\|_2\leq \epsilon \|H\|_2\text{ and }\|H-\wh{H}\|_F\leq \epsilon \|H\|_F.$$
\end{reptheorem}
We first prove the theorem in the special case when, writing $H$ in its Fielder decomposition $H = V_X D_X V_X^T$, all nodes $x \in X$ satisfy $|x|\leq 1$. This restricted case is stated below:
\begin{lemma}\label{lem:x_0_1}
For PSD Hankel $H= V_X D_X V_X^T$, if all $x\in X$ satisfy $|x|\leq 1$, then there exists a Hankel matrix $\widehat{H}$ of rank $O(\log(n)\log (1/\epsilon))$ such that $\|H-\widehat{H}\|_2\leq \epsilon \|H\|_2$ and $\|H-\widehat{H}\|_F\leq \epsilon \|H\|_F$.
\end{lemma}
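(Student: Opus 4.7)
The plan is to follow the constructive outline from \Cref{sec:existence_overview}. Start from the Fiedler factorization $H = V_X D_X V_X^T$ and, using the assumption $|x| \le 1$ for all $x \in X$, set $y_x = \ln(1/|x|) \ge 0$ and $\lambda = 2\ln(1/\epsilon)$. Partition $X$ into $R+2 = \Theta(\log n)$ dyadic buckets $\{B_r\}_{r \in [R+2]}$ based on the rate of decay $y_x$, exactly as in \eqref{eq:informal_bucketing}. Writing $H_{B_r}$ for the Hankel matrix formed from the nodes in $B_r$ (see \eqref{eq:informal_entrywise_bucket_def}), \Cref{lem:partition_x} gives $H = \sum_{r \in [R+2]} H_{B_r}$ with $H \succeq H_{B_r}$ for every $r$. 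This is the decomposition that all subsequent bounds will be built on.

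To construct $\wh H$, I would approximate each bucket separately. For $r \in [R+1]$, apply \Cref{lem:bucket_sparsification} to obtain a Hankel matrix $H_{T_r}$ built on $O(\log(1/\epsilon))$ exponentiated Chebyshev nodes in the interval of decay rates for bucket $B_r$; by \Cref{clm:ij_entry}, $H_{T_r}$ has rank $O(\log(1/\epsilon))$, and \Cref{lem:bucket_sparsification} guarantees the entrywise bound $|(H_{B_r} - H_{T_r})_{i,j}| \le \epsilon w_r$. For the corner bucket $B_{R+1}$, the nodes decay so fast that $|x|^t \le \epsilon$ already for $t$ beyond some threshold $t^\star = O(\log(1/\epsilon))$; define $H_{T_{R+1}}$ by truncating $H_{B_{R+1}}$ to its first $t^\star$ and last $t^\star$ anti-diagonals, which is Hankel, has rank $O(\log(1/\epsilon))$, and has small entrywise error everywhere. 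Summing yields $\wh H = \sum_{r \in [R+2]} H_{T_r}$, a Hankel matrix of rank $O(\log n \log(1/\epsilon))$.

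The real work is turning the entrywise error bounds into matrix-norm bounds tight enough to match $\epsilon \|H\|_2$ and $\epsilon \|H\|_F$. A triangle inequality over buckets loses a factor polynomial in $n$, so instead I would use the correlation-decay strategy sketched in the overview. Split each $E_r := H_{B_r} - H_{T_r}$ as $E_{r,1} + (E_r - E_{r,1})$, where $E_{r,1}$ is the restriction to the top-left $t_r/2 \times t_r/2$ block. Entrywise, $E_{r,1}$ is dominated by the rank-one matrix $\epsilon w_r \cdot v_r v_r^T$, where $v_r$ is the indicator of $[t_r/2]$, while $E_r - E_{r,1}$ is dominated by $\epsilon w_r \cdot v_n(e^{-2^{r-1}\lambda/(2n)})\, v_n(e^{-2^{r-1}\lambda/(2n)})^T$ once the residual exponential decay past the cutoff is accounted for. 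For both families of rank-one envelopes, inner products between indices $r \neq r'$ decay exponentially in $|r - r'|$, so controlling the operator norm of the sum via the Gram matrix of the normalized vectors yields $\bigl\|\sum_r E_r\bigr\|_2 \lsim \epsilon \cdot \max_r w_r \cdot n/2^r$, and by \Cref{clm:entrywise_upper_bound} the same rank-one envelopes give the analogous Frobenius bound.

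Finally I would match this upper bound against the lower bound on $\|H\|_2$ from \Cref{lem:hankel_spectral_lower_bd}, which gives $\|H\|_2 = \Omega(\max_r w_r \cdot n/(\lambda 2^r))$; absorbing the $\lambda = O(\log(1/\epsilon))$ factor into the constant in front of the rank bound yields $\|H - \wh H\|_2 \le \epsilon \|H\|_2$, and the Frobenius bound follows by the analogous Frobenius-norm versions of the lower bound together with \Cref{clm:entrywise_upper_bound}. The main obstacle I expect is the correlation-decay step: writing clean Gram-matrix arguments for the two structurally different rank-one families (indicator vectors versus moment vectors at geometrically varying decay rates) and quantifying the geometric decay of their normalized inner products carefully enough that the $\log n$ bucket loss does not reappear. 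This is precisely what buys the exponentially better dependence on $\log n$ in the final rank compared with a naive triangle inequality, and it is the delicate technical heart of the argument.
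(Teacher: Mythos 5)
Your proposal is correct and follows essentially the same route as the paper: dyadic bucketing of the decay rates, Chebyshev-node sparsification per bucket (\Cref{lem:bucket_sparsification}), truncation of the fast-decaying corner bucket, a correlation-decay Gram-matrix argument to bound the summed errors beyond the triangle inequality (\Cref{lem:x_0_1_errs}), and matching against the norm lower bounds of \Cref{lem:hankel_spectral_lower_bd} with a final rescaling of $\epsilon$ to absorb the $\lambda$ factor. The only cosmetic deviation is that the paper truncates $H_{B_{R+1}}$ to its first $O(\log(1/\epsilon))$ anti-diagonals only, whereas you keep the last ones too, which is harmless.
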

To prove \Cref{lem:x_0_1}, we will follow the technical overview of \Cref{sec:existence_overview}. We first prove entrywise error bounds for approximating $H$ via our node bucketing approach (Definition \ref{def:bucketing}) and the bucket sparsification result (Lemma \ref{lem:bucket_sparsification}). This sparsification result is proven in \Cref{sec:bucket_sparsification}. We next prove spectral and Frobenius norm lower bounds for $H$ (Lemma \ref{lem:hankel_spectral_lower_bd}) in  \Cref{sec:hankel_spectral_lower_bd}. In \Cref{sec:error_correlation_analysis}, we then give tight bounds on the spectral and Frobenius norm errors based on the entrywise errors introduced due to sparsification (Lemma \ref{lem:x_0_1_errs}). We complete the proof of Lemma \ref{lem:x_0_1} by relating these error bounds to the corresponding spectral and Frobenius norm lower bounds for $H$ from Lemma \ref{lem:hankel_spectral_lower_bd}.

In \Cref{sec:x_greater_than_1_reduction} we show a reduction from the case when $|x|>1$ for all $x\in X$ to the case when $|x|\leq 1$ for all $x\in X$, which combined with \Cref{lem:x_0_1} proves:
\begin{lemma}\label{lem:x_1_above}
For PSD Hankel $H= V_X D_X V_X^T$, if all $x\in X$ satisfy $|x|> 1$, then there exists a Hankel matrix $\widehat{H}$ of rank $O(\log(n)\log (1/\epsilon))$ such that $\|H-\widehat{H}\|_2\leq \epsilon \|H\|_2$ and $\|H-\widehat{H}\|_F\leq \epsilon \|H\|_F$.
\end{lemma}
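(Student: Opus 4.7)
The plan is to reduce the $|x|>1$ case directly to the $|x|\le 1$ case already established in \Cref{lem:x_0_1} by conjugating $H$ by the row-reversal matrix $R$ from \Cref{def:row_reversal}. The technical overview already identifies the key identity: for any $x\in\R$ with $x\ne 0$, the moment vector satisfies
\begin{equation*}
v_n(x) \;=\; x^{n-1}\cdot R\, v_n(1/x),
\end{equation*}
so that if $|x|>1$, then $|1/x|<1$. I would first use this identity to argue that $R H R$ is itself a PSD Hankel matrix whose Fiedler nodes lie in $(-1,1)$, so that \Cref{lem:x_0_1} applies to it.

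Concretely, since $H$ is Hankel we can write $H_{i,j}=h(i+j)$ for some sequence $h$, and then
\begin{equation*}
(RHR)_{i,j} \;=\; H_{n-1-i,\,n-1-j} \;=\; h(2n-2-i-j),
\end{equation*}
which depends only on $i+j$; thus $R H R$ is Hankel. Using $H=V_X D_X V_X^T$ and the moment-vector identity, I would check that
\begin{equation*}
R H R \;=\; V_{1/X}\, D'\, V_{1/X}^T,\qquad D' \;=\; \operatorname{diag}\bigl(a_x\, x^{2(n-1)}\bigr)_{x\in X},
\end{equation*}
where $1/X=\{1/x:x\in X\}\subset(-1,1)$ and $D'\succ 0$. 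Hence $RHR$ is a positive definite Hankel matrix satisfying the hypothesis of \Cref{lem:x_0_1}, which furnishes a rank $O(\log n\log(1/\eps))$ Hankel matrix $\widehat{G}$ with $\|RHR-\widehat{G}\|_2\le \eps\|RHR\|_2$ and $\|RHR-\widehat{G}\|_F\le \eps\|RHR\|_F$.

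To finish, I would set $\widehat{H}=R\widehat{G}R$ and verify the three needed properties. (i) $\widehat{H}$ is Hankel, by the same index-reversal computation as above applied to $\widehat{G}$. (ii) $\mathrm{rank}(\widehat{H})=\mathrm{rank}(\widehat{G})=O(\log n\log(1/\eps))$, since $R$ is a permutation matrix and hence invertible. (iii) Since $R$ is an orthogonal involution ($R^T=R$, $R^2=I$), conjugation by $R$ preserves both the spectral and Frobenius norms, so
\begin{equation*}
\|H-\widehat{H}\|_2 \;=\; \|R(RHR-\widehat{G})R\|_2 \;=\; \|RHR-\widehat{G}\|_2 \;\le\; \eps\|RHR\|_2 \;=\; \eps\|H\|_2,
\end{equation*}
and similarly for the Frobenius norm. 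Honestly, there is no deep obstacle here: the entire content of the lemma is the observation $v_n(x)=x^{n-1}Rv_n(1/x)$ together with orthogonality of $R$. The only small care required is verifying that $RHR$ is PSD with the claimed Fiedler form (so that \Cref{lem:x_0_1} genuinely applies) and that $R$-conjugation preserves rank, Hankel structure, and both matrix norms simultaneously; all of these are immediate once set up.
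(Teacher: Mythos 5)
Your proposal is correct and matches the paper's own argument essentially verbatim: the paper proves the identity $H = R\,V_{X^{-1}}D_{X^{-1}}V_{X^{-1}}^T\,R^T$ with $D_{X^{-1}}=\operatorname{diag}(a_x x^{2n-2})$ as Lemma \ref{lem:x_greater_than_1_reduction}, applies \Cref{lem:x_0_1} to the reversed matrix, and conjugates back by $R$, using exactly the rank-, Hankel-, and norm-preservation properties of $R$ that you verify.
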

Combining Lemmas \ref{lem:x_0_1} and \ref{lem:x_1_above}, we can then prove Theorem \ref{thm:main_thm}. This proof is presented in  \Cref{sec:x_greater_than_1_reduction}. 

\subsection{Bucket Sparsification}\label{sec:bucket_sparsification}
We now present our main idea of node bucketing, followed by the bucket sparsification result of Lemma \ref{lem:bucket_sparsification}.
\begin{definition}[Log-scale bucketing]\label{def:bucketing}
Let $\lambda = 2 \log(1/\epsilon)$ and $R = \log(\ln(1/0.9)n/\lambda)$. For $X\subset [-1,1]$ with corresponding weights $\{a_x\}_{x\in X}$, define ``buckets'' $B_0,\ldots, B_{R+1}$ which form a partition of $X$ as follows,
\begin{align*}
    B_{0} &= \left\{x\in X: \log(1/|x|)\in \left[0,\frac{\lambda}{n}\right]\right\},\\
        B_r &= \left\{x\in X:  \log(1/|x|)\in \left[\frac{2^{r-1}\lambda}{n},\frac{2^{r}\lambda}{n}\right]\right\} \text{ for } r \in [R+1], r\geq 1, \text{ and }\\
     B_{R+1} &= \{x\in X:  \log(1/|x|)\in [\ln(1/0.9),\infty)\}.
\end{align*}
Let $w_r = \sum_{x\in B_r}a_x$ denote the ``weight'' of $B_r$ and $t_r=n/2^{r-1}$ denote the ``cutoff''  for all $r\in [R+2]$. 
\end{definition}
We now present the formal statement of our bucket sparsification lemma and its proof.

\begin{replemma}{lem:bucket_sparsification}[Bucket sparsification]
Given $\epsilon>0$, let $l=O(\log(1/\epsilon))$. Fix $r\in [R+1]$ and consider bucket $B_r$ as per Definition \ref{def:bucketing}. Let $T_r=\{\pm e^{-t_{0,r}},\ldots,\pm e^{-t_{l,r}}\}$ where $\{t_{i,r}\}_{i\in [l+1]}$ is the set of $l+1$ scaled and shifted Chebyshev nodes (see Definition \ref{def:chebyshev_nodes}) in $[2^{r-1}\lambda/n,2^r\lambda/n]$ when $r\geq 1$ and in $[0,\lambda/n]$ when $r=0$. Then there exist weights $a^+_{i,r},a^-_{i,r}$ corresponding to $e^{-t_{0,r}},-e^{-t_{0,r}}$ respectively for each $i\in [l+1]$ such that,
\begin{equation*}
    \sum_{i\in [l+1]}|a^+_{i,r}|+|a^-_{i,r}| = O(l\cdot w_r). 
\end{equation*}
Moreover, for $H_{B_r}(t) = \sum_{x\in B_r}a_x x^t$ and $H_{T_r}(t)=\sum_{i\in [l+1]}a^+_{i,r} e^{-t_{i,r} t}+(-1)^t\sum_{i\in [l+1]}a^-_{i,r} e^{-t_{i,r} t}$,
\begin{equation*}
    |H_{B_r}(t) - H_{T_r}(t)|\leq \epsilon w_r\quad \forall t\in [2n-1].
\end{equation*}
\end{replemma}
\begin{proof}
Let $V_{B_r}$ be the Vandermonde matrix corresponding to nodes in bucket $B_r$. Let $D_{B_r} = diag(\{a_x\}_{x\in B_r})$ and $H_{B_r} = V_{B_r}D_{B_r}V_{B_r}^T$. For any $r\geq 1$, since each $x\in B_r$ satisfies $|x|\leq e^{-2^{r-1}\lambda/n}$, we have that $|x|^{t_r}\leq e^{-2^{r-1}\lambda t_r/n} = e^{-\lambda} = \epsilon$. That is, for all $t$ beyond the cutoff $t_r$ and $x \in B_r$, $|x|^t\le \epsilon$. Let $B_r^+ = \{x\in B_r: x\geq 0\}$ and $B_r^- = B_r \setminus B_r^+$ be the sets of positive and negative nodes in $B_r$ respectively. Let $y_x = \ln{1/|x|}$ for $x\in B_r$. Let $H_{B^+_r}(t) = \sum_{x\in B_r^+}a_x x^t = \sum_{x\in B_r^+}a_x e^{-y_x t}$ and $H_{B^-_r}(t) = \sum_{x\in B_r^-}a_x(-1)^t |x|^t = \sum_{x\in B_r^-}a_x (-1)^t e^{-y_x t}$. Finally, let  $H_{B_r}(t)= H_{B^+_r}(t)+H_{B^-_r}(t)$. By expanding the decomposition $H_{B_r} = V_{B_r}D_{B_r}V_{B_r}^T$ it can be seen that $(H_{B_r})_{i,j} = H_{B_r}(i+j)$ for all $i,j\in [n]$. Now using Taylor expansion for $e^{-y_x t}$ for all $x\in B^+_r$ we have the following for $l=50\lambda = 100\log(1/\epsilon)$:
\begin{align*}
    \sum_{x\in B_r^+}a_x e^{-y_x t} &=\sum_{x\in B_r^+} a_x \sum_{m=0}^{\infty} (-y_x t)^m/m!\\
    & = \left(\sum_{x\in B_r^+} a_x \sum_{m=0}^{l} y_x^m (-t)^m/m!\right) +\left(\sum_{x\in B_r^+} a_x \sum_{m=l}^{\infty} y_x^m (-t)^m/m!\right). \numberthis \label{eq:taylor_exp_plus}
\end{align*}
Similarly, we have for $B_r^-$:
\begin{align*}
    \sum_{x\in B_r^-}a_x (-1)^te^{-y_x t} &=  (-1)^t \sum_{x\in B_r^-} a_x \sum_{m=0}^{\infty} (-y_x t)^m/m!\\
    & = (-1)^t\left(\sum_{x\in B_r^-} a_x \sum_{m=0}^{l} y_x^m (-t)^m/m!\right) +(-1)^t\left(\sum_{x\in B_r^-} a_x \sum_{m=l}^{\infty} y_x^m (-t)^m/m!\right). \numberthis \label{eq:taylor_exp_minus}
\end{align*}

We now show that the sum of the second terms in equations \eqref{eq:taylor_exp_plus} and \eqref{eq:taylor_exp_minus} is at most $\epsilon\sum_{x\in B_r}a_x$ for $t\leq t_r$. When $r\geq 1$, since $y_x\in [2^{r-1}\lambda/n,2^r\lambda/n]$ for all $x\in B_r$ and $t\leq t_r$, we have that $|-y_xt|= 2^r\lambda/n \cdot 2n/2^r = 2\lambda=4\log(1/\epsilon)$. Moreover, when $r=0$ we have that $y_x\in [0,\lambda/n]$. Thus $|y_xt|\leq \lambda = 2\log(1/\epsilon)$ for all $t\in [t_r-1]=[2n-1]$ as $t_r = 2n$ for $r=0$. Now using Stirling's approximation for $m!$ when $m\geq l$, we have that $\log(m!) = m\log(m) - m\log(e)+\log(\sqrt{2\pi \log(m)})+O(1/m)\geq m\log(m/e)$. This implies that $m! \geq (m/e)^m$. Thus, we have the following bound for the second term in \eqref{eq:taylor_exp_plus}:
\begin{align*}
    \left|\sum_{m=l}^{\infty}\sum_{x\in B_r^+} a_x(-y_x t)^m/m!\right| &\leq \sum_{m=l}^{\infty}\sum_{x\in B_r^+} a_x |y_x t|^m/m!\\
    &\leq \sum_{m=l}^{\infty}\sum_{x\in B_r^+} a_x (4e\log(1/\epsilon)/l)^m\\
    &= \sum_{m=l}^{\infty}\sum_{x\in B_r^+} a_x (4e\log(1/\epsilon)/100\log(1/\epsilon))^m\\
    &\leq \sum_{x\in B_r^+} a_x \sum_{m=l}^{\infty} 0.1^m \leq \epsilon\sum_{x\in B_r^+}a_x. \numberthis \label{eq:taylor_approx_error}
\end{align*}
Similarly, $\left|(-1)^t \sum_{m=l}^{\infty}\sum_{x\in B_r^-} a_x(-y_x t)^m/m!\right|\leq \epsilon \sum_{x\in B_r^-}a_x$. Taking the sum and applying the triangle inequality, we obtain, for all $t\in [t_r-1]$,
\begin{align*}
    |\sum_{x\in B_r}a_x x^t - \left(\sum_{x\in B_r^+} a_x \sum_{m=0}^{l} y_x^m (-t)^m/m!\right)&-(-1)^t\left(\sum_{x\in B_r^-} a_x \sum_{m=0}^{l} y_x^m (-t)^m/m!\right)| \\ &\leq \epsilon \sum_{x\in B_r} a_x = \epsilon w_r. \numberthis \label{eq:error_bound_taylor}
\end{align*}

Next we apply Lemma \ref{lem:chebyshev_sparsification} to obtain that for any $r\in [R+1]$ and $r\geq 1$ and interval $[2^{r-1}\lambda/n,2^r \lambda /n]$ if we let $t_{0,r},\ldots,t_{l,r}$ be the $l+1$ scaled and shifted order $l$ Chebyshev nodes (see Definition \ref{def:chebyshev_nodes}) in $[2^{r-1}\lambda/n,2^r \lambda /n]$ then for any $y\in [2^{r-1}\lambda/n,2^r \lambda /n]$ the moment vector (see Definition \ref{def:moment_vector}) $v_l(y)$  lies in the convex hull of $\{\pm 2l v_l(t_{0,r}),\ldots, \pm 2l v_l(t_{l,r})\}$. For $r=0$, the $l+1$ Chebyshev nodes are considered in the interval $[0,\lambda/n]$. Instantiating this for $l=O(\lambda) = O(\log(1/\epsilon))$, we have that there exists weights $a_{x,i}$ for all $i\in [l+1]$ such that for all $x\in B_r$ and $r\in [R+1]$ we have
\begin{equation*}
    v_l(y_x) = \sum_{i\in [l+1]} 2l a_{x,i} v_l(t_{i,r}),
\end{equation*}
and $\sum_{i\in [l+1]}|a_{x,i}| \leq 1$. This implies that, for all $x\in B_r$ and $r\in [R+1]$, and for all $t\in [2n-1]$,
\begin{align*}
v_l(y_x)^T v(t) &= (2l\sum_{i\in [l+1]}a_{x,i} v_l(t_{i,r}))^T v(t),\\
\implies \sum_{m=0}^l y_x^m(-t)^m/m! &= 2l \sum_{i\in [l+1]} a_{x,i} \sum_{m=0}^l  t_{i,r}^m (-t)^m/m!.\numberthis \label{eq:equality_first_terms_taylor}    
\end{align*}
Now if we consider $e^{-t_{i,r} t}$ for some $i\in [l+1]$, then since $t_{i,r}\in [2^{r-1}\lambda/n,2^r\lambda /n]$ we have that $|t_{i,r} t|\leq 2\lambda=4\log(1/\epsilon)$ for all $t\in [t_r-1]$. Thus, similar to the error bound in  \eqref{eq:error_bound_taylor}, we obtain the following by truncating the Taylor approximation of $e^{-t_{i,r} t}$ to $l=100\log(1/\epsilon)$ terms for all $i\in [l+1]$ and $t\in [t_r-1]$ for all $r\in [R+1]$:
\begin{equation*}
\left |e^{-t_{i,r} t}-\sum_{m=0}^l t_{i,r}^m (-t)^m/m!\right|=\left|\sum_{m=l}^{\infty} t_{i,r}^m (-t)^m/m!\right|\leq \epsilon.
\end{equation*}
Combining the above with the equality of \eqref{eq:equality_first_terms_taylor}, we obtain that, for all $t\in [t_r-1]$ and $r\in [R+1]$,
\begin{align*}
| \sum_{x\in B_r}a_x x^t - 2l\sum_{x\in B^+_r}a_x \sum_{i\in[l+1]}a_{x,i}e^{-t_{i,r} t} &- 2l(-1)^t\sum_{x\in B^-_r}a_x\sum_{i\in [l+1]}a_{x,i} e^{-t_{i,r} t}| \\
    &\leq \left|\sum_{m=l}^{\infty}\sum_{x\in B_r^+} a_x(-y_x t)^m/m!+(-1)^t \sum_{m=l}^{\infty}\sum_{x\in B_r^-} a_x(-y_x t)^m/m!\right|\\
    &+ \left| 2l\sum_{x\in B^+_r}a_x \sum_{i\in [l+1]}a_{x,i}\sum_{m=l}^{\infty} t_{i,r}^m (-t)^m/m! \right| \\&+ \left| 2l(-1)^t\sum_{x\in B^-_r}a_x \sum_{i\in [l+1]}a_{x,i}\sum_{m=l}^{\infty} t_{i,r}^m (-t)^m/m!\right|\\
    &\leq \epsilon \sum_{x\in B_r}a_x + 4l\epsilon \sum_{x\in B_r}a_x\sum_{i\in [l+1]}|a_{x,i}|\\
    &\leq 5l \epsilon \sum_{x\in B_r}a_x.
\end{align*}
Let $a^+_i = 2l\sum_{x\in B^+_{r}}a_x a_{x,i}$ and $a^{-}_i = 2l\sum_{x\in B^-_r}a_x a_{x,i}$. Thus $\sum_{i\in [l+1]}|a^+_i|+|a^-_i|\leq 2l\cdot w_r$. Moreover, with this notation, we have, for all $t\in [t_r-1]$ and $r\in [R+1]$,
\begin{equation*}
    \left | \sum_{x\in B_r}a_x x^t - \left(\sum_{i\in [l+1]} a^+_i e^{-t_{i,r} t} + (-1)^t\sum_{i\in [l+1]} a^+_i e^{-t_{i,r} t}\right)\right |\leq 5l \epsilon \sum_{x\in B_r} a_x.
\end{equation*}
Moreover, for all $r\geq 1$ since each $t_{i,r}\in [2^{r-1}\lambda/n,2^r\lambda /n]$, $|e^{-t_{i,r} t_r}|\leq \epsilon$ for all $i\in [l+1]$, for all $t\in [t_r,2n-1]$,
\begin{align*}
   \left | \sum_{x\in B_r}a_x x^t - \left(\sum_{i\in [l+1]} a^+_i e^{-t_{i,r} t} + (-1)^t\sum_{i\in [l+1]} a^-_i e^{-t_{i,r} t}\right)\right |&\leq \sum_{x\in B_r}a_x |x^t| + \sum_{i\in [l+1]} |a^+_i| |e^{-t_{i,r} t}| \\&+\sum_{i\in [l+1]} |a^-_i| |e^{-t_{i,r} t}|\\
   &\leq \epsilon\sum_{x\in B_r}a_x + 2l\epsilon\sum_{x\in B^+_r}a_x + 2l\epsilon\sum_{x\in B^-_r}a_x \\&\leq 5l\epsilon \sum_{x\in B_r}a_x.
\end{align*}
The above case does not happen for $r=0$ as $t_r=2n$. Thus, overall we have, for all $r\in [R+1]$ and $t\in [2n-1]$,
\begin{equation*}
     \left | \sum_{x\in B_r}a_x x^t - \left(\sum_{i\in [l+1]} a^+_i e^{-t_{i,r} t} + (-1)^t\sum_{i\in [l+1]} a^-_i e^{-t_{i,r} t}\right)\right |\leq 5l\epsilon \sum_{x\in B_r} a_x.
\end{equation*}
Since $l=O(\log(1/\epsilon))$, replacing $\epsilon$ with $\epsilon^2$ we have that $\epsilon^2 \log (1/\epsilon)\leq \epsilon$. Thus, the above bound is at most $\epsilon w_r$ and $l$ remains $O(\log(1/\epsilon))$. This completes the proof of the Lemma.
\end{proof}
Next, we move on to lower bounding the spectral and Frobenius norms of $H$.
\subsection{Lower Bounding the Spectral and Frobenius Norms of \texorpdfstring{$H$}{H}}\label{sec:hankel_spectral_lower_bd}
We now prove the norm lower bound result of Lemma \ref{lem:hankel_spectral_lower_bd}. We first state the lemma for the spectral and Frobenius norms, followed by its proof.
\begin{replemma}{lem:hankel_spectral_lower_bd}[Spectral and Frobenius norm lower bounds on $H$]
For PSD Hankel matrix $H = V_X D_X V_X^T\in \mathbb{R}^{n\times n}$ with all $x\in X$ satisfying $|x|\leq 1$, apply Definition \ref{def:bucketing} to it. Then we have
\begin{align*}
\|H\|_2 &=\Omega\left(\max_{r\in [R+2]}w_r \cdot n/(\lambda 2^r)\right),\\
\|H\|_F & = \Omega\left(\sqrt{\sum_{r_1, r_2=0}^{R}w_{r_1}w_{r_2} \cdot (n/2^{\max\{r_1,r_2\}})^2}+w_{R+1}\right).
\end{align*}
\end{replemma}
\begin{proof}
First, consider a $r\in [R+1]$, and let $B^+_r = \{x\in B_r: x\geq 0\}$ and $B^-_r = B_r\setminus B_r$. Now let $H_{B^+_r} = V_{B^+_r}D_{B^+_r}V_{B^+_r}^T$ and our goal will be to first lower bound $\|H_{B^+_r}\|_2$. 

\smallskip

\noindent\textbf{Lower bounding the contribution of elements of $B^+_r$.} For this we will analyze the entries for $H_{B^+_r}v$ for any vector $v\in \mathbb{R}^n$ with non-negative entries. For any index $i\in [n]$ we have, using the definition of $B_r$ (\Cref{def:bucketing}),
\begin{align*}
    (H_{B^+_r}v)_i^2 &=  \left(\sum_{j=0}^{n-1} \left(\sum_{x\in B^+_r}a_x x^{i+j}\right)v_j \right)^2\\
    &\geq \left(\sum_{j=0}^{n-1} \left(\sum_{x\in B^+_r}a_x e^{-(2^r\lambda/n)\cdot(i+j)}\right)v_j \right)^2\\
    &=\left(\left(\sum_{x\in B^+_r}a_x\right)\cdot e^{-2^r\lambda i/n}\cdot \left(\sum_{j=0}^{n-1}  e^{-2^r\lambda j/n}v_j\right) \right)^2\\
    &= (w^+_r)^2 \cdot e^{-2^{r+1}\lambda i/n} \cdot \left(\sum_{j=0}^{n-1}  e^{-2^r\lambda j/n}v_j\right)^2,
\end{align*}
where $w^+_r = \sum_{x\in B^+_r}a_x$. Thus we have the following lower bound for $\|H_{B^+_r}v\|_2^2 = \sum_{i=0}^{n-1}(H_{B^+_r}v)_i^2$:
\begin{align*}
    \sum_{i=0}^{n-1}(H_{B^+_r}v)_i^2 &\geq \sum_{i=0}^{n-1} (w^+_r)^2 \cdot e^{-2^{r+1}\lambda i/n} \cdot \left(\sum_{j=0}^{n-1}  e^{-2^r\lambda j/n}v_j\right)^2\\
    &\geq (w^+_r)^2 \cdot \left(\sum_{i=0}^{n-1} e^{-2^{r+1}\lambda i/n}\right)\cdot \left(\sum_{j=0}^{n-1}  e^{-2^r\lambda j/n}v_j\right)^2\\
    & = (w^+_r)^2 \cdot \left(\frac{1-e^{-2^{r+1}\lambda}}{1-e^{-2^{r+1}\lambda/n}}\right) \cdot \left(\sum_{j=0}^{n-1}  e^{-2^r\lambda j/n}v_j\right)^2 \\
    & =\Omega\left( (w^+_r)^2 \cdot \left(\frac{1}{1-e^{-2^{r+1}\lambda/n}}\right) \cdot \left(\sum_{j=0}^{n-1}  e^{-2^r\lambda j/n}v_j\right)^2\right) \numberthis \label{eq:spectral_norm_lower_bd_1}.
\end{align*}
Consider the lower bound above. It is clearly maximized for the unit norm vector that is correlated with the moment vector $v_n(e^{-2^r\lambda/n})$. Since the above holds for any positive vector $v$, it holds for the vector that maximizes the lower bound. Thus, we have the following lower bound for the righthand side of  \eqref{eq:spectral_norm_lower_bd_1} above:
\begin{align*}
    \max_{v\in \mathbb{R}_+^n }\|H_{B_r}v\|_2^2&\geq \Omega\left((w^+_r)^2 /(1-e^{-2^{r+1}\lambda/n})\right) \cdot \max_{v\in \mathbb{R}_+^n} \left(\sum_{j=0}^{n-1}  e^{-2^r\lambda j/n}v_j\right)^2 \\& = \Omega\left((w^+_r)^2 /(1-e^{-2^{r+1}\lambda/n})\right)\cdot \left(\sum_{j=0}^{n-1}  e^{-2^{r+1}\lambda j/n}\right) \quad (\text{max occurs for $v_j\propto e^{-2^r\lambda j/n}$})\\
     &= \Omega\left((w^+_r)^2 /(1-e^{-2^{r+1}\lambda/n})^2\right)\\
     &= \Omega\left((w^+_r)^2/(1-(1-2^{r+1}\lambda/n + O((2^{r+1}\lambda/n)^2)))^2\right)\\
     &= \Omega((w^+_r)^2 \cdot (n/2^r\lambda)^2),\numberthis \label{eq:spectral_norm_lower_bd_2}
\end{align*}
where the approximation for $e^{-2^{r+1}\lambda/n}$ is valid as long as $r\leq \log(0.1\cdot n/\lambda)$, and for $\log(0.1\cdot n/\lambda)\leq r\leq R$ we have that $(1-e^{-2^{r+1}\lambda/n})$ and $n/2^r\lambda$ are both $\Theta(1)$.
Now, since all entries of $H_{B^+_r}$ are positive, by Perron-Frobenius the vector $v$ that maximizes $\|H_{B^+_r}v\|_2^2$ also has positive entries. Thus $\|H_{B^+_r}\|_2^2 = \max_{v\in \mathbb{R}_+^n}\|H_{B^+_r}v\|_2^2$, and using the above lower bound of \eqref{eq:spectral_norm_lower_bd_2} we have that, for all $r\in [R+1]$,
\begin{equation*}
    \|H_{B^+_r}\|_2^2 = \Omega((w^+_r)^2 \cdot (n/2^r\lambda)^2).
\end{equation*}
For the last bucket $r=R+1$ we have that, for any $v\in \mathbb{R}^n_{+}$,
\begin{align*}
    \|H_{B^+_r}v\|_2&\geq (H_{B^+_{R+1}}v)_0\\
    & = \left(\sum_{j=0}^n \sum_{x\in B^+_{R+1}}a_{x}x^{0+j}v_j\right)\\
    &\geq \left(\sum_{x\in B^{+}_{R+1}}a_x\right)v_0\\
    &= w^+_{R+1} v_0,
\end{align*}
where $v_0$ value of $v$ at the first coordinate. Thus $\|H_{B^+_{R+1}}v\|_2\geq w^+_{R+1} v_0$ and,
\begin{equation*}
    \|H_{B^+_{R+1}}\|_2 = \max_{v\in \mathbb{R}_+^n}\|H_{B^+_{R+1}}v\|_2 \geq w^+_{R+1} = \Omega((w^+_{R+1})^2\cdot(n/2^{R+1}\lambda)),
\end{equation*}
since for $r=R+1$ we have that $(n/2^{R+1}\lambda)\leq 10$ as $R=\log(\ln(1/0.9)n/\lambda)$. Overall, we have that $\|H_{B^+_r}\|_2 = \Omega((w^+_r)^2 \cdot(n/2^r\lambda))$ for all $r\in [R+2]$.

\smallskip

\noindent\textbf{Lower bounding the contribution of $B^-_r$.} Now we consider bounding $\|H_{B^-_r}\|_2$ for any $r\in [R+2]$. Recall that $B^-_r = B_r\setminus B^+_r = \{x\in B_r: x< 0\}$ and $H_{B^-_r} = V_{B^-_r}D_{B^-_r}V_{B^-_r}^T$. Let $w^-_r = \sum_{x\in B^{-}_r}a_x$. For this, we will analyze the entries for $H_{B^-_r}v$ for any vector $v\in \mathbb{R}^n$, possibly with negative entries. This is because $H_{B^-_r}$ might have negative entries and thus Perron-Frobenius does not apply. Now for any $i\in [n]$ with $i$ being even we have
\begin{align*}
    (H_{B^-_r}v)_i &=  \sum_{j=0}^{n-1} \sum_{x\in B^-_r}a_x x^{i+j}v_j\\
    &=\sum_{j=0}^{n-1} \sum_{x\in B^-_r}a_x |x|^{i+j}(-1)^{i+j}v_j = \sum_{j=0}^{n-1} \sum_{x\in B^-_r}a_x |x|^{i+j} (-1)^jv_j \quad (\text{since $i$ even}).
\end{align*}
This shows that, since the signs of entries in even rows of $H_{B^-_r}$ are consistent as they do not depend on the row, the vector $v$ that maximizes $\sum_{i\in [n],i \text{ even}}(H_{B^-_r}v)_i^2 $ must have signs that correlate with the signs of entries in even rows of $H_{B_r}$. Thus, after multiplying each coordinate $v_j$ with $(-1)^j$, we can, without loss of generality, assume that $v$ has positive entries and we thus have for any $v\in \mathbb{R}^n$ with $(-1)^jv_j\geq 0$ for all $j\in [n]$:
\begin{align*}
    \|H_{B^-_{r}}v\|_2^2\geq \sum_{i\in [n],i \text{ even}}(H_{B^-_r}v)_i^2 &=\sum_{i\in [n],i \text{ even}}\left(\sum_{j=0}^{n-1} \sum_{x\in B^-_r}a_x |x|^{i+j} v_j\right)^2\\
    & \geq \sum_{i\in [n],i \text{ even}}\left(\sum_{j=0}^{n-1} \left(\sum_{x\in B^-_r}a_x\right) e^{-(2^{r}\lambda/n)(i+j)} v_j\right)^2\\
    & = (w^-_r)^2 \cdot \left(\sum_{i\in [n],i \text{ even}}e^{-2^{r+1}\lambda i/n}\right)\cdot \left(\sum_{j=0}^{n-1}e^{-2^{r}\lambda j/n} v_j\right)^2\\
    & = \Omega((w^-_r)^2/(1-e^{-2^{r+2}\lambda/n}))\left(\sum_{j=0}^{n-1}e^{-2^{r}\lambda j/n} v_j\right)^2.
\end{align*}
The above lower bound is clearly maximized for a unit $\ell_2$ norm $v$ with $v_j \propto e^{-2^{r}\lambda j/n}$ for all $j\in [n]$. Moreover, since the above holds for any $v$ with $(-1)^jv_j\geq 0$ for all $j\in [n]$, it holds for the vector maximizing the above lower bound. Thus,
\begin{align*}
    \max_{v\in \mathbb{R}^{n}:(-1)^jv_j\geq 0\forall j\in [n]}\|H_{B^-_r}v\|_2^2&\geq\Omega((w^-_r)^2/(1-e^{-2^{r+2}\lambda/n}))\cdot\max_{v\in \mathbb{R}^n_+}\left(\sum_{j=0}^{n-1}e^{-2^{r}\lambda j/n} v_j\right)^2 \\
     &= \Omega((w^-_r)^2/(1-e^{-2^{r+2}\lambda/n}))\cdot \left(\sum_{j=0}^{n-1}e^{-2^{r+1}\lambda j/n}\right)\\
     & = \Omega((w^-_r)^2/((1-e^{-2^{r+2}\lambda/n})\cdot (1-e^{-2^{r+1}\lambda/n})))\\
     & = \Omega((w^-_r)^2 \cdot (n/2^r\lambda)^2).
\end{align*}
Based on this lower bound, we can now bound $\|H_{B^-_r}\|_2$ by
\begin{equation*}
    \|H_{B^-_r}\|_2^2\geq \max_{v\in \mathbb{R}^{n}:(-1)^jv_j\geq 0\forall j\in [n]}\|H_{B^-_r}v\|_2^2 \geq  \Omega((w^-_r)^2 \cdot (n/2^r\lambda)^2).
\end{equation*}
Thus we have that $\|H_{B^-_r}\|_2 = \Omega(w^-_r \cdot (n/2^r\lambda))$ for all $r\in [R+1]$. For $r=R+1$ we have that, for any $v\in \mathbb{R}^n$ such that $(-1)^jv_j\geq 0$ for all $j\in [n]$,
\begin{equation*}
    \|H_{B^-_{R+1}}v\|_2\geq (H_{B^-_{R+1}}v)_0=\sum_{j=0}^{n-1} \sum_{x\in B^-_{R+1}}a_x |x|^{0+j} (-1)^jv_j\geq \left(\sum_{x\in B^-_{R+1}}a_x\right) v_0.
\end{equation*}
Since $\|v\|_2=1$, the above lower bound is maximized for $v_0=1$, thus $\|H_{B^-_{R+1}}\|_2\geq  w^-_{R+1}=\Omega(w^-_{R+1} \cdot n/2^{R+1}\lambda)$.

Consider the partition of $X = \left(\bigcup_{r\in [R+2]}B^+_r\right)\cup \left(\bigcup_{r\in [R+2]}B^-_r\right)$. By Lemma \ref{lem:partition_x} we know that $H = \sum_{r\in [R+2]} H_{B^+_r}+H_{B^{-}_r}$. Since $a_x>0$ for all $x\in X$ and each $H_{B^+_r} = V_{B^+_r} D_{B^+_r}V_{B^+_r}^T = (V_{B^+_r}\sqrt{D_{B^+_r}})(V_{B^+_r}\sqrt{D_{B^+_r}})^T$ is an outer product, $H_{B^+_r}\succeq 0$ and similarly $H_{B^-_r}\succeq 0$ for all $r\in [R+2]$. Thus $H\succeq H_{B^+_r}$ and $\|H\|_2\geq \|H_{B^+_r}\|_2 = \Omega(w^+_r\cdot n/(2^r\lambda))$ for all $r\in [R+2]$. Similarly, $\|H\|_2\geq \|H_{B^-_r}\|_2 = \Omega(w^-_r\cdot n/(2^r\lambda))$. Thus, we have
\begin{align*}
\|H\|_2 &\geq  \left(\max_{r\in [R+2]}\|H_{B^+_r}\|_2+\max_{r\in [R+2]}\|H_{B^-_r}\|_2\right)/2\\&=\Omega\left(\max_{r\in [R+2]}w^+_r\cdot n/(2^r\lambda)+\max_{r\in [R+2]}w^-_r\cdot n/(2^r\lambda)\right) \\&= \Omega\left(\max_{r\in [R+2]}(w^+_r+w^-_r)\cdot n/(2^r\lambda)\right) \\&=\Omega\left(\max_{r\in [R+2]}w_r\cdot n/(2^r\lambda)\right).
\end{align*}
Next, we move on lower bounding the Frobenius norm of $\|H\|_F$.

\smallskip

\noindent\textbf{Lower bounding $\|H\|_F$.} Note that $(H)_{0,0}\geq \sum_{x\in X}a_x \geq w_{R+1}$, and also for $i,j\in [n]$ with $i+j$ even we have that $(H)_{i,j} = \sum_{r=0}^{R+1}\sum_{x\in B_{r}}a_x |x|^{i+j}\geq \sum_{r=0}^{R}w_r e^{-2^{r}(\lambda/n) (i+j)}$.
This implies the following lower bound for $\sum_{i,j\in [n]}(H)_{i,j}^2$:
\begin{align*}
    \sum_{i,j\in [n]}(H)_{i,j}^2 &\geq  \sum_{i,j\in [n]:i+j \text{ even}}\left(\sum_{r=0}^{R}w_r e^{-2^{r}(\lambda/n) (i+j)}\right)^2\\
    & = \sum_{i,j\in [n/2]}\left(\sum_{r=0}^{R}w_r e^{-2^{r+1}(\lambda/n) (i+j)}\right)^2\\
    &= \sum_{i,j\in [n/2]}\left(\sum_{r=0}^{R}w^2_r e^{-2^{r+2}(\lambda/n) (i+j)} +2\sum_{\substack{r_1, r_2=0\\r_1\neq r_2}}^{R}w_{r_1}w_{r_2} e^{-(2^{r_1+1}+2^{r_2+1})(\lambda/n) (i+j)}\right)\\
    &=\sum_{r=0}^{R} w^2_r \sum_{i,j\in [n/2]} e^{-2^{r+2}(\lambda/n) (i+j)}+2\sum_{\substack{r_1, r_2=0\\r_1\neq r_2}}^{R}w_{r_1}w_{r_2}\sum_{i,j\in [n/2]} e^{-(2^{r_1+1}+2^{r_2+1})(\lambda/n) (i+j)}\\
    &\geq \sum_{r=0}^{R} w^2_r \left(\sum_{i\in [n]} e^{-2^{r+2}(\lambda/n)i}\right)^2 + 2\sum_{\substack{r_1, r_2=0\\r_1\neq r_2}}^{R}w_{r_1}w_{r_2}\left(\sum_{i\in [n]} e^{-(2^{r_1+1}+2^{r_2+1})(\lambda/n)i}\right)^2.
\end{align*}
We now continue to bound the above as follows:
\begin{align*}
    \sum_{i,j\in [n]}(H)_{i,j}^2
    & \geq \sum_{r=0}^{R} w^2_r \left(\sum_{i\in [n]} e^{-2^{r+2}(\lambda/n)i}\right)^2 + 2\sum_{\substack{r_1, r_2=0\\r_1\neq r_2}}^{R}w_{r_1}w_{r_2}\left(\sum_{i\in [n]} e^{-(2^{r_1+1}+2^{r_2+1})(\lambda/n)i}\right)^2
     \\
    &= \sum_{r=0}^{R} w^2_r \cdot \frac{(1-e^{-2^{r+2}\lambda})^2}{(1-e^{2^{r+1}\lambda/ n})^2}+2\sum_{\substack{r_1, r_2=0\\r_1\neq r_2}}^{R}w_{r_1}w_{r_2} \cdot \frac{(1-e^{-(2^{r_1+1}+2^{r_2+1})\lambda})^2}{(1-e^{-(2^{r_1+1}+2^{r_2+1})\lambda/ n})^2}\\
    &= \Omega\left(\sum_{r=0}^{R} w^2_r \cdot (n/2^r)^2 + \sum_{\substack{r_1, r_2=0\\r_1\neq r_2}}^{R}w_{r_1}w_{r_2} \cdot (n/(2^{r_1}+2^{r_2}))^2\right)\\
    & = \Omega\left(\sum_{r_1, r_2=0}^{R}w_{r_1}w_{r_2} \cdot (n/2^{\max\{r_1,r_2\}})^2\right),
\end{align*}
where the second last inequality follows since $e^{2^{r+1}\lambda}\leq \epsilon$ and $1-e^{2^{r+1}\lambda/ n}=O(n/(2^r\lambda))$ for all $r\in [R+2]$. Thus,
\begin{align*}
    \|H\|_F  = \Omega\left((H)_{0,0}+\sqrt{\sum_{i,j\in [n]}(H)_{i,j}^2}\right) = \Omega\left(\sqrt{\sum_{r_1, r_2=0}^{R}w_{r_1}w_{r_2} \cdot (n/2^{\max\{r_1,r_2\}})^2}+w_{R+1}\right).
\end{align*}
\end{proof}
Next, we convert our the entrywise guarantees in approximating each bucket from Lemma \ref{lem:bucket_sparsification} to guarantees on matrix norms matching the lower bounds on the norms of $H$ shown in this section.
\subsection{Analysis of Error Correlation across Buckets}\label{sec:error_correlation_analysis}
We now present the formal statement and proof of the approximation guarantee of Lemma \ref{lem:x_0_1_errs}.
\begin{replemma}{lem:x_0_1_errs}[Spectral and Frobenius norm error bounds]
Let $\wh{H}_{B_{R+1}}$ as $(\wh{H}_{B_{R+1}})_{i,j} = (H_{B_{R+1}})_{i,j}$ for all $i,j\in [n]$ such that $i+j\leq  t_{R+1}=O(\log(1/\epsilon))$ and $0$ otherwise. Let $D_{T_r} = diag(\cup_{i\in [l+1]}\{a^+_{i,r},a^{-}_{i,r}\})$ and $H_{T_r} = V_{T_r}D_{T_r}V_{T_r}^T$ where $T_r$ for all $r\in [R+1]$ and $a^+_{i,r},a^{-}_{i,r}$ for all $i\in [l+1],r\in [R+1]$ are as per Lemma \ref{lem:bucket_sparsification} in  \Cref{sec:bucket_sparsification}. 

Then for $\wh{H} = \wh{H}_{B_{R+1}}+\sum_{r\in [R+1]}H_{T_r}$, we have that $\wh{H}$ is Hankel  with rank at most $O(\log n \log(1/\epsilon))$ and satisfies 
\begin{align*}
    \|H-\wh{H}\|_2&\leq \epsilon\cdot \max_{r\in [R+2]}w_r \cdot n/(2^r),\\
    \|H-\wh{H}\|_F &\leq \epsilon\left(\sqrt{\sum_{r_1,r_2=0}^R w_{r_1}w_{r_2} (n/ 2^{\max\{r_1,r_2\}})^2} + w_{R+1}\right).
\end{align*}
\end{replemma}

We will build up to the proof of the above lemma in a few steps. We denote the per bucket errors as $E_r = H_{B_r} - H_{T_r}$ for all $r\in [R+1]$ and $E_{R+1} = H_{B_{R+1}}-\wh{H}_{B_{R+1}}$. We start with analyzing the error corresponding to the last bucket $E_{R+1}$.

\begin{lemma}[Error in approximating the last bucket]\label{lem:err_last_bucket}
The error corresponding to the last bucket $E_{R+1} = H_{B_{R+1}}-\wh{H}_{B_{R+1}}$ satisfies
\begin{equation*}
    \|E_{R+1}\|_2\leq \|E_{R+1}\|_F= O(\epsilon\log(1/\epsilon)w_{R+1}).
\end{equation*}
\end{lemma}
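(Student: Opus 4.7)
The plan is a short, direct entrywise computation. By construction $E_{R+1}$ vanishes at every $(i,j)$ with $i+j \le t_{R+1}$, and elsewhere equals $(H_{B_{R+1}})_{i,j} = \sum_{x\in B_{R+1}} a_x x^{i+j}$. First I would extract a uniform contraction rate from the definition of $B_{R+1}$: every $x \in B_{R+1}$ satisfies $\log_2(1/|x|) \ge \ln(1/0.9)$, so $|x| \le \rho$ where $\rho := 2^{-\ln(1/0.9)}$ is a fixed constant in $(0,1)$. Combined with $a_x \ge 0$ and the triangle inequality, this gives the entrywise bound $|(E_{R+1})_{i,j}| \le w_{R+1}\,\rho^{\,i+j}$ for all $i+j > t_{R+1}$, and $0$ otherwise.

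Next I would plug in the cutoff. From $R = \log_2(\ln(1/0.9)\,n/\lambda)$ and $\lambda = 2\log_2(1/\epsilon)$ one computes
\[
t_{R+1} \;=\; \frac{n}{2^{R}} \;=\; \frac{2\log_2(1/\epsilon)}{\ln(1/0.9)} \;=\; \Theta(\log(1/\epsilon)),
\qquad
\rho^{\,t_{R+1}} \;=\; 2^{-2\log_2(1/\epsilon)} \;=\; \epsilon^2.
\]
Summing the entrywise bound in Frobenius norm and grouping by anti-diagonal $s = i+j$,
\[
\|E_{R+1}\|_F^2 \;\le\; w_{R+1}^2 \sum_{s = t_{R+1}+1}^{2n-2} (s+1)\,\rho^{2s} \;=\; w_{R+1}^2 \cdot O\!\bigl(t_{R+1}\, \rho^{\,2t_{R+1}}\bigr) \;=\; O\bigl(\epsilon^4 \log(1/\epsilon)\bigr)\, w_{R+1}^2,
\]
where the middle step is the standard estimate for a linear factor times a geometric tail with fixed ratio $\rho^2 < 1$. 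Taking square roots gives $\|E_{R+1}\|_F = O\!\bigl(\epsilon^2 \sqrt{\log(1/\epsilon)}\,w_{R+1}\bigr)$, which is stronger than the claimed $O(\epsilon \log(1/\epsilon)\, w_{R+1})$ bound, and the spectral bound is immediate from $\|E_{R+1}\|_2 \le \|E_{R+1}\|_F$.

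There is no real obstacle here. The only bookkeeping subtlety is that the buckets are defined with $\log_2(1/|x|)$ while bounding $|x|^{i+j}$ effectively asks for $\ln(1/|x|)$; however $R$ and $t_{R+1}$ are chosen precisely so that the base-change constant absorbs into the clean identity $\rho^{t_{R+1}} = \epsilon^2$, after which the argument is purely a geometric series tail bound.
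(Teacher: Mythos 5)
Your proof is correct and follows essentially the same route as the paper's: an entrywise bound $|(E_{R+1})_{i,j}|\leq w_{R+1}\rho^{i+j}$ from the uniform contraction of nodes in $B_{R+1}$, followed by a geometric tail sum over the anti-diagonals beyond the cutoff $t_{R+1}=\Theta(\log(1/\epsilon))$. Your constant-tracking (via $\rho^{t_{R+1}}=\epsilon^2$) even yields a slightly stronger bound than the stated $O(\epsilon\log(1/\epsilon)w_{R+1})$, which the paper's own computation also implicitly achieves.
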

\begin{proof}
By definition of $E_{R+1}$ we have the following for any entry $i,j\in [n]$,
 \begin{equation*}
     (E_{R+1})_{i,j} = \begin{cases}
         (H_{B_{R+1}})_{i,j}=(\sum_{x\in B_{R+1}}a_x x^{i+j})\quad  \text{if } i+j\geq t_{R+1},\\
         0 \quad \text{otherwise}.
     \end{cases}
 \end{equation*}
Thus, observe that for any $i,j\in [n]$ with $i+j\geq t_{R+1}$ using the definition of $B_{R+1}$ (\Cref{def:bucketing}) to bound each $x\in B_{R+1}$ as $|x|\leq 0.9$, we have that $|(E_{R+1})_{i,j}|\leq w_{R+1} \cdot (0.9)^{i+j}$ and $(E_{R+1})_{i,j}=0$ otherwise. Using this entrywise bound we bound the Frobenius norm of $E_{R+1}$ as
\begin{align*}
    \|E_{R+1}\|_F^2 &= \sum_{i,j\in [n]} (E_{R+1})_{i,j}^2\\
    &\leq w_{R+1}^2\left(\sum_{i=0}^{t_{R+1}} \sum_{j=t_{R+1-i}}^n 0.9^{2i+2j} + \sum_{i=t_{R+1}}^{n} \sum_{j=0}^n 0.9^{2i+2j} \right)\\
    & \leq w_{R+1}^2\left(\sum_{i=0}^{t_{R+1}} 0.9^{2i}\sum_{j=t_{R+1-i}}^{\infty} 0.9^{2j} + \sum_{i=t_{R+1}}^{n} 0.9^{2i}\sum_{j=0}^{\infty} 0.9^{2j} \right)\\
    &\leq  10 w_{R+1}^2\left(\sum_{i=0}^{t_{R+1}} 0.9^{2i} \cdot 0.9^{2t_{R+1}-2i} + \sum_{i=t_{R+1}}^{n} 0.9^{2i} \right)\\
    &\leq 10 w_{R+1}^2\left(t_{R+1}\cdot 0.9^{2t_{R+1}} + \sum_{i=t_{R+1}}^{\infty} 0.9^{2i} \right)\\
    &\leq 10 w_{R+1}^2\left(t_{R+1}\cdot 0.9^{2t_{R+1}} + 10 0.9^{2t_{R+1}} \right) \leq 100 \epsilon^2\log(1/\epsilon)^2 w_{R+1}.
\end{align*}
Thus $\|E_{R+1}\|_F = O(\epsilon\log(1/\epsilon)w_{R+1})$. Finally, since $\|E_{R+1}\|_2\leq \|E_{R+1}\|_F$, this implies the proof.
\end{proof}
Next, our goal will be to analyze norms of $E=\sum_{r=0}^{R}E_r$. We will split each error $E_r = H_{B_r}- H_{T_r}$ into three parts $E_{r,1}, E_{r,2},E_{r,3}$. Let $E_{r,1}$ be defined as follows for all $i,j\in [n]$:
\begin{equation}\label{eq:e1_defn}
    (E_{r,1})_{i,j} = \begin{cases}
        (H_{B_r}- H_{T_r})_{i,j}\quad \forall i\leq t_r/2\text{ and }j\leq t_r/2,\\
        0\quad \text{otherwise}.        
    \end{cases}
\end{equation}
Let $ E_{r,2}$ and $E_{r,3}$ be defined as follows for all $i,j\in [n]$:
\begin{align}
    (E_{r,2})_{i,j} &=\begin{cases}
        (H_{B_r})_{i,j} \quad \forall i\geq t_r/2\text{ or }j\geq t_r/2 \\
        0\quad \text{otherwise}.
    \end{cases}\label{eq:e2_defn},\\
    (E_{r,3})_{i,j} &=\begin{cases}
        (-H_{T_r})_{i,j}\quad \forall i\geq t_r/2\text{ or }j\geq t_r/2 \\
        0\quad \text{otherwise}.
    \end{cases}\label{eq:e3_defn}
\end{align}
Thus error $E_{r,1}$ is due to \emph{sparsification} of $B_r$, and $E_{r,2},E_{r,3}$ are due to \emph{small entries beyond the cutoff}.

Thus $E_r = E_{1,r}+E_{r,2}+E_{r,3}$ for all $r\in [R+1]$, and let $E^1 = \sum_{r=0}^R E_{r,1}$, $E^2 = \sum_{r=0}^R E_{r,2}$, $E^3 = \sum_{r=0}^R E_{r,3}$. The total error $E$ is $E = E^1+E^2+E^3$, and we will bound the spectral norm of $E$ as $\|E\|_2\leq \|E^1\|_2+\|E^2\|_2+\|E^3\|_2$, and similarly for the Frobenius norm. We will first bound the norms of $E^1$, followed by $E^2$ and $E^3$.
\begin{lemma}[Bounding $E^1$ using correlation decay.]\label{lem:sparsification_errs}
    The bucket sparsification error $E^1=\sum_{r=0}^R E_{r,1}$ with each $E_{r,1}$ defined in \eqref{eq:e1_defn} satisfies,
    \begin{align*}
    \|E^1\|_2 &\leq  10\epsilon\max_{r\in [R+1]}w_r \cdot n/2^r\\
        \|E^1\|_F&\leq 10\epsilon\cdot \sqrt{\sum_{r_1,r_2=0}^{R} w_{r_1}w_{r_2}\cdot (n/2^{\max\{r_1,r_2\}})^2 }.
    \end{align*}
\end{lemma}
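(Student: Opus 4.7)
The plan is to convert the per-bucket entrywise bounds from Lemma \ref{lem:bucket_sparsification} into matrix-norm bounds on $E^1 = \sum_{r=0}^R E_{r,1}$ by dominating each $E_{r,1}$ by a rank-one PSD matrix supported on a shrinking ``top-left'' block, and then exploiting exponential correlation decay among these rank-one factors to beat the triangle inequality.

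First I would unpack Lemma \ref{lem:bucket_sparsification}: the pointwise bound $|H_{B_r}(t) - H_{T_r}(t)| \leq \epsilon w_r$ translates via $(H_{B_r} - H_{T_r})_{i,j} = H_{B_r}(i+j) - H_{T_r}(i+j)$ into $|(E_{r,1})_{i,j}| \leq \epsilon w_r$ on the top-left $t_r/2 \times t_r/2$ block (zero elsewhere, by definition of $E_{r,1}$ in \eqref{eq:e1_defn}). Letting $v_r \in \mathbb{R}^n$ be the indicator vector of the first $t_r/2 = n/2^r$ coordinates, this reads entrywise as $|E_{r,1}| \leq \epsilon w_r \cdot v_r v_r^T$, and a triangle inequality gives $|E^1_{i,j}| \leq M_{i,j}$ for the nonnegative PSD matrix $M := \sum_{r=0}^R \epsilon w_r\, v_r v_r^T$. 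Claim \ref{clm:entrywise_upper_bound} then reduces the task to bounding $\|M\|_2$ and $\|M\|_F$.

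A direct count gives $\|v_r\|_2^2 = n/2^r$ and $\langle v_{r_1}, v_{r_2} \rangle = n/2^{\max(r_1,r_2)}$, so the normalized vectors $u_r := v_r/\|v_r\|_2$ satisfy $\langle u_{r_1}, u_{r_2} \rangle = 2^{-|r_1-r_2|/2}$, realizing the correlation-decay property \eqref{eq:informal_correlation_decay}. Writing $M = U D U^T$ with $U$ the $n \times (R+1)$ matrix of columns $u_r$ and $D = \operatorname{diag}(\beta_r)$ for $\beta_r := \epsilon w_r \cdot n/2^r$, I would bound $y^T M y = (U^T y)^T D (U^T y) \leq \max_r \beta_r \cdot \|U\|_2^2 = \max_r \beta_r \cdot \|G\|_2$, where $G$ is the Gram matrix $G_{r_1,r_2} = \langle u_{r_1}, u_{r_2}\rangle$. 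Since $G$ is nonnegative and symmetric with row sums bounded by $\sum_{k \in \mathbb{Z}} 2^{-|k|/2} = O(1)$, Gershgorin yields $\|G\|_2 = O(1)$, and hence $\|E^1\|_2 \leq 10\epsilon \cdot \max_r w_r \cdot n/2^r$ after absorbing the absolute constant.

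For the Frobenius bound, PSD-ness of $M$ gives $\|M\|_F^2 = \tr(M^2) = \sum_{r_1, r_2} \beta_{r_1}\beta_{r_2} \langle u_{r_1}, u_{r_2}\rangle^2$. Substituting $\beta_r = \epsilon w_r n/2^r$ and $\langle u_{r_1},u_{r_2}\rangle^2 = 2^{-|r_1-r_2|}$, and invoking the identity $r_1 + r_2 + |r_1-r_2| = 2\max(r_1,r_2)$, collapses this to $\epsilon^2 \sum_{r_1,r_2} w_{r_1} w_{r_2} (n/2^{\max(r_1,r_2)})^2$, matching the target bound. The one genuinely delicate step is the Gram-matrix bound $\|G\|_2 = O(1)$: it is the quantitative form of ``correlation decay beats the triangle inequality,'' and without it the spectral bound would be inflated by a factor of $R = \Theta(\log n)$, which (as flagged in the discussion after Lemma \ref{lem:x_0_1_errs}) would quadratically worsen the $\log n$ factor in the overall rank bound of \Cref{thm:main_thm}.
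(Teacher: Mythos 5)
Your proposal is correct and follows essentially the same route as the paper's proof: dominate $E^1$ entrywise by $\sum_{r}\epsilon w_r v_rv_r^T$, pass to the Gram matrix of the normalized indicator vectors whose off-diagonal entries decay as $2^{-|r_1-r_2|/2}$, bound its spectral norm by the maximum row sum (your Gershgorin step is equivalent to the paper's use of $\|A\|_2\leq\sqrt{\|A\|_1\|A\|_\infty}$ for symmetric $A$), and compute the Frobenius norm via the trace identity together with $r_1+r_2+|r_1-r_2|=2\max\{r_1,r_2\}$. The only cosmetic difference is that you invoke Claim \ref{clm:entrywise_upper_bound} for both norms at once, whereas the paper bounds $\|E^1v\|_2$ row by row before reducing to the same Gram-matrix computation.
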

\begin{proof}
We first analyze $\|E^{1}\|_2$. Consider any vector $v$ with $\|v\|_2=1$, and consider any index $i\in [n]$. Then we have the following bound on $(E_{r,1}v)_i$ for any $r\in [R+1]$ using its definition as per \eqref{eq:e1_defn}:
\begin{align*}
    (E_{r,1}v)_i &= \begin{cases}
        \sum_{j=0}^{t_r/2} (H_{B_r}- H_{T_r})_{i,j}v_j \text{ if } i\leq t_r/2,\\
        0 \text{ otherwise. }
    \end{cases}
    \\
    &\leq \begin{cases}
        \epsilon w_r\sum_{j=0}^{t_r/2} |v_j| \text{ if } i\leq t_r/2,\\
        0 \text{ if } i>t_r/2,
    \end{cases}, \numberthis \label{eq:e1_entrywise_bound}
\end{align*}
where in the last inequality we used the fact that $(E_{r,1})_{i,j}\leq \epsilon w_r$ for all $i\leq t_r/2$ and $j\leq t_r/2$ by Lemma \ref{lem:bucket_sparsification}. Let $v_r\in \mathbb{R}^{n}$ such that $(v_r)_{j}=1$ for all $j\leq t_r/2$ and $0$ elsewhere. Furthermore, let $V\in \mathbb{R}^{n\times R}$ have its $r^{th}$ column be equal to $v_r$ for all $r\in [R+1]$. Let $D=diag(\{w_r\}_{r=0}^R)$. Suppose $|v|$ is the vector whose $j^{th}$ entry is $|v_j|$ for all $j\in [n]$. Then we have the following bound on $\|E^{1}v\|_2$:
\begin{align*}
    \|E^{1}v\|_2^2 = \sum_{i=0}^n(E^{1}v)_i^2 &= \sum_{i=0}^n\left(\sum_{r=0}^R (E_{r,1} v)_i\right)^2\\
    &\leq \epsilon^2 \sum_{i=0}^n\left(\sum_{r=0}^R w_r \sum_{j=0}^{t_r}|v_j|\right)^2\\
    & = \epsilon^2 \sum_{i=0}^n\left(\sum_{r=0}^R w_r (v_r v^T_r |v|)_i\right)^2\\
    & = \epsilon^2 \|VDV^T |v|\|_2^2\leq \epsilon^2 \|VDV^T\|_2^2. \quad \text{(since $\||v|\|_2=1$)}\numberthis \label{eq:e1_bound}
\end{align*}
Furthermore, the bound of \eqref{eq:e1_entrywise_bound} when applied for $v$ chosen to be basis vectors also implies an entrywise bound. That is, for any $i,j\in [n]$ we have that $(E^1)_{i,j} = \sum_{r=0}^R(E_{r,1})_{i,j}\leq \sum_{r=0}^R \epsilon w_r(v_r^Tv_r)_{i,j} = \epsilon (VDV^T)_{i,j}$.
Thus, we have
\begin{equation}\label{eq:e1_fro_bound}
    \|E^1\|_F \leq \epsilon \|VDV^T\|_F. 
\end{equation}
Next, we analyze the spectral norm of $VDV^T$. We first rescale each column of $V$ by its $\ell_2$ norm and scale up the corresponding entry in $D$ by the $\ell_2$ norm squared. For any $r\in [R+1]$, $\|v_r\|_2 = \sqrt{t_r/2} = \sqrt{n/2^r}$ since it contains $t_r$ many ones and rest of the entries are $0$. Thus we rescale $v_r$ by $\sqrt{t_r/2}$ and scale up $D_{r,r} = w_r$ by $t_r/2$. Hence $D_{r,r} = w_r\cdot n/2^r$ for all $r\in [R+1]$.

Observe that since $D\succeq 0$, $VDV^T$ can be written as the outer product $(\sqrt{D}V)^T (\sqrt{D}V)$. Since the eigenvalues of $A^TA$ are equal to those of $AA^T$ for any matrix $A$, we thus get that the eigenvalues of $VDV^T$ are equal to the eigenvalues of $\sqrt{D}V^T V\sqrt{D}$. First we bound $\|V^TV\|_2$ using Lemma \ref{lem:spectral_norm_bound}. Since $V^TV$ is symmetric, its max $\ell_1$ norm of all rows and columns is identical. Thus, we have by Lemma \ref{lem:spectral_norm_bound},
\begin{align*}
    \|V^TV\|_2&\leq \sqrt{\|V^TV\|_{1}\cdot\|V^TV\|_{\infty}}\\
    &=\|V^TV\|_{\infty}\\
    &= \max_{r_1\in [R+1]} \sum_{r_2\in [R+1]} |(V^TV)_{r_1,r_2}|. \numberthis \label{eq:spectral_norm_inequality}
\end{align*}
Now for any $r_1,r_2\in [R+1]$ such that $r_1\neq r_2$ we have for $(V^T V)_{r_1,r_2}$,
\begin{align*}
    (V^T V)_{r_1,r_2} & = \frac{v^T_{r_1}v_{r_2}}{\|v_{r_1}\|_2\|v_{r_2}\|_2}\\
    & = \frac{\sum_{j=0}^{\sqrt{\min\{t_{r_1}/2,t_{r_2}/2\}}} 1}{\sqrt{t_{r_1}/2}\sqrt{t_{r_2}/2}}\\
    &=\min \left\{ \sqrt{\frac{t_{r_1}/2}{t_{r_2}/2}},\sqrt{\frac{t_{r_2}/2}{t_{r_1}/2}}\right\}\\
    &= \min \left\{ \sqrt{\frac{2^{r_2}}{2^{r_1}}},\sqrt{\frac{2^{r_1}}{2^{r_2}}}\right\}\\
    &= \frac{1}{\sqrt{2}^{|r_2-r_1|}}. \numberthis \label{eq:correlation_bd_1}
\end{align*}
For a fixed $r_1\in [R+1]$, the above implies for $\sum_{r_2=0}^R(V^TV)_{r_1,r_2}$,
\begin{align*}
    \sum_{r_2=0}^R(V^TV)_{r_1,r_2} &= 1+ \sum_{r_2\neq r_1}\frac{1}{\sqrt{2}^{|r_2-r_1|}}\\
    &\leq  1+2\sum_{r=0}^{\infty}\frac{1}{\sqrt{2}^r}\\
    & \leq 10.
\end{align*}
Thus, we have that $\|V^T V\|_1 \leq 10$ and furthermore, by \eqref{eq:spectral_norm_inequality}, we have that $\|V^TV\|_2 \leq 10 $. Thus $\|\sqrt{D}V^T V \sqrt{D}\|_2 = O(\|D\|_2) = 10\max_{r\in [R+1]}w_r \cdot n/2^r$. This combined with \eqref{eq:e1_bound} implies that $\|E^1v\|_2 \leq 10\epsilon\max_{r\in [R+1]}w_r \cdot n/2^r$ for any $v$ with $\|v\|_2=1$. Thus, we have
\begin{equation}
\|E^1\|_2 \leq  10\epsilon\max_{r\in [R+1]}w_r \cdot n/2^r.
\end{equation}
Next, we analyze the Frobenius norm of $VDV^T$. Since the eigenvalues of $VDV^T$ and $\sqrt{D}V^TV \sqrt{D}$ are equal, we have that $\|VDV^T\|_F =\|\sqrt{D}V^TV \sqrt{D}\|_F$. This implies
\begin{align*}
    \|VDV^T\|_F^2 &=\|\sqrt{D}V^TV \sqrt{D}\|_F^2\\
    & = \sum_{r_1,r_2=0}^{R} D_{r_1,r_1}D_{r_2,r_2}(V^TV)_{r_1,r_2}^2\\
    & = 100\sum_{r_1,r_2=0}^{R} w_{r_1}w_{r_2}(n^2/2^{r_1+r_2})\cdot (1/2^{|r_1-r_2|}) \quad \text{(From  \eqref{eq:correlation_bd_1}.)}\\
    & = 100\sum_{r_1,r_2=0}^{R} w_{r_1}w_{r_2}\cdot (n/2^{\max\{r_1,r_2\}})^2.
\end{align*}
Thus combining this with \eqref{eq:e1_fro_bound}, we have
\begin{equation}
    \|E^1\|_F\leq \epsilon  \|VDV^T\|_F\leq 10\epsilon\cdot \sqrt{\sum_{r_1,r_2=0}^{R} w_{r_1}w_{r_2}\cdot (n/2^{\max\{r_1,r_2\}})^2 }.
\end{equation}
\end{proof}
Next, we analyze the errors $E^2$ and $E^3$ using carefully constructed entrywise upper bounds, and using decay of correlations across column vectors to bound the spectral and Frobenius norms of the upper bounds.
\begin{lemma}[Bounding $E^2,E^3$ using correlation decay]\label{lem:small_entries_errs}
The errors corresponding to small entries beyond cutoffs $E^2 = \sum_{r=0}^R E_{r,2}$ and $E^3 = \sum_{r=0}^R E_{r,3}$ for each $ E_{r,2}, E_{r,3}$ defined as per equations \eqref{eq:e2_defn} and \eqref{eq:e3_defn} satisfy
\begin{align}
    \|E^2\|_2+\|E^3\|_2 &\leq  1000\sqrt{\epsilon}\log(1/\epsilon)\cdot \max_{r\in [R+1]}w_r  (n/2^r\lambda),\\
    \|E^2\|_F+\|E^3\|_F &\leq 1000\sqrt{\epsilon}\log(1/\epsilon) \cdot \sqrt{\sum_{r_1,r_2=0}^R w_{r_1}w_{r_2} (n/ 2^{\max\{r_1,r_2\}})^2}.
\end{align}
\end{lemma}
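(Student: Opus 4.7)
The plan is to mirror the proof of Lemma \ref{lem:sparsification_errs}: bound $E_{r,2}$ and $E_{r,3}$ entrywise by a small sum of rank-one matrices, sum over $r$, apply Claim \ref{clm:entrywise_upper_bound}, and control the result via a Gram-matrix argument using correlation decay across buckets. The new ingredient is that the L-shaped support $\{i\ge t_r/2\}\cup\{j\ge t_r/2\}$ of $E_{r,2},E_{r,3}$ forces at least one coordinate past the cutoff, and for any $x\in B_r$, $|x|^{t_r/2}\le e^{-\lambda/2}=\sqrt{\epsilon}$; this is precisely what yields the $\sqrt{\epsilon}$ factor in the statement. Note $r=0$ contributes nothing since $t_0/2=n$ makes the support empty.

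Set $\bar v_r := v_n(e^{-2^{r-1}\lambda/n})$, and let $\hat v_r\in\R^n$ denote its truncation: $(\hat v_r)_i=(\bar v_r)_i$ for $i\ge t_r/2$, and $0$ otherwise. Using $|x|\le e^{-2^{r-1}\lambda/n}$ for $x\in B_r$ together with the pointwise inequality $\mathbf{1}[\max(i,j)\ge t_r/2]\le \mathbf{1}[i\ge t_r/2]+\mathbf{1}[j\ge t_r/2]$, and Lemma \ref{lem:bucket_sparsification} (which gives $\sum_i(|a^+_{i,r}|+|a^-_{i,r}|)=O(lw_r)$ while each Chebyshev node $t_{i,r}\ge 2^{r-1}\lambda/n$ for $r\ge 1$), I obtain the entrywise estimates
\[
|(E_{r,2})_{i,j}|\le w_r(\hat v_r\bar v_r^T+\bar v_r\hat v_r^T)_{i,j},\qquad |(E_{r,3})_{i,j}|\le O(l)\cdot w_r(\hat v_r\bar v_r^T+\bar v_r\hat v_r^T)_{i,j}.
\]

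Summing over $r$ and invoking Claim \ref{clm:entrywise_upper_bound} reduces the spectral estimate to bounding $\|\hat VW\bar V^T\|_2$, where $\hat V,\bar V$ have columns $\hat v_r,\bar v_r$ and $W=\diag(w_r)$. Splitting $W=W^{1/2}W^{1/2}$ and using $\|AB\|_2\le \|A\|_2\|B\|_2$ together with $\|M\|_2^2=\|MM^T\|_2$ yields $\|\hat VW\bar V^T\|_2\le \sqrt{\|\hat VW\hat V^T\|_2\,\|\bar VW\bar V^T\|_2}$. The factor $\|\bar VW\bar V^T\|_2=O(\max_r w_r n/(2^r\lambda))$ follows by exactly the normalization-plus-correlation-decay argument of Lemma \ref{lem:sparsification_errs}: the normalized inner products $\langle\bar v_{r_1},\bar v_{r_2}\rangle/(\|\bar v_{r_1}\|\|\bar v_{r_2}\|)$ decay geometrically in $|r_1-r_2|$ by direct computation. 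For the truncated side, $\|\hat v_r\|_2^2=\sum_{i\ge t_r/2}e^{-2^r\lambda i/n}=\Theta(\epsilon n/(2^r\lambda))$ already supplies the $\epsilon$, and the $\hat v_r$'s satisfy even stronger (doubly-exponential) off-diagonal decay, giving $\|\hat VW\hat V^T\|_2=O(\max_r w_r\cdot\epsilon n/(2^r\lambda))$. Taking the geometric mean yields $O(\sqrt{\epsilon}\max_rw_rn/(2^r\lambda))$, and adding the $O(l)=O(\log(1/\epsilon))$ factor from $E^3$ matches the stated bound.

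The Frobenius case follows analogously: write $\|\hat VW\bar V^T\|_F^2=\tr(W\bar V^T\bar VW\hat V^T\hat V)$, note that $\hat V^T\hat V$ is dominated by its diagonal with entries $\|\hat v_r\|_2^2=\Theta(\epsilon n/(2^r\lambda))$ (off-diagonal entries are at least a $\sqrt{\epsilon}$ factor smaller), and combine with $(\bar V^T\bar V)_{r_1,r_2}=O(n/(2^{\max(r_1,r_2)}\lambda))$ obtained from geometric sums to get $\|\hat VW\bar V^T\|_F^2=O(\epsilon/\lambda^2)\sum_{r_1,r_2}w_{r_1}w_{r_2}(n/2^{\max(r_1,r_2)})^2$. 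Taking square roots and including the $O(l)$ contribution from $E^3$ recovers the claim. The step I expect to require the most care is the correlation-decay calculation for the truncated vectors $\hat v_r$: intuitively truncation should only strengthen the decay, but this must be verified by an explicit geometric-series computation to confirm the Gram matrix of normalized $\hat v_r$ has $O(1)$ operator norm.
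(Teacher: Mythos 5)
Your proposal is correct and its overall architecture is the same as the paper's: dominate each $E_{r,2}, E_{r,3}$ entrywise by a small number of rank-one matrices built from moment vectors, invoke \Cref{clm:entrywise_upper_bound}, and control the resulting sum via a normalized Gram matrix whose off-diagonal entries decay geometrically in $|r_1-r_2|$, exactly as in \Cref{lem:sparsification_errs}. Where you diverge is in how the L-shaped support is converted into the $\sqrt{\epsilon}$ gain. The paper keeps a \emph{symmetric} rank-one dominator per bucket by trading half the decay exponent for the truncation: for $\max(i,j)\ge t_r/2$ it shows $e^{-2^{r-1}(\lambda/n)(i+j)}\le \sqrt{\epsilon}\,e^{-2^{r-1}(\lambda/2n)(i+j)}$, so $E_{r,2}$ is dominated by $\sqrt{\epsilon}\,w_r\, v_n(e^{-2^{r-1}\lambda/2n})v_n(e^{-2^{r-1}\lambda/2n})^T$ and only one correlation-decay computation (for the slowed-down nodes $e^{-2^{r-1}\lambda/2n}$) is needed. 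You instead keep the original decay rate, write the support indicator as $\mathbf{1}[i\ge t_r/2]+\mathbf{1}[j\ge t_r/2]$ to get the non-symmetric dominator $w_r(\hat v_r\bar v_r^T+\bar v_r\hat v_r^T)$, and then pay for the asymmetry with the extra step $\|\hat V W\bar V^T\|_2\le\sqrt{\|\hat VW\hat V^T\|_2\,\|\bar VW\bar V^T\|_2}$ and a second Gram computation for the truncated vectors $\hat v_r$. Both routes work; your $\sqrt{\epsilon}$ comes from $\|\hat v_r\|_2^2=\Theta(\epsilon\, n/(2^r\lambda))$ (using the paper's convention $e^{-\lambda}=\epsilon$) rather than from weakening the exponent, and the one step you rightly flag as needing verification --- that the normalized Gram matrix of the $\hat v_r$ has $O(1)$ operator norm --- does check out: for $d=|r_1-r_2|\ge 1$ the normalized inner product is of order $2^{-d/2}e^{-\lambda(2^d-1)/2}$, so the row sums are bounded by an absolute constant. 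The trace identity and term-by-term bound in your Frobenius argument are likewise valid since all entries involved are nonnegative.
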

\begin{proof}
We first analyze $E^2$. Consider a vector $v$ with $\|v\|_2=1$, and consider $(E^2v)_i = \sum_{r=0}^R (E_{r,2}v)_i$ for any row $i\in [n]$. Then using the fact that $|x|\leq e^{-2^{r-1}\lambda/n}$ for all $x\in B_r$, we have the following by definition of $E_{r,2}$ as per \eqref{eq:e2_defn} for any $r\in [R+1]$ with $r\geq 1$,
\begin{align*}
    (E_{r,2} v)_i &= \begin{cases}
        \sum_{j=t_r/2}^{n-1} \left(\sum_{x\in B_r}a_x x^{i+j}\right)v_j \text{ if }i\leq t_r/2\\
        \sum_{j=0}^{n-1} \left(\sum_{x\in B_r}a_x x^{i+j}\right)v_j \text{ if }i> t_r/2
    \end{cases}
    \\
    & \leq \begin{cases}
        \sum_{j=t_r/2}^{n-1} \left(\left(\sum_{x\in B_r}a_x\right) e^{-2^{r-1}(\lambda/n)(i+j)}\right)|v_j| \text{ if }i\leq t_r/2\\
        \sum_{j=0}^{n-1} \left(\left(\sum_{x\in B_r}a_x\right) e^{-2^{r-1}(\lambda/n)(i+j)}\right)|v_j|  \text{ if }i> t_r/2
    \end{cases}
    \\
    & = \begin{cases}
        w_r e^{-2^{r-1}(\lambda/n)i}\sum_{j=t_r/2}^{n-1}  e^{-2^{r-1}(\lambda/n)j}|v_j| \text{ if }i\leq t_r/2\\
        w_r e^{-2^{r-1}(\lambda/n)i}\sum_{j=0}^{n-1}  e^{-2^{r-1}(\lambda/n)j}|v_j|\text{ if }i> t_r/2
    \end{cases}
\end{align*}
Now if $i\leq t_r/2$, we have the following bound on $\sum_{j=t_r/2}^{n-1}  e^{-2^{r-1}(\lambda/n)j}|v_j|$:
\begin{align*}
    \sum_{j=t_r/2}^{n-1}  e^{-2^{r-1}(\lambda/n)j}|v_j| &= e^{-2^{r-1}(\lambda/2n)t_r/2}\sum_{j=t_r}^{n-1}e^{-2^{r-1}(\lambda/n)j + 2^{r-1}(\lambda/2n)t_r}|v_j|\\
    &\leq e^{-2^{r-1}(\lambda/2n)t_r/2}\sum_{j=t_r}^{n-1}e^{-2^{r-1}(\lambda/2n)j}|v_j|,
\end{align*}
where in the last line we used the fact that for $j\geq t_r/2$ we have that $-2j+t_r/2\leq -j$. We also have that $e^{-2^{r-1}(\lambda/n)i}\leq e^{-2^{r-1}(\lambda/2n)i}$. Thus, for $i\leq t_r/2$ we have
\begin{align*}
(E_{r,2}v)_i&\leq  e^{-2^{r-1}(\lambda/2n)t_r/2}\cdot e^{-2^{r-1}(\lambda/2n)i}\sum_{j=t_r/2}^{n-1}e^{-2^{r-1}(\lambda/2n)j}|v_j|  \\
&\leq \sqrt{\epsilon}\cdot e^{-2^{r-1}(\lambda/2n)i}\sum_{j=t_r/2}^{n-1}e^{-2^{r-1}(\lambda/2n)j}|v_j|\\
&\leq \sqrt{\epsilon}\cdot \sum_{j=0}^{n-1}e^{-2^{r-1}(\lambda/2n)(i+j)}|v_j|.
\end{align*}
Now when $i>t_r/2$, we have that $-2i\leq -t_r/2-i$. This implies 
\begin{equation*}
 e^{-2^{r-1}(\lambda/n)i}\leq e^{-2^{r-1}(\lambda/2n)t_r/2}\cdot e^{-2^{r-1}(\lambda/2n)i} = \sqrt{\epsilon}\cdot e^{-2^{r-1}(\lambda/2n)i}.
\end{equation*}
Moreover, $\sum_{j=0}^{n-1}e^{-2^{r-1}(\lambda/n)j}|v_j|\leq \sum_{j=0}^{n-1}e^{-2^{r-1}(\lambda/2n)j}|v_j|$ also holds easily. Overall we get that when $i\geq t_r/2$, $\sum_{j=0}^{n-1}  e^{-2^{r-1}(\lambda/n)j}|v_j|$ also satisfies
\begin{equation*}
    e^{-2^{r-1}(\lambda/n)i}\sum_{j=0}^{n-1}  e^{-2^{r-1}(\lambda/n)j}|v_j|\leq \sqrt{\epsilon} \sum_{j=0}^{n-1}e^{-2^{r-1}(\lambda/2n)(i+j)}|v_j|.
\end{equation*}
This implies that $(E_{r,2} v)_i \leq \sqrt{\epsilon}\cdot w_r \sum_{j=0}^{n-1}e^{-2^{r-1}(\lambda/2n)(i+j)}|v_j|$ for all $i\in [n]$ and $r\in [R+1]$ with $r\geq 1$. In particular, this also implies an entrywise bound. For any $i,j\in [n]$ and for all $r\in [R+1]$  with $r\geq 1$, we have
\begin{equation}\label{eq:e2_entrywise_bd}
(E_{r,2})_{i,j}\leq \sqrt{\epsilon}w_r\cdot e^{-2^{r-1}(\lambda/2n)(i+j)}.
\end{equation}

Moreover, for $r=0 $ $E_{r,2}=0$ by definition. Thus, we only need to consider $r\geq 1$ when analyzing $\|E^2\|_2$ and $\|E^2\|_F$.

Let $\wt{E}_{r,2}  = w_r v_n(e^{-2^{r-1}\lambda/2n})v_n(e^{-2^{r-1}\lambda/2n})^T$ be a rank $1$ matrix, and define it for all $r\in [R+1]$  with $r\geq 1$. Let $\wt{E}^2 = \sum_{r=1}^R \wt{E}_{r,2} $. Thus for $\Lambda = \{e^{-2^{r-1}\lambda/2n}\}_{r=1}^{R}$ and corresponding Vandermonde matrix $V_{\Lambda}$, and $D_{\Lambda} = diag(\{w_r\}_{r=1}^{R})$, $\wt{E}^2 = V_{\Lambda}D_{\Lambda}V_{\Lambda}^T$. Also let $|v|$ be the vector whose $j^{th}$ entry is $|v_j|$. With this notation, we get the following bound on $\|E^2 v\|_2^2$:
\begin{align*}
    \|E^2v\|_2^2 &= \sum_{i=0}^{n-1}\left(\sum_{r=1}^{R}(E_{r,2} v)_i\right)^2\\&\leq \epsilon \sum_{i=0}^{n-1}\left(\sum_{r=1}^R w_r\sum_{j=0}^{n-1}e^{-2^{r-1}(\lambda/2n)(i+j)}|v_j|\right)^2\\
    & = \epsilon\sum_{i=0}^{n-1}\left(\sum_{r=1}^R (\wt{E}_{r,2}  |v|)_i)\right)^2 \\&= \epsilon \|\wt{E}^2 |v| \|_2^2\leq \epsilon \|\wt{E}^2\|_2^2 \quad (\text{since $\|v\|_2=1$ implies $\||v|\|_2=1$}).
\end{align*}
Since this is true for any $v$ with $\|v\|_2=1$ we have
\begin{equation}\label{eq:e2_bound}
    \|E^2v\|_2 \leq \sqrt{\epsilon}\|\wt{E}^2\|_2.
\end{equation}
Moreover, using the entrywise bound of \eqref{eq:e2_entrywise_bd} we also have
\begin{align*}
   \|E^2\|_F^2 &= \sum_{i,j\in [n]} \left(\sum_{r=1}^R(E_{r,2})_{i,j}\right)^2\\
   &\leq \epsilon \sum_{i,j\in [n]} \left(\sum_{r=1}^Rw_r e^{-2^{r-1}(\lambda/2n)(i+j)}\right)^2\\
   &= \epsilon \sum_{i,j\in [n]} (\wt{E}^2)_{i,j}^2 = \epsilon \|\wt{E}^2\|_F^2. \numberthis\label{eq:e2_fro_bound}
\end{align*}
We now bound $\|\wt{E}^2\|_2$. Observe that $\wt{E}^2$ can be written as the following outer product:
\begin{equation*}
\wt{E}^2 = V_{\Lambda}D_{\Lambda}V_{\Lambda}^T = (V_{\Lambda}\sqrt{D_{\Lambda}}) (V_{\Lambda}\sqrt{D_{\Lambda}})^T.
\end{equation*}
Thus, the largest eigenvalue of $\wt{E}^2$ is the same as the largest eigenvalue of $\sqrt{D_{\Lambda}}V_{\Lambda}^TV_{\Lambda}\sqrt{D_{\Lambda}}$. First, we rescale each column of $V_{\Lambda}$ by its $\ell_2$ norm, and scale up the corresponding diagonal entry by its $\ell_2$ norm squared. For any $r\in [R+1]$, the $\ell_2$ norm of the $r^{th}$ column of $V_{\Lambda}$, which is $v_n(e^{-2^{r-1}\lambda/2n})^T$, is as follows:
\begin{equation*}
\|v_n(e^{-2^{r-1}\lambda/2n})\|_2^2 = \sum_{i=0}^{n-1}e^{-2^{r-1}\lambda i/n} = \frac{1-e^{-2^{r-1}\lambda}}{(1-e^{-2^{r-1}\lambda/n})} = \Theta\left(\frac{n}{2^r\lambda}\right).
\end{equation*}
Here, the last equality is true because the numerator is at most $1$ and at least $1-e^{-\lambda}=1-\epsilon$, and thus it is $\Theta(1)$. For the denominator, we observe that $e^{-2^{r-1}\lambda/n} = 1-\Theta(2^{r-1}\lambda/n)$ as long as $2^{r-1}\lambda/n\leq c$ for some small enough constant $c>0$ using the inequality $|e^{-x}-(1-x)|\leq O(x^2)$ for any $x<c$. On the other hand when $2^{r-1}\lambda/n\geq c$, $e^{-2^{r-1}\lambda/n}\in [e^{-2^{R-1}\lambda/n},e^{-c}] = [0.9,e^{-c}]$, thus $1/(1-e^{-2^{r-1}\lambda/n}) = \Theta(n/(2^r\lambda))$ also holds since both the sides of the equality are $\Theta(1)$. 

Now observe that we can bound the spectral norm of $\sqrt{D_{\Lambda}}V_{\Lambda}^TV_{\Lambda}\sqrt{D_{\Lambda}}$ as
\begin{equation*}
\|\sqrt{D_{\Lambda}}V_{\Lambda}^TV_{\Lambda}\sqrt{D_{\Lambda}}\|_2 \leq \|D_{\Lambda}\|_2\cdot \|V_{\Lambda}^TV_{\Lambda}\|_2.
\end{equation*}
First we bound $\|V_{\Lambda}^TV_{\Lambda}\|_2$. As per Lemma \ref{lem:spectral_norm_bound} and since $V_{\Lambda}^TV_{\Lambda}$ is symmetric, we have
\begin{align*}
    \|V_{\Lambda}^TV_{\Lambda}\|_2&\leq \sqrt{\|V_{\Lambda}^TV_{\Lambda}\|_{1}\cdot\|V_{\Lambda}^TV_{\Lambda}\|_{\infty}}\\
    &=\|V_{\Lambda}^TV_{\Lambda}\|_{\infty}\\
    &= \max_{r_1\in [R+1]} \sum_{r_2\in [R+1]} |(V_{\Lambda}^TV_{\Lambda})_{r_1,r_2}|.
\end{align*}
Now for any $r_1,r_2\in [R+1]$ with $r\geq 1$ such that $r_1\neq r_2$, we have that $(V_{\Lambda}^TV_{\Lambda})_{r_1,r_2}$ after rescaling is as follows:
\begin{align*}
    (V_{\Lambda}^TV_{\Lambda})_{r_1,r_2} &= \frac{v_n(e^{-2^{r_1-1}\lambda/n})^T v_n(e^{-2^{r_2-1}\lambda/n})}{\|v_n(e^{-2^{r_1-1}\lambda/n})\|_2\|v_n(e^{-2^{r_2-1}\lambda/n})\|_2}\\
    & = \frac{\sum_{i=0}^{n-1}e^{-(2^{r_1-1}+2^{r_2-1})\lambda i/n}}{\|v_n(e^{-2^{r_1-1}\lambda/n)}\|_2\|v_n(e^{-2^{r_2-1}\lambda/n)}\|_2}\\
    & \leq  \frac{1}{1-e^{-(2^{r_1-1}+2^{r_2-1})\lambda/n}}\cdot \frac{1}{\|v_n(e^{-2^{r_1-1}\lambda/n)}\|_2\|v_n(e^{-2^{r_2-1}\lambda/n)}\|_2}\\
    & \leq 100 \frac{n}{\lambda(2^{r_1}+2^{r_2})}\cdot \frac{\sqrt{2^{r_1}2^{r_2}}\lambda}{n} \leq 100\frac{\sqrt{2^{r_1} 2^{r_2}}}{2^{r_1}+2^{r_2}}\numberthis \label{eq:inner_product_bound}.
\end{align*}
Thus, for any $r_1\in [R+1]$, we have the following bound on $\sum_{r_2=0}^R |(V_{\Lambda}^TV_{\Lambda})_{r_1,r_2}$|:
\begin{align*}
    \sum_{r_2=1}^R|(V_{\Lambda}^TV_{\Lambda})_{r_1,r_2}| & \leq 1+100\sum_{r_2\neq r_1}\frac{\sqrt{2^{r_1} 2^{r_2}}}{2^{r_1}+2^{r_2}}\\
    &\leq 1+ 100\sum_{r_2=-\infty}^{r_1-1}\frac{\sqrt{2^{r_1} 2^{r_2}}}{2^{r_1}+2^{r_2}} + 100\sum_{r_2=r_1+1}^{\infty}\frac{\sqrt{2^r_1 2^{r_2}}}{2^{r_1}+2^{r_2}}\\
    &\leq 1+ 100\sum_{r_2=-\infty}^{r_1-1}\sqrt{2^{r_2-r_1}} + 100\sum_{r_2=r_1+1}^{\infty}\frac{\sqrt{2^{r_2-r_1}}}{2^{r_2-r_1}}\\
    &\leq 500.
\end{align*}
Thus we have that $\|V_{\Lambda}^TV_{\Lambda}\|_2\leq 500$. Thus $\|\sqrt{D_{\Lambda}}V_{\Lambda}^TV_{\Lambda}\sqrt{D_{\Lambda}}\|_2 \leq 500\|D_{\Lambda}\|_2 = 500\max_{r\in [R+1]}w_r \cdot (n/2^r\lambda)$. Overall we get that $\|\wt{E}^2\|_2 =500\max_{r\in [R+1]}w_r \cdot (n/2^r\lambda))$, which combined with \eqref{eq:e2_bound} implies that $\|E^2\|_2 = 500\sqrt{\epsilon}\max_{r\in [R+1]}w_r \cdot (n/2^r\lambda))$.
Next we bound $\|E^2\|_F$. Recall from \eqref{eq:e2_fro_bound} that $\|E^2\|_F\leq \epsilon\|\wt{E}^2\|_F = \epsilon\|V_{\Lambda}D_{\Lambda}V_{\Lambda}^T\|_F$. As we argued before that the eigenvalues of $V_{\Lambda}D_{\Lambda}V_{\Lambda}^T$ are identical to those of $\sqrt{D_{\Lambda}}V_{\Lambda}^TV_{\Lambda}\sqrt{D_{\Lambda}}$, thus $\|V_{\Lambda}D_{\Lambda}V_{\Lambda}^T\|_F = \|\sqrt{D_{\Lambda}}V_{\Lambda}^TV_{\Lambda}\sqrt{D_{\Lambda}}\|_F$. Thus, we have 
\begin{align*}
    \|V_{\Lambda}D_{\Lambda}V_{\Lambda}^T\|_F^2 &= \|\sqrt{D_{\Lambda}}V_{\Lambda}^TV_{\Lambda}\sqrt{D_{\Lambda}}\|_F^2\\
    & = \sum_{r_1,r_2=1}^R (D_{\Lambda})_{r_1,r_1}(D_{\Lambda})_{r_2,r_2} (V_{\Lambda}^TV_{\Lambda})_{r_1,r_2}^2\\
    &\leq 100\sum_{r_1,r_2=0}^R w_{r_1}w_{r_2} (n^2/\lambda 2^{r_1+r_2})\cdot(2^{r_1+r_2}/(2^{r_1}+2^{r_2})^2)\quad \text{(using \eqref{eq:inner_product_bound})}\\
    &\leq 100\sum_{r_1,r_2=0}^R w_{r_1}w_{r_2} (n/ 2^{\max\{r_1,r_2\}})^2.
\end{align*}
Combining the above with \eqref{eq:e2_fro_bound} we obtain
\begin{equation}\label{eq:e2_fro_error_bd}
    \|E^2\|_F\leq 500 \sqrt{\epsilon} \sqrt{\sum_{r_1,r_2=0}^R w_{r_1}w_{r_2} (n/\lambda 2^{\max\{r_1,r_2\}})^2}.
\end{equation}
Since each $|x|\leq e^{-2^{r-1}\lambda/n}$ for all $x\in T_r$ as well, and $\sum_{i\in [l+1]}|a^+_{i,r}|+|a^{-}_{i,r}|\leq O(l\cdot w_r)$ for all $r\in [R+1]$ as per Lemma \ref{lem:bucket_sparsification}, we can obtain the following using the same argument as above and redefining $D_{\Lambda}$ as $D_{\Lambda}=diag(\cup_{i\in [1,R]}\{\sum_{i\in [l+1]}|a^+_{i,r}|+|a^{-}_{i,r}|\})$:
\begin{align*}
    \|E^3\|_2 &\leq 500\sqrt{\epsilon}l \cdot \max_{r\in [R+1]}\cdot w_r \cdot (n/2^r)),\\
\|E^3\|_F &\leq 500 \sqrt{\epsilon}l \cdot \sqrt{\sum_{r_1,r_2=0}^R w_{r_1}w_{r_2} (n/ 2^{\max\{r_1,r_2\}})^2}.
\end{align*}
Combining the bounds on $E^2$ and $E^3$ completes the proof.
\end{proof}

We now present the proof of Lemma \ref{lem:x_0_1_errs} by combining these intermediate lemmas.
\begin{proof}[Proof of Lemma \ref{lem:x_0_1_errs}]
Combining the spectral norm bounds from Lemmas \ref{lem:err_last_bucket},\ref{lem:sparsification_errs} and \ref{lem:small_entries_errs} we have
\begin{align*}
    \|H-\wh{H}\|_2 & = \left\|H_{B_{R+1}}-\wh{H}_{B_{R+1}}+\sum_{r=0}^{R}H_{B_r}-H_{T_r}\right\|_2\\
    &= \left\|\sum_{r=0}^{R+1} E_{r}\right\|_2\\
    &\leq \left\|\sum_{r=0}^R E_{r}\right\|_2 + \|E_{R+1}\|_2\\
    &= \left\|E^1+E^2+E^3\right\|_2 + O(\epsilon\log(1/\epsilon)w_{R+1})\\
    &\leq \|E^1\|_2+\|E^2\|_2+\|E^3\|_2+O(\epsilon\log(1/\epsilon)w_{R+1})\\
    &\leq O(\sqrt{\epsilon}\log^2(1/\epsilon)\max_{r\in [R+2]} w_r \cdot n/(2^r)).
\end{align*}
Similarly, combining the Frobenius norm bounds, we have,
\begin{align*}
      \|H-\wh{H}\|_F 
    &\leq \|E^1\|_F+\|E^2\|_F+\|E^3\|_F +\|E_{R+1}\|_F\\
    &\leq O\left(\sqrt{\epsilon}\log^2(1/\epsilon) \left(\sqrt{\sum_{r_1,r_2=0}^R w_{r_1}w_{r_2} (n/ 2^{\max\{r_1,r_2\}})^2} + w_{R+1}\right)\right).
\end{align*}
Replacing $\epsilon$ with $\epsilon^4$ we get that the above becomes
\begin{align*}
    \|H-\wh{H}\|_2& \leq \epsilon\max_{r\in [R+2]} w_r \cdot n/(2^r)),\\
    \|H-\wh{H}\|_F &\leq \epsilon\left(\sqrt{\sum_{r_1,r_2=0}^R w_{r_1}w_{r_2} (n/ 2^{\max\{r_1,r_2\}})^2} + w_{R+1}\right).
\end{align*}
Finally, we conclude the proof of the lemma by bounding the rank of $\wh{H}$. Clearly $\wh{H}_{B_{R+1}}$ is Hankel and $rk(\wh{H}_{B_{R+1}})\leq O(\log(1/\epsilon))$ as it only has $O(\log(1/\epsilon))$ nonzero rows. Moreover, $H_{T_r}$ is Hankel with rank $|T_r|= l = O(\log(1/\epsilon))$ for all $r\in [R+1]$. Thus since $\wh{H} = \wh{H}_{B_{R+1}}+\sum_{r=0}^{R}H_{T_r}$, $\wh{H}$ is Hankel as it is the sum of Hankel matrices and $rk(\wh{H})\leq rk(\wh{H}_{B_{R+1}})+ \sum_{r=0}^{R}rk(H_{T_r}) = O(\log n \log(1/\epsilon))$.
\end{proof}

Equipped with the proofs of Lemmas \ref{lem:bucket_sparsification}, \ref{lem:x_0_1_errs}, and \ref{lem:hankel_spectral_lower_bd}, we now present the proof of Lemma \ref{lem:x_0_1}, which proves the main existence result of Theorem \ref{thm:main_thm} when all $x\in X$ satisfy $|x|\leq 1$.
\begin{proof}[Proof of Lemma \ref{lem:x_0_1}]
Applying Lemma \ref{lem:x_0_1_errs} on $H$ to obtain symmetric Hankel $\wh{H}$ and applying Lemma \ref{lem:hankel_spectral_lower_bd} to lower bound $\|H\|_2$ we have that $\|H-\wh{H}\|_2\leq \epsilon \cdot \max_{r\in [0,R+1]}w_r \cdot n/2^r = O(\epsilon \lambda \|H\|_2) = O(\epsilon\log(1/\epsilon) \|H\|_2)$. Setting $\epsilon$ as $\epsilon^2$ we get that $\|H-\wh{H}\|_2\leq\epsilon \|H\|_2$. Similarly, using the bounds for the Frobenius norm, we get that $\|H-\wh{H}\|_F\leq\epsilon \|H\|_F$.
\end{proof}
Now we consider the case when all $x\in X$ have $|x| > 1$ and reduce it to the case when all $x\in X$ have $|x| \le 1$. We then present the proof of Theorem \ref{thm:main_thm}, which combines our results for these two cases.
\subsection{Analysis of the Case when \texorpdfstring{$|x|>1$}{|x|>1} for all \texorpdfstring{$x\in X$}{x in X}}\label{sec:x_greater_than_1_reduction}

To present our reduction, we need the following definition of the row order reversal operator.
\begin{definition}[Row order reversal operator]\label{def:row_reversal}
Let $R\in \mathbb{R}^{n\times n}$ be the row order reversal operator defined as follows,
\begin{equation*}
R_{i,j} = \begin{cases}
    1 \text{ if } i+j=n\\
    0 \text{ otherwise}
\end{cases}\quad \forall i,j\in [n].
\end{equation*}
Note that $R^T = R$, and for any matrix $A\in \mathbb{R}^{n\times n}$ $\|RA\|_2 = \|A\|_2$ and $\|RA\|_F = \|A\|_F$.
\end{definition}

We now give the formal statement of our reduction.
\begin{lemma}\label{lem:x_greater_than_1_reduction}
For Hankel $H= V_X D_X V_X^T$,where all $x\in X$ satisfy $|x|>1$ and $D_X= diag(\{a_x\}_{x\in X})$, let $X^{-1} = \{1/x\}_{x\in X}$, $D_{X^{-1}}= diag(\{a_x x^{2n-2}\}_{x\in X})$ and $H_{X^{-1}} = V_{X^{-1}}D_{X^{-1}}V_{X^{-1}}^T$. 
Then $H_{X^{-1}}$ is also Hankel, and for the row order reversal operator $R$ (\Cref{def:row_reversal}) we have that $H = R H_{X^{-1}}R^T$.
\end{lemma}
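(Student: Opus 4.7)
The plan is to prove this by direct symbolic manipulation, using the identity $v_n(x) = x^{n-1} R\, v_n(1/x)$ as the core observation. Since $|x| > 1$ for all $x \in X$, each reciprocal $1/x$ is well-defined, and the moment vector $v_n(1/x) = [1,1/x,\ldots,1/x^{n-1}]^T$ makes sense.

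First I would verify that $H_{X^{-1}}$ is Hankel. Expanding using \Cref{clm:ij_entry} style reasoning,
\begin{equation*}
(H_{X^{-1}})_{i,j} \;=\; \sum_{x\in X} a_x\, x^{2n-2}\cdot (1/x)^{i}\cdot (1/x)^{j} \;=\; \sum_{x\in X} a_x\, x^{2n-2-(i+j)},
\end{equation*}
which depends only on $i+j$, hence $H_{X^{-1}}$ is Hankel.

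Next I would establish the key moment-vector identity. Reading off entries of $R\, v_n(1/x)$ (using the standard row-reversing action of $R$ from \Cref{def:row_reversal}), its $i$-th entry equals $(1/x)^{n-1-i} = x^{i-(n-1)}$. Factoring out $x^{-(n-1)}$ gives
\begin{equation*}
R\, v_n(1/x) \;=\; x^{-(n-1)}\, v_n(x), \qquad\text{equivalently}\qquad v_n(x) \;=\; x^{n-1}\, R\, v_n(1/x).
\end{equation*}
Assembling these column identities into a matrix equation and letting $D'' = \operatorname{diag}(\{x^{n-1}\}_{x\in X})$ yields
\begin{equation*}
V_X \;=\; R\, V_{X^{-1}}\, D''.
\end{equation*}

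Finally I would plug this into the Fiedler form of $H$ and collect the diagonal factors:
\begin{align*}
H \;=\; V_X D_X V_X^T \;=\; R\, V_{X^{-1}}\, D''\, D_X\, D''\, V_{X^{-1}}^T\, R^T,
\end{align*}
using $R^T = R$ from \Cref{def:row_reversal}. Since all three of $D'', D_X, D''$ are diagonal and commute, $D'' D_X D'' = \operatorname{diag}(\{a_x\, x^{2(n-1)}\}_{x\in X}) = D_{X^{-1}}$. Therefore
\begin{equation*}
H \;=\; R\, V_{X^{-1}} D_{X^{-1}} V_{X^{-1}}^T\, R^T \;=\; R\, H_{X^{-1}}\, R^T,
\end{equation*}
which is the desired identity.

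There is no real obstacle here; the proof is routine once one spots the factor-out-$x^{n-1}$ identity for the moment vector. The only minor care needed is to keep track of the diagonal scaling $x^{n-1}$ that gets absorbed symmetrically on both sides of $V_{X^{-1}}$ when forming the outer product, which is exactly what produces the exponent $2n-2$ appearing in the definition of $D_{X^{-1}}$.
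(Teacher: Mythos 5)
Your proposal is correct and follows essentially the same route as the paper: the identity $v_n(x) = x^{n-1} R\, v_n(1/x)$, assembled into $V_X = R V_{X^{-1}} D''$ with $D'' = \diag(\{x^{n-1}\}_{x\in X})$, and then substituted into the Fiedler form so that $D'' D_X D'' = D_{X^{-1}}$. The explicit entrywise check that $H_{X^{-1}}$ is Hankel is a small addition the paper leaves implicit, but otherwise the arguments coincide.
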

\begin{proof}
For any $x\in X$, the corresponding column in $V_{X}$ is $v_{n}(x)$ and it can be written as
\begin{align*}
  v_n(x)&= [1,x,\ldots,x^{n-1}]^T \\
  &= x^{n-1}[1/x^{n-1},1/x^{n-2},\ldots,1]^T \\
  &= x^{n-1} R [1,1/x,\ldots,1/x^{n-1}]^T \\&= x^{n-1} R v_n(1/x)  .
\end{align*}
Hence if we let $D=diag(\{x^{n-1}\}_{x\in X})$, we have that $V_{X} = R V_{X^{-1}}D$. Thus,
\begin{equation*}
   H = V_X D_X V_X^T = R V_{X^{-1}}D D_X D V_{X^{-1}}^T R^T = R V_{X^{-1}}D_{X^{-1}} V_{X^{-1}}^T R^T=R H_{X^{-1}} R^T.
\end{equation*}
\end{proof}

Next, we present the proof of Lemma \ref{lem:x_1_above} by combining the above with Lemma \ref{lem:x_0_1}.
\begin{proof}[Proof of Lemma \ref{lem:x_1_above}]
Every $x\in X^{-1}$ satisfies $|x|\leq 1$, and $H_{X^{-1}}$ is Hankel as well. Thus applying Lemma \ref{lem:x_0_1} to $H_{X^{-1}}$ we get that there is a Hankel matrix $\wh{H}_{X^{-1}}$ of rank $O(\log(n)\log(1/\epsilon))$ such that $\|H_{X^{-1}}-\wh{H}_{X^{-1}}\|_2\leq \epsilon\|H_{X^{-1}}\|_2$. Now note that since $R$ is just the row order reversal operator, the rank of $R\wh{H}_{X^{-1}}R^T$ is still $O(\log(n)\log(1/\epsilon))$. Moreover, $\|R(H_{X^{-1}}-\wh{H}_{X^{-1}})R^T\|_2 = \|H_{X^{-1}}-\wh{H}_{X^{-1}}\|_2\leq \epsilon \|H_{X^{-1}}\|_2 = \epsilon \|RH_{X^{-1}}R^T\|_2$. Thus if we let $\wh{H} = R \wh{H}_{X^{-1}} R^T$ which is also Hankel from the definition of $R$, $\|H- \wh{H}\|_2 = \|RH_{X^{-1}}R^T-R\wh{H}_{X^{-1}}R^T\|_2 \leq \epsilon \|R H_{X^{-1}} R^T\|_2 = \epsilon \|H\|_2$. Repeating this argument for the Frobenius norm completes the proof of the lemma.
\end{proof}

Equipped with Lemmas \ref{lem:x_0_1} and \ref{lem:x_1_above}, we are finally prove our main existence result of Theorem \ref{thm:main_thm}, which covers all cases of values in $X$.
\begin{proof}[Proof of Theorem \ref{thm:main_thm}]
Let $\Delta\in \mathbb{R}^{n\times n}$ be a positive definite Hankel matrix scaled so that $\|\Delta\|_F\leq \epsilon/\poly(n)\|H\|_2$. Perturb $H$ by adding $\Delta$ to it, then $H+\Delta$ is positive definite Hankel and it differs from $H$ in spectral and Frobenius norm by at most $ \epsilon/\poly(n)\|H\|_2$. Thus proving the guarantee of the theorem for $H+\Delta$ proves it for $H$ up to rescaling $\epsilon$ by a constant. Thus, w.l.o.g., we can assume $H$ is positive definite, and apply Lemma \ref{lem:vandermonde_decomp} to it to obtain $H=V_XD_XV_X^T$.  For given $X$ consider the partition $X=X_1\cup X_2$ where $X_1 = \{x\in X: |x|\leq 1\}$ and $X_2 = \{x\in X: |x|>1\}$. Thus by Lemma \ref{lem:partition_x}, $H=H_{X_1}+H_{X_2}$ where $H_{X_1} = V_{X_1}D_{X_1}V_{X_1}^T$ and $H_{X_2} = V_{X_2}D_{X_2}V_{X_2}^T$. Since $H_{X_1},H_{X_2}\succeq 0$, $H\succeq H_{X_1}$ and $H\succeq H_{X_2}$. Thus we have that $\|H\|_2\geq \|H_{X_1}\|_2$ and $\|H\|_2\geq \|H_{X_2}\|_2$, which implies $\|H\|_2 \geq  (\|H_{X_1}\|_2+\|H_{X_2}\|_2)/2$. Now let $\wh{H}_{X_1}$ be as per Lemma \ref{lem:x_0_1} applied to $H_{X_1}$ and $\wh{H}_{X_2}$ be as per Lemma \ref{lem:x_1_above} applied to $H_{X_2}$. Let $\wh{H} = \wh{H}_{X_1}+\wh{H}_{X_2}$. Then $\wh{H}$ is Hankel and its rank is at most $O(\log n \log(1/\epsilon))$. We also have
\begin{equation*}
    \|H-\wh{H}\|_2\leq \|H_{X_1}-\wh{H}_{X_1}\|_2+\|H_{X_2}-\wh{H}_{X_2}\|_2\leq \epsilon(\|H_{X_1}\|_2+\|H_{X_2}\|_2)\leq 2\epsilon \|H\|_2.
\end{equation*}
Rescaling $\epsilon$ by $1/2$ and repeating the above argument identically for the Frobenius norm completes the proof of the Theorem.
\end{proof}

\section{Lower Bound on the  Approximate Rank of Hankel Matrices}\label{sec:lower_bd}
We now give the proof of the lower bound result of Theorem \ref{thm:epsilon_rank_lower_bound}. The proof will be based on two cases, depending on the relation between $\epsilon$ and $n$.
\begin{proof}[Proof of Theorem \ref{thm:epsilon_rank_lower_bound}]
The first case is when $\epsilon\geq 1/n$. In this case, we will show the existence of a Hankel matrix with rank roughly $\log n$ and all eigenvalues very close to $1$. In the other case, when $\epsilon\leq 1/n$, we will argue that the $n$ by $n$ Hilbert matrix has at least roughly $\log(1/\epsilon)$ eigenvalues above $\epsilon$ times the largest eigenvalue.

\smallskip

\noindent\textbf{Case when $\epsilon\geq 1/n$.} In this case, we construct a PSD Hankel matrix with top eigenvalue roughly $1$ and $\Omega(\log n)$ eigenvalues that are $\Omega(1)$.  Let $R = c'\log n$ for some small enough constant $c'>0$ and consider the set $X=\{e^{-2^{Ci}/n}: \text{integer }i\in [1,R]\}$ for constant $C=1/10c'$, corresponding moment vectors $v_n(e^{-2^{Ci}/n})$ (see  \Cref{def:moment_vector}) and weights $w_i= 1/\|v_n(e^{-2^{Ci}/n})\|_2^2$ for all $i\in [1,R]$. Consider the diagonal matrix $D_X\in \mathbb{R}^{R \times R}$ defined as $(D_X)_{i,i} = w_i$ for all integer $i\in [1,R]$ and the PSD Hankel matrix $H=V_XD_XV_X^T\in \mathbb{R}^{n\times n}$ of rank $R$. Since $D_X$ has positive entries on the diagonal, we can express $H$ as the outer product $H=(V_X\sqrt{D_X})(V_X\sqrt{D_X})^T$, hence its eigenvalues are identical to those of $(V_X\sqrt{D_X})^T V_X\sqrt{D_X} = \sqrt{D_X}V_X^T V_X\sqrt{D_X} \in \mathbb{R}^{R\times R}$. Next we have that by definition the $i^{th}$ diagonal entry of $\sqrt{D_X}V_X^T V_X\sqrt{D_X}$ for any integer $i\in [1,R]$ are equal to $w_i v_n(e^{-2^{Ci}/n})^T v_n(e^{-2^{Ci}/n}) = 1 $. Moreover, for any integer $i\in [1,R]$ we have the following bound on $w_i$:
\begin{align*}
    w_i &= \frac{1}{\|v_n(e^{-2^{Ci}/n})\|_2^2}\\
    &= \frac{1}{\sum_{l=0}^{n-1}e^{-2^{Ci+1}l/n}} \\
    & = \frac{1-e^{-2^{Ci+1}/n}}{1-e^{-2^{Ci+1}}}\\
    & \leq 10\cdot \frac{2^{Ci}}{n}\numberthis \label{eq:bound_w_i},
\end{align*}
where the last inequality followed from the fact that $2^{Ci}/n\leq 2^{CR}/n =o(1)$ for $R=c'\log n$ and $C=1/10c'$, thus $1-e^{-2^{Ci+1}/n} = (1\pm o(1))\cdot 2^{Ci+1}/n$. 

We now argue that $\sqrt{D_X}V_X^T V_X\sqrt{D_X}$ is very close the $R\times R$ identity matrix $I_R$. Formally, let $E = \sqrt{D_X}V_X^T V_X\sqrt{D_X} - I_R$. Our first goal is to bound $\|E\|_2$. Then, using Weyl's inequality (see Section 1.3 in \cite{tao2012topics}), we have that $|\lambda_i(\sqrt{D_X}V_X^T V_X\sqrt{D_X}) - \lambda_i(I_R)|\leq \|E\|_2$ for all $i\in [1,R]$, and since the $R$ eigenvalues of $I_R$ are all $1$, we have that $\lambda_i(\sqrt{D_X}V_X^T V_X\sqrt{D_X}) = 1\pm \|E\|_2$ for all $i\in [1,R]$. Now we bound $\|E\|_2$ using Lemma \ref{lem:spectral_norm_bound}. For this, note that since $E$ is the difference of two symmetric matrices, $E$ is symmetric. Thus we have that $\|E\|_{\infty} = \|E\|_1$, and from Lemma \ref{lem:spectral_norm_bound} we have
\begin{equation*}
    \|E\|_2\leq \|E\|_{\infty} = \max_{i\in [R]}\sum_{j\in [R]:j\neq i}|(\sqrt{D_X}V_X^T V_X\sqrt{D_X})_{i,j}|.
\end{equation*}
Now, for any $i,j\in [1,R]$ with $i\neq j$ we have the following bound on $(\sqrt{D_X}V_X^T V_X\sqrt{D_X})_{i,j}$:
\begin{align*}
    (\sqrt{D_X}V_X^T V_X\sqrt{D_X})_{i,j} & = \sqrt{w_i w_j} \cdot v_n(e^{-2^{Ci}/n})^T v_n(e^{-2^{Ci}/n})\\
    &\leq 100 \frac{\sqrt{2}^{C(i+j)}}{n}\cdot \sum_{l=0}^{n-1} e^{-(2^{Ci}+2^{Cj})l/n}\quad \text{( from  \eqref{eq:bound_w_i})}\\
    &\leq 100 \frac{\sqrt{2}^{C(i+j)}}{n}\cdot \frac{1}{1-e^{-(2^{Ci}+2^{Cj})/n}}\\
    &\leq 100 \frac{\sqrt{2}^{C(i+j)}}{n}\cdot \frac{n}{2^{Ci}+2^{Cj}}\\
    &=100 \frac{\sqrt{2}^{C(i+j)}}{2^{Ci}+2^{Cj}}.
\end{align*}
This implies the following bound on $\sum_{j\in [1,R]:j\neq i}|(\sqrt{D_X}V_X^T V_X\sqrt{D_X})_{i,j}|$ for any $i\in [R]$:
\begin{align*}
   \sum_{j\in [1,R]:j\neq i}|(\sqrt{D_X}V_X^T V_X\sqrt{D_X})_{i,j}| &\leq  100 \sum_{j\in [1,R]:j\neq i}\frac{\sqrt{2}^{C(i+j)}}{2^{Ci}+2^{Cj}}\\
   &\leq 100\sum_{j=-\infty}^{i-1}\frac{\sqrt{2}^{C(i+j)}}{2^{Ci}+2^{Cj}}+\sum_{j=i+1}^{\infty}\frac{\sqrt{2}^{C(i+j)}}{2^{Ci}+2^{Cj}}\\
   &\leq 100 \sum_{j=-\infty}^{i-1}\sqrt{2}^{C(j-i)}+ 100 \sum_{j=i+1}^{\infty}\sqrt{2}^{C(i-j)}\\
   & = 100\cdot \frac{2\cdot \sqrt{2}^{-C}}{1-\sqrt{2}^{-C}}\leq 2^{-C/10}.
\end{align*}
This implies that since $\|E\|_2\leq  \max_{i\in [1,R]}\sum_{j\in [1,R]:j\neq i}|(\sqrt{D_X}V_X^T V_X\sqrt{D_X})_{i,j}|$, $\|E\|_2\leq 2^{-C/10}$. Hence we have that $\lambda_{i}(\sqrt{D_X}V_X^T V_X\sqrt{D_X}) = 1\pm 2^{-C/10}$, for all $i\in [1,R]$. Thus, for any $\epsilon<(1-2^{-C/10})/(1+2^{-C/10})$ we have that the number of eigenvalues of $H$ above $\epsilon \lambda_1(H)$ are at least $R=\Omega(\log n) = \Omega(\log n + \log(1/\epsilon))$ since $\log(1/\epsilon)\leq \log n$. Setting $C$ to a large enough constant, the above argument works for any $\epsilon\leq 1-c$ for any desired constant $c>0$. Moreover, for any $k\leq R$ for any such $\epsilon\leq 1-c$  if  $\|H-H_k\|_2\leq \epsilon \|H\|_2$, $k$ has to be $\Omega(R)=\Omega(\log n) = \Omega(\log n+\log(1/\epsilon))$. For any such $k$ we also have that $\|H-H_k\|_F = (1\pm 2^{-C/10})\sqrt{(R-k)}$ and $\|H\|_F = (1\pm 2^{-C/10})\sqrt{R}$. Thus for any $\epsilon\leq 1-c$ if  $\|H-H_k\|_F\leq \epsilon \|H\|_F$, $k$ has to be $\Omega(R)=\Omega(\log n) = \Omega(\log n+\log(1/\epsilon))$. The converse of these two statements finishes the proof when $\epsilon\geq 1/n$.

\smallskip

\noindent\textbf{Case when $\epsilon\leq 1/n$.} In this regime, consider the $n$ by $n$ Hilbert matrix $H_n$, defined as $H_{i,j} = 1/(i+j+1)$ for all $i,j\in [n]$. $H_n$ is a positive definite Hankel matrix - it is clearly Hankel by definition and its positive definiteness follows from the fact that $(H_n)_{i,j} = 1/(i+j+1) = \int_{t=0}^{1}t^{i+j-2} dt$. Thus for any $x\in \mathbb{R}^{n}$ we have that $x^T H_nx = \sum_{i,j\in [n]}x_ix_j \int_{t=0}^{1}t^{i+j-2} dt =\int_{t=0}^1 (\sum_{i\in [n]}x_i t^{i-1})^2 dt>0$. It is known that $\|H_n\|_2\in [1,\pi]$ (Example 3.3 in \cite{Beckermann:2000}), and moreover, the smallest singular value $\lambda_n(H_n)$ satisfies the following lower bound (Equation 3.35 in \cite{wilf1970finite}),
\begin{equation}\label{eq:hilbert_lower_bd}
    \lambda_n(H_n) = \Theta(\sqrt{n}\cdot (1+\sqrt{2})^{-4n}).
\end{equation}
Our first goal will be to use these facts to first show that $\lambda_k(H_n)\geq \epsilon \|H_n\|_2$ for $k = \Omega(\log(1/\epsilon))$. 

Let $k=\log(1/(C'\epsilon \log n))$ for some large enough constant $C'>0$, then since $\epsilon\leq 1/n$ we have that $k=\Omega(\log(1/\epsilon)+\log(n))$. Let $H_k$ be the $k$ by $k$ Hilbert matrix, which is also a principal submatrix of $H_n$. Thus, by the interlacing property for eigenvalues (Theorem 8.17 in \cite{golub2013matrix}) and applying the lower bound of \eqref{eq:hilbert_lower_bd} to $H_k$ we have 
\begin{align*}
    \lambda_k(H_n)&\geq \lambda_k(H_k)\\
    &= \Omega(\sqrt{k}\cdot (1+\sqrt{2})^{-4k})\\
    &= \Omega(\pi \cdot 2^{-k})\\
    &=\Omega(\|H_n\|_2\cdot 2^{-k})\\
    &= C'\cdot\epsilon\cdot \log n \|H_n\|_2.
\end{align*}
Thus for the best rank $k$ approximation $(H_n)_k$, $\|H_n-(H_n)_k\|_2 \geq C'\cdot\epsilon\cdot \log n \|H_n\|_2 \geq \epsilon\|H_n\|_2$. Moreover, since $H_n$ is positive definite Hankel, thus applying Lemma \ref{lem:hankel_spectral_to_frob} we have
\begin{align*}
    \|H_n-(H_n)_k\|_F&\geq \|H_n-(H_n)_k\|_2\geq C'\cdot\epsilon\cdot \log n \|H_n\|_2 \\&\geq \epsilon\cdot \log n \cdot (\|H_n\|_F/\sqrt{\log n})\geq \epsilon\|H_n\|_F.
\end{align*}
Thus, again in this case there exists a constant $C$ such that if $k\leq C(\log n+\log(1/\epsilon))$, then $\|H_n-(H_n)_k\|_F\geq \epsilon\|H_n\|_F$ and $\|H_n-(H_n)_k\|_2\geq \epsilon\|H_n\|_2$.  
\end{proof}


\section{Sublinear Time Algorithm for Hankel Low Rank Approximation.}\label{sec:alg_main}
We next present the proof of the sublinear time robust low-rank approximation algorithm of Theorem \ref{thm:const_factor_approx_main}, which we restate below.
\begin{reptheorem}{thm:const_factor_approx_main}[Robust Hankel low rank approximation algorithm]
For any PSD Hankel $H\in \mathbb{R}^{n\times n}$ suppose we are given entrywise access to $H+E$ for arbitrary noise matrix $E\in \mathbb{R}^{n\times n}$. Then Algorithm \ref{alg:noisy_hankel_recovery} in time $\poly(\log n,\log(1/\epsilon))$ returns a rank $O(\log n \log(1/\epsilon))$ rank Hankel matrix $\wh{H}$ (in a factored representation) such that $\|H-\wh{H}\|_F \leq O(\|E\|_F) + \epsilon \|H\|_F$ holds with probability at least $0.99$.
\end{reptheorem}

Our starting point is the following lemma, which presents the results and properties needed from the existence proof results in \Cref{sec:existence_main} for the algorithm. This lemma allows us to set up the task of recovering a good low-rank approximation to PSD Hankel $H$ plus adversarial noise $E$ as a matrix regression task,

\begin{lemma}\label{lem:existence_of_vandermonde_good_low_rank_approximation}
Let $T=\cup_{i\in [R+1]}T_r$ for $\{T_r\}_{r\in [R+1]}$ defined as per Lemma \ref{lem:bucket_sparsification} in \Cref{sec:bucket_sparsification}, and let $R$ be the row order reversal operator (\cref{def:row_reversal}). Let $V\in \mathbb{R}^{n\times 2|T|}$ be the matrix defined as $[V_T; RV_T]$ obtained by concatenating the columns of the Vandermonde matrix $V_T$ corresponding to the set $T$ and $R V_T$. Then there exists a diagonal matrix $D^*\in \mathbb{R}^{2|T|\times 2|T|}$ and a Hankel matrix $H^*\in \mathbb{R}^{n\times n}$ satisfying $(H^*)_{i,j}=0$ for all $i,j\in [n]$ such that $i+j\in [C\log(1/\epsilon),2n-C\log(1/\epsilon)]$ for some constant $C>0$ and,
\begin{equation*}
    \|VD^*V^T + H^*  -H\|_F \leq \epsilon\|H\|_F.
\end{equation*}
Moreover, the sum of absolute entries of $D^*$ is bounded, satisfying 
\begin{equation*}
    \sum_{i\in [2|T|]}|D^*_{i,i}| = O(\log(1/\epsilon)\log(n) \|H\|_F).
\end{equation*}
\end{lemma}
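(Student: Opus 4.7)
The plan is to reassemble the construction from the proof of \Cref{thm:main_thm} -- via \Cref{lem:x_0_1}, \Cref{lem:x_0_1_errs}, and the $|x|>1$ reduction of \Cref{lem:x_greater_than_1_reduction} -- into a single ``diagonal-scaled Vandermonde outer product plus sparse Hankel corner'' form. First, I perturb $H$ by an arbitrarily small positive definite Hankel matrix so that it becomes positive definite, absorb the loss into $\epsilon$, and apply the Fiedler factorization of \Cref{lem:vandermonde_decomp} to write $H = V_X D_X V_X^T$ with $a_x>0$ for all $x\in X$. I then split $X = X_1\cup X_2$ into the $|x|\le 1$ and $|x|>1$ parts, so that \Cref{lem:partition_x} gives $H = H_{X_1}+H_{X_2}$ with both summands PSD and hence $\|H_{X_j}\|_F\le \|H\|_F$ for $j\in\{1,2\}$.

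Next, I invoke the constructive existence argument separately on each piece. For $H_{X_1}$, \Cref{lem:x_0_1_errs} produces an approximation $\wh H^{(1)}_{B_{R+1}} + \sum_{r\in [R+1]} V_{T_r} D^{(1)}_{T_r} V_{T_r}^T$ with Frobenius error at most $\epsilon\|H_{X_1}\|_F\le \epsilon\|H\|_F$, where $\wh H^{(1)}_{B_{R+1}}$ is Hankel and supported only on anti-diagonals with $i+j\le t_{R+1} = O(\log(1/\epsilon))$; stacking the per-bucket Vandermonde blocks side by side, $\sum_r V_{T_r} D^{(1)}_{T_r} V_{T_r}^T = V_T D^{(1)} V_T^T$ for a single $|T|\times|T|$ diagonal $D^{(1)}$. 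For $H_{X_2}$, I would first apply \Cref{lem:x_greater_than_1_reduction} to obtain $H_{X_2} = R H_{X_2^{-1}} R^T$ with $H_{X_2^{-1}}$ built from nodes in $[-1,1]$, run the same construction on $H_{X_2^{-1}}$ (the Chebyshev node set $T$ is deterministic and identical in both invocations), and then conjugate by $R$. Since row-reversal preserves Frobenius norm and sends $V_T$ to $RV_T$, this produces an approximation $R\wh H^{(2)}_{B_{R+1}}R^T + (RV_T) D^{(2)} (RV_T)^T$ for $H_{X_2}$ with error at most $\epsilon\|H\|_F$.

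The two pieces combine cleanly: setting $V=[V_T,\,RV_T]\in\mathbb{R}^{n\times 2|T|}$ and the block-diagonal (hence diagonal) matrix $D^* = \diag(D^{(1)},D^{(2)})$ yields $VD^*V^T = V_T D^{(1)} V_T^T + (RV_T) D^{(2)} (RV_T)^T$ with no cross terms, and setting $H^* = \wh H^{(1)}_{B_{R+1}} + R\wh H^{(2)}_{B_{R+1}}R^T$ produces a Hankel matrix (since conjugation by $R$ preserves the Hankel property) supported only on $i+j\le O(\log(1/\epsilon))$ or $i+j\ge 2n-O(\log(1/\epsilon))$, which is the claimed support for a suitable constant $C$. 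Summing the per-piece error bounds and rescaling $\epsilon$ by an absolute constant at the start then yields $\|VD^*V^T+H^*-H\|_F\le \epsilon\|H\|_F$.

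For the diagonal $\ell_1$ bound, \Cref{lem:bucket_sparsification} guarantees $\sum_i (|a^+_{i,r}|+|a^-_{i,r}|) = O(\log(1/\epsilon)\,w_r)$ per bucket, so summing gives $\sum_i|D^{(1)}_{i,i}| = O(\log(1/\epsilon)\cdot\sum_{x\in X_1}a_x) = O(\log(1/\epsilon)\cdot(H_{X_1})_{0,0})$ and similarly $\sum_i|D^{(2)}_{i,i}| = O(\log(1/\epsilon)\cdot(H_{X_2^{-1}})_{0,0}) = O(\log(1/\epsilon)\cdot(H_{X_2})_{n-1,n-1})$, each at most $O(\log(1/\epsilon)\|H\|_F)$ since a single entry of a PSD matrix is bounded by its Frobenius norm. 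This is stronger than the stated $O(\log n\log(1/\epsilon)\|H\|_F)$ bound. The only real subtlety to check is that the two corner supports of $\wh H^{(1)}_{B_{R+1}}$ and $R\wh H^{(2)}_{B_{R+1}}R^T$ do not overlap, which holds whenever $n = \Omega(\log(1/\epsilon))$; in the opposite trivial regime, $|T| = O(\log n \log(1/\epsilon))$ already exceeds $n$ and the lemma is vacuous.
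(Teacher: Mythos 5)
Your proposal is correct and follows essentially the same route as the paper's proof: perturb to positive definite, apply the Fiedler factorization, split into the $|x|\le 1$ and $|x|>1$ parts, apply \Cref{lem:x_0_1_errs} (combined with the norm lower bound of \Cref{lem:hankel_spectral_lower_bd}) to the first part and the row-reversal reduction of \Cref{lem:x_greater_than_1_reduction} to the second, and assemble $V=[V_T;RV_T]$, the block-diagonal $D^*$, and the corner-supported Hankel $H^*$ exactly as the paper does. The one place you genuinely diverge is the $\ell_1$ bound on $D^*$: the paper bounds $\sum_{x}a_x=\sum_r w_r \le R\max_r w_r\,(n/2^r\lambda)$ and then invokes the spectral lower bound of \Cref{lem:hankel_spectral_lower_bd}, which is what introduces the extra $\log n$ factor, whereas your observation that $\sum_{x\in X_1}a_x=(H_{X_1})_{0,0}\le H_{0,0}\le\|H\|_F$ (and analogously for the reversed piece) is simpler and yields the strictly stronger bound $O(\log(1/\epsilon)\|H\|_F)$, which of course still implies the stated one.
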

\begin{proof}
Again, using an arbitrarily small positive definite Hankel perturbation, we can assume $H$ to be positive definite. Consider the application of Lemma \ref{lem:vandermonde_decomp} to $H$ and let $H=V_X D_X V_X^T$ be its Fiedler/Vandermonde decomposition. Now let $X_1 = \{x\in X: |x|\leq 1\}, X_2 = X\setminus X_1 = \{x\in X: |x|>1\}$ and $H_1 = V_{X_1}D_{X_1}V_{X_1}^T$ and $H_2 = H-H_1 = V_{X_2}D_{X_2}V_{X_2}^T$. 

First, consider the application of Definition \ref{def:bucketing} to bucket the elements of $X_1$ as $X_1=\cup_{r\in [R+2]}B_r$ with corresponding weights $\{w_r\}_{r\in [R+2]}$ and $H_{B_r} = V_{B_r}D_{B_r}V_{B_r}^T$ for all $r\in [R+2]$, and Lemmas \ref{lem:x_0_1_errs} and \ref{lem:hankel_spectral_lower_bd} to $H_1$ to obtain Hankel matrices $V_{T} D_{T} V_{T}^T, \wh{H}_{B_{R+1}}$ and the lower bound on $\|H_1\|_F$.  From the guarantees of Lemma \ref{lem:x_0_1_errs} and \ref{lem:hankel_spectral_lower_bd} we have that $\|V_{T} D_{T} V_{T}^T +\wh{H}_{B_{R+1}}- H_1\|_F\leq O(\epsilon \log(1/\epsilon)\|H_1\|_F)$. Moreover, $ (\wh{H}_{B_{R+1}})_{i,j}\neq 0$ only for $i,j\in [n]$ such that $i+j\leq C\log(1/\epsilon)$ for some constant $C>0$. Setting $\epsilon$ to $\epsilon^2$ we have
\begin{equation*}
    \|V_{T} D_{T} V_{T}^T +\wh{H}_{B_{R+1}}- H_1\|_F\leq \epsilon \|H_1\|_F.
\end{equation*}
Moreover, by Lemma \ref{lem:bucket_sparsification} we have that $\sum_{i\in [|T|]}|{D_T}_{i,i}| = \sum_{r\in [R+1]}\sum_{i,r}|a^+_{i,r}|+|a^-_{i,r}| = O(\log(1/\epsilon)\cdot \sum_{x\in X}a_x)$. It further satisfies
\begin{align*}
    \sum_{i\in [|T|]}|{D_T}_{i,i}| &= O(\log(1/\epsilon)\cdot \sum_{x\in X}a_x)\\
    &=O(\log(1/\epsilon)\cdot \sum_{r\in [R+2]}w_r)\\
    &= O(\log(1/\epsilon) \cdot R\max_{r\in [R+2]}w_r(n/2^r\lambda))\\
    &= O(\log(1/\epsilon)\log(n) \|H_1\|_2)\numberthis \label{eq:d_t_l1_bound}.
\end{align*}

Now consider $X_2$ and $H_2 = V_{X_2}D_{X_2}V_{X_2}^T$, and consider the row order reversal operator $R$ (\Cref{def:row_reversal}) and apply Lemma \ref{lem:x_greater_than_1_reduction} to $H_2$ to obtain $H_{X_2^{-1}}$ satisfying $H_2 = R H_{X_2^{-1}} R^T$. We repeat the argument of the previous paragraph for $H_{X_2^{-1}}$ since all $x\in X_2^{-1}$ satisfy $|x|\leq 1$ to obtain Hankel matrices $V_{T}D'_TV_{T}^T$ and $\wh{H}_{B'_{R+1}}$ such that $\|V_{T}D'_TV_{T}^T+\wh{H}_{B'_{R+1}} - H_{X_2^{-1}}\|_F\leq \epsilon\|H_{X_2}^{-1}\|_F$. Moreover, $ (\wh{H}_{B'_{R+1}})_{i,j}\neq 0$ only for $i,j\in [n]$ such that $i+j\leq C\log(1/\epsilon)$ for some constant $C>0$. This also implies
\begin{align*}
  \|RV_{T}D'_TV_{T}^TR^T + R\wh{H}_{B'_{R+1}}R^T - H_2\|_F &= \|V_{T}D'_{T}V_{T}^T + \wh{H}_{B'_{R+1}} - RH_2R^T\|_F\\ &\leq \epsilon\|H_{X_2}^{-1}\|_F = \epsilon\|H_2\|_F.
\end{align*}
Finally, similar to \eqref{eq:d_t_l1_bound}, we have the following bound on $\sum_{i\in [|T|]}|{D'_T}_{i,i}|$:
\begin{equation}\label{eq:d_prime_t_l1_bound}
    \sum_{i\in [|T|]}|{D'_T}_{i,i}| = O(\log(1/\epsilon)\log(n) \|H_2\|_2).
\end{equation}
Thus if we let $V = [V_T;R V_T]\in \mathbb{R}^{n\times 2|T|}$ and diagonal $D^*\in \mathbb{R}^{2|T|\times 2|T|}$ obtained by concatenating the diagonals of $D_T$ and $D'_T$, and using Lemma \ref{lem:partition_x} to obtain $\|H\|_F\geq \max\{\|H_1\|_F,\|H_2\|_F\}$ (since $H\succeq H_1$ and $H\succeq H_2)$, then we have
\begin{align*}
    \|VD^*V^T + \wh{H}_{B_{R+1}}+  R\wh{H}_{B'_{R+1}}R^T  -H\|_F &= \|VD^*V^T + \wh{H}_{B_{R+1}}+  R\wh{H}_{B'_{R+1}}R^T  -H_1+H_2\|_F \\
    &\leq \epsilon (\|H_1\|_F+\|H_2\|_F)\leq 2\epsilon\|H\|_F.
\end{align*}
Rescale $\epsilon$ by $2$ so the above upper bound is $\epsilon \|H\|_2$, and set $H^* =\wh{H}_{B_{R+1}}+  R\wh{H}_{B'_{R+1}}R^T$ which satisfies the claimed guarantee in the statement of the lemma by its definition. Moreover, we have the following bound on $\sum_{i\in [2|T|]}|D^*_{i,i}|$:
\begin{align*}
    \sum_{i\in [2|T|]}|D^*_{i,i}|&= \sum_{i\in [|T|]}|{D_T}_{i,i}|+\sum_{i\in [|T|]}|{D'_T}_{i,i}|
    \\&= O(\log(1/\epsilon)\log(n)(\|H_1\|_2+\|H_2\|_2)
    \\&= O(\log(1/\epsilon)\log(n) \|H\|_2)\numberthis \label{eq:d_star_l1_bound},
\end{align*}
where in the last inequality we used the $\|H\|_2\geq \max\{\|H_1\|_2,\|H_2\|_2\}$, as again implied by Lemma \ref{lem:partition_x}. Finally, using $\|H\|_2\leq \|H\|_F$, this completes the proof of the lemma.
\end{proof}

The above lemma says that there exists a good Hankel low rank approximation to $H$ of the form $VD^*V^T+ H^*$ where $V$ is fixed and diagonal $D^*$, with bounded entries, and Hankel $H^*$, with only $O(\log(1/\epsilon))$ nonzero rows, is unknown. The first part of our algorithm will estimate $H^*$, which can be done efficiently since it has few nonzero rows. We will then mask out those entries of the input where $H^*$ is nonzero, and use a ridge leverage score sampling approach to estimate $D^*$. This section is organized as follows: In \Cref{sec:vandermonde_ridge_lev_score_bd} we focus on proving good upper bounds on the ridge leverage scores of masked versions of real Vandermonde matrices, which can then be used for $V$ as well. Then, in  \Cref{sec:sublinear_time_recovery} we present Algorithm \ref{alg:noisy_hankel_recovery} and proof of its guarantees, giving Theorem \ref{thm:const_factor_approx_main}.
\subsection{Ridge Leverage Score Bounds for Vandermonde Matrices}\label{sec:vandermonde_ridge_lev_score_bd}

In this section, we prove the following lemma, which upper bounds the ridge leverage scores of Vandermonde matrices where the first and last few rows have been zeroed out. As discussed above, we need to have bounds for such matrices with some rows zeroed out since we will be using them to solve a masked version of a matrix regression problem. Throughout this section, $\gamma>0$ will be the ridge parameter. We now present notation that will be used throughout this section.
\begin{definition}\label{def:lev_score_setup}
For any $X\subset \mathbb{R}$ of size $|X|=k\leq n$, consider the Vandermonde matrix $V_X\in \mathbb{R}^{n\times k}$ (\Cref{def:vandermonde_matrix}). For any $n_0\leq O(\log(1/\epsilon))$, let $V_X[n_0:n-n_0,:]$ be the set of all except the first and last $n_0$ rows of $V_X$. We will consider diagonal $D = diag(\{\max\{1,x^{n-1}\}\}_{x\in X})$ to rescale columns of $V_X[n_0:n-n_0,:]$ as $V_X[n_0:n-n_0,:]D^{-1}$ so all columns have entries at most $1$.
\end{definition}
With this notation, we now present the universal upper bounds on the ridge leverage scores of $V_X[n_0:n-n_0,:]D^{-1}$.
\begin{lemma}\label{lem:vandermonde_ridge_lev_score_bd}
Consider the setup of Definition \ref{def:lev_score_setup}. Let $r'$ be such that $n/2^{r'}\leq n_0\leq n/2^{r'-1}$ and $r_0 = \min\{r',\log(n/1000\log^2(n/\gamma))\} $. There exists closed form expressions for upper bounds on the $\gamma$-ridge leverage scores $\wt{\tau}_{i,\gamma}\geq \tau_{i,\gamma}(V_X[n_0:n-n_0,:]D^{-1})$ as follows:
\begin{equation*}
    \wt{\tau}_{i,\gamma}= \begin{cases}
        4000\log(n/\gamma)^2/(n/2^r)\hspace{0.1cm} \forall i\in [n/2^r,n/2^{r-1}], r\in [2,r_0],\\
        4000\log(n/\gamma)^2/(n/2^r)\hspace{0.1cm} \forall i\in [n-n/2^{r-1},n-n/2^r], r\in [2,r_0],\\
        1\quad \text{otherwise}. 
    \end{cases}
\end{equation*}
Moreover, their sum satisfies
\begin{equation*}
\sum_{i\in [n_0,n-n_0]}\wt{\tau}_{i,\gamma} \leq 8000(\log^3(n/\gamma)+n_0).
\end{equation*}
\end{lemma}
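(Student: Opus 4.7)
The plan is to bound the ridge leverage score $\tau_{i^*,\gamma}(A) = \max_y (a_{i^*}^T y)^2/(\|Ay\|_2^2 + \gamma\|y\|_2^2)$ for $A = V_X[n_0:n-n_0,:]D^{-1}$ directly, by showing that for every $y$ and every $i^* \in [n/2^r, n/2^{r-1}]$ with $r \in [2, r_0]$, the value $f(i^*) := a_{i^*}^T y = \sum_{x \in X} y_x \widehat{x}^{\,i^*}$ (where $\widehat{x}^{\,i}$ is the rescaled column entry) satisfies
$$f(i^*)^2 \le O\!\left(\frac{\log^2(n/\gamma)}{n/2^r}\right) \cdot \left(\|Ay\|_2^2 + \gamma\|y\|_2^2\right).$$
First I will reduce to the case $|x| \le 1$ for all $x \in X$ using the row-reversal identity in \Cref{lem:x_greater_than_1_reduction}: columns of $V_X D^{-1}$ coming from $|x| > 1$ are, after row reversal, moment vectors of $1/x$ with $|1/x| \le 1$. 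The same symmetry reduces analysis of the ``bottom'' buckets (indices in $[n-n/2^{r-1}, n-n/2^r]$) to the ``top'' bucket case.

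The main step is a local polynomial approximation of $f$ on a window $I_r$ of length $L_r = \Theta(n/2^r)$ containing $i^*$. Partition the nodes by decay rate: ``slow'' nodes vary by at most a constant factor across $I_r$, while ``fast'' nodes decay by a factor of $\operatorname{poly}(\gamma/n)$ across $I_r$. Using Taylor expansion in the index $j$ as in \Cref{lem:bucket_sparsification}, the slow-node contribution to $f|_{I_r}$ is pointwise approximated to within $O(\gamma \|y\|_2)$ by a polynomial of degree $d = O(\log(n/\gamma))$. The fast-node contribution is bounded pointwise on $I_r$ by $O(\gamma \|y\|_2)$ via the cutoff argument (combined with a Chebyshev-sparsification step using \Cref{lem:chebyshev_sparsification} to avoid a $\sqrt{|X|}$ factor from Cauchy-Schwarz). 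Writing $f|_{I_r} = p + \eta$ with $p$ a degree-$d$ polynomial and $\|\eta\|_\infty \le O(\gamma\|y\|_2)$, I then apply Markov brothers' inequality (\Cref{lem:markov_brother}) in its standard Christoffel-function form, which yields $|p(i^*)|^2 \le O(d^2/L_r) \sum_{j\in I_r} |p(j)|^2 \le O(d^2/L_r)\|Ay\|_2^2$, giving the desired bound on $f(i^*)^2$.

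For the sum $\sum_i \widetilde{\tau}_{i,\gamma}$, each of the $O(r_0) = O(\log(n/\gamma))$ top or bottom buckets contributes $\approx (n/2^r)\cdot O(\log^2(n/\gamma)/(n/2^r)) = O(\log^2(n/\gamma))$, totalling $O(\log^3(n/\gamma))$; and the at most $O(n_0 + \log^2(n/\gamma))$ indices in the ``otherwise'' region $[n_0, n/2^{r_0}] \cup [n-n/2^{r_0}, n-n_0]$ contribute at most that much under the trivial bound $\widetilde{\tau}_{i,\gamma} \le 1$.

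The main obstacle is the polynomial-approximation step: one must ensure that the ``fast-decay'' contribution to $f(i^*)$ is genuinely chargeable to the $\gamma\|y\|_2^2$ budget with coefficient $O(\log^2(n/\gamma)/L_r)$, uniformly over $y$ and without any factor of $\sqrt{|X|}$ from a naive Cauchy-Schwarz. This is where the combination of bucket cutoffs (analogous to \Cref{def:bucketing}) with Chebyshev-node sparsification (\Cref{lem:chebyshev_sparsification}) must be used: replace the fast part of the sum by a low-dimensional polynomial representation supported on $O(\log(n/\gamma))$ Chebyshev-like exponentials, and then charge its contribution at $i^*$ against the ridge budget using correlation-decay bookkeeping in the spirit of \Cref{sec:error_correlation_analysis}.
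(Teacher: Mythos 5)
Your overall architecture matches the paper's: restrict attention to the dyadic row window $[n/2^r, n/2^{r-1}]$ (using monotonicity of ridge leverage scores under row restriction), show that on this window each coordinate of $V_{X,r}y$ equals a degree-$O(\log(n/\gamma))$ polynomial up to additive error $\gamma\|y\|_2$, apply the discrete Markov--brothers/Christoffel bound (\Cref{lem:polynomial_lev_score_bd}), and handle the bottom buckets by row reversal. However, there are two genuine problems. First, your claim that ``the slow-node contribution to $f|_{I_r}$ is pointwise approximated \ldots by a polynomial of degree $d=O(\log(n/\gamma))$'' is false when $X$ contains negative nodes: for $x<0$ the contribution is $y_x(-1)^i|x|^i$, which alternates sign at every index and is not close to any low-degree polynomial on a window of length $\Theta(n/2^r)\gg d$. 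The paper resolves this by writing the window values as $p^+(i)+(-1)^ip^-(i)$ and then applying the Christoffel bound separately to the even-index and odd-index subsequences (via the polynomials $p_{\mathrm{even}},p_{\mathrm{odd}}$ in the variable $i/2$); some version of this even/odd split is unavoidable and is missing from your argument.

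Second, the ``main obstacle'' you identify --- avoiding a $\sqrt{|X|}$ factor when charging the fast-decaying nodes to the ridge term --- is not actually an obstacle, and your proposed fix does not work as described. The paper simply bounds the fast contribution by $\gamma\|y\|_1\le\gamma\sqrt{n}\,\|y\|_2$ and then replaces $\gamma$ by $\gamma/\mathrm{poly}(n)$; since the final leverage score bound depends on $\gamma$ only through $\log(n/\gamma)$, this polynomial rescaling costs only a constant factor. By contrast, \Cref{lem:chebyshev_sparsification} controls the $\ell_1$ mass of the sparsified coefficients, so routing the fast part through Chebyshev sparsification and ``correlation-decay bookkeeping'' still leaves you comparing an $\ell_1$-type quantity against the $\ell_2$-type budget $\gamma\|y\|_2^2$ --- exactly the mismatch you were trying to avoid --- and it is not clear the step closes. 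Relatedly, your opening reduction ``to the case $|x|\le 1$ by row-reversing the columns with $|x|>1$'' is not a valid operation on the matrix (you cannot reverse the rows of only a subset of columns while computing $\|Ay\|_2$); the paper instead shows that after the $D^{-1}$ rescaling, columns with $|x|\ge e^{4\lambda/n}$ are pointwise negligible on the top half of the rows, while columns with $1<|x|<e^{4\lambda/n}$ are simply included in the Taylor-expanded ``slow'' part (the expansion works for $y_x=\ln(1/|x|)$ of either sign). With these repairs your proof would coincide with the paper's.
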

To prove the above lemma, we will need the following bound on the leverage scores of low-degree polynomials. This is the discrete version of Lemma 4.13 in \cite{meyer2023near}.
\begin{lemma}[Discrete version of Lemma 4.13 in \cite{meyer2023near}]\label{lem:polynomial_lev_score_bd}
For any polynomial $p(t)$ of degree $d$ defined over integers $t$ in an interval $[a,b]$ for integers $a\leq b$ satisfying $L=b-a\geq 100d^2$, we have the following,
\begin{equation*}
    |p(t)|^2\leq 100d^2\left(\frac{(\sum_{t\in [a,b]}p(t)^2)}{L}\right)\quad \forall t\in [a,b].
\end{equation*}
\end{lemma}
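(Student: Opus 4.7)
\textbf{Proof plan for Lemma \ref{lem:polynomial_lev_score_bd}.} The plan is to discretize the standard continuous argument that an $L^\infty$ bound on a low-degree polynomial is controlled by its $L^2$ norm, via the Markov brothers' inequality, and then count integer grid points in the region where $|p|$ is large.

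First, I would fix an arbitrary $t^\ast \in [a,b]$ and let $M = |p(t^\ast)|$. Without loss of generality we may rescale $p$ by $1/M$ and henceforth assume $M = 1$; the final inequality is homogeneous of degree $2$ in $p$ so this is lossless. Next I would apply an affine change of variables $s = 2(t - (a+b)/2)/L$ mapping $[a,b]$ to $[-1,1]$ and set $q(s) = p(t)$. Then $q$ is a polynomial of degree at most $d$ with $\max_{s \in [-1,1]} |q(s)| \le \max_{t \in [a,b]} |p(t)|$; after further rescaling so that $\max_{s \in [-1,1]} |q(s)| = 1$ (which can only decrease $p(t^\ast)$, so we may assume the max on $[-1,1]$ equals some value $\ge 1$; for clarity I will instead argue directly about $p$), Lemma \ref{lem:markov_brother} (Markov brothers' inequality) gives $\max_{s \in [-1,1]} |q'(s)| \le d^2 \cdot \max|q|$, and chain rule yields
\begin{equation*}
\max_{t \in [a,b]} |p'(t)| \;\le\; \frac{2 d^2}{L} \cdot \max_{t \in [a,b]} |p(t)|.
\end{equation*}

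Next I would use this derivative bound to show that $|p|$ stays comparable to its maximum on a sizable subinterval. Let $t_{\max} \in [a,b]$ be a point attaining $\max_{t \in [a,b]}|p(t)| =: P$. Set $\delta = L/(4d^2)$. Since $L \ge 100 d^2$, we have $\delta \ge 25$, so the interval $I$ of length $\delta$ formed by extending $t_{\max}$ by $\delta$ in whichever direction keeps $I$ inside $[a,b]$ is well-defined and lies in $[a,b]$. For every $t \in I$, the mean value theorem and the derivative bound give $|p(t) - p(t_{\max})| \le (2d^2/L)\cdot \delta \cdot P = P/2$, hence $|p(t)| \ge P/2$ on $I$.

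Finally I would count integer points in $I$: since $|I| = \delta \ge 25$, the number of integers in $I$ is at least $\delta - 1 \ge \delta/2 = L/(8d^2)$. Therefore
\begin{equation*}
\sum_{t \in [a,b]} p(t)^2 \;\ge\; \sum_{t \in I \cap \mathbb{Z}} p(t)^2 \;\ge\; \frac{L}{8 d^2}\cdot \frac{P^2}{4} \;=\; \frac{L\, P^2}{32 d^2}.
\end{equation*}
Rearranging gives $P^2 \le 32 d^2 \cdot \left(\sum_{t \in [a,b]} p(t)^2\right)/L$, which in particular yields $|p(t^\ast)|^2 \le 100 d^2 \cdot \left(\sum_{t \in [a,b]} p(t)^2\right)/L$ for the arbitrary $t^\ast$, as claimed. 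The only mildly subtle step is ensuring the half-length-$\delta$ interval $I$ around $t_{\max}$ fits inside $[a,b]$; this is handled by choosing the direction of extension based on which side of $[a,b]$ the maximizer lies closer to, and is guaranteed by $L \ge 100 d^2 \ge 2\delta$. All other steps are mechanical once Markov brothers' is invoked.
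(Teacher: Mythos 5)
Your proof is correct and follows essentially the same route as the paper: both invoke the Markov brothers' inequality to show $|p|$ stays within a constant factor of its maximum on a subinterval of length $\Omega(L/d^2)$, then count the integer sample points there to lower-bound $\sum_t p(t)^2$. Your version, working directly on $[a,b]$ via the chain rule and mean value theorem, is in fact a bit cleaner than the paper's, which first passes to a rescaled grid on $[-1,1]$ and applies Markov twice to move between the continuous maximizer and nearby grid points.
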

\begin{proof}
Let $L= b-a$ and let $I$ be the uniform grid of $L$ points with width $2/L$ in $[-1,1]$. That is, $I = \{-1,-1+2/L,\ldots,1-2/L,1\}$. Let $q(.)$ be the polynomial defined over $t\in I$ obtained by scaling and shifting the polynomial $p(.)$ defined over the interval $[a,b]$. Fix a $t\in [a,b]$ at which we want to prove the leverage score bound, and let $t_I\in I$ be the corresponding point in $I$. Without loss of generality, assume that $q(t_I)=1$ by rescaling. Let $C = \max_{t\in I}|q(t)|$, thus $C\geq 1$ after the rescaling described previously, and $s^* = \argmax_{t\in [-1,1]}|q(t)|$. By Markov brother's inequality (Lemma \ref{lem:markov_brother}) we have that $|q'(s^*)|\leq C d^2$, then for $s^*_I\in I$ to be the nearest point in $I$ to $s^*$ we have that since $|s^*-s^*_I|\leq 2/L$, $q(s^*_I)\geq q(s^*) - Cd^2\cdot 2/L = C - Cd^2\cdot 2/L\geq 0.99 C$, since $L\geq 100d^2$. Now again by Markov brother's inequality, we have that $q(s^*_I+s)\geq q(s^*_I)- Cd^2s \geq 0.99C - C d^2 s$ for all $|s|\leq 1/d^2$. Let $I^* = I \cap [s^*_I-1/d^2,s^*_I+1/d^2]$, then $|I^*|\geq L/d^2$. Thus, we have
\begin{align*}
    \sum_{t\in I} |q(t)|^2 &\geq \sum_{t\in I^*}|q(t)|^2\\
    & \geq \sum_{t\in I^*}(0.99C- Cd^2 |t-s^*_I|)^2\\
    &\geq \sum_{\Delta=2/L}^{1/2d^2} (0.99C- Cd^2 \Delta)^2 \\
    &\geq \sum_{\Delta=2/L}^{1/2d^2} C^2/8\\
    & = (C^2/8)\cdot (L/4d^2-1)\\
    & \geq (C^2/8)\cdot L/8d^2 \quad (\text{ since $L>100d^2$})\\
    &\geq 0.01L/d^2 \quad (\text{ since $C\geq 1$}).
\end{align*}
This implies that $|q(t)^2|\leq 100d^2 (\sum_{t\in I} |q(t)|^2/L)$ for all $t\in I$. Thus as per the definition of $q(t)$ and $I$ in terms of $L$ and $p(t)$, this statement is equivalent to $|p(t)|^2 \leq 100d^2 (\sum_{t\in [a,b]} |p(t)|^2/L$.
\end{proof}
Equipped with the above, we now present the proof of Lemma \ref{lem:vandermonde_ridge_lev_score_bd}.
\begin{proof}[Proof of Lemma \ref{lem:vandermonde_ridge_lev_score_bd}]
We will index the rows of $V_{X}[n_0:n-n_0,]D^{-1}$ with the interval $[n_0,n-n_0]$. We will bound the ridge leverage scores in two parts. First we consider bounding them for the first half of the rows $i\in [n_0,n/2]$, and then for $i\in [n/2,n-n-n_0]$. Consider the following geometric grouping of the first $n/2$ rows of $V_X D^{-1}$ -- Let $V_{X,r} = V_X[n/2^r:n/2^{r-1},:]D^{-1}\in \mathbb{R}^{n/2^r\times k}$ be the matrix containing the subset of rows of $V_XD^{-1}$ from $[n/2^r:n/2^{r-1}]$ for all $r\in [2,r_0]$. Now consider a fixed $r\in [2,r_0]$. Since for any $v\in \mathbb{R}^{k}$ we have that 
\begin{equation*}
 \|V_X[n_0:n-n_0] D^{-1}v\|_2^2 \geq \|V_{X,r}v\|_2^2,   
\end{equation*}
it implies that $\tau_{i,\gamma}(V_X[n_0:n-n_0] D^{-1})\leq \tau_{i,\gamma}(V_{X,r})$. Let $\lambda =2\log(1/\gamma)$. Now for any $x\in X$ such that $|x|\leq e^{-2^r\lambda/n}$, we have that $|x|^i \leq \gamma$ for all $i\in [n/2^r, n/2^{r-1}]$. Moreover, for any $x\in X$ with $|x|\geq 1$ we have that $D_{x,x} = x^{n-1}$. Hence for all $x$ such that $|x|\geq e^{4\lambda/n}$ $|x^iD_{x,x}^{-1}|= |x|^{-(n-i-1)}\leq \gamma$ for all $i\in [n/2]$. This means that only those $x\in X$ with $e^{-2^r\lambda/n}\leq |x|\leq e^{4\lambda/n}$ effectively contribute to $V_{X,r}v$, as the rest are at most $\gamma$. This is now formalized as follows -- for any $v\in \mathbb{R}^k$ such that $\|v\|_2=1$ and any $i\in [n/2^r, n/2^{r-1}]$ we have
\begin{align*}
(V_{X,r}v)_i  = \sum_{x\in X}v_x D_{x,x}^{-1} x^{i} =&\sum_{x\in X: |x|\in  [e^{-2^r\lambda/n},e^{4\lambda/n}]}v_x D_{x,x}^{-1} x^{i} + \sum_{x\in X: |x|\leq e^{-2^r\lambda/n}}v_x x^{i} \\&+ \sum_{x\in X: |x|\geq e^{4\lambda/n}}v_x x^{-(n-1)} x^{i}.
\end{align*}
 To ease notation, we absorb the scaling of $D_{x,x}^{-1}$ into $v_x$. Thus, we have 
\begin{equation}\label{eq:row_bound_1}
    |(V_{X,r}v)_i - \sum_{\substack{x\in X:\\ |x|\in [e^{-2^r\lambda/n},e^{4\lambda/n}]}}v_x x^{i}|\leq \gamma \|v\|_1\leq \gamma \sqrt{n} \|v\|_2.
\end{equation}
Rescaling $\gamma$ by $\sqrt{n}$, the above bound becomes $\gamma$. Next, our goal is to show that the expression $\sum_{x\in X:\\ |x|\in [e^{-2^r\lambda/n},e^{4\lambda/n}]}v_x x^{i}$ can be well approximated using a \emph{low-degree polynomial} by truncating its Taylor series, using the fact that over $i\in [n/2^r,n/2^{r-1}]$ it behaves smoothly. Let $y_x = \ln(1/|x|)$ for all $x\in X$. Then we have, for $l=10\log(n/\gamma)$,
\begin{align*}
    \sum_{\substack{x\in X:\\ |x|\in [e^{-2^r\lambda/n},e^{4\lambda/n}]}}v_x x^{i} &= \sum_{\substack{x\in X: x\geq 0\\  y_x\in [-4\lambda/n,2^r\lambda/n]}}v_x e^{-y_x i} + (-1)^i \sum_{\substack{x\in X: x\leq 0\\  y_x\in [-4\lambda/n,2^r\lambda/n]}}v_x e^{-y_x i} \\
    &= \sum_{\substack{x\in X: x\geq 0\\  y_x\in [-4\lambda/n,2^r\lambda/n]}}v_x \sum_{m=0}^{l} (-y_x i)^m/m! + (-1)^i \sum_{\substack{x\in X: x\leq 0\\  y_x\in [-4\lambda/n,2^r\lambda/n]}}v_x \sum_{m=0}^{l} (-y_x i)^m/m!\\
    &+ \sum_{\substack{x\in X: x\geq 0\\  y_x\in [-4\lambda/n,2^r\lambda/n]}}v_x \sum_{m=l}^{\infty} (-y_x i)^m/m! + (-1)^i \sum_{\substack{x\in X: x\leq 0\\  y_x\in [-4\lambda/n,2^r\lambda/n]}}v_x \sum_{m=l}^{\infty} (-y_x i)^m/m! \\
    &= p^+(i)+(-1)^i p^-(i)\\
    &+ \sum_{\substack{x\in X: x\geq 0\\  y_x\in [-4\lambda/n,2^r\lambda/n]}}v_x \sum_{m=l}^{\infty} (-y_x i)^m/m! + (-1)^i \sum_{\substack{x\in X: x\leq 0\\  y_x\in [-4\lambda/n,2^r\lambda/n]}}v_x \sum_{m=l}^{\infty} (-y_x i)^m/m! \numberthis \label{eq:taylor_approx_error_lev_score_bd},
\end{align*}
where $p^+(i)$ and $p^-(i)$ are the following degree $l=10\log(n/\gamma)$ polynomials defined over $i\in [n/2^r,n/2^{r-1}]$:
\begin{align*}
p^+(i) &= \sum_{\substack{x\in X: x\geq 0\\  y_x\in [-4\lambda/n,2^r\lambda/n]}}v_x \sum_{m=0}^{l} (-y_x i)^m/m!, \\p^-(i) &= \sum_{\substack{x\in X: x\leq 0\\  y_x\in [-4\lambda/n,2^r\lambda/n]}}v_x \sum_{m=0}^{l} (-y_x i)^m/m!.
\end{align*}
Now for $i\in [n/2^r,n/2^{r-1}]$, $|y_x i|\leq 2\lambda = 4\log(n/\gamma)$. Now using Stirling's approximation for $m!$ when $m\geq l$, we have that $\log(m!) = m\log(m) - m\log(e)+\log(\sqrt{2\pi \log(m)})+O(1/m)\geq m\log(m/e)$. This implies that $m! \geq (m/e)^m$. Thus, we have the following for the second term in \eqref{eq:taylor_approx_error_lev_score_bd} for $l=10\log(n/\gamma)$:
\begin{align*}
    |\sum_{\substack{x\in X: x\geq 0\\  y_x\in [-4\lambda/n,2^r\lambda/n]}}v_x \sum_{m=l}^{\infty} (-y_x i)^m/m! &+ (-1)^i \sum_{\substack{x\in X: x\leq 0\\  y_x\in [-4\lambda/n,2^r\lambda/n]}}v_x \sum_{m=l}^{\infty} (-y_x i)^m/m!| \\&\leq \sum_{m=l}^{\infty}\sum_{\substack{x\in X:\\  y_x\in [-4\lambda/n,2^r\lambda/n]}} |v_x| |y_x t|^m/m!\\
    &\leq \sum_{m=l}^{\infty}\sum_{\substack{x\in X:\\  y_x\in [-4\lambda/n,2^r\lambda/n]}} |v_x|  (4\log(1/\gamma)/l)^m\\
    &\leq\sum_{\substack{x\in X:\\  y_x\in [-4\lambda/n,2^r\lambda/n]}}  |v_x|  \cdot\sum_{m=l}^{\infty}0.1^m\leq (\gamma/n)\|v\|_1\leq \gamma \|v\|_2.
\end{align*}
 Note since the scaling of $D^{-1}$ was absorbed into $v$, $\|v\|_2$ is not necessarily $1$ anymore. Thus overall we have that for all $i\in [n/2^r,n/2^{r-1}]$,  
\begin{equation}\label{eq:row_bound_2}
    |\sum_{\substack{x\in X:\\ |x|\in [e^{-2^r\lambda/n},e^{4\lambda/n}]}}v_x x^{i}-p^+(i)-(-1)^i p^{-}(i)|\leq \gamma\|v\|_2.
\end{equation}
Combining the bounds from \eqref{eq:row_bound_1} and \eqref{eq:row_bound_2} and rescaling $\gamma$ by $2$, we have, for all $i\in [n/2^r,n/2^{r-1}]$,
\begin{equation}\label{eq:row_bound_3}
    |(V_{X,r}v)_i - p^{+}(i) - (-1)^i p^{-}(i)|\leq \gamma\|v\|_2,
\end{equation}
which formalizes the fact that any vector in the column span of $V_{X,r}$ can be approximated using low-degree polynomials.

For every $i\in [n/2^{r+1},n/2^r]$ let $p_{even}(i) = p^+(2i)+p^-(2i)$ and $p_{odd}(i) = p^+(2i+1)-p^-(2i+1)$ be polynomials of degree also $l=10\log(n/\gamma)$. Then for every even $i\in [n/2^r,n/2^{r-1}]$ $p^{+}(i) - (-1)^i p^{-}(i) = p_{even}(i/2)$ and for every odd $i\in [n/2^r,n/2^{r-1}]$ $p^{+}(i) - (-1)^i p^{-}(i) = p_{odd}((i-1)/2)$. Thus we get that for all even $i\in [n/2^r,n/2^{r-1}]$ as per \eqref{eq:row_bound_3},
\begin{equation}\label{eq:even_indices}
|(V_{X,r}v)_i - p_{even}(i/2)|\leq \gamma\|v\|_2.
\end{equation}
This implies the following after applying Lemma \ref{lem:polynomial_lev_score_bd} for polynomial $p_{even}(t)$ of degree $l=10\log(n/\gamma)$ defined over interval $[n/2^{r+1},n/2^r]$ of length $L= n/2^r - n/2^{r+1} = n/2^{r+1}$:
\begin{align*}
    |(V_{X,r}v)_i|^2 &\leq 2|p_{even}(i/2)|^2 + 2\gamma^2\|v\|_2^2 \quad  \text{(Using \eqref{eq:even_indices})}\\
    &\leq  10l^2 \left(\sum_{j=n/2^r:j \text{ even}}^{n/2^{r-1}}|p_{even}(j/2)|^2/L\right) + 2\gamma^2\|v\|_2^2\\
    &\leq 20l^2\left(\sum_{i=n/2^r:i \text{ even}}^{n/2^{r-1}}|(V_{X,r}v)_i|^2/L\right) + 20\gamma^2l^2\|v\|_2^2 \quad \text{(Using \eqref{eq:even_indices})}\\
    &\leq 40l^2 \|V_{X,r}v\|_2^2/(n/2^r) + 20\gamma^2l^2\|v\|_2^2,
\end{align*}
as long as the degree  $l=10\log(n/\gamma)$ and length of the interval $n/2^{r+1}$ satisfies $L\geq 100l^2$. That is,
\begin{align*}
&n/2^{r+1}\geq 100(10\log(n/\gamma))^2\\
&\implies r\leq \log(n/1000\log^2(n/\gamma)).
\end{align*}
Similarly applying the argument to all odd $i\in [n/2^r,n/2^{r-1}]$ and polynomial $p_{odd}(i)$, we have that for all odd $i\in [n/2^r,n/2^{r-1}]$,
\begin{equation*}
 |(V_{X,r}v)_i|^2\leq  40l^2 \|V_{X,r}v\|_2^2/(n/2^r) + 20\gamma^2l^2\|v\|_2^2.   
\end{equation*}
Thus, the above holds for all $i\in [n/2^r,n/2^{r-1}]$. Rescaling $\gamma$ to $\gamma^4/n$, we have that $|(V_{X,r}v)_i|^2\leq 4000\log(n/\gamma)^2 \|V_{X,r}v\|_2^2/(n/2^r) +\gamma\|v\|_2^2/(n/2^r)$ for all $i\in [n/2^r,n/2^{r-1}]$. Thus we have
\begin{align*}
    \frac{|(V_{X,r}v)_i|^2}{\|V_{X,r}v\|_2^2 +\gamma\|v\|_2^2}&\leq \frac{4000\log(n/\gamma)^2 \|V_{X,r}v\|_2^2/(n/2^r) +\gamma\|v\|_2^2/(n/2^r)}{\|V_{X,r}v\|_2^2 +\gamma\|v\|_2^2}\\&\leq 4000\log(n/\gamma)^2/(n/2^r).
\end{align*}
This implies that, for all $i\in [n/2^r,n/2^{r-1}]$ and for all $r\in [2,\log(n/1000\log^2(n/\gamma))]$,
\begin{equation*}
    \tau_{i,\gamma}(V_X[n_0:n-n_0,:]D^{-1})\leq \tau_{i,\gamma}(V_{X,r})\leq 4000\log(n/\gamma)^2/(n/2^r).
\end{equation*}
Let $r'$ be such that $n/2^{r'}\leq n_0\leq n/2^{r'-1}$ and $r_0 = \min\{r',\log(n/1000\log^2(n/\gamma))\} $. Then we set $\wt{\tau}_{i,\gamma}$ as 
\begin{equation*}
    \wt{\tau}_{i,\gamma}= \begin{cases}
        4000\log(n/\gamma)^2/(n/2^r)\hspace{0.1cm} \forall i\in [n/2^r,n/2^{r-1}], r\in [2,r_0],\\
        1\quad \forall i\in [n_0,n/2^{r_0}]. \numberthis\label{eq:tau_tilde_first_n/2_bound}
    \end{cases}
\end{equation*}

Now, for the case when $i\in [n/2,n-n_0]$, the above argument applies identically -- if we consider the row order reversal operator $R\in \mathbb{R}^{n-n_0\times n-n_0}$ (\Cref{def:row_reversal}), then $\tau_{i,\gamma}(V_X[n_0:n-n_0,:]D^{-1}) = \tau_{(n-n_0)-i,\gamma}(RV_X[n_0:n-n_0,:]D^{-1})$. Furthermore by definition the matrix $RV_X[n_0:n-n_0,:]D^{-1}$ can be viewed as $V_{X^{-1}}[n_0:n-n_0,:]D_{X^{-1}}^{-1}$ where $X^{-1}= \{1/x:x\in X, x\neq 0\}$ contains the inverse of all nodes in $X$, $V_{X^{-1}}$ is the corresponding Vandermonde matrix and $D_{X^{-1}} = diag(\{\max\{1,x^{n-1}\}\}_{x\in X^{-1}})$. Thus for $i\in [n/2,n-n_0]$, $(n-n_0)-i\in [n_0,n/2]$ and the above argument can be repeated for $RV_X[n_0:n-n_0,:]D^{-1}$ directly. Thus, we have
\begin{equation*}
    \wt{\tau}_{i,\gamma}= \begin{cases}
        4000\log(n/\gamma)^2/(n/2^r)\hspace{0.1cm} \forall i\in [n/2^r,n/2^{r-1}], r\in [2,r_0],\\
        4000\log(n/\gamma)^2/(n/2^r)\hspace{0.1cm} \forall i\in [n-n/2^{r-1},n-n/2^r], r\in [2,r_0],\\
        1\quad \text{otherwise}. 
    \end{cases}
\end{equation*}
This implies
\begin{align*}
    \sum_{i\in [n_0,n-n_0]}\wt{\tau}_{i,\gamma}
    &\leq 2\sum_{r\in [2,r_0]}\sum_{i\in [n/2^r,n/2^{r-1}]}4000\log(n/\gamma)^2/(n/2^r)\\
        &+ 2\sum_{i\in [n_0,n/2^{r_0}]}1\\
        &\leq 2\sum_{\substack{r\in [2,\log(n/1000\log^2(n/\gamma))]}}n/2^r\cdot 4000\log(n/\gamma)^2/(n/2^r) \\&+ 2n_0\\
        &\leq 8000 (\log^3(n/\gamma)+n_0).
\end{align*}
\end{proof}

Since we will need sampling guarantees for Vandermonde matrices where the first and last few rows have been masked, we introduce the following notation for masking such rows.
\begin{definition}[Row mask]\label{def:row_mask}
Let $M_i\in \mathbb{R}^{n\times |X|}$ defined as $M_i[:i,:] = 0$ and $1$ otherwise for all $i\in [n]$, i.e., $M_i$ zeroes out the first $i$ rows of $V_{X}$ after the Hadamard product $M_i\cdot V_{X}$.
\end{definition}

Equipped with the above definition, we now show the following ridge leverage score-based spectral approximation lemma for masked Vandermonde matrices using the upper bounds on ridge leverage scores from Lemma \ref{lem:vandermonde_ridge_lev_score_bd} and the sampling guarantee of Lemma \ref{lem:ridge_lev_score_sampling_guarantee}.
\begin{lemma}\label{lem:vandermonde_ridge_lev_score_sampling_guarantee}
Given $\delta\in (0,1/8)$, consider the setup of Definition \ref{def:lev_score_setup}. Consider the $\gamma$-ridge leverage score upper bounds $\{\wt{\tau}_{i,\gamma}\}_{i\in [n_0,n-n_0]}$ from Lemma \ref{lem:vandermonde_ridge_lev_score_bd}. Define $\wt{\tau}_{i,\gamma}=1$ for all $i\in [n_0]\cup [n-n_0,n]$, and let $S\in \mathbb{R}^{s\times n}$ be the sampling matrix of Lemma \ref{lem:ridge_lev_score_sampling_guarantee} instantiated with $\{\wt{\tau}_{i,\gamma}\}_{i\in [n]}$. 

Let mask $M_i$ for any $i\in [n]$ (\Cref{def:row_mask}) and $R$ be the row order reversal operator (\Cref{def:row_reversal}). Then $s=8000(n_0+\log^3(n/\gamma))\log(n/\delta)$, and with probability at least $1-\delta$ for all $y\in \mathbb{R}^{|X|}$ and all $i\in [n_0]$,
\begin{align*}
    \|SM_i\circ V_XD^{-1}y\|_2^2+\gamma \|y\|_2^2 &=\left(1\pm \frac{1}{4}\right) \left(\|M_i\circ V_XD^{-1}y\|_2^2 +\gamma \|y\|_2^2\right),\\
    \|S(RM_i)\circ V_XD^{-1}y\|_2^2+\gamma \|y\|_2^2 &=\left(1\pm \frac{1}{4}\right)\left(\|(RM_i)\circ V_XD^{-1}y\|_2^2 +\gamma \|y\|_2^2\right).
\end{align*}
Furthermore ,the sampling matrix $S$ satisfies $|S_{i,j}|\leq \poly(n)$ for all $i\in [s],j\in [n]$, and can be generated and multiplied with $M_i\circ V_XD^{-1}$ for any $i\in [n_0]$ in $|X|\cdot \polylog(n,1/\gamma)$ time.
\begin{proof}
For any $i\in [n_0]$ consider $M_i\circ V_X D^{-1}$. Then since $V_X[n_0:n-n_0,:] D^{-1}$ contains a subset of the rows of $M_i\circ V_X D^{-1}$ as long as $i\leq n_0$, we have that $\tau_{i,\gamma}(M_i\circ V_X D^{-1})\leq\tau_{i,\gamma}(V_X[n_0:n-n_0,:] D^{-1})\leq \wt{\tau}_{i,\gamma}$ for all $i\in [n_0,n-n_0]$. Moreover, $\tau_{i,\gamma}(M_i\circ V_X D^{-1})\leq 1$ holds trivially for all $i\in [n_0]\cup [n-n_0]$. Hence the guarantee of Lemma \ref{lem:ridge_lev_score_sampling_guarantee} holds when applied to $M_i\circ V_X D^{-1}$ for sampling matrix $S$ generated using upper bounds $\{\wt{\tau}_{i,\gamma}\}_{i\in [n]}$ for any fixed $i\in [n_0]$. Similarly, it also holds when applied to $RM_i\circ V_X D^{-1}$ as it just zeroes out the last $n_0$ rather than the first $n_0$ rows. Taking a union bound for the guarantee to hold for all $i\in [n_0]$ simultaneously, since $S$ is generated using sampling probabilities independent of $i$, over all $n_0$ possible values of $i$. Finally, note that we can sample and generate $S$ quickly using Lemma C.3 of \cite{bakshisubquadratic}. Since for any $m,n\in [n]$ the sum $\sum_{i=m}^n\wt{\tau}_{i,\gamma}$ can be computed exactly in constant time using their explicit expressions as per Lemma \ref{lem:vandermonde_ridge_lev_score_bd}, we can generate one sample using Lemma C.3 of \cite{bakshisubquadratic} in $\polylog(n,1/\gamma)$ time and thus construct $SM_i\circ V_XD^{-1}$ in time $|X|\cdot \polylog(n,1/\gamma)$. Finally note that $\min_{i\in [n]}\wt{\tau}_{i,\gamma}\geq 1/n$ and $s\geq 1$ as per Lemma \ref{lem:vandermonde_ridge_lev_score_bd}. Thus, each entry of $S$ is at most $\poly(n)$ by its definition as per Lemma \ref{lem:ridge_lev_score_sampling_guarantee}. This completes the proof of the lemma.
\end{proof}
\end{lemma}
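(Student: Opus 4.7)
The plan is to deduce the lemma by combining the pointwise ridge leverage score upper bounds from Lemma \ref{lem:vandermonde_ridge_lev_score_bd} with the generic ridge leverage score sampling guarantee of Lemma \ref{lem:ridge_lev_score_sampling_guarantee}, plus a union bound over the $n_0$ choices of mask $M_i$. The key observation that drives the argument is that for any $i \in [n_0]$, the matrix $M_i \circ V_X D^{-1}$ contains all the rows of $V_X[n_0:n-n_0,:]D^{-1}$ as a subset (since $i \le n_0$). By monotonicity of $\|\cdot\|_2^2$ under adding rows, the denominator $\|M_i \circ V_X D^{-1} y\|_2^2 + \gamma \|y\|_2^2$ in the ridge leverage score definition is at least $\|V_X[n_0:n-n_0,:]D^{-1}y\|_2^2 + \gamma \|y\|_2^2$, so the ridge leverage score of each row of $M_i \circ V_X D^{-1}$ that lies in $[n_0,n-n_0]$ is at most the corresponding ridge leverage score of $V_X[n_0:n-n_0,:]D^{-1}$, which is bounded by $\wt{\tau}_{i,\gamma}$ from Lemma \ref{lem:vandermonde_ridge_lev_score_bd}. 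For rows in $[n_0] \cup [n-n_0, n]$ the trivial bound $\tau_{i,\gamma} \le 1 = \wt{\tau}_{i,\gamma}$ is valid since they are either zeroed out by the mask (leverage score $0$) or bounded by the normalization $D^{-1}$ (entries at most $1$).

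Next, I would invoke Lemma \ref{lem:ridge_lev_score_sampling_guarantee} with the $\wt{\tau}_{i,\gamma}$ as leverage score overestimates. The required sample complexity is $s = C \log(n/\delta) \sum_{i\in[n]} \wt{\tau}_{i,\gamma}$, and the sum is bounded by $8000(\log^3(n/\gamma) + n_0)$ by the explicit calculation at the end of Lemma \ref{lem:vandermonde_ridge_lev_score_bd} together with the additional $2n_0$ contribution from the trivial bounds on the masked rows. Since $S$ is constructed using sampling probabilities that depend only on $n,\gamma,n_0$ (not on the specific $i$), a single draw of $S$ gives a valid spectral approximation with probability $1-\delta/n$ for each fixed $i \in [n_0]$, and a union bound over $i \in [n_0]$ (and the two sides $M_i$ and $RM_i$) preserves the guarantee at the expense of absorbing a $\log n$ factor into the sample complexity, which is already accounted for in the stated $s$. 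The argument for $RM_i \circ V_X D^{-1}$ is symmetric: reversing rows just swaps the roles of the first and last blocks, and the $\wt{\tau}_{i,\gamma}$ bounds are symmetric between the two ends by Lemma \ref{lem:vandermonde_ridge_lev_score_bd}.

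Finally, for the efficiency and bounded-entry claims, I would use that the $\wt{\tau}_{i,\gamma}$ have a simple piecewise closed form, so prefix sums $\sum_{i=m}^{n} \wt{\tau}_{i,\gamma}$ can be evaluated in $O(1)$ time per query. This lets one implement each row sample in $\polylog(n,1/\gamma)$ time via an inverse-CDF style procedure (e.g., Lemma C.3 of \cite{bakshisubquadratic}); drawing $s$ samples and multiplying by $V_X D^{-1}$ restricted to these rows takes $|X| \cdot \polylog(n,1/\gamma)$ time. The entrywise bound $|S_{i,j}| \le \poly(n)$ follows because the nonzero entries of $S$ are rescalings of the form $1/\sqrt{s p_i}$ with $p_i \propto \wt{\tau}_{i,\gamma} \ge 1/n$, so $p_i$ is bounded below by $1/\poly(n)$. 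The main obstacle I anticipate is not any single step in isolation but rather the careful bookkeeping around the fact that the same sampling matrix $S$ must serve \emph{all} masks $M_i$ simultaneously for $i \in [n_0]$ (and the reflected versions $RM_i$); I would handle this by noting that the sampling distribution truly is $i$-independent, so the union bound is clean and only introduces a $\log n$ factor already absorbed into $s$.
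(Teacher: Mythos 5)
Your proposal is correct and follows essentially the same route as the paper's proof: bounding the ridge leverage scores of the masked matrix by those of $V_X[n_0:n-n_0,:]D^{-1}$ via row-subset monotonicity, using the trivial bound of $1$ on the first and last $n_0$ rows, invoking Lemma \ref{lem:ridge_lev_score_sampling_guarantee}, and union-bounding over the $i$-independent sampling matrix for all masks $M_i$ and $RM_i$. The efficiency and entrywise-bound arguments (closed-form prefix sums plus $\wt{\tau}_{i,\gamma}\ge 1/n$) also match the paper's.
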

In the next section, we present the proof of Theorem \ref{thm:const_factor_approx_main}.
\subsection{Noisy Hankel recovery}\label{sec:sublinear_time_recovery}
Using the tools developed in the previous section, our main goal in this section is to present the proof of Theorem \ref{thm:const_factor_approx_main}.
\begin{reptheorem}{thm:const_factor_approx_main}
For any PSD Hankel $H\in \mathbb{R}^{n\times n}$ suppose we are given entrywise access to $H+E$ for arbitrary noise matrix $E\in \mathbb{R}^{n\times n}$. Then Algorithm \ref{alg:noisy_hankel_recovery}, in time $\poly(\log n,\log(1/\epsilon))$, finds a $k=O(\log n \log(1/\epsilon))$ rank Hankel matrix $\wh{H}$ (in a compressed representation) such that $\|H-\wh{H}\|_F \leq O(\|E\|_F) + \epsilon \|H\|_F$ holds with probability at least $0.9$. Moreover, $\wh{H} = VD'V^T + \wh{H}'$ where $D'$ is a rank $k$ diagonal matrix and $\wh{H}'$ is a Hankel matrix with only $O(k)$ nonzero entries. 
\end{reptheorem}

We will build up to the proof of Theorem \ref{thm:const_factor_approx_main} using various intermediate lemmas. Consider $T, V, D^*, H^*$ as per Lemma \ref{lem:existence_of_vandermonde_good_low_rank_approximation}, let $B=H+E = VD^*V^T + H^* + E + E_H$ where $E_H=H- (VD^*V^T + H^*)$, and let $R$ be the row order reversal operator (\Cref{def:row_reversal}). Recall as per Lemma \ref{lem:existence_of_vandermonde_good_low_rank_approximation} Hankel $H^*$ only has the first and last $n_0=O(\log(1/\epsilon))$ many nonzero anti-diagonals, In Algorithm \ref{alg:noisy_hankel_recovery} we estimate $H^*$ and diagonal $D^*$ in $VD^*V^T$ separately. We will estimate $H^*$ by considering the first and last $n_0$ many nonzero anti-diagonals of $B$, and then, while estimating the diagonal matrix $D^*$ of $VD^*V^T$ we will have to \emph{mask out} these first and last $n_0$ many anti-diagonals. Thus, we need the following definition of an anti-diagonal Hankel mask.
\begin{definition}[Anti-diagonal mask]\label{def:antidiagonal_hankel_mask}
Let $M\in \mathbb{R}^{n\times n}$ be the mask matrix that masks out the first and last $n_0=C\log(1/\epsilon)$ ($C$ as per Lemma \ref{lem:existence_of_vandermonde_good_low_rank_approximation}) anti-diagonals defined as
\begin{equation}\label{eq:def_mask_m}
    M_{i,j} = \begin{cases}
        1 \text{ for all } i,j\in [n] \text{ s.t. } i+j\in [n_0,2n-n_0] \\
         0 \text{ otherwise}.
    \end{cases} .
\end{equation} 
Let $M^C$ be the complement mask defined as $M^C = \mathbbm{1}_n\mathbbm{1}_n^T -M$, for the all ones matrix $\mathbbm{1}_n\mathbbm{1}_n^T$.
\end{definition}
Equipped with the above notation and high-level description, Algorithm \ref{alg:noisy_hankel_recovery} is as follows.

\begin{algorithm}[H]
		\caption{\textsc{NoisyHankelRecovery}}
		\label{alg:noisy_hankel_recovery}
		\begin{algorithmic}[1]
			
			\STATE $\textbf{Input:}$ Query access to $B=H+E\in \mathbb{R}^{n\times n}$ for PSD Hankel $H$ and arbitrary noise $E$, $\epsilon>0$.
			\STATE $\textbf{Init:}$ Set $n_0=C\log(1/\epsilon)$ for $C$ as per Lemma \ref{lem:existence_of_vandermonde_good_low_rank_approximation}, Anti-diagonal mask $M$ (\Cref{def:antidiagonal_hankel_mask}).
			\STATE  Consider $B_1 = M^C\circ B$ and $B_2 = M\circ B$.
			\STATE Let Hankel $\wh{H}_1$ obtained by averaging the first and last $n_0$ anti-diagonals of $B_1$. That is, for all $m\in [n_0]$ define the $m^{th}$  anti-diagonal entries $(\wh{H}_1)_{i,m-i}$ for every $i\in [m]$ as follows,
            \begin{equation*}
                (\wh{H}_1)_{i,m-i} = (\sum_{j\in [m]}(B_1)_{j,m-j})/m,
            \end{equation*}
            and for all $m\in [2n-n_0,2n]$ the $m^{th}$ anti-diagonal entries defined similarly.
			
			\STATE Let $S_1,S_2\in \mathbb{R}^{s\times}$ be sampling matrices drawn as per Lemma \ref{lem:vandermonde_ridge_lev_score_sampling_guarantee} for ridge parameter $\gamma = \epsilon^2/\poly(n)$, and let $V\in \mathbb{R}^{n\times 2|T|}$ as per Lemma \ref{lem:existence_of_vandermonde_good_low_rank_approximation}.
			
			\STATE Let $D'(\gamma)$ be obtained from solving the following sketched ridge regression problem,
            \begin{equation*}
    D'(\gamma) = \argmin_{\text{diag } D\in \mathbb{R}^{2|T|\times 2|T|}} \|S_1 (M\circ VD V^T) S_2^T - S_1 B_2 S_2^T \|_F^2 + \gamma \|D\|_F^2,
\end{equation*}
			\STATE Return $\wh{H}_1,D'(\gamma)$.
		\end{algorithmic}
	\end{algorithm}

Let $B_1 = M^C \circ B$ and $B_2 = M \circ  B$ be as per line 3 of Algorithm \ref{alg:noisy_hankel_recovery}. By definition, $M\circ H^* = 0$. That is, $H^*$ only contributes to the first and last $n_0$ many anti-diagonals of $B$. Thus, we have that,
\begin{align*}
B_1 &= M^C\circ (VD^*V^T+H^*) + M^C\circ (E+E_H),\numberthis \label{eq:b1_expression} \\
B_2 &= M \circ V D^*V^T + M\circ (E + E_H). \numberthis \label{eq:b2_expression}
\end{align*}
To prove the guarantees of Algorithm \ref{alg:noisy_hankel_recovery} in Theorem \ref{thm:const_factor_approx_main}, we will present intermediate lemmas providing guarantees of various lines of Algorithm \ref{alg:noisy_hankel_recovery}. First, we will approximate $B_1$, the top left and bottom right $n_0$ many anti-diagonals of $B$, with a Hankel matrix. We show this can be done by \emph{averaging the anti-diagonals} of $B_1$ as done in line 4 of Algorithm \ref{alg:noisy_hankel_recovery}.
\begin{lemma}[Estimating the first and last $O(\log(1/\epsilon))$ anti-diagonals]\label{lem:learning_anti_diagonals_top_left_bottom_right}
In time $O(\log(1/\epsilon)^2)$ we can find a Hankel $\wh{H}_1$ of rank $O(\log(1/\epsilon))$ such that $M\circ \wh{H}_1=0$ and,
\begin{equation*}
    \|\wh{H}_1 - B_1\|_F^2 \leq  \|M^C\circ (E+E_H)\|_F^2.
\end{equation*}
This procedure corresponds to Line 4 of Algorithm \ref{alg:noisy_hankel_recovery}.
\end{lemma}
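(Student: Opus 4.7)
The plan is to interpret $\wh{H}_1$ as the orthogonal (Frobenius) projection of $B_1$ onto the linear subspace $\mathcal{H} \subset \mathbb{R}^{n \times n}$ of Hankel matrices with support confined to the first and last $n_0$ anti-diagonals (equivalently, those $A$ with $M \circ A = 0$). Once this projection interpretation is in hand, the error bound will follow immediately by exhibiting one convenient element of $\mathcal{H}$ that is close to $B_1$, and the remaining claims will reduce to bookkeeping.

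First I would verify that the anti-diagonal averaging procedure of Line 4 really is the Frobenius projection onto $\mathcal{H}$. For each anti-diagonal index $m$, every element of $\mathcal{H}$ assigns a single common value $h_m$ to the entries on that anti-diagonal, and the contribution $\sum_i ((B_1)_{i,m-i} - h_m)^2$ to the squared Frobenius error is minimized precisely when $h_m$ equals the arithmetic mean of the relevant entries of $B_1$, which is exactly what Line 4 computes. Since distinct anti-diagonals correspond to independent coordinates of $\mathcal{H}$, this minimizes the overall Frobenius error to $B_1$ within $\mathcal{H}$.

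Next I would observe that $M^C \circ (VD^*V^T + H^*)$ lies in $\mathcal{H}$: the matrix $VD^*V^T + H^*$ is a sum of two Hankel matrices (cf.\ \Cref{lem:existence_of_vandermonde_good_low_rank_approximation} and \Cref{clm:ij_entry}), and masking by $M^C$ simply restricts it to the first and last $n_0$ anti-diagonals without disturbing the constant-anti-diagonal property. Combining this with the projection property of $\wh{H}_1$ and the decomposition of $B_1$ in \eqref{eq:b1_expression} yields
\[
    \|\wh{H}_1 - B_1\|_F \;\le\; \|M^C \circ (VD^*V^T + H^*) - B_1\|_F \;=\; \|M^C \circ (E + E_H)\|_F,
\]
and squaring both sides gives the claimed bound.

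The remaining claims are routine. By construction $\wh{H}_1$ is constant on each anti-diagonal and is nonzero only for $m \in [n_0] \cup [2n-n_0, 2n]$, so it is Hankel with $M \circ \wh{H}_1 = 0$. Its top-left support lies within the first $n_0$ rows and its bottom-right support within the last $n_0$ rows, so $\mathrm{rank}(\wh{H}_1) \le 2n_0 = O(\log(1/\epsilon))$. There are only $2n_0$ relevant anti-diagonals, each of length at most $n_0$, so computing all the averages requires $O(n_0^2) = O(\log^2(1/\epsilon))$ entry queries to $B$ and an equivalent amount of arithmetic. I do not anticipate a real obstacle here; the only conceptual point is recognizing anti-diagonal averaging as the Frobenius projection onto the Hankel subspace, after which everything reduces to standard linear-algebraic bookkeeping.
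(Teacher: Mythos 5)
Your proposal is correct and follows essentially the same route as the paper: the paper likewise characterizes $\wh{H}_1$ as the Frobenius-optimal element of the class $\{H \text{ Hankel}: M\circ H = 0\}$ (obtained by anti-diagonal averaging, verified via the first-order condition), and then compares against the competitor $M^C\circ(VD^*V^T+H^*)$ using \eqref{eq:b1_expression} to get the stated bound. The rank and runtime bookkeeping also matches the paper's argument.
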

\begin{proof}
Note that by definition of mask $M$ as per \eqref{eq:def_mask_m}, $(B_1)_{i,j}\neq 0$ only for $i,j\in [n]$ such that $i+j\leq n_0$ or $i+j\geq 2n-n_0$. Consider any Hankel matrix $H\in \mathbb{R}^{n\times n}$ which is only defined on the first and last $n_0$ many anti-diagonals. That is, $M\circ H= 0$. Let $H(m) = (H)_{i,m-i}$ for all $i\in [m]$ and $m\in [2m]$ be the value of $H$ on its $m^{th}$ anti-diagonal. Then the error $\|H-B_1\|_F$ is 
\begin{align*}
    \|H-B_1\|_F^2 &= \sum_{m\in [n_0]} \sum_{i\in [m]} (H-B_1)_{i,m-i}^2+\sum_{m\in [n_0]} \sum_{i\in [m]} (R(H-B_1)R^T)_{i,m-i}^2\\
    &= \sum_{m\in [n_0]} \sum_{i\in [m]} (H(m) - (B_1)_{i,m-i})^2+\sum_{m\in [n_0]} \sum_{i\in [m]} ((RHR^T)(m)-(RB_1R^T)_{i,m-i}))^2.
\end{align*}
Let $\wh{H}_1 $ defined as the optimal $H$ for the above error follows $\wh{H}_1 = \argmin_{\text{Hankel }H: M\circ H =0}\|H-B_1\|_F^2$. Then by taking the derivative of the error it is easy to see that $\wh{H}_1$ is given by averaging the anti-diagonals of $B_1$. That is, $\wh{H}_1(m) = (\sum_{i\in [m]}(B_1)_{i,m-i})/m$ and $\wh{H}_1(2n-m) = (\sum_{i\in [m]}(RB_1R^T)_{i,m-i})/m$ for all $m\in [n_0]$. Moreover, $\wh{H}_1$ satisfies the following due to optimality:
\begin{equation*}
    \|\wh{H}_1 - B_1\|_F^2 \leq \|M^C\circ (VD^*V^T+H^*) - B_1\|_F^2= \|M^C\circ (E+E_H)\|_F^2.
\end{equation*}
Also note that $\wh{H}_1$ can be computed in $O(n_0^2)= O(\log^2(1/\epsilon))$ time by computing the averages, and rank of $\wh{H}_1 = O(\log(1/\epsilon))$ as it has only $O(\log(1/\epsilon))$ nonzero rows. This is presented in line 4 of Algorithm \ref{alg:noisy_hankel_recovery}.
\end{proof}
Now we focus on finding a good Hankel low rank approximation to $B_2$ defined as per \eqref{eq:b2_expression}. This is done in lines 5 and 6 of Algorithm \ref{alg:noisy_hankel_recovery}. Observe that if we can obtain a Hankel matrix $\wh{H}_2$ such that $M^C \circ \wh{H}_2 = 0$, rank of $\wh{H}_2$ is at most $O(\log n \log(1/\epsilon))$ and satisfying
\begin{equation}\label{eq:masked_regression_target_guarantee}
    \|\wh{H}_2 - B_2\|_F^2 \leq \alpha \|M\circ (E+E_H)\|_F^2+ \epsilon \|H\|_F^2,
\end{equation}
for some $\alpha>0$, then we have the following error guarantee for $\wh{H}_1+\wh{H}_2$ for $\wh{H}_1$ as per Lemma \ref{lem:learning_anti_diagonals_top_left_bottom_right}. Note $\wh{H}_1+\wh{H}_2$ is also a Hankel matrix of rank at most $O(\log n \log(1/\epsilon))$.
\begin{align*}
    \|\wh{H}_1+\wh{H}_2- B_1-B_2\|_F^2 &= \|\wh{H}_1-B_1\|_F^2+ \|\wh{H}_2-B_2\|_F^2\\
    &\leq \|M^C\circ (E+E_H)\|_F^2 +\alpha \|M\circ (E+E_H)\|_F^2\\
    &\leq \alpha \|E+E_H\|_F^2 ,
\end{align*}
where in the first equality we used the fact that the supports of $\wh{H}_1-B_1$ and $\wh{H}_2 - B_2$ are disjoint by definition of the mask $M$. Hence, for any $\alpha = O(1)$ we have
\begin{align*}
    \|\wh{H}_1+\wh{H}_2- B\|_F&\leq \alpha \|E\|_F + \alpha  \|E_H\|_F\\
    &=\alpha \|E\|_F + \alpha  \|H-(VD^*V^T+H^*)\|_F\\
    &\leq \alpha \|E\|_F+ \epsilon \alpha\|H\|_F\leq O(\|E\|_F)+ \epsilon \|H\|_F,
\end{align*}
where in the last line we used the bound of Lemma \ref{lem:existence_of_vandermonde_good_low_rank_approximation} to bound $\|H-(VD^*V^T+H^*)\|_F$ and rescaled $\epsilon$ by a constant.
We now present the following lemma, which achieves the goal of finding such a low rank Hankel $\wh{H}_2$ satisfying the target of \eqref{eq:masked_regression_target_guarantee} for some $\alpha = O(1)$.
\begin{lemma}[Estimating the remaining anti-diagonals]\label{lem:learning_anti_diagonals_remaining}
In time $\polylog(n,1/\epsilon)$  lines 5 and 6 of Algorithm \ref{alg:noisy_hankel_recovery} find a diagonal $D'$ such that the following holds with probability 0.99,
\begin{equation*}
    \|M\circ VD'V^T- B_2\|_F^2\leq 1000 \|M\circ(E+E_H)\|_F^2 + \epsilon \|H\|_F^2.
\end{equation*}
\end{lemma}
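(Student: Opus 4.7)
The plan is to analyze the sketched ridge regression solved in line 6 of \Cref{alg:noisy_hankel_recovery} as a two-sided ridge leverage score sampling instance, using the universal Vandermonde ridge leverage score guarantees of \Cref{lem:vandermonde_ridge_lev_score_sampling_guarantee}. First I would unpack the structure of the operator $\mathcal{L}(D) := M \circ VDV^T$: for any column index $j$, the $j$-th column equals $(M^{(j)} \circ V)\, D\, v_j$, where $v_j$ denotes the $j$-th row of $V$ viewed as a column vector, and $M^{(j)}$ is a row mask of the form in \Cref{def:row_mask} that zeroes at most $n_0$ rows at the top (for small $j$) or at the bottom (for large $j$, via $R M_i$). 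An identical statement holds for rows by the symmetry $(VDV^T)^T = VDV^T$. The key observation is that the anti-diagonal mask $M$ induces exactly the family of row masks $\{M_i : i \in [n_0]\}$ (and their row-reversed counterparts) covered by \Cref{lem:vandermonde_ridge_lev_score_sampling_guarantee}, so a single pair of sampling matrices $S_1,S_2$ provides subspace embedding guarantees for \emph{all} relevant masked versions of $V$ at once.

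Next I would run the standard sketched ridge regression analysis. Combining the per-column subspace embedding property of $S_1$ (applied to the columns of $\mathcal{L}(D) - B_2$) with the per-row property of $S_2$ (applied analogously on the right), along with approximate-matrix-product consequences that follow directly from Lemma \ref{lem:ridge_lev_score_sampling_guarantee}, the sketched objective $\|S_1(\mathcal{L}(D) - B_2) S_2^T\|_F^2 + \gamma\|D\|_F^2$ is a constant-factor approximation of $\|\mathcal{L}(D) - B_2\|_F^2 + \gamma\|D\|_F^2$ for every diagonal $D$ (up to additive error absorbed into the $\gamma$ term). Comparing the sketched minimizer $D'$ against the feasible diagonal $D^*$ supplied by \Cref{lem:existence_of_vandermonde_good_low_rank_approximation}, the standard optimality argument yields
$$\|\mathcal{L}(D') - B_2\|_F^2 \;\leq\; O\!\left(\|\mathcal{L}(D^*) - B_2\|_F^2\right) + O\!\left(\gamma\|D^*\|_F^2\right).$$

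Substituting the decomposition $B_2 = M \circ VD^*V^T + M\circ(E+E_H)$ from \eqref{eq:b2_expression} identifies the first term as $\|M\circ(E+E_H)\|_F^2$. For the ridge term, \Cref{lem:existence_of_vandermonde_good_low_rank_approximation} supplies the bound $\|D^*\|_F^2 \leq \bigl(\sum_i |D^*_{i,i}|\bigr)^2 = O(\log^2(1/\epsilon)\log^2(n)\|H\|_F^2)$; since $\gamma = \epsilon^2/\poly(n)$, picking the hidden polynomial large enough makes the ridge contribution at most $\epsilon \|H\|_F^2$, matching the target after absorbing constants into the factor $1000$. For the runtime, \Cref{lem:vandermonde_ridge_lev_score_sampling_guarantee} guarantees $s = \polylog(n,1/\epsilon)$ sampled rows, and that $S_1,S_2$ can be constructed in $\polylog(n,1/\epsilon)$ time with $\poly(n)$-bounded entries; building $S_1 B_2 S_2^T$ reads only $O(s^2)$ entries of $B$ (restricted to the unmasked region, which is trivial to enforce), and the resulting sketched regression has $O(s^2)$ equations in the $2|T| = O(\log n\log(1/\epsilon))$ diagonal unknowns, solved in $\polylog(n,1/\epsilon)$ time via the closed-form ridge normal equations.

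The main technical obstacle is cleanly carrying out the two-sided sketching argument in the presence of the non-standard anti-diagonal mask $M$: we must simultaneously control error coming from $S_1$ and $S_2$ when acting on masked copies of $V$ whose specific masks vary with the column (resp.\ row) being processed. This is precisely why the universal-over-$i\in[n_0]$ form of \Cref{lem:vandermonde_ridge_lev_score_sampling_guarantee} is crucial: it gives subspace embedding and approximate matrix product simultaneously for the entire family of masks $M^{(j)}$ induced by $M$, so the sketch survives the column-dependence. Modulo this, the rest of the proof is a fairly mechanical combination of the bound on $\|D^*\|_F$ from \Cref{lem:existence_of_vandermonde_good_low_rank_approximation} with standard ridge regression sketching inequalities.
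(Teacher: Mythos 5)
Your proposal follows essentially the same route as the paper's proof: decompose the columns (and rows) of $M \circ VDV^T$ into masked copies $M_{c(i)} \circ Vy$ of the fixed Vandermonde matrix, invoke the union-over-masks guarantee of \Cref{lem:vandermonde_ridge_lev_score_sampling_guarantee} to get a two-sided sketch valid for all diagonal $D$ simultaneously, compare the sketched minimizer against the feasible $D^*$ of \Cref{lem:existence_of_vandermonde_good_low_rank_approximation}, and kill the ridge term via the $\ell_1$ bound on $D^*$ with $\gamma = \epsilon^2/\poly(n)$. The only cosmetic difference is that the paper controls the sketched residual at $D^*$ by unbiasedness of $S_1,S_2$ plus Markov's inequality rather than an approximate-matrix-product lemma, but these play the identical role in the standard optimality argument you describe.
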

\begin{proof}
Since we know that $B_2 = M\circ VD^*V^T+M\circ (E+E_H)$ from \eqref{eq:b2_expression}, and $V = [V_T; R V_T]\in \mathbb{R}^{n\times 2|T|}$ with $T$ a fixed set as per Lemma \ref{lem:bucket_sparsification}, we phrase our objective to be to find a diagonal $D\in \mathbb{R}^{2|T|\times 2|T|}$ as per the following matrix ridge regression objective for some carefully chosen ridge parameter $\gamma >0$,
\begin{equation}\label{eq:masked_regression_objective}
    D(\gamma) = \argmin_{\text{diag }D\in\mathbb{R}^{2|T|\times 2|T|}} \| M\circ VDV^T-B_2\|_F^2 + \gamma \|D\|_F^2.
\end{equation}
Before we discuss how to solve the above efficiently in sublinear time, we first discuss why approximately solving the above suffices for our purposes. That is, to find a low rank $\wh{H}_2 $ that satisfies \eqref{eq:masked_regression_target_guarantee} for some $\alpha = O(1)$.

\smallskip

\noindent\textbf{Bounding the cost of the optimal solution.} Clearly rank of $ VDV^T$ for any diagonal $D\in \mathbb{R}^{2|T|\times 2|T|}$ is at most $2|T| = O(\log n \log(1/\epsilon))$ (bound on $|T|$ follows from Lemma \ref{lem:bucket_sparsification}). Moreover, because $M\circ VDV^T$ is differs from $VDV^T$ in at most $n_0 = O(\log(1/\epsilon))$ rows as per Definition \ref{def:antidiagonal_hankel_mask}, rank of $M\circ VDV^T$ is at most rank of $VDV^T$ plus $O(\log(1/\epsilon))$ which is $O(\log n\log(1/\epsilon))$. Moreover, $M\circ VDV^T$ is Hankel as it is the entrywise product of two Hankel matrices. And finally by optimality, the solution $D(\gamma)$ of the objective in \eqref{eq:masked_regression_objective} satisfies the target guarantee of \eqref{eq:masked_regression_target_guarantee} with $\alpha=1$ shown as follows,
\begin{align*}
      \| M\circ VD(\gamma)V^T &-B_2\|_F^2+\gamma\|D(\gamma)\|_F^2 \leq  \| M\circ VD^*V^T-B_2\|_F^2 + \gamma \|D^*\|_F^2\\& = \|M\circ (E+E_H)\|_F^2+\gamma \|D^*\|_F^2\\
    &\leq \|M\circ (E+E_H)\|_F^2 + \gamma (\sum_{i\in [2|T|]}|D^*_{i,i}|)^2 \quad (\text{since $D^*$ diagonal)}\\
    &\leq \|M\circ (E+E_H)\|_F^2 + O(\gamma\log^2 n\log^2(1/\epsilon)) \|H\|_F^2 \quad (\text{using Lemma \ref{lem:existence_of_vandermonde_good_low_rank_approximation}})
    \numberthis \label{eq:opt_bound}.
\end{align*}
Thus ,solving \eqref{eq:masked_regression_objective} with $\gamma = \epsilon^2/(\log^2(n)\log^2(1/\epsilon))$, we have
\begin{equation*}
\| M\circ VD(\gamma)V^T -B_2\|_F^2\leq \| M\circ VD(\gamma)V^T -B_2\|_F^2 +\gamma \|D(\gamma)\|_F^2 \leq   \|M\circ (E+E_H)\|_F^2+ \epsilon^2 \|H\|_F^2,
\end{equation*}
which satisfies \eqref{eq:masked_regression_target_guarantee} for $\alpha=1$. Hence, our task is reduced to finding a constant factor approximate solution to \eqref{eq:masked_regression_objective} for any given $\gamma>0$ in sublinear time. We will go about it using sampling according to the ridge leverage scores of $V$.

\smallskip

\noindent\textbf{Ridge spectral approximation guarantees for two-sided sketching.} We will show that Lemmas \ref{lem:vandermonde_ridge_lev_score_bd} and \ref{lem:vandermonde_ridge_lev_score_sampling_guarantee} are applicable to $V$ -- Recall $V = [V_T;RV_T]\in \mathbb{R}^{2|T|\times 2|T|}$, where $V_T$ is the Vandermonde matrix corresponding to the set of exponentiated Chebyshev nodes $T$ (recall $T$ is as per Lemma \ref{lem:x_0_1_errs} in  \Cref{sec:bucket_sparsification}) and $R$ is the row order reversal operator (Definition \ref{def:row_reversal}). This implies that if we let $T^{'} = \{1/x\}_{x\in T}$ and $D_{T^{'}} = diag(\{x^{n-1}\}_{x\in T^{'}})$, then $R V_T = V_{T^{'}} D_{T^{'}}^{-1}$. This implies that for $D_{T\cup T^{'}} =diag(\{1\}_{x\in T}\cup\{x^{n-1}\}_{x\in T^{'}}) $, $V$ is of the form $V = V_{T\cup T^{'}}D_{T\cup T^{'}}^{-1}$ where $V_{T\cup T^{'}}$ is the Vandermonde matrix corresponding to set $T\cup T^{'}$. Now consider applying Lemma \ref{lem:vandermonde_ridge_lev_score_sampling_guarantee} for $X= T\cup T^{'}$ with corresponding Vandermonde matrix $V_X = V_{T\cup T^{'}}$ and $n_0,\gamma$ and failure probability $\delta$ to obtain sampling matrix $S\in \mathbb{R}^{s\times n}$ for $s=O(\log^4(n/\gamma)\log(1/\delta))$. Then we have the following for all $i\in [n_0]$ and $y\in \mathbb{R}^{2|T|} $:
\begin{align*}
    \|SM_i\circ Vy\|_2^2+\gamma \|y\|_2^2 &=\left(1\pm \frac{1}{4}\right) \left(\|M_i\circ Vy\|_2^2 +\gamma \|y\|_2^2\right),\numberthis \label{eq:s_guarantee_masked_vandermonde_1_vectorwise}\\
\|SRM_i\circ Vy\|_2^2+\gamma \|y\|_2^2 &=\left(1\pm \frac{1}{4}\right)\left(\|RM_i\circ Vy\|_2^2 +\gamma \|y\|_2^2\right),\numberthis \label{eq:s_guarantee_masked_vandermonde_2_vectorwise}
\end{align*}
where $M_i\in \mathbb{R}^{n\times 2|T|}$ is the row mask matrix of Definition \ref{def:row_mask} that zeroes out the first $i$ rows of $V\in \mathbb{R}^{n\times 2|T|}$.

Suppose $S_1,S_2$ are sampled as per the distribution of $S$ as above.  Let $M_{c(i)}$ be the mask corresponding to the $i^{th}$ column, i.e., $M_{c(i)} = M_i $ for $i\leq n_0$, $M_{c(i)}=M_0$ for $i \in [n_0,n-n_0]$ and $M_{c(i)} = RM_{n-i}$ for $i\in [n-n_0,n]$. If we consider the columns of $M\circ VDV^T$ for any diagonal $D\in \mathbb{R}^{2|T|\times 2|T|}$, then by definition of mask $M$ (see \eqref{eq:def_mask_m}) and $M_{c(i)}$ for all $i\in [n]$ the $i^{th}$ column of $M\circ VDV^T$ is of the form $M_{c(i)} \circ Vy$ for some $y\in \mathbb{R}^{2|T|}$ for all $i\in [n]$. Since $M\circ VDV^T$ is symmetric, the same holds for its rows. Furthermore, since $S_2$ is a sampling matrix, $(M\circ VDV^T)S_2^T$ is a matrix whose columns are a subset of scaled columns of $M\circ VDV^T$.

Thus, applying equations \eqref{eq:s_guarantee_masked_vandermonde_1_vectorwise} and \eqref{eq:s_guarantee_masked_vandermonde_2_vectorwise} to all columns of $M\circ VDV^T$ and rows of $(M\circ VDV^T)S_2^T$ and summing them up, we have, with probability $1-\delta$ for all $D$,
\begin{align*}
    \|S_1(M\circ VDV^T)S_2^T\|_F^2+\gamma \|DV^TS_2^T\|_F^2 &=\left(1\pm \frac{1}{4}\right)\left(\|(M\circ VDV^T)S_2^T\|_F^2 +\gamma \|DV^TS_2^T\|_F^2\right), \numberthis \label{eq:s_guarantee_masked_vandermonde_1}\\
    \|(M\circ VDV^T)S_2^T\|_F^2+\gamma \|VD\|_F^2 &=\left(1\pm \frac{1}{4}\right) \left(\|M\circ VDV^T\|_F^2 +\gamma \|VD\|_F^2\right).\numberthis \label{eq:s_guarantee_masked_vandermonde_2}
\end{align*}
Now for any diagonal $D$, since $S_2$ is a sampling matrix as per Lemma \ref{lem:vandermonde_ridge_lev_score_sampling_guarantee} we have that each entry of $S_2$ is at most $\poly(n)$, thus $\|DV^TS_2^T\|_F^2 \leq \poly(n) \|DV^T\|_F^2$. Furthermore, since each entry in $V$ is at most $1$ and $D$ diagonal, we have that $\|DV^T\|_F^2\leq  \poly(n)\|D\|_F^2$ and $\|VD\|_F^2\leq  \poly(n)\|D\|_F^2$. Applying these bounds after rescaling $\gamma$ by $\poly(n)$ and combining equations \eqref{eq:s_guarantee_masked_vandermonde_1} and \eqref{eq:s_guarantee_masked_vandermonde_2}, we have, with probability at least $1-\delta$ for all $D$,
\begin{equation}\label{eq:s_guarantee_masked_vandermonde_3}
    \|S_1(M\circ VDV^T)S_2^T\|_F^2 = \left(1\pm \frac{1}{2}\right)\|M\circ VDV^T\|_F^2 \pm \gamma \|D\|_F^2.
\end{equation}

We will consider diagonal $D'\in \mathbb{R}^{2|T|\times 2|T|}$ obtained by solving the following sketched regression problem, 
\begin{equation}\label{eq:d_prime_def_sketched_regression}
    D'(\gamma) = \argmin_{\text{diag } D\in \mathbb{R}^{2|T|\times 2|T|}} \|S_1 (M\circ VD V^T) S_2^T - S_1 B_2 S_2^T \|_F^2 + \gamma \|D\|_F^2,
\end{equation}
and we will show that $D'(\gamma)$ achieves a constant factor approximation to the objective defined in \eqref{eq:masked_regression_objective}.
This is the procedure of lines 5 and 6 in Algorithm \ref{alg:noisy_hankel_recovery}. Note $D'(\gamma)$ can be obtained in $\poly(s)=\polylog(n,1/\gamma)$ time.

\smallskip

\noindent\textbf{Analyzing the optimal solution of sketched ridge regression.} In the following we denote $D'(\gamma)$ with $D'$ to ease notation, and re-introduce the notation $\gamma$ at the end of the analysis.
First, observe from the triangle inequality that 
\begin{equation*}
    \|S_1(M\circ VD'V^T)S_2^T - S_1 B_2 S_2^T\|_F = \|S_1(M\circ V(D'-D^*)V^T)S_2^T\|_F \pm  \|S_1(M\circ VD^{*}V^T)S_2^T - S_1 B_2 S_2^T\|_F.
\end{equation*}
Since $S_1$ and $S_2$ are both unbiased sampling matrices and $D^*$ does not depend on them, by applying Markov's inequality twice over randomness in $S_1,S_2$ and taking a union bound, we have, with probability at least $0.99$,
\begin{equation}\label{eq:unbiasedness_applied_to_d_star}
    \|S_1(M\circ VD^{*}V^T)S_2^T - S_1 B_2 S_2^T\|_F^2\leq 100 \|M\circ VD^{*}V^T - B_2 \|_F^2.
\end{equation}
Combining with the above bound after squaring both sides we have
\begin{align*}
    \|S_1(M\circ VD'V^T)S_2^T - S_1 B_2 S_2^T\|_F^2 &\leq 2\|S_1(M\circ V(D'-D^*)V^T)S_2^T\|_F^2 + 200\|M\circ VD^{*}V^T-  B_2 \|_F^2,\\
     \|S_1(M\circ VD'V^T)S_2^T - S_1 B_2 S_2^T\|_F^2 &\geq \frac{1}{2}\|S_1(M\circ V(D'-D^*)V^T)S_2^T\|_F^2 -  100\|M\circ VD^{*}V^T-  B_2 \|_F^2.
\end{align*}
Now consider the application of the guarantee of \eqref{eq:s_guarantee_masked_vandermonde_3} to $D=D'-D^*$ to get with probability at least $0.99$,
\begin{align*}
    \|S_1(M\circ V(D'-D^*)V^T)S_2^T\|_F^2 = \left(1\pm \frac{1}{2}\right)\|M\circ V(D'-D^*)V^T\|_F^2 \pm \gamma (\|D'\|_F^2+\|D^*\|_F^2).
\end{align*}
Plugging this into the equation above 
, we obtain
\begin{align*}
 \|S_1(M\circ VD'V^T)S_2^T - S_1 B_2 S_2^T\|_F^2 &\leq 3\|M\circ V(D'-D^*)V^T\|_F^2 + 200\|M\circ VD^{*}V^T-  B_2 \|_F^2\\
    &+\gamma (\|D'\|_F^2+\|D^*\|_F^2),\\
     \|S_1(M\circ VD'V^T)S_2^T - S_1 B_2 S_2^T\|_F^2 &\geq \frac{1}{4}\|M\circ V(D'-D^*)V^T\|_F^2 -  100\|M\circ VD^{*}V^T-  B_2 \|_F^2\\&-\gamma (\|D'\|_F^2+\|D^*\|_F^2)).
\end{align*}
Now again by triangle inequality and squaring both sides we have
\begin{align*}
    \|M\circ V(D'-D^*)V^T\|_F &= \|M\circ VD'V^T- B_2\|_F \pm \|M\circ VD^*V^T- B_2\|_F\\
    \implies \|M\circ V(D'-D^*)V^T\|_F^2 &\leq 2\|M\circ VD'V^T- B_2\|_F^2  +2\|M\circ VD^*V^T- B_2\|_F^2\\
    \text{ and }\|M\circ V(D'-D^*)V^T\|_F^2 &\geq \frac{1}{2}\|M\circ VD'V^T- B_2\|_F^2  -\|M\circ VD^*V^T- B_2\|_F^2.
\end{align*}
Plugging this into the previous equation we get
\begin{align*}
 \|S_1(M\circ VD'V^T)S_2^T - S_1 B_2 S_2^T\|_F^2 &\leq 6\|M\circ VD'V^T-B_2\|_F^2  + 300\|M\circ VD^{*}V^T-  B_2 \|_F^2\\
    &+\gamma (\|D'\|_F^2+\|D^*\|_F^2),\\
     \|S_1(M\circ VD'V^T)S_2^T - S_1 B_2 S_2^T\|_F^2 &\geq \frac{1}{8}\|M\circ VD'V^T-B_2\|_F^2 -  200\|M\circ VD^{*}V^T-  B_2 \|_F^2\\&-\gamma (\|D'\|_F^2+\|D^*\|_F^2))\numberthis\label{eq:two_sided_sketching_guarantee}.
\end{align*}
From Lemma \ref{lem:existence_of_vandermonde_good_low_rank_approximation} we can upper bound as $\gamma \|D^*\|_F^2\leq \gamma \|D^*\|_1^2 \leq O(\gamma \log(1/\epsilon)\log(n)\|H\|_F)$, which is at most $\gamma \|H\|_F$ after rescaling $\gamma$ by $n$. Now to bound $\gamma \|D'\|_F^2$ we use the optimality of $D'$ as per its definition in \eqref{eq:d_prime_def_sketched_regression} as follows:
\begin{align*}
    \gamma \|D'\|_F^2&\leq \|S_1 (M\circ VD' V^T) S_2^T - S_1 B_2 S_2^T \|_F^2 + \gamma \|D'\|_F^2\\
    &\leq \|S_1( M\circ VD^* V^T) S_2^T - S_1 B_2 S_2^T \|_F^2 + \gamma \|D^*\|_F^2 \quad\text{(from optimality of $D'$ in \eqref{eq:d_prime_def_sketched_regression})} \\
    &\leq 100\|M\circ VD^* V^T -  B_2 \|_F^2 + 2\gamma\|D^*\|_F^2 \quad \text{(from \eqref{eq:unbiasedness_applied_to_d_star})}.\numberthis\label{eq:d_prime_norm_bound}
\end{align*}
Finally, equipped with the guarantees above, we bound the performance of $D'=D'(\gamma)$ as an approximate solution for the original masked Vandermonde regression objective as defined in \eqref{eq:masked_regression_objective}.
\begin{align*}
    \|M\circ VD'(\gamma) V^T  -  B_2 \|_F^2&\leq \|M\circ VD'(\gamma) V^T  -  B_2 \|_F^2+ \gamma \|D'(\gamma)\|_F^2 \\&\leq 8\|S_1M\circ VD'(\gamma) V^TS_2^T  -  S_1B_2 S_2^T\|_F^2\quad \text{(from \eqref{eq:two_sided_sketching_guarantee})} \\
    &+(1600\|M\circ VD^{*}V^T-  B_2 \|_F^2+8\gamma (\|D'(\gamma)\|_F^2+\|D^*\|_F^2))\\
    &\leq 10\|S_1M\circ VD^{*} V^TS_2^T  -  S_1B_2 S_2^T\|_F^2 \quad \text{(as $D'$ optimal, see  \eqref{eq:d_prime_def_sketched_regression})}\\ &+ 1600\|M\circ VD^{*} V^T -  B_2 \|_F^2+8\gamma (\|D'(\gamma)\|_F^2+\|D^*\|_F^2)\\
    &\leq 1600\|M\circ VD^{*} V^T -  B_2 \|_F^2 + \gamma \|D^*\|_F^2 \text{( from  \eqref{eq:unbiasedness_applied_to_d_star} and \eqref{eq:d_prime_norm_bound})}\\
    &\leq 1600\|M\circ (E+E_H)\|_F^2+ \epsilon^2\|H\|_F^2,
\end{align*}
where the last line is obtained similarly to \eqref{eq:opt_bound} and $\gamma$ is set to $\gamma = \epsilon^2/\poly(n)$. All of the above holds with probability at least $0.9$ after a union bound, and thus this completes the proof of the lemma.
\end{proof}
We now give the proof of Theorem \ref{thm:const_factor_approx_main}, which follows easily from the previous lemmas.

\begin{proof}[Proof of Theorem \ref{thm:const_factor_approx_main}]
Let $\wh{H}_1$ and $\wh{H}_2=M\circ VD'V^T$ be as per Lemmas \ref{lem:learning_anti_diagonals_top_left_bottom_right} and \ref{lem:learning_anti_diagonals_remaining}. Let $\wh{H} = \wh{H}_1+\wh{H}_2$. Clearly rank of $\wh{H}$ is at most the rank of $\wh{H}_1$ plus the rank of $\wh{H}_2$, and thus it is at most $O(\log n \log(1/\epsilon)$ 
Moreover, $\wh{H}$ is Hankel as it is the sum of Hankel matrices. Finally, $\wh{H}$ satisfies, with  probability at least $0.9$,
\begin{align*}
    \|\wh{H} - B\|_F^2 &= \|\wh{H}_1-B_1\|_F^2 + \|\wh{H}_2-B_2\|_F^2\quad \text{(due to disjoint supports)}\\
    &\leq \|M^C\circ (E+E_H)\|_F^2 + 1600\|M\circ (E+E_H)\|_F^2 + \epsilon^2 \|H\|_F^2\quad \text{(from Lemmas \ref{lem:learning_anti_diagonals_top_left_bottom_right} and \ref{lem:learning_anti_diagonals_remaining})}\\
    &\leq 1600\|E+E_H\|_F^2 + \epsilon^2 \|H\|_F^2\\
    &\leq 1600 \|E\|_F^2 + 200 \|E_H\|_F^2 + \epsilon^2 \|H\|_F^2\\
    & \leq 1600 \|E\|_F^2 + 2\epsilon^2 \|H\|_F^2,
\end{align*}
where in the final line we used the fact that $\|E_H\|_F^2=\|VD^*V^T + H^* - H\|_F^2\leq \epsilon^2 \|H\|_F$ as per Lemma \ref{lem:existence_of_vandermonde_good_low_rank_approximation}.
Hence $\|\wh{H} - B\|_F\leq 100\|E\|_F+ \epsilon \|H\|_F$ with probability $0.9$. Moreover, the runtime of finding $\wh{H}$ is $\poly(\log n,\log(1/\epsilon))+O(\log^2(n/\epsilon))$ as per Lemmas \ref{lem:learning_anti_diagonals_top_left_bottom_right} and \ref{lem:learning_anti_diagonals_remaining}. Thus the overall runtime is $\poly(\log n,\log(1/\epsilon))$ 
, and the rank of $\wh{H}$ as per its definition in Lemma \ref{lem:existence_of_vandermonde_good_low_rank_approximation} is $O(\log n \log(1/\epsilon))$. Expressing $\wh{H} = VD'V^T + (\wh{H_2}-M^C \circ VD'V^T)$ concludes the proof of the lemma.
\end{proof}

\section{Applications.}\label{sec:applications}
Finally, we detail several applications of our main results. 

\smallskip

\noindent\textbf{Fast polynomial basis transforms.}  Townsend, Webb, and Olver \cite{TownsendWebbOlver:2018} consider the problem of transforming the coefficients of a degree $n$ polynomial in one orthogonal polynomial basis (e.g., Chebyshev) to another (e.g., Legendre). When the input polynomial is represented as a coefficient vector $v \in \R^n$, this problem can be formulated as computing a matrix-vector product $Mv$ where $M$ is the change of basis matrix. 

\cite{TownsendWebbOlver:2018} begins by observing that many important basis conversion matrices can be written as $D_1(T\circ H)D_2$ where $D_1,D_2$ are diagonal matrices, $T$ is Toeplitz, $H$ is PSD Hankel, and $\circ$ denotes the Hadamard (entrywise) matrix product. Their algorithm exploits this decomposition, developing a fast approximate matrix-vector product primitive for such structured matrices. Let $\epsilon$ be an error parameter, and consider the regime when $\epsilon=1/\poly(n)$. Let $k=O(\log n \log(1/\epsilon)) = O(\log^2n)$. Their algorithm first computes a rank $k$ approximation of $H$ denoted by $\sum_{r=1}^k a_r l_r l_r^T$ with entrywise error $\epsilon=1/\poly(n)$ in $O(n \log ^4 n)$ time. They then observe that the matrix vector product $D_1(T \circ (\sum_{r=1}^k a_r l_r l_r^T)) D_2 v$ can be computed using $k$ FFTs in time $O(n\log n \cdot k) = O(n\log^3 n)$. Thus, their overall runtime is $O(n\log^4 n)$, dominated by the cost of computing the low-rank approximation of $H$. 

Directly applying the algorithm of Theorem \ref{thm:const_factor_approx_main} with $\epsilon = 1/\poly(n)$ and so rank $k = O(\log^2 n)$, we can compute a compressed representation of an entrywise $1/\poly(n)$ approximation to $H$ of the form $VD'V+ \wh{H}'$ in  $O(\polylog (n))$ time, where $\wh{H}'$ is Hankel with only $O(k)$ nonzeros, $D' \in \R^{O(k) \times O(k)}$ is diagonal, and $V \in \R^{n \times O(k)}$ is Vandermonde. Expanding out the columns of $V$, we can explicitly write down a factorization of this matrix by writing each component in the sum $\sum_{r=1}^{O(k)} a_r l_r l_r^T$ in $O(nk) = O(n \log^2 n)$ time. Using this factorization in the algorithm if \cite{TownsendWebbOlver:2018}, we can compute $D_1(T \circ (\sum_{r=1}^k a_r l_r l_r^T)) D_2 v$ as before using $k$ FFTs in time $O(n\log^3 n)$. Thus, the overall runtime is dominated by this last multiplication step and is $O(n\log^3 n)$, improving the runtime of \cite{TownsendWebbOlver:2018} by a $\log n$ factor.

\smallskip

\noindent\textbf{Hankel covariance estimation.} Consider an $n$-dimensional Gaussian distribution $\mathcal{N}(0,H)$ where $H\in \mathbb{R}^{n\times n}$ is PSD Hankel. In covariance estimation, we seek to estimate $H$ efficiently given sample access to this distribution. We can do so following the approach of Theorem 3 in \cite{MuscoSheth:2024}, which focuses on Toeplitz covariance matrix estimation. First, one can see that Lemma 5.6 of \cite{MuscoSheth:2024} holds in our setting as well. That is, if we let $XX^T$ be the empirical covariance matrix formed from $s = \widetilde{O}(k^4/\epsilon^2)$ i.i.d. samples from $\mathcal{N}(0,H)$, where $X\in \mathbb{R}^{n\times s}$ contains the samples as its columns (rescaled by $1/\sqrt{s}$ to ensure the correct expection), we have, with probability at least $0.98$,
\begin{align}\label{eq:covarianceError}
	\|XX^T - H\|_F = O\left( \sqrt{\|H-H_k\|_2 \text{tr}(H)+ \frac{\|H-H_k\|_F \text{tr}(H)}{k}}+ \epsilon \|H\|_2\right).
\end{align}
Setting $\epsilon' = \epsilon/n^{3/2}$ and $k = O(\log n \log(1/\epsilon')) = O(\log n \log(n/\epsilon))$, we have (by the result of Beckermann and Townsend \cite{beckermann2017singular} or \Cref{thm:main_thm}) that $\|H-H_k\|_2 \le \|H-H_k\|_F \le \epsilon/n^{3/2} \cdot \|H\|_F \le \epsilon/n \cdot \norm{H}_2$. Thus, since $\mathrm{tr}(H) \le n \|H\|_2$, the right hand side of \eqref{eq:covarianceError} is bounded by $O(\epsilon \|H\|_2)$.

Now, applying the algorithm of Theorem \ref{thm:const_factor_approx_main} to $XX^T$, where the non-Hankel error is $E =XX^T - H$ and the rank parameter is $k=O(\log n \log (1/\epsilon')) = O(\log n \log(n/\epsilon)) $, we obtain  an approximation $\wh{H}$ to $H$ satisfying (after adjusting $\epsilon$ by a constant factor): $$\|H - \wh H\|_2 \le \|H - \wh H\|_F \le \epsilon \cdot \norm{H}_2.$$
The runtime of the algorithm is $\polylog (n,1/\epsilon)$, and the  algorithm reads just $\polylog (n,1/\epsilon)$ entries from $XX^T$, and thus requires reading just $\polylog (n,1/\epsilon)$ entries from each sample (i.e., from each column of $X$). The number of i.i.d. samples needed from $\mathcal N(0,H)$ is $s = \widetilde{O}(k^4/\epsilon^2) = \poly(\log n,1/\epsilon)$.

\smallskip

\noindent\textbf{Sum-of-Squares (SoS) decompositions of polynomials.} Finally, we discuss a potential application of our results to SoS decompositions of polynomials. We follow the setup of Section 3.3 of \cite{ghadiri2023symmetric}. Consider a univariate polynomial $p(x)$ of degree $2n$ with real coefficients. Then there exists a Hankel matrix $H\in \mathbb{R}^{n\times n}$ such that $p(x) = v_n(x)^T H v_n(x)$ for any $x$, where $v_n(x)$ is the moment vector as in \Cref{def:moment_vector}. Suppose $H$ is rank $k$ PSD, then we can write $H=BB^T$ for $B\in \mathbb{R}^{n\times k}$. Then $p(x)$ admits a sum of squares decomposition of the form $p(x)=\sum_{i=1}^{k}l_i(x)^2$, where $l_i$ is the polynomial whose coefficients are given by the entries in the $i^{th}$ column of $B$.

Now, suppose we have an algorithm that can compute $\wh{B}\in \mathbb{R}^{n\times k}$ such that $\|\wh{B}\wh{B}^T-H\|_F\leq  \|H\|_F/\poly(n)$. Then, for all $x\in [-1,1]$, since $\norm{v_n(x)}_2 \le \sqrt{n}$, we have that,
\begin{align}\label{eq:polynomial}
|v_n(x)^T \wh{B}\wh{B}^T v_n(x) -v_n(x)^T H v_n(x)|\leq  \|H\|_F/\poly(n).
\end{align}
If we let $\wh{l}_i$ be the polynomial with coefficients given by the $i^{th}$ column of $\wh{B}$ and let $\wh p(x) = \sum_{i\in [k]}\wh{l}_i(x)^2$, then $\wh p$ is an approximate SoS decomposition of $p$. In particular, by \eqref{eq:polynomial}, for all $x \in [-1,1]$, we have that $|\wh p(x) - p(x)| \le \|H\|_F/\poly(n)$.

Unfortunately, the algorithm of Theorem \ref{thm:const_factor_approx_main} does not quite output an approximation $\wh H$ of the form $\wh H = \wh B \wh B^T$. In particular, our $\wh H$ may not be exactly PSD, even though it is a highly accurate approximation to $H$, which is PSD. Extending our approach to output $\wh H$ that is exactly PSD is an interesting problem, and would open up further applications, such as the one described above. 
\section{Conclusion.}\label{sec:conclusion}

Our work leaves open several interesting questions, which we summarize below.
\begin{enumerate}
    \item Can the existence of a good structure preserving low-rank approximation be proven for non-PSD
Hankel matrices? Can a sublinear time algorithm be designed to recover good low-rank approximations to non-PSD Hankel matrices? Such a result would also apply to non-PSD Toeplitz matrices, simply via row-reversal.
\item Can the rank lower bound of Theorem \ref{thm:epsilon_rank_lower_bound} be improved to $\Omega(\log n\log(1/\epsilon))$, matching the rank upper bound of Theorem \ref{thm:main_thm} and Beckermann Townsend \cite{beckermann2017singular}?
\item Can a sublinear time algorithm be designed to achieve the error guarantee of Theorem \ref{thm:const_factor_approx_main} in the spectral norm -- i.e., $\norm{H - \wh H}_2 \le \epsilon \|\wh H\|$ for $\wh H$ with rank $O(\log n \log(1/\epsilon))$? Note that \Cref{thm:const_factor_approx_main} implies this bound when $\wh H$ has rank $O(\log n \log(n/\epsilon))$ since we can apply the theorem with $\epsilon' = \epsilon/\sqrt{n}$ and bound:
$$\norm{H-\wh H}_2 \le \norm{H-\wh H}_F \le \epsilon' \norm{H}_F \le \epsilon' \cdot \sqrt{n} \norm{H}_F \le \epsilon \norm{H}_2.$$
However, for fixed $\epsilon$, this approach yields rank $O(\log^2 n)$, rather than the optimal $O(\log n)$.
\end{enumerate}

\section{Acknowledgments.}
Cameron Musco was partially supported by NSF grants 2046235 and 2427362. The authors acknowledge Daniel Kressner and Alex Townsend for helpful discussions.

\bibliographystyle{alpha} 
\bibliography{refs}

@article{AnderssonCarlssonHoop:2011,
	author = {Andersson, Fredrik and Carlsson, Marcus and de Hoop, Maarten V},
	date-added = {2025-07-10 17:13:13 -0400},
	date-modified = {2025-07-10 17:13:15 -0400},
	journal = {Journal of Approximation Theory},
	number = {2},
	pages = {213--248},
	publisher = {Elsevier},
	title = {Sparse approximation of functions using sums of exponentials and AAK theory},
	volume = {163},
	year = {2011}}

@article{ChenEpperlyTropp:2025,
	author = {Chen, Yifan and Epperly, Ethan N and Tropp, Joel A and Webber, Robert J},
	date-added = {2025-07-10 16:46:00 -0400},
	date-modified = {2025-07-10 16:46:03 -0400},
	journal = {Communications on Pure and Applied Mathematics},
	number = {5},
	pages = {995--1041},
	publisher = {Wiley Online Library},
	title = {Randomly pivoted Cholesky: Practical approximation of a kernel matrix with few entry evaluations},
	volume = {78},
	year = {2025}}

@article{YuTownsend:2024,
	author = {Yu, Annan and Townsend, Alex},
	date-added = {2025-07-10 15:50:57 -0400},
	date-modified = {2025-07-10 15:50:58 -0400},
	journal = {Numerical Linear Algebra with Applications},
	number = {4},
	pages = {e2555},
	publisher = {Wiley Online Library},
	title = {Leveraging the Hankel norm approximation and data-driven algorithms in reduced order modeling},
	volume = {31},
	year = {2024}}

@article{HalkoMartinssonTropp:2011,
	author = {Halko, Nathan and Martinsson, Per-Gunnar and Tropp, Joel A},
	date-added = {2025-07-10 02:03:49 -0400},
	date-modified = {2025-07-10 02:03:51 -0400},
	journal = {SIAM review},
	number = {2},
	pages = {217--288},
	publisher = {SIAM},
	title = {Finding structure with randomness: Probabilistic algorithms for constructing approximate matrix decompositions},
	volume = {53},
	year = {2011}}

@article{MuscoMusco:2015,
	author = {Musco, Cameron and Musco, Christopher},
	date-added = {2025-07-10 02:03:17 -0400},
	date-modified = {2025-07-10 02:03:20 -0400},
	journal = {Advances in neural information processing systems},
	title = {Randomized block krylov methods for stronger and faster approximate singular value decomposition},
	volume = {28},
	year = {2015}}

@book{MarkovskyUsevich:2012,
	author = {Markovsky, Ivan and Usevich, Konstantin},
	date-added = {2025-07-10 01:52:03 -0400},
	date-modified = {2025-07-10 01:52:07 -0400},
	publisher = {Springer},
	title = {Low rank approximation},
	volume = {139},
	year = {2012}}

@article{GillardUsevich:2022,
  title={Hankel low-rank approximation and completion in time series analysis and forecasting: a brief review},
  author={Gillard, Jonathan and Usevich, Konstantin},
  journal={Statistics and Its Interface},
  volume={16},
  number={2},
  pages={287--303},
  year={2023},
  publisher={International Press}
}

@inproceedings{KapralovLawrenceMakarov:2023,
	author = {Kapralov, Michael and Lawrence, Hannah and Makarov, Mikhail and Musco, Cameron and Sheth, Kshiteej},
	booktitle = {Proceedings of the 2023 Annual ACM-SIAM Symposium on Discrete Algorithms (SODA)},
	date-added = {2025-07-10 01:09:42 -0400},
	date-modified = {2025-07-10 01:09:45 -0400},
	organization = {SIAM},
	pages = {4127--4158},
	title = {Toeplitz low-rank approximation with sublinear query complexity},
	year = {2023}}

@inproceedings{MuscoSheth:2024,
	author = {Musco, Cameron and Sheth, Kshiteej},
	booktitle = {Proceedings of the 2024 Annual ACM-SIAM Symposium on Discrete Algorithms (SODA)},
	date-added = {2025-07-10 01:08:51 -0400},
	date-modified = {2025-07-10 01:08:52 -0400},
	organization = {SIAM},
	pages = {5084--5117},
	title = {Sublinear time low-rank approximation of toeplitz matrices},
	year = {2024}}

@article{KailathSayed:1995,
	author = {Kailath, Thomas and Sayed, Ali H},
	date-added = {2025-07-10 00:56:54 -0400},
	date-modified = {2025-07-10 00:57:05 -0400},
	journal = {SIAM Review},
	number = {3},
	pages = {297--386},
	publisher = {SIAM},
	title = {Displacement structure: theory and applications},
	volume = {37},
	year = {1995}}

@book{Schmudgenothers:2017,
	author = {Schm{\"u}dgen, Konrad and others},
	date-added = {2025-07-10 00:38:34 -0400},
	date-modified = {2025-07-10 00:38:35 -0400},
	publisher = {Springer},
	title = {The moment problem},
	volume = {9},
	year = {2017}}

@article{BeylkinMonzon:2005,
	author = {Beylkin, Gregory and Monz{\'o}n, Lucas},
	date-added = {2025-07-10 00:07:37 -0400},
	date-modified = {2025-07-10 00:07:39 -0400},
	journal = {Applied and Computational Harmonic Analysis},
	number = {1},
	pages = {17--48},
	publisher = {Elsevier},
	title = {On approximation of functions by exponential sums},
	volume = {19},
	year = {2005}}

@article{Glover:1984,
	author = {Glover, Keith},
	date-added = {2025-07-09 23:53:02 -0400},
	date-modified = {2025-07-10 00:26:34 -0400},
	journal = {International journal of control},
	number = {6},
	pages = {1115--1193},
	publisher = {Taylor \& Francis},
	title = {All optimal Hankel-norm approximations of linear multivariable systems and their $L_\infty$-error bounds},
	volume = {39},
	year = {1984}}

@misc{Megretski:2024,
	author = {Alexandre Megretski},
	date-added = {2025-07-09 23:39:35 -0400},
	date-modified = {2025-07-09 23:41:05 -0400},
	howpublished = {https://ocw.mit.edu/courses/6-245-multivariable-control-systems-spring-2004/},
	title = {{6.245: Multivariable Control Systems, MIT OpenCourseWare}},
	year = {2024}}

@article{ArbabiMezic:2017,
	author = {Arbabi, Hassan and Mezic, Igor},
	date-added = {2025-07-09 23:36:57 -0400},
	date-modified = {2025-07-09 23:36:58 -0400},
	journal = {SIAM Journal on Applied Dynamical Systems},
	number = {4},
	pages = {2096--2126},
	publisher = {SIAM},
	title = {Ergodic theory, dynamic mode decomposition, and computation of spectral properties of the Koopman operator},
	volume = {16},
	year = {2017}}

@article{AdamyanArovKrein:1971,
	author = {Adamyan, Vadim Movsesovich and Arov, Damir Zyamovich and Krein, Mark Grigor'evich},
	date-added = {2025-07-09 23:26:14 -0400},
	date-modified = {2025-07-09 23:26:16 -0400},
	journal = {Matematicheskii Sbornik},
	number = {1},
	pages = {34--75},
	publisher = {Russian Academy of Sciences, Steklov Mathematical Institute of Russian~{\ldots}},
	title = {Analytic properties of Schmidt pairs for a Hankel operator and the generalized Schur--Takagi problem},
	volume = {128},
	year = {1971}}

@article{Hassani:2007,
  title={Singular Spectrum Analysis: Methodology and Comparison},
  author={Hassani, Hossein},
  journal={Journal of Data Science},
  volume={5},
  pages={239--257},
  year={2007}
}

@article{TownsendWebbOlver:2018,
	author = {Townsend, Alex and Webb, Marcus and Olver, Sheehan},
	date-added = {2025-07-09 23:02:13 -0400},
	date-modified = {2025-07-09 23:02:23 -0400},
	journal = {Mathematics of Computation},
	number = {312},
	pages = {1913--1934},
	title = {Fast polynomial transforms based on {T}oeplitz and {H}ankel matrices},
	volume = {87},
	year = {2018}}

@article{Tyrtyshnikov:1994,
	author = {Tyrtyshnikov, Evgenij E},
	date-added = {2025-07-09 22:52:46 -0400},
	date-modified = {2025-07-09 22:52:53 -0400},
	journal = {Numerische Mathematik},
	number = {2},
	pages = {261--269},
	publisher = {Springer},
	title = {How bad are {H}ankel matrices?},
	volume = {67},
	year = {1994}}

@book{Pellerothers:2003,
	author = {Peller, Vladimir V},
	date-added = {2025-07-09 22:51:40 -0400},
	date-modified = {2025-07-09 22:51:49 -0400},
	publisher = {Springer},
	title = {Hankel operators and their applications},
	year = {2003}}

@inproceedings{Ahle:2020vj,
	author = {Ahle, Thomas D and Kapralov, Michael and Knudsen, Jakob BT and Pagh, Rasmus and Velingker, Ameya and Woodruff, David P and Zandieh, Amir},
	booktitle = {\SODA{2020}},
	date-added = {2022-07-07 20:41:45 -0400},
	date-modified = {2022-07-07 20:46:49 -0400},
	pages = {141--160},
	title = {Oblivious sketching of high-degree polynomial kernels},
	year = {2020}}

@inproceedings{Yasuda:2019vf,
	author = {Yasuda, Taisuke and Woodruff, David and Fernandez, Manuel},
	booktitle = {\ICML{2019}},
	date-added = {2022-07-07 20:32:10 -0400},
	date-modified = {2022-07-07 20:50:34 -0400},
	pages = {7055--7063},
	title = {Tight Kernel Query Complexity of Kernel Ridge Regression and Kernel $ k $-means Clustering},
	year = {2019}}

@inproceedings{Bakshi:2020tl,
	author = {Bakshi, Ainesh and Chepurko, Nadiia and Woodruff, David P},
	booktitle = {\FOCS{2020}},
	date-added = {2022-07-07 20:26:55 -0400},
	date-modified = {2022-07-07 20:47:52 -0400},
	pages = {506--516},
	title = {Robust and sample optimal algorithms for {PSD} low rank approximation},
	year = {2020}}

@article{Bakshi:2018ul,
	author = {Bakshi, Ainesh and Woodruff, David},
	date-added = {2022-07-07 20:26:16 -0400},
	date-modified = {2022-07-07 20:49:31 -0400},
	journal = {\NIPS{2018}},
	title = {Sublinear time low-rank approximation of distance matrices},
	volume = {31},
	year = {2018}}

@inproceedings{Indyk:2019vy,
	author = {Indyk, Piotr and Vakilian, Ali and Wagner, Tal and Woodruff, David P},
	booktitle = {\COLT{2019}},
	date-added = {2022-07-07 20:25:19 -0400},
	date-modified = {2022-07-07 20:48:53 -0400},
	organization = {PMLR},
	pages = {1723--1751},
	title = {Sample-optimal low-rank approximation of distance matrices},
	year = {2019}}

@article{Chen:2015wz,
	author = {Chen, Yuxin and Chi, Yuejie and Goldsmith, Andrea J},
	date-added = {2022-07-07 14:52:59 -0400},
	date-modified = {2022-07-07 14:53:07 -0400},
	journal = {IEEE Transactions on Information Theory},
	number = {7},
	pages = {4034--4059},
	title = {Exact and stable covariance estimation from quadratic sampling via convex programming},
	volume = {61},
	year = {2015}}

@article{Abramovich:1999vs,
	author = {Abramovich, Yuri I and Spencer, Nicholas K and Gorokhov, Alexei Y},
	date-added = {2022-07-07 14:51:21 -0400},
	date-modified = {2022-07-07 14:51:44 -0400},
	journal = {IEEE Transactions on Signal Processing},
	number = {6},
	pages = {1502--1521},
	title = {Positive-definite {T}oeplitz completion in {DOA} estimation for nonuniform linear antenna arrays. {II}. {P}artially augmentable arrays},
	volume = {47},
	year = {1999}}

@article{Qiao:2017tp,
	author = {Qiao, Heng and Pal, Piya},
	date-added = {2022-07-07 14:30:42 -0400},
	date-modified = {2022-07-07 14:30:53 -0400},
	journal = {IEEE Transactions on Signal Processing},
	number = {9},
	pages = {2221--2236},
	publisher = {IEEE},
	title = {Gridless line spectrum estimation and low-rank {T}oeplitz matrix compression using structured samplers: A regularization-free approach},
	volume = {65},
	year = {2017}}

@article{Knirsch:2021ve,
	author = {Knirsch, Hanna and Petz, Markus and Plonka, Gerlind},
	date-added = {2022-07-07 14:18:13 -0400},
	date-modified = {2022-07-07 14:18:32 -0400},
	journal = {Linear Algebra and its Applications},
	pages = {1--39},
	publisher = {Elsevier},
	title = {Optimal rank-1 {H}ankel approximation of matrices: {F}robenius norm and spectral norm and {C}adzow's algorithm},
	volume = {629},
	year = {2021}}

@article{Wen:2020ub,
	author = {Wen, Ruiping and Fu, Yaru},
	date-added = {2022-07-07 14:04:50 -0400},
	date-modified = {2022-07-07 14:04:52 -0400},
	journal = {Journal of Inequalities and Applications},
	number = {1},
	pages = {1--13},
	publisher = {SpringerOpen},
	title = {Toeplitz matrix completion via a low-rank approximation algorithm},
	volume = {2020},
	year = {2020}}

@inproceedings{Lawrence:2020ut,
	author = {Lawrence, Hannah and Li, Jerry and Musco, Cameron and Musco, Christopher},
	booktitle = {\ICASSP{2020}},
	date-added = {2022-07-07 08:53:35 -0400},
	date-modified = {2022-07-07 08:54:15 -0400},
	pages = {4796--4800},
	title = {Low-rank {T}oeplitz matrix estimation via random ultra-sparse rulers},
	year = {2020}}

@article{Fazel:2013vw,
	author = {Fazel, Maryam and Pong, Ting Kei and Sun, Defeng and Tseng, Paul},
	date-added = {2022-06-29 11:55:01 -0400},
	date-modified = {2022-06-29 11:55:07 -0400},
	journal = {SIAM Journal on Matrix Analysis and Applications},
	number = {3},
	pages = {946--977},
	title = {Hankel matrix rank minimization with applications to system identification and realization},
	volume = {34},
	year = {2013}}

@article{Park:1999ta,
	author = {Park, Haesun and Zhang, Lei and Rosen, J Ben},
	date-added = {2022-06-29 11:44:29 -0400},
	date-modified = {2022-06-29 11:44:39 -0400},
	journal = {BIT Numerical Mathematics},
	number = {4},
	pages = {757--779},
	publisher = {Springer},
	title = {Low rank approximation of a {H}ankel matrix by structured total least norm},
	volume = {39},
	year = {1999}}

@article{Krim:1996wm,
	author = {Krim, Hamid and Viberg, Mats},
	date-added = {2022-06-29 11:43:38 -0400},
	date-modified = {2022-06-29 11:43:48 -0400},
	journal = {IEEE Signal Processing Magazine},
	number = {4},
	pages = {67--94},
	title = {Two decades of array signal processing research: the parametric approach},
	volume = {13},
	year = {1996}}

@article{Ongie:2017us,
	author = {Ongie, Gregory and Jacob, Mathews},
	date-added = {2022-06-29 11:41:37 -0400},
	date-modified = {2022-06-29 11:41:48 -0400},
	journal = {IEEE Transactions on Computational Imaging},
	number = {4},
	pages = {535--550},
	title = {A fast algorithm for convolutional structured low-rank matrix recovery},
	volume = {3},
	year = {2017}}

@article{Ishteva:2014wm,
	author = {Ishteva, Mariya and Usevich, Konstantin and Markovsky, Ivan},
	date-added = {2022-06-29 11:40:14 -0400},
	date-modified = {2022-06-29 11:40:15 -0400},
	journal = {SIAM Journal on Matrix Analysis and Applications},
	number = {3},
	pages = {1180--1204},
	publisher = {SIAM},
	title = {Factorization approach to structured low-rank approximation with applications},
	volume = {35},
	year = {2014}}

@article{Luk:1996ur,
	author = {Luk, Franklin T and Qiao, Sanzheng},
	date-added = {2022-06-29 11:37:54 -0400},
	date-modified = {2022-07-07 20:44:12 -0400},
	journal = {Journal of {VLSI} Signal Processing Systems for Signal, Image and Video Technology},
	number = {1},
	pages = {19--28},
	publisher = {Springer},
	title = {A symmetric rank-revealing {T}oeplitz matrix decomposition},
	volume = {14},
	year = {1996}}

@article{Cai:2016wh,
	author = {Cai, Jian-Feng and Qu, Xiaobo and Xu, Weiyu and Ye, Gui-Bo},
	date-added = {2022-06-29 11:35:51 -0400},
	date-modified = {2022-07-07 20:48:16 -0400},
	journal = {Applied and Computational Harmonic Analysis},
	number = {2},
	pages = {470--490},
	publisher = {Elsevier},
	title = {Robust recovery of complex exponential signals from random {G}aussian projections via low rank {H}ankel matrix reconstruction},
	volume = {41},
	year = {2016}}

@article{Gray:2006ta,
	author = {Gray, Robert M},
	date-added = {2022-06-29 09:46:25 -0400},
	date-modified = {2022-06-29 09:46:50 -0400},
	journal = {Foundations and Trends in Communications and Information Theory},
	number = {3},
	pages = {155--239},
	publisher = {Now Publishers, Inc.},
	title = {Toeplitz and circulant matrices: A review},
	volume = {2},
	year = {2006}}

@inproceedings{pan1999complexity,
	author = {Pan, Victor Y and Chen, Zhao Q},
	booktitle = {\STOC{1999}},
	date-added = {2022-06-23 21:35:43 -0400},
	date-modified = {2022-06-23 21:35:55 -0400},
	pages = {507--516},
	title = {The complexity of the matrix eigenproblem},
	year = {1999}}

@article{XiXiaCauley:2014,
	author = {Xi, Yuanzhe and Xia, Jianlin and Cauley, Stephen and Balakrishnan, Venkataramanan},
	date-added = {2022-06-23 21:02:34 -0400},
	date-modified = {2022-07-07 20:50:20 -0400},
	journal = {{SIAM} Journal on Matrix Analysis and Applications},
	number = {1},
	pages = {44--72},
	title = {Superfast and stable structured solvers for {T}oeplitz least squares via randomized sampling},
	volume = {35},
	year = {2014}}

@article{XiaXiGu:2012,
	author = {Xia, Jianlin and Xi, Yuanzhe and Gu, Ming},
	date-added = {2022-06-23 21:02:29 -0400},
	date-modified = {2022-06-23 21:02:29 -0400},
	journal = {{SIAM} Journal on Matrix Analysis and Applications},
	number = {3},
	pages = {837--858},
	publisher = {SIAM},
	title = {A superfast structured solver for {T}oeplitz linear systems via randomized sampling},
	volume = {33},
	year = {2012}}

@inproceedings{EldarLiMusco:2020,
	author = {Yonina C. Eldar and Jerry Li and Cameron Musco and Christopher Musco},
	booktitle = {\SODA{2020}},
	date-modified = {2025-07-10 01:10:23 -0400},
	pages = {378--397},
	title = {Sample Efficient {T}oeplitz Covariance Estimation},
	year = {2020}}

@article{chu2003structured,
	author = {Chu, Moody T and Funderlic, Robert E and Plemmons, Robert J},
	date-modified = {2022-07-07 20:49:19 -0400},
	journal = {Linear Algebra and its Applications},
	pages = {157--172},
	publisher = {Elsevier},
	title = {Structured low rank approximation},
	volume = {366},
	year = {2003}}

@inproceedings{musco2017recursive,
	author = {Musco, Cameron and Musco, Christopher},
	booktitle = {\NIPS{2017}},
	date-modified = {2022-11-20 07:41:00 -0500},
	title = {Recursive sampling for the {N}ystr\"{o}m method},
	year = {2017}}

@book{tao2012topics,
	author = {Tao, Terence},
	publisher = {American Mathematical Soc.},
	title = {Topics in random matrix theory},
	volume = {132},
	year = {2012}}

@article{beckermann2017singular,
	author = {Beckermann, Bernhard and Townsend, Alex},
	journal = {SIAM Journal on Matrix Analysis and Applications},
	number = {4},
	pages = {1227--1248},
	publisher = {SIAM},
	title = {On the singular values of matrices with displacement structure},
	volume = {38},
	year = {2017}}

@article{liu2022robust,
	author = {Liu, Allen and Li, Jerry and Moitra, Ankur},
	journal = {Advances in Neural Information Processing Systems},
	pages = {22830--22843},
	title = {Robust model selection and nearly-proper learning for GMMs},
	volume = {35},
	year = {2022}}

@inproceedings{meyer2023near,
	author = {Meyer, Raphael A and Musco, Cameron and Musco, Christopher and Woodruff, David P and Zhou, Samson},
	booktitle = {Proceedings of the 2023 Annual ACM-SIAM Symposium on Discrete Algorithms (SODA)},
	organization = {SIAM},
	pages = {3959--4025},
	title = {Near-linear sample complexity for Lp polynomial regression},
	year = {2023}}

@book{golub2013matrix,
	author = {Golub, Gene H and Van Loan, Charles F},
	publisher = {JHU press},
	title = {Matrix computations},
	year = {2013}}

@inproceedings{cohen2017input,
	author = {Cohen, Michael B and Musco, Cameron and Musco, Christopher},
	booktitle = {Proceedings of the Twenty-Eighth Annual ACM-SIAM Symposium on Discrete Algorithms},
	organization = {SIAM},
	pages = {1758--1777},
	title = {Input sparsity time low-rank approximation via ridge leverage score sampling},
	year = {2017}}

@book{wilf1970finite,
	author = {Wilf, Herbert S},
	publisher = {Springer Berlin, Heidelberg},
	title = {Finite sections of some classical inequalities},
	year = {1970}}

@article{Beckermann:2000,
	author = {Beckermann, Bernhard},
	date-modified = {2025-07-09 23:55:57 -0400},
	journal = {Numerische Mathematik},
	number = {4},
	pages = {553--577},
	publisher = {Springer},
	title = {The condition number of real Vandermonde, Krylov and positive definite Hankel matrices},
	volume = {85},
	year = {2000}}

@book{fiedler86,
	author = {Fiedler, Miroslav},
	publisher = {Martinus Nijhoff Publishers},
	title = {Special matrices and their applications in numerical mathematics},
	year = {1986}}

@article{govil1999markov,
	author = {Govil, Narendra K and Mohapatra, Ram N},
	journal = {Journal of Inequalities and Applications [electronic only]},
	pages = {349--387},
	title = {Markov and Bernstein type inequalities for polynomials.},
	volume = {4},
	year = {1999}}

@article{el2014fast,
  title={Fast randomized kernel ridge regression with statistical guarantees},
  author={Alaoui, Ahmed and Mahoney, Michael W},
  journal={Advances in neural information processing systems},
  volume={28},
  year={2015}
}

@article{kapralov2017single,
	author = {Kapralov, Michael and Lee, Yin Tat and Musco, CN and Musco, Christopher Paul and Sidford, Aaron},
	journal = {SIAM Journal on Computing},
	number = {1},
	pages = {456--477},
	publisher = {SIAM},
	title = {Single pass spectral sparsification in dynamic streams},
	volume = {46},
	year = {2017}}

@inproceedings{li2013iterative,
	author = {Li, Mu and Miller, Gary L and Peng, Richard},
	booktitle = {2013 IEEE 54th Annual Symposium on Foundations of Computer Science},
	organization = {IEEE},
	pages = {127--136},
	title = {Iterative row sampling},
	year = {2013}}

@inproceedings{ghadiri2023symmetric,
	author = {Ghadiri, Mehrdad},
	booktitle = {2023 IEEE 64th Annual Symposium on Foundations of Computer Science (FOCS)},
	organization = {IEEE},
	pages = {2081--2092},
	title = {On symmetric factorizations of Hankel matrices},
	year = {2023}}

@inproceedings{eberly2006solving,
	author = {Eberly, Wayne and Giesbrecht, Mark and Giorgi, Pascal and Storjohann, Arne and Villard, Gilles},
	booktitle = {Proceedings of the 2006 international symposium on Symbolic and algebraic computation},
	pages = {63--70},
	title = {Solving sparse rational linear systems},
	year = {2006}}

@inproceedings{eberly2007faster,
	author = {Eberly, Wayne and Giesbrecht, Mark and Giorgi, Pascal and Storjohann, Arne and Villard, Gilles},
	booktitle = {Proceedings of the 2007 international symposium on Symbolic and algebraic computation},
	pages = {143--150},
	title = {Faster inversion and other black box matrix computations using efficient block projections},
	year = {2007}}

@inproceedings{fazel2003log,
	author = {Fazel, Maryam and Hindi, Haitham and Boyd, Stephen P},
	booktitle = {Proceedings of the 2003 American Control Conference, 2003.},
	organization = {IEEE},
	pages = {2156--2162},
	title = {Log-det heuristic for matrix rank minimization with applications to Hankel and Euclidean distance matrices},
	volume = {3},
	year = {2003}}

@article{liu2010interior,
	author = {Liu, Zhang and Vandenberghe, Lieven},
	journal = {SIAM Journal on Matrix Analysis and Applications},
	number = {3},
	pages = {1235--1256},
	publisher = {SIAM},
	title = {Interior-point method for nuclear norm approximation with application to system identification},
	volume = {31},
	year = {2010}}

@book{sontag2013mathematical,
	author = {Sontag, Eduardo D},
	publisher = {Springer Science \& Business Media},
	title = {Mathematical control theory: deterministic finite dimensional systems},
	volume = {6},
	year = {2013}}

@inproceedings{peng2021solving,
	author = {Peng, Richard and Vempala, Santosh},
	booktitle = {Proceedings of the 2021 ACM-SIAM symposium on discrete algorithms (SODA)},
	organization = {SIAM},
	pages = {504--521},
	title = {Solving sparse linear systems faster than matrix multiplication},
	year = {2021}}

@inproceedings{nie2022matrix,
	author = {Nie, Zipei},
	booktitle = {Proceedings of the 54th Annual ACM SIGACT Symposium on Theory of Computing},
	pages = {568--581},
	title = {Matrix anti-concentration inequalities with applications},
	year = {2022}}

@inproceedings{casacuberta2021faster,
	author = {Casacuberta, S{\'\i}lvia and Kyng, Rasmus},
	booktitle = {ITCS},
	title = {Faster sparse matrix inversion and rank computation in finite fields},
	year = {2022}}

@article{golyandina2010choice,
	author = {Golyandina, Nina},
	journal = {Stat. Interface.},
	pages = {3:259-279},
	title = {On the choice of parameters in singular spectrum analysis and related subspace-based methods},
	year = {2010}}

@inproceedings{bakshisubquadratic,
  title={Subquadratic Algorithms for Kernel Matrices via Kernel Density Estimation},
  author={Bakshi, Ainesh and Indyk, Piotr and Kacham, Praneeth and Silwal, Sandeep and Zhou, Samson},
  booktitle={ICLR},
year = {2023}
}

@book{aoki2013state,
  title={State space modeling of time series},
  author={Aoki, Masanao},
  year={2013},
  publisher={Springer Science \& Business Media}
}

\end{document}